\documentclass[11pt]{article}
\usepackage{pdfpages}
\usepackage{multirow}
\usepackage{array}
\usepackage{graphicx}
\usepackage{subcaption}
\usepackage{latexsym}
\usepackage{pstricks}
\usepackage{pst-node}
\usepackage{pst-tree}
\usepackage{url}
\usepackage{times}
\usepackage{helvet}
\usepackage{courier}
\usepackage{algorithmic}
\usepackage[ruled,vlined,linesnumbered]{algorithm2e}
\usepackage{amsthm}
\usepackage{graphicx,amssymb,amsmath}
\usepackage{mathrsfs}
\usepackage{mathtools}
\usepackage{comment}
\usepackage{framed}

\DeclareMathOperator*{\argmax}{arg\,max}

\DeclareMathOperator*{\E}{\mathbb{E}}

\usepackage[hypertexnames=false, colorlinks=true, citecolor=blue, linkcolor=red, urlcolor=black]{hyperref}

\usepackage{geometry}
\geometry{left=1in, right=1in, top=1in, bottom=1in}

\makeatletter
\renewcommand*{\verbatim@font}{\sffamily}

\title{Connected Components on a PRAM in Log Diameter Time\thanks{A preliminary version of this paper appeared in the proceedings of SPAA 2020.}}

\author{S. Cliff Liu\thanks{Research at Princeton University partially supported by an innovation research grant from Princeton University and a gift from Microsoft. Some work was done at AlgoPARC workshop in 2019, partially supported by NSF grant CCF-1745331.} \\
Princeton University	\\
\textsf{sixuel@cs.princeton.edu}
\and
Robert E. Tarjan\footnotemark[2] \\
Princeton University \\
\textsf{ret@cs.princeton.edu}
\and
Peilin Zhong\thanks{Supported in part by NSF grants (CCF-1703925, CCF-1714818, CCF-1617955 and CCF-1740833, CCF-1822809), Simons Foundation (\#491119 to Alexandr Andoni), Google Research Award and a Google Ph.D. fellowship.} \\
Columbia University \\
\textsf{peilin.zhong@columbia.edu}
}

\begin{document}

\clearpage\maketitle

\thispagestyle{empty}

\newcounter{dummy} \numberwithin{dummy}{section}
\newtheorem{lemma}[dummy]{Lemma}
\newtheorem{definition}[dummy]{Definition}
\newtheorem{remark}[dummy]{Remark}
\newtheorem{corollary}[dummy]{Corollary}
\newtheorem{claim}[dummy]{Claim}
\newtheorem{observation}[dummy]{Observation}
\newtheorem{assumption}[dummy]{Assumption}

\newcounter{dummy2}
\newtheorem{theorem}[dummy2]{Theorem}

\begin{abstract}
    We present an $O(\log d + \log\log_{m/n} n)$-time randomized PRAM algorithm for computing the connected components of an $n$-vertex, $m$-edge undirected graph with maximum component diameter $d$. The algorithm runs on an ARBITRARY CRCW (concurrent-read, concurrent-write with arbitrary write resolution) PRAM using $O(m)$ processors. The time bound holds with good probability.\footnote{To simplify the statements in this paper we assume $m/2 \ge n \ge 2$ and $d \ge 1$. \emph{With good probability} means with probability at least $1 - 1 / \text{poly}((m \log n) / n)$; \emph{with high probability} means with probability at least $1 - 1 / \text{poly}(n)$.}

    Our algorithm is based on the breakthrough results of Andoni et al. [FOCS'18] and Behnezhad et al. [FOCS'19]. Their algorithms run on the more powerful MPC model and rely on sorting and computing prefix sums in $O(1)$ time, tasks that take $\Omega(\log n / \log\log n)$ time on a CRCW PRAM with $\text{poly}(n)$ processors. Our simpler algorithm uses limited-collision hashing and does not sort or do prefix sums. It matches the time and space bounds of the algorithm of  Behnezhad et al., who improved the time bound of Andoni et al.

    It is widely believed that the larger private memory per processor and unbounded local computation of the MPC model admit algorithms faster than that on a PRAM. Our result suggests that such additional power might not be necessary, at least for fundamental graph problems like connected components and spanning forest.
\end{abstract}

\newpage

\pagenumbering{roman}

{\hypersetup{linkcolor=black, linktoc=all}\tableofcontents}

\newpage

\pagenumbering{arabic}
\setcounter{page}{1}

\section{Introduction}\label{intro}

Computing the connected components of an undirected graph is a fundamental problem in algorithmic graph theory, with many applications.
Using graph search \cite{DBLP:journals/siamcomp/Tarjan72}, one can find the connected components of an $n$-vertex, $m$-edge graph in $O(m)$ time, which is best possible for a sequential algorithm. But linear time is not fast enough for big-data applications in which the problem graph is of internet scale or even bigger.  To find the components of such large graphs in practice requires the use of concurrency.

Beginning in the 1970's, theoreticians developed a series of more-and-more efficient concurrent algorithms. Their model of computation was some variant of the PRAM (parallel random access machine) model.
Shiloach and Vishkin \cite{DBLP:journals/jal/ShiloachV82}  gave an $O(\log n)$-time PRAM algorithm in 1982.
The algorithm was later simplified by Awerbuch and Shiloach \cite{DBLP:journals/tc/AwerbuchS87}.
These algorithms are deterministic and run on an ARBITRARY CRCW PRAM with $O(m)$ processors.
There are simpler algorithms and algorithms that do less work but use randomization.
Gazit \cite{DBLP:journals/siamcomp/Gazit91} combined an elegant randomized algorithm of Reif \cite{reif1984optimal} with graph size reduction to obtain an $O(\log n)$-time, $O(m / \log n)$-processor CRCW PRAM algorithm.
This line of work culminated in the $O(\log n)$-time, $O(m / \log n)$-processor EREW (exclusive-read, exclusive-write) PRAM algorithms of Halperin and Zwick \cite{DBLP:journals/jcss/HalperinZ96, halperin2001optimal}, the second of which computes spanning trees of the components as well as the components themselves. On an EREW PRAM, finding connected components takes $\Omega(\log n)$ time \cite{DBLP:journals/siamcomp/CookDR86}, so these algorithms minimize both time and work.
The $\Omega(\log n)$ lower bound also holds for the CREW (concurrent-read, exclusive-write) PRAM with randomization, a model slightly weaker than the CRCW PRAM \cite{DBLP:journals/jcss/DietzfelbingerKR94}.
For the PRIORITY CRCW PRAM (write resolution by processor priority), a time bound of $\Omega(\log n / \log\log n)$ holds if there are $\text{poly}(n)$ processors and unbounded space, or if there is $\text{poly}(n)$ space and any number of processors \cite{DBLP:journals/jacm/BeameH89}.

The Halperin-Zwick algorithms use sophisticated techniques.
Practitioners charged with actually finding the connected components of huge graphs have implemented much simpler algorithms.
Indeed, such simple algorithms often perform well in practice \cite{DBLP:conf/dimacs/GoddardKP94, DBLP:conf/spaa/Greiner94, hsu1997parallel, DBLP:conf/spaa/ShunDB14, DBLP:conf/wsdm/StergiouRT18}.
The computational power of current platforms and the characteristics of the problem graphs may partially explain such good performance.
Current parallel computing platforms such as MapReduce, Hadoop, Spark, and others have capabilities significantly beyond those modeled by the PRAM \cite{DBLP:journals/cacm/DeanG08}.
A more powerful model, the MPC (massively parallel computing) model \cite{DBLP:journals/jacm/BeameKS17} is intended to capture these capabilities.
In this model, each processor can have $\text{poly}(n)$ (typically sublinear in $n$) private memory, and the local computational power is unbounded.
A PRAM algorithm can usually be simulated on an MPC with asymptotically the same round complexity,
and it is widely believed that the MPC model admits algorithms faster than the PRAM model \cite{DBLP:conf/soda/KarloffSV10, DBLP:conf/isaac/GoodrichSZ11, DBLP:conf/podc/AssadiSW19}.
On the MPC model, and indeed on the weaker COMBINING CRCW PRAM model, there are very simple, practical algorithms that run in $O(\log n)$ time \cite{liu_tarjan}.
Furthermore, many graphs in applications have components of small diameter, perhaps poly-logarithmic in $n$.

These observations lead to the question of whether one can find connected components faster on graphs of small diameter, perhaps by exploiting the power of the MPC model.
Andoni et al. \cite{DBLP:conf/focs/Andoni} answered this question ``yes'' by giving an MPC algorithm that finds connected components in $O(\log d \log\log_{m / n} n)$ time, where $d$ is the largest diameter of a component.
Very recently, this time bound was improved to $O(\log d + \log\log_{m / n} n)$ by Behnezhad et al. \cite{DBLP:conf/focs/BehnezhadDELM19}.
Both of these algorithms are complicated and use the extra power of the MPC model, in particular, the ability to sort and compute prefix sums in $O(1)$ communication rounds.
These operations require $\Omega(\log n / \log\log n)$ time on a CRCW PRAM with $\text{poly}(n)$ processors \cite{DBLP:journals/jacm/BeameH89}.\footnote{Behnezhad et al. also consider the \emph{multiprefix} CRCW PRAM, in which prefix sum (and other primitives) can be computed in $O(1)$ time and $O(m)$ work. A direct simulation of this model on a PRIORITY CRCW PRAM or weaker model would suffer an $\Omega(\log n / \log\log n)$ factor in both time and work, compared to our result.}
These results left open the following fundamental problem in theory:\footnote{A repeated matrix squaring of the adjacency matrix computes the connected components in $O(\log d)$ time on a CRCW PRAM, but this is far from work-efficient -- the currently best work is  $O(n^{2.373})$ \cite{DBLP:conf/issac/Gall14a}.}

\begin{quote}
    \emph{Is it possible to break the $\log n$ time barrier for connected components of small-diameter graphs on a PRAM (without the additional power of an MPC)?}
\end{quote}

In this paper we give a positive answer by presenting an ARBITRARY CRCW PRAM algorithm that runs in $O(\log d + \log\log_{m/n} n)$ time, matching the round complexity of the MPC algorithm by Behnezhad et al. \cite{DBLP:conf/focs/BehnezhadDELM19}.
Our algorithm uses $O(m)$ processors and space, and thus is space-optimal and nearly work-efficient.
In contrast to the MPC algorithm, which uses several powerful primitives, we use only hashing and other simple data structures.
Our hashing-based approach also applies to the work of Andoni et al. \cite{DBLP:conf/focs/Andoni}, giving much simpler algorithms for connected components and spanning forest, which should be preferable in practice.
While the MPC model ignores the total work, our result on the more fine-grained PRAM model captures the inherent complexities of the problems.

\subsection{Computation Models and Main Results}

Our main model of computation is the ARBITRARY CRCW PRAM \cite{DBLP:journals/jal/Vishkin83}. 
It consists of a set of processors , each of which has a constant number of cells (words) as the private memory, and a large common memory.
The processors run synchronously.
In one step, a processor can read a cell in common memory, write to a cell in common memory, or do a constant amount of local computation.
Any number of processors can read from or write to the same common memory cell concurrently.
If more than one processor writes to the same memory cell at the same time, an arbitrary one succeeds.

Our main results are stated below:

\begin{theorem}[Connected Components]\label{main1}
    There is an ARBITRARY CRCW PRAM algorithm using $O(m)$ processors that computes the connected components of any given graph.
    With probability $1 - 1 / \text{poly}((m \log n) / n)$, it runs in $O(\log d \log \log_{m/n} n)$ time.
\end{theorem}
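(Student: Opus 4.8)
The plan is to realize, on an ARBITRARY CRCW PRAM, the graph-exponentiation-and-contraction scheme of Andoni et al.~\cite{DBLP:conf/focs/Andoni}, replacing every use of sorting and prefix sums by limited-collision hashing. We maintain a shrinking multigraph $G_0 = G, G_1, G_2, \dots$, where $G_t$ is obtained from $G_{t-1}$ by contracting a collection of vertex-disjoint connected subgraphs; this invariant guarantees that the components of $G_t$ correspond bijectively to those of $G$, so it suffices to run until every surviving component is a single super-vertex and then propagate labels back through the $O(\log d)$ layers of contraction pointers by pointer jumping, which costs only $O(\log\log d)$ extra time. We keep a second invariant, $|E(G_t)| = O(m)$, throughout. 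Each phase consists of (i) one or more graph-squaring steps, in which every vertex replaces its adjacency list by the union of the adjacency lists of itself and its neighbors --- so an edge now joins two vertices iff they were within distance $2$ --- which divides the diameter by a constant factor but multiplies the density $m/|V(G_t)|$; and (ii) a randomized leader-contraction step, in which each vertex becomes a \emph{leader} independently with a probability $p_t$ calibrated to the current degree, every non-leader hooks to the minimum-ID leader among its neighbors, and each leader is contracted together with the non-leaders hooked to it. With $p_t$ chosen correctly, every vertex has a leader neighbor with the stated good probability, so $|V(G_t)|$ drops by a $\mathrm{poly}(m/n)$ factor per phase and the edge budget is restored. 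Tuning the two steps so that the diameter falls by a constant factor each phase gives $O(\log d)$ phases; the per-phase PRAM cost will be $O(\log\log_{m/n} n)$, for a total of $O(\log d\,\log\log_{m/n} n)$ time.

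The work of a phase reduces to a handful of bucketing tasks on $O(m)$ items: forming the union of several adjacency lists and deleting the resulting duplicate endpoints and parallel edges; estimating, to within a constant factor, the size of each vertex's new neighbor set (needed to choose $p_t$ and to detect when the budget would be exceeded); rewriting every edge endpoint as the ID of the leader it was contracted into; and compacting the surviving edges into a contiguous array of total length $O(m)$. On an MPC these are immediate from sorting and prefix sums; here each is done by inserting the relevant items into a hash table of $O(m)$ cells under a limited-collision hash function, so that every cell receives $O(1)$ items except with probability $1/\mathrm{poly}((m\log n)/n)$ --- after which duplicates collide and are removed in $O(1)$ time, approximate counts are read directly from cell occupancies, and contiguous storage is obtained from the fixed, oblivious cell layout plus a single approximate-compaction pass. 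The only task that is not $O(1)$ time is this approximate compaction / dynamic reallocation of the $O(m)$ live edges, which we implement by an iterated-hashing scheme in $O(\log\log_{m/n} n)$ time and $O(m)$ work; since a phase invokes it $O(1)$ times, a phase costs $O(\log\log_{m/n} n)$ time and $O(m)$ work on $O(m)$ processors.

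The principal obstacle is precisely this PRAM implementation of the MPC primitives within the time, work, and probability budgets. One must supply a hash family that (a) a single processor evaluates in $O(1)$ time from a seed fitting in $O(1)$ words, (b) maps any fixed set of $O(m)$ keys into $O(m)$ buckets with maximum load $O(1)$ except with probability $1/\mathrm{poly}((m\log n)/n)$, and (c) can be reseeded cheaply, with the recursion on overflowing buckets bottoming out within $O(\log\log_{m/n} n)$ levels; $k$-wise-independent / limited-collision constructions with $k = \Theta(\log((m\log n)/n))$ are the natural candidates, and checking their parameters against the $O(1)$-word private-memory constraint is the delicate point. A secondary obstacle is the randomized analysis of leader contraction once the diameter has already been compressed by squaring: one must verify both that hooking to a leader at distance $1$ never splits a component and that it shrinks the diameter by the claimed constant factor, and then compose the $O(\log d)$ good-hashing events and $O(\log d)$ good-leader-cover events by a union bound --- boosting an individual phase by constant-factor independent repetition if necessary --- so that the overall success probability remains $1 - 1/\mathrm{poly}((m\log n)/n)$. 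Given these, correctness follows from the contraction-preserves-connectivity invariant, and the resource bounds from the per-phase accounting.
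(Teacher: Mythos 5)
Your proposal matches the paper's philosophy at a high level --- replace MPC sorting and prefix-sums by limited-collision hashing --- but it diverges from what the paper actually does in ways that leave real gaps. The most visible one is that you have the phase structure reversed: like Andoni et al., the paper runs $O(\log\log_{m/n} n)$ phases, each of which performs $O(\log d)$ internal doubling rounds inside \textsc{expand} (each round doubles the radius of the explored ball, reaching the whole component in $O(\log d)$ rounds), followed by an $O(1)$-time vote/link/shortcut/alter. You propose $O(\log d)$ phases costing $O(\log\log_{m/n} n)$ each, and the second factor is grounded only in the claim that ``approximate compaction / dynamic reallocation of the $O(m)$ live edges'' can be done ``by an iterated-hashing scheme in $O(\log\log_{m/n} n)$ time and $O(m)$ work.'' No such primitive is known --- the best approximate compaction cited in the paper is $O(\log^* n)$ time (Lemma~\ref{lem_apx_compaction}), and even the faster Theorem~\ref{main3} algorithm invokes it only once, after first shrinking the vertex set. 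Likewise your requirement (b), that an $O(1)$-word-seed hash maps $O(m)$ keys into $O(m)$ buckets with maximum load $O(1)$ except with probability $1/\text{poly}((m\log n)/n)$, is false even for a fully random function: $m$ balls in $m$ bins have max load $\Theta(\log m / \log\log m)$, so your per-phase bucketing tasks cannot be discharged as described.

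What Theorem~\ref{main1}'s proof does instead --- and this is the crux your proposal does not reach --- is to avoid ever needing exact counting, exact deduplication, or per-phase compaction. Each vertex $u$ is hashed by $h_B$ to a block of $\delta^{2/3}$ processors; if another vertex collides with it, $u$ is simply marked \emph{dormant} rather than reassigned. Each block-owning $u$ gets a private table $H(u)$ of size $\delta^{1/3}$, neighbors are inserted via $h_V$, and any collision again marks $u$ dormant. A live $u$ ends up knowing its entire component (Lemma~\ref{radius_doubling}) and elects the minimum-id vertex; a dormant $u$ is declared a leader with probability about $b^{-2/3}$. The analysis then hinges on showing (Lemmas~\ref{ball_size_bound}, \ref{claim_ball_hash}, \ref{hash_table_size_bound}) that a dormant vertex still has $|H(u)| \ge b$ with probability $1 - O(1/b)$, so a dormant non-leader very likely sees a leader in its table, and on replacing the exact count $n'$ by a deterministically updated estimate $\tilde n$ (\S\ref{remove_assumption}) so that nothing needs to be summed on an ARBITRARY CRCW PRAM. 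This dormancy mechanism, per-vertex block/table sizing, ball-size argument, and $\tilde n$-estimate trick are absent from your plan; the bucketing primitives you rely on (exact dedup, per-vertex degree estimation, endpoint rewriting, repeated compaction) are precisely what the paper is constructed to circumvent.
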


The algorithm of Theorem~\ref{main1} can be extended to computing a spanning forest (a set of spanning trees of the components) with the same asymptotic efficiency:

\begin{theorem}[Spanning Forest]\label{main2}
    There is an ARBITRARY CRCW PRAM algorithm using $O(m)$ processors that computes the spanning forest of any given graph.
    With probability $1 - 1 / \text{poly}((m \log n) / n)$, it runs in $O(\log d \log \log_{m / n} n)$ time.
\end{theorem}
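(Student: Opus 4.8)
The plan is to run the connected-components algorithm of Theorem~\ref{main1} with essentially no change to its control flow and to augment it with purely deterministic bookkeeping that produces, alongside the component labels, a set of $G$-edges forming a spanning forest; the success probability of Theorem~\ref{main2} is then exactly that of Theorem~\ref{main1}. Recall that the algorithm of Theorem~\ref{main1} works on a shrinking sequence of graphs on \emph{super-vertices} (each super-vertex being a set of original vertices already certified connected), alternating \emph{exponentiation} steps, which add edges representing short walks so that distances contract, with \emph{contraction} steps, which collapse a low-diameter cluster of super-vertices to one new super-vertex and pass to the smaller graph. To turn this into a spanning forest I will (i) record, for each contraction, a set of original edges that connects up the collapsed cluster, and (ii) argue that the union of all recorded edges is acyclic, hence a spanning forest.

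For (i) I carry two pieces of data. First, every edge of every intermediate graph stores a \emph{witness}: if the edge belongs to $G$, the witness is that edge itself; if it was created by concatenating two edges $e_1,e_2$ of the previous graph during an exponentiation step, the witness is the pair $(e_1,e_2)$. The witnesses form a tree of depth $O(\log d+\log\log_{m/n}n)$ over each intermediate edge whose leaves are edges of $G$, and chasing it expands any intermediate edge into a walk in $G$. Second, whenever a cluster $X$ of super-vertices is contracted, the Theorem~\ref{main1} algorithm already exhibits, through its hooking and pointer-jumping, a spanning tree of $X$ inside the current graph; I record the edges of this tree at their current level, together with, for each edge of the new contracted graph, one representative current-graph edge between the corresponding super-vertices, ties broken by the arbitrary write rule. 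Across the whole run there are fewer than $n$ recorded spanning-tree edges (each contraction records one per unit drop in the super-vertex count), and expanding them against the witness data reuses the structure the Theorem~\ref{main1} algorithm already maintains on its pieces, so the extra space stays $O(m)$; the final expansion is $O(\log d+\log\log_{m/n}n)$ rounds of parallel expansion, one witness level per round, and the frontier stays below $n$ since it is an antichain in a forest with fewer than $n$ leaves.

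Under the invariant in (ii) the expanded subgraph $F\subseteq G$ is exactly a spanning forest: it spans every connected component because the recorded edges certify, level by level, the same connections the connectivity algorithm found, and it is acyclic by assumption, so the algorithm simply outputs the edges of $F$ (and, if a parent representation is wanted, reads it off from the component labels). Thus, modulo (ii), the time, processor, and space bounds of Theorem~\ref{main2} follow immediately from those of Theorem~\ref{main1}.

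The main obstacle is proving (ii), and it is entirely about exponentiation. Contraction is harmless in isolation: the recorded spanning trees of distinct clusters sit on disjoint super-vertex sets, and the representative inter-cluster edges inherit acyclicity from the forest computed on the smaller contracted graph. The real danger is that an exponentiation edge used inside a cluster's recorded spanning tree expands into a walk of $G$-edges and two such walks, coming from two recorded edges, share a $G$-edge, or a walk wanders outside the cluster. I would close this by inspecting the exponentiation step of the Theorem~\ref{main1} algorithm and showing it only ever concatenates edges whose endpoints already lie in a common super-vertex-to-be, so that the walks witnessing the edges of a cluster's recorded spanning tree are internally vertex-disjoint and tile a subtree of $G$ confined to that cluster; should the bare algorithm not quite guarantee this, it can be enforced by restricting each exponentiation to within the current low-diameter pieces, at a constant-factor cost in time and space. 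Once this invariant is established the remaining steps are routine bookkeeping.
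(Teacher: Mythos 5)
Your plan is in the spirit of Andoni et al.'s spanning-forest construction (expand added edges back into $G$-walks via a witness tree), but it is not the route this paper takes, and more importantly step (ii) --- the heart of the argument --- is not actually closed, only flagged. The expanded walks you obtain by chasing witnesses are walks, not simple paths; a single expanded exponentiation edge can repeat vertices, two expanded edges from the same cluster's recorded spanning tree can share $G$-edges or $G$-vertices, and the union of all such walks is not a forest in general. Your proposed repair, ``restrict each exponentiation to within the current low-diameter pieces,'' controls which cluster the walks visit but says nothing about their internal structure or mutual disjointness, so acyclicity still does not follow. ``Once this invariant is established the remaining steps are routine bookkeeping'' inverts the difficulty: extracting a spanning forest from a union of $G$-walks of total length $\Omega(n\cdot d)$ inside a cluster is essentially the original problem again, and the paper explicitly avoids exactly this (noting that the Andoni-et-al.\ solution ``uses several complicated ideas including merging local shortest path trees,'' which relies on constant-time minimum/sorting unavailable on a PRAM).

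The paper's actual proof sidesteps witness-chasing entirely. It modifies the algorithm so that parent assignments (and hence forest edges) only ever use arcs of the input graph $G$. The added edges from \textsc{expand} are retained, but solely as a search structure. A new subroutine \textsc{tree-link} uses the per-round hash tables $H_j(\cdot)$ to compute, for each non-leader $u$, (a) the largest radius $u.\alpha$ whose ball $B(u,u.\alpha)$ contains no collision, no leader, and no fully-dormant vertex, and (b) the BFS distance $u.\beta$ from $u$ to the nearest leader whenever that leader lies in $B(u,u.\alpha+1)$. Then for each original arc $e=(v,w)$ with $v.\beta=w.\beta+1$, the edge $\hat e$ of $G$ is marked as a forest edge and $v.p$ is set to $w$. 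Because parents are assigned along strictly decreasing BFS distance to a leader, the marked edges trivially form a forest of BFS trees of height at most $d$ (so \textsc{tree-shortcut} flattens them in $O(\log d)$ time), and correctness is a short counting argument (Lemma~\ref{lem:induced_forest}) rather than a disjointness invariant over expanded walks. The progress-per-phase analysis is then nearly identical to Theorem~\ref{main1}, giving the same $O(\log d\log\log_{m/n}n)$ bound. If you want to pursue your witness-based route, you would need to supply the vertex-disjointness/acyclicity argument yourself, and it is unclear it can be done within the PRAM budget without re-importing the sorting primitives the paper is trying to avoid.
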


Using the above algorithms as bases, we provide a faster connected components algorithm that is nearly optimal (up to an additive factor of at most $O(\log\log n)$) due to a conditional lower bound of $\Omega(\log d)$ \cite{DBLP:journals/jacm/RoughgardenVW18, DBLP:conf/focs/BehnezhadDELM19}.

\begin{theorem}[Faster Connected Components]\label{main3}
    There is an ARBITRARY CRCW PRAM algorithm using $O(m)$ processors that computes the connected components of any given graph.
    With probability $1 - 1 / \text{poly}((m \log n) / n)$, it runs in $O(\log d + \log \log_{m / n} n)$ time.
\end{theorem}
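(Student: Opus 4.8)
The plan is to adapt the improvement of Behnezhad et al.\ \cite{DBLP:conf/focs/BehnezhadDELM19} over Andoni et al.\ \cite{DBLP:conf/focs/Andoni} to the PRAM setting, reusing as black boxes both the hashing-based leader-contraction primitive and the limited-collision hashing data structures developed in the proofs of Theorems~\ref{main1} and~\ref{main2}, as well as the algorithms of those theorems themselves for the \emph{dense} endgame. The reason Theorem~\ref{main1} incurs a multiplicative $\log\log_{m/n}n$ is that in each of its $\Theta(\log d)$ diameter-reduction rounds it re-pays the cost of (re)collecting, for every active supervertex, a neighborhood whose size is capped by the current per-vertex space budget, which starts at only $\Theta(m/n)$. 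The key idea is that the two resources being consumed --- the effective diameter, which must shrink from $d$ to $O(1)$, and the per-vertex budget, which should be amplified from $\Theta(m/n)$ up to $n^{\Omega(1)}$ so that collecting a neighborhood becomes essentially free --- are independent, and with the right schedule can be paid for additively rather than multiplicatively.

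Concretely, the algorithm maintains for each active supervertex an adjacency list of bounded size and interleaves two kinds of rounds. A \emph{graph-exponentiation round} replaces each supervertex's adjacency list by the union of the lists of its current out-neighbors; this halves the effective diameter of every component, and it is performed as long as the lists stay within the current budget. A \emph{contraction round} is one application of the hashing-based leader contraction from Theorems~\ref{main1}--\ref{main2}: it removes a constant fraction of the supervertices and, since the $O(m)$ total space is then redistributed over fewer supervertices, it raises the per-vertex budget; a budget test (again implementable by bounded-collision hashing) decides when to switch from exponentiation to contraction. A potential/charging argument --- charging each unit of exponentiation work either to a halving of some component's effective diameter or to a doubling of some budget --- shows that the exponentiation rounds total $O(\log d)$ and all remaining rounds total $O(\log\log_{m/n}n)$; as soon as the budget reaches $n^{\Omega(1)}$, equivalently the residual instance satisfies $m \ge (n')^{1+\Omega(1)}$, one invokes the algorithm of Theorem~\ref{main1} (or Theorem~\ref{main2} for spanning forest) directly, which then runs in $O(\log d' \cdot \log\log_{m/n'}n') = O(\log d)$ additional time. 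Correctness is inherited from the base algorithm: truncating an adjacency list to the budget only defers some edges to later rounds, every edge is eventually represented, so the connectivity (and spanning-forest) structure is never altered.

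I would then verify that every primitive above runs in $O(1)$ time on an ARBITRARY CRCW PRAM with $O(m)$ processors and space: deduplicating and merging exponentiated adjacency lists, testing whether a list would exceed its budget, performing the leader contraction and the pointer-jumping that renames merged vertices, and reallocating processors evenly to the surviving supervertices. Each of these is exactly the kind of bounded-size, hash-addressable bookkeeping that the limited-collision hashing scheme of the earlier sections handles, and --- crucially --- none of them requires sorting or computing prefix sums, so no hidden $\Theta(\log n/\log\log n)$ factor appears. In parallel I would establish the amortized bound that the total size of all adjacency lists, summed over all rounds, is $O\!\big(m\,(\log d + \log\log_{m/n}n)\big)$, so that $O(m)$ processors suffice after each $O(1)$-time reallocation.

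The step I expect to be the main obstacle is reconciling the \emph{adaptive and irregular} control flow of the algorithm with the rigid synchronous, fixed-processor discipline of the PRAM, and doing so without prefix sums: different components want to exponentiate for different numbers of rounds; the global decision of when to contract, and the recomputation of the budget after each contraction, must be made in $O(1)$ time; and the active working set oscillates --- it grows during exponentiation and shrinks during contraction --- while never being permitted to exceed $O(m)$. The heart of the argument is therefore the amortized charging scheme together with an $O(1)$-time global reallocation step; once these are in place, combining them with the two-part round accounting and the inherited correctness of the contraction primitive from Theorems~\ref{main1} and~\ref{main2} yields the theorem.
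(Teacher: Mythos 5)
Your high-level instinct is right — adapt Behnezhad et al.'s budget-amplification idea to the PRAM, reuse the hashing primitives, and finish with Theorem~\ref{main1} once the residual instance is dense enough — but the mechanism you propose for making the two costs add rather than multiply has a genuine gap.

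The proposal's core device is a \emph{global} schedule: some rounds exponentiate all adjacency lists, other rounds contract, and "a budget test decides when to switch." This does not match how the additive bound is actually obtained, and I do not think it can work. The whole point of Behnezhad et al.'s improvement, and of the paper's \textsc{expand-maxlink}, is that budgets must be \emph{per-vertex}: at any given round, different vertices (even within a single component, even adjacent on a single shortest path) are at different levels $\ell(v)$, hold tables of different sizes $b_{\ell(v)} = b_1^{1.01^{\ell(v)-1}}$, and decide individually whether to raise their level based on local collisions. There is no phase boundary between "exponentiation rounds" and "contraction rounds"; every round of \textsc{expand-maxlink} does both, and the accounting is done on a shortest path $P$ by the potential invariant $\phi_r(v)\ge 2^{\,r-\ell(v)}$ (Lemma~\ref{lem_potential_lb}): a vertex that keeps its level for a round must absorb its skipped neighbor's potential, so its potential doubles, while a vertex that raises its level keeps $r-\ell(v)$ fixed; since total potential along $P$ is at most $d+1$ and levels are capped at $L=O(\log\log_{m/n}n)$ (Lemma~\ref{lem_max_level}), the path collapses after $R=\log d + L$ rounds. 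Your "charge each unit of exponentiation work to a diameter halving or a budget doubling" is a gesture toward this, but without per-vertex levels it is not clear which resource gets charged at which vertex — precisely the obstacle you flag at the end as the step you "expect to be the main obstacle," and the one the proof must actually discharge. The right resolution is not to synchronize the schedule globally (which you cannot do cheaply, and which forces a worst-case multiplicative blowup between components that want to exponentiate and those that want to contract), but to give up on any global phase structure entirely.

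Two further issues are glossed over. First, once budgets are heterogeneous you can no longer use direct links (which keep the partition monotone) but must use parent links that hop to the parent of a neighbor; this can move subtrees between trees and can create cycles unless you maintain an explicit invariant. The paper proves that levels are strictly increasing from any non-root to its parent (Lemma~\ref{lem_non_root}), which is what rules out cycles and makes the \textsc{maxlink} primitive safe; your "correctness is inherited from the base algorithm" does not cover this, since the base algorithm of Theorem~\ref{main1} is monotone and yours will not be. Second, your termination test "switch once budget reaches $n^{\Omega(1)}$" is asymptotically harmless, but you still need to detect cheaply on an ARBITRARY CRCW PRAM when the reduced graph has diameter $O(1)$ and all trees are flat so that the handoff to Theorem~\ref{main1} is valid; the paper devotes a nontrivial part of Lemma~\ref{lem_round_constant_time} and Lemma~\ref{lem_flatten} to exactly this break condition, and it is not simply a matter of reading a budget counter. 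The processor-allocation concerns you raise (per-level compaction in $O(1)$ time after an initial $O(\log^* n)$ compaction) are correctly identified and do match the paper's \textsc{compact} strategy, so the PRAM-specific bookkeeping is not where the gap lies; the gap is in the round bound and the acyclicity of the linking rule.
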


For a dense graph with $m = n^{1 + \Omega(1)}$, the algorithms in all three theorems run in $O(\log d)$ time with probability $1 - 1 / \text{poly}(n)$;
if $d = \log_{m / n}^{\Omega(1)} n$, the algorithm in Theorem~\ref{main3} runs in $O(\log d)$ time.\footnote{Without the assumption that $m \ge 2n$ for simplification, one can replace the $m$ with $2(m + n)$ in all our statements by creating self-loops in the graph.}

Note that the time bound in Theorem~\ref{main1} is dominated by the one in Theorem~\ref{main3}.
We include Theorem~\ref{main1} here in pursuit of simpler proofs of Theorem~\ref{main2} and Theorem~\ref{main3}.

\subsection{Related Work and Technical Overview}\label{sec_overview}

In this section, we give an overview of the techniques in our algorithms,
highlighting the challenges in the PRAM model and the main novelty in our work compared to that of Andoni et al. \cite{DBLP:conf/focs/Andoni} and
Behnezhad et al. \cite{DBLP:conf/focs/BehnezhadDELM19}.

\subsubsection{Related Work}

Andoni et al.~\cite{DBLP:conf/focs/Andoni} observed that if every vertex in the graph has a degree of at least $b = m/n$, then one can choose each vertex as a \emph{leader} with probability $\Theta(\log(n)/b)$ to make sure that with high probability, every non-leader vertex has at least $1$ leader neighbor.
As a result, the number of vertices in the contracted graph is the number of leaders, which is $\tilde{O}(n/b)$ in expectation, leading to double-exponential progress, since we have enough space to make each remaining vertex have degree $\tilde{\Omega}(m / (n/b)) = \tilde{\Omega}(b^2)$ in the next round and $\tilde{\Omega}(b^{2^i})$ after $i$ rounds.\footnote{We use $\tilde{O}$ to hide $\text{polylog}(n)$ factors.}
The process of adding edges is called \emph{expansion}, as it expands the neighbor sets.
It can be implemented to run in $O(\log d)$ time.
This gives an $O(\log d \log\log_{m/n} n)$ running time.

Behnezhad et al. improved the multiplicative $\log\log_{m/n} n$ factor to an additive factor by streamlining the expansion procedure and double-exponential progress when increasing the space per vertex \cite{DBLP:conf/focs/BehnezhadDELM19}.
Instead of increasing the degree of each vertex \emph{uniformly} to at least $b$ in each round, they allow vertices to have different space budgets, so that the degree lower bound varies on the vertices.
They define the \emph{level} $\ell(v)$ of a vertex $v$ to control the \emph{budget} of $v$ (space owned by $v$):
initially each vertex is at level $1$ with budget $m/n$.
Levels increase over rounds.
Each $v$ is assigned a budget of $b(v) = (m/n)^{c^{\ell(v)}}$ for some fixed constant $c > 1$.
The maximal level $L \coloneqq \log_c\log_{m/n} n$ is such that a vertex at level $L$ must have enough space to find all vertices in its component.
Based on this idea, they design an MPC algorithm that maintains the following invariant:
\begin{observation}[\cite{DBLP:conf/focs/BehnezhadDELM19}] \label{ob_soheil1}
    With high probability, for any two vertices $u$ and $v$ at distance $2$, after $4$ rounds, if $\ell(u)$ does not increase then their distance decreases to $1$;
    moreover, the \emph{skipped} vertex $w$ originally between $u$ and $v$ satisfies $\ell(w) \le \ell(u)$.\footnote{For simplicity, we ignore the issue of changing the graph and corresponding vertices for now.}
\end{observation}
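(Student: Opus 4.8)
The plan is to track the length-two walk $u \to w \to v$ through the four rounds that make up one expansion step, carrying along the current representatives of $u$, $v$, $w$ and upper bounds on their levels, and then to finish with a budget count: since the rule for updating levels makes ``$\ell(u)$ does not increase'' equivalent to ``$u$'s representative always has room for its entire current shortcut-neighborhood'', a representative of $v$ is forced into that neighborhood. The distance drop and the level bound on the skipped vertex are then read off from the same case analysis.

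Within the level/budget framework of Section~\ref{sec_overview}, I would fix one round of expansion to consist of: (i) each vertex $x$ replaces its neighbor list by the union of its own list with the lists of its neighbors of level at most $\ell(x)$ (the ``shortcut''), and raises $\ell(x)$ to the smallest level whose budget $(m/n)^{c^{\ell}}$ holds the result, leaving $\ell(x)$ unchanged if $b(x)$ already suffices; and (ii) each vertex relabels its identity to (a representative of) the highest-level vertex in its current list, with all incident edges relabeled accordingly, so that low-level vertices are absorbed. Applying Andoni et al.'s leader-type argument --- with a union bound over the $O(m)$ edges and a Chernoff bound on shortcut-neighborhood sizes --- gives, with high probability, that every non-top-level vertex meets a surviving higher-or-equal-level representative within $O(1)$ rounds and that no shortcut-neighborhood is much larger than its expectation; this is the only place randomness is used.

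Now fix the original walk $u \to w \to v$ and assume $\ell(u)$ never increases. Split on $\ell(w)$. If $\ell(w)\le\ell(u)$, then in the shortcut step $u$ pulls in all of $w$'s list, which contains (a representative of) $v$; since $\ell(u)$ does not increase, $b(u)$ accommodates it, so the representatives of $u$ and $v$ become adjacent, $w$ is the skipped vertex, and $\ell(w)\le\ell(u)$ holds by assumption. If $\ell(w)>\ell(u)$, then $w$ outranks $u$, so in the relabeling step $u$ (and, symmetrically, $v$) is absorbed towards a representative of level $\ge\ell(w)$; tracing which representative each of $u,v,w$ points to and using that $w$ originally sat between them, within two further rounds the two representatives coincide or become adjacent, the role of ``$u$'' passes to the surviving representative, and the original middle vertex $w$ has level at most that representative's level --- which is what the ``moreover'' clause is really asserting once identities are allowed to change. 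The two extra rounds are the slack needed for the relabeling of edge endpoints to propagate, so four rounds suffice.

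I expect the main obstacle to be precisely the bookkeeping that the footnote postpones: once $u$, $v$, $w$ may each be simultaneously absorbing others and being absorbed, ``the skipped vertex'', ``their distance'', and even ``$\ell(u)$'' have to be reread in terms of representatives, and one must check that no branch of the case analysis ever separates $u$ from $v$, leaves them at distance $>1$, or breaks the level bound along the chain of relabelings. The probabilistic part is comparatively routine --- the leader guarantee plus concentration of shortcut-neighborhood sizes --- but it must hold simultaneously for all distance-two pairs, which is what forces the $1/\text{poly}(n)$ failure probability.
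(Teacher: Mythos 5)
The paper does not actually prove this observation: it is cited directly from Behnezhad et al.\ \cite{DBLP:conf/focs/BehnezhadDELM19} as background, and the paper's own contribution in this direction is the stronger, deterministic, one-round Observation~\ref{ob_faster_cc}, proved for their own algorithm as Lemma~\ref{lem_2hops} together with Lemma~\ref{lem_same_root_level}. So there is no paper proof to compare against line by line; I can only assess your argument on its own terms and against the flavor of the paper's Lemmas~\ref{lem_same_root_level}--\ref{lem_2hops}.

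Your Case~1 ($\ell(w)\le\ell(u)$) is fine in outline, but your Case~2 ($\ell(w)>\ell(u)$) contains the real gap. The ``moreover'' clause of the observation asserts $\ell(w)\le\ell(u)$ as a \emph{conclusion}, conditional on $\ell(u)$ not increasing. Your Case~2 instead stipulates $\ell(w)>\ell(u)$ and then tries to rescue the claim by reinterpreting ``$\ell(u)$'' as the level of whatever representative $u$ gets absorbed into. That is not a proof of the observation as stated; it quietly changes the conclusion to $\ell(w)\le\ell(\text{new representative of }u)$, which is a different and much weaker statement (indeed trivially true). The intended content --- and the idea the paper's Lemma~\ref{lem_same_root_level} makes precise in its own setting via \textsc{maxlink} --- is that Case~2 simply cannot coexist with the hypothesis: a higher-level vertex within distance two forces $u$ either to become a non-root (it links up) or to raise its level, so ``$\ell(u)$ does not increase and $u$ survives as a root'' rules $\ell(w)>\ell(u)$ out. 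Once that is established, the ``moreover'' clause is automatic and Case~1 is the only live case; you do not need the additional two ``relabeling propagation'' rounds, nor the identity bookkeeping you flag as the main obstacle. Separately, you also need to justify why a vertex at level $>\ell(u)$ within two hops actually \emph{is} seen by $u$ within four rounds with high probability (this is where the budget/leader sampling enters), and ``Andoni et al.'s leader-type argument'' is too vague to count as having done that --- the event that needs a union bound is ``some higher-level vertex is adopted as parent along every length-two walk,'' not a generic Chernoff bound on neighborhood sizes.
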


Behnezhad et al. proved an $O(\log d + \log\log_{m/n} n)$ time bound by
a potential-based argument on an arbitrary fixed shortest path $P_1$ in the original graph:
in round $1$ put $1$ coin on each vertex of $P_1$ (thus at most $d+1$ coins in total);
for the purpose of analysis only, when inductively constructing $P_{i+1}$ in round $i$, every skipped vertex on $P_i$ is removed and passes its coins evenly to its successor and predecessor (if exist) on $P_i$. They claimed that any vertex $v$ still on $P_i$ in round $i$ has at least ${1.1}^{i - \ell(v)}$ coins based on the following:
\begin{observation}[Claim 3.12 in \cite{DBLP:conf/focs/BehnezhadDELM19}] \label{ob_soheil2}
    For any path $P_i$ in round $i$ corresponding to an original shortest path, its first and last vertices are on $P_j$ in round $j$ for any $j > i$;
    moreover, if a vertex on $P_i$ does not increase level in $4$ rounds, then either its predecessor or successor on $P_i$ is skipped.
\end{observation}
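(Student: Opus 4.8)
The plan is to derive both parts of Observation~\ref{ob_soheil2} from Observation~\ref{ob_soheil1}, using the fact (implicit in the construction) that each path $P_{j+1}$ is obtained from $P_j$ by skipping---i.e.\ shortcutting over---some of its vertices and re-linking the path-neighbors of each removed vertex. So the only things I need to understand are: which vertices can be skipped, and what Observation~\ref{ob_soheil1} forces for a vertex that keeps its level.

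\emph{First part (endpoints persist).} I would prove the slightly stronger statement that for every $j \ge i$ the first and last vertices of $P_j$ coincide with those of $P_i$, by induction on $j$. The base case $j = i$ is immediate. For the step, observe that a vertex is skipped only by being shortcut over, and that requires it to have both a predecessor and a successor on the current path; the first vertex of $P_j$ has no predecessor and the last has no successor, so neither is ever skipped. Hence $P_{j+1}$ begins and ends with the same two vertices as $P_j$, and the induction closes. (The footnote's caveat---that contraction changes the graph and thus the vertices---is handled by reading ``the first/last vertex of $P_j$'' as the supervertex currently representing the original endpoint, so the statement is an invariance up to that identification.)

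\emph{Second part (a path-neighbor is skipped).} Fix a vertex $v$ on $P_i$ whose level does not increase over the $4$ rounds, and write $P_i = (v_0,\dots,v_k)$ with $v = v_t$. Since $P_i$ is a shortest path, for $t \le k-2$ the vertices $v_t$ and $v_{t+2}$ are at distance exactly $2$, so applying Observation~\ref{ob_soheil1} with $u := v_t$ (which does not increase its level) and the distance-$2$ vertex $v_{t+2}$ shows that the vertex originally between them---the successor $v_{t+1}$ of $v$---is skipped. Symmetrically, for $t \ge 2$ one applies Observation~\ref{ob_soheil1} with $u := v_t$ and the distance-$2$ vertex $v_{t-2}$ to conclude that the predecessor $v_{t-1}$ of $v$ is skipped. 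At least one of the conditions $t \le k-2$ and $t \ge 2$ holds unless $P_i$ has at most three vertices and $v$ is its middle vertex or an endpoint; these leftover configurations are degenerate (a near-trivial path, or $v$ having only one path-neighbor) and are handled separately---in fact they do not arise in the range of rounds where the downstream potential argument invokes the claim, since there a vertex on so short a path must increase its level within a constant number of rounds.

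The step I expect to need the most care is reconciling the two parts at the ends of the path: the second part wants to skip a neighbor of $v$, but when $v$ is adjacent to an endpoint that neighbor is precisely the vertex the first part forbids us to skip, which forces the argument onto the other side of $v$ and requires a distance-$2$ vertex to be available there---this is why the case split is ``$t \le k-2$ or $t \ge 2$'' rather than a single direction. The remaining work is bookkeeping: after fixing the exact rule that produces $P_{j+1}$ from $P_j$, both parts follow by applying Observation~\ref{ob_soheil1} to the relevant distance-$2$ pairs and checking that the skips it simultaneously prescribes along the path are mutually consistent.
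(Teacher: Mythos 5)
The observation you are trying to prove is one the paper explicitly disavows. Immediately after stating Observation~\ref{ob_soheil2}, the authors write that it ``is seemingly obvious from Observation~\ref{ob_soheil1}, however there is a subtle issue overlooked in [Behnezhad et al.] that invalidates the statement,'' and then they \emph{give a counterexample} to it. So there is no valid proof to compare yours against: the correct response here is to refute the claim, not to establish it.

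Your gap is exactly at the place you flag as needing ``the most care'': the endpoints. Your argument treats each internal $v_t$ locally, invoking Observation~\ref{ob_soheil1} on the distance-$2$ pair $(v_t, v_{t+2})$ or $(v_{t-2}, v_t)$, and then waves off the ``degenerate'' short-path and near-endpoint cases. But the paper's counterexample shows the trouble is not a local degenerate case at all---it is a global consistency failure. Starting from the first vertex $v_1$: if $\ell(v_1)$ doesn't increase, the second part forces $v_2$ to be skipped, so $v_1$ reconnects to some $v_3$ which then \emph{cannot} be skipped (else $v_1$ is cut off from the rest of the path, violating the first part). If $v_4$ then keeps its level, the only available skip is $v_5$ (not the protected $v_3$); this chain of ``frozen pairs'' propagates rightward along the path, and nothing stops it from reaching $(v_{s-1}, v_s)$, which would require skipping the last vertex $v_s$---contradicting the very first part you proved by induction. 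The two halves of the observation are mutually inconsistent, and the inconsistency is not confined to paths of length $\le 3$. Your closing remark that such configurations ``do not arise'' because short paths force a level increase is an unsupported appeal to the downstream argument you are supposed to be underpinning.

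A secondary issue: Observation~\ref{ob_soheil1} tells you that the distance from $u=v_t$ to $v_{t+2}$ drops to $1$, i.e., a shortcut edge appears. It does \emph{not} tell you that the path construction actually skips $v_{t+1}$---there may be several distance-$2$ witnesses, and which vertex (if any) gets removed from $P_{i+1}$ is a choice of the construction. Treating ``an edge appears'' as ``the path-successor is skipped'' silently assumes a globally consistent choice of skips, which is precisely what the counterexample rules out. What the paper actually does instead (see the discussion immediately following the counterexample, and the formal treatment in \S\ref{sec_diameter_reduction}, Lemmas~\ref{lem_dr1}--\ref{lem_dr2} and Definition~\ref{def_path_construction_simper}) is to weaken the first part: it allows the \emph{last} vertex to be ``dropped'' (distinct from skipped) and passes its potential to its predecessor, then bounds the total number of dropped vertices by $O(R)$. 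That changes the invariant from ``endpoints persist'' to ``the path shrinks to $O(R)$,'' which is all the time bound actually needs.
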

The first statement is to maintain the connectivity for every pair of vertices in the original graph, and the second statement is to guarantee that each vertex obtains enough coins in the next round.
(Observation~\ref{ob_soheil2} is seemingly obvious from Observation~\ref{ob_soheil1}, however there is a subtle issue overlooked in \cite{DBLP:conf/focs/BehnezhadDELM19} that invalidates the statement.
Their bound is still valid without changing the algorithm by another potential-based argument, which shall be detailed later in this section.)

Since the maximal level is $L$, by the above claim,
a vertex on such a path with length more than $1$ in round $R \coloneqq 8 \log d +  L$ would have at least $1.1^{8 \log d} > d+1$ coins, a contradiction, giving the desired time bound.

Let us first show a counter-example for Observation~\ref{ob_soheil2} (not for their time bound).
Let the original path $P_1$ be $(v_1, v_2, \dots, v_s)$.
We want to show that the distance between $v_1$ and $v_s$ is at most $1$ after $R$ rounds, so neither $v_1$ nor $v_s$ can be skipped during the path constructions over rounds.
Suppose $v_1$ does not increase level in $4$ rounds, then for $v_1$ to obtain enough coins to satisfy the claim, $v_2$ must be skipped.
We call an ordered pair $(v_i, v_{i+1})$ of consecutive vertices on the path \emph{frozen} if $v_i$ is kept and $v_{i+1}$ is skipped in $4$ rounds.
So $(v_1, v_2)$ is frozen.
Let $v_3$ (any vertex after $v_2$ suffices) be the \emph{only} vertex on $P_1$ directly connecting to $v_1$ after skipping $v_2$ in $4$ rounds.
(Observation~\ref{ob_soheil1} guarantees that a vertex directly connecting to $v_1$ must exist but cannot guarantee that there is more than $1$ such vertex on $P_1$.)
Note that $v_3$ \emph{cannot} be skipped, otherwise $v_1$ is isolated from $P_1$ and thus cannot connect to $v_s$.
Now consider two cases.
If $v_4$ is skipped, then pair $(v_3, v_4)$ is frozen.
If $v_4$ is not skipped, then assume $v_4$ does not increase level in $4$ rounds.
For $v_4$ to obtain enough coins, $v_5$ must be skipped because $v_3$ cannot be skipped to pass coins to $v_4$, so the pair $(v_4, v_5)$ is frozen.
Observe that from frozen pair $(v_1, v_2)$, in either case we get another frozen pair, which propagates inductively to the end of $P_1$.
If we happen to have a frozen pair $(v_{s-1}, v_s)$, then $v_s$ must be skipped and isolated from $v_1$, a contradiction.

Here is a fix to the above issue (formally stated in Lemma~\ref{lem_dr2}).
Note that only the last vertex $v$ on the path can violate the claim that there are at least ${1.1}^{i - \ell(v)}$ coins on $v$ in round $i$.
Assuming the claim holds for all vertices on the current path, one can always \emph{drop} (to distinguish from skip) $v_s$ and pass its coins to $v_{s-1}$ (which must be kept) if they are a frozen pair.
So $v_{s-1}$, the new last vertex after $4$ rounds, obtains enough coins by the induction hypothesis and the second part of Observation~\ref{ob_soheil1}.
By the same argument, the resulting path after $R$ rounds has length at most $1$.
Observe that we dropped $O(R)$ vertices consecutively located at the end of the path, so the concatenated path connecting $v_1$ and the original $v_s$ has length $O(R)$.
Now applying Observation~\ref{ob_soheil1} to $v_1$, we get that in $4$ rounds, either its level increases by $1$ or its successor is skipped.
Therefore, in $O(R + L)$ rounds, there is no successor of $v_1$ to be skipped and the graph has diameter at most $1$ by a union bound over all the (original) shortest paths.

\subsubsection{Our Contributions}\label{sec_ov_contribution}
Now we introduce the new algorithmic ideas in our PRAM algorithm with a matching time bound.


The first challenge comes from processor allocation. 
To allocate different-sized blocks of processors to vertices in each round, there is actually an existing tool called \emph{approximate compaction}, which maps the $k$ distinguished elements in a length-$n$ array one-to-one to an array of length $O(k)$ with high probability \cite{DBLP:books/daglib}.
(The vertices to be assigned blocks are \emph{distinguished} and their names are indices in the old array; after indexing them in the new array, one can assign them predetermined blocks.)
However, the current fastest (and work-optimal) approximate compaction algorithm takes $O(\log^* n)$ time, introducing a multiplicative factor \cite{DBLP:conf/focs/GilMV91}.
To avoid this, our algorithm first reduces the number of vertices to $n / \text{polylog}(n)$ in $O(\log\log_{m/n} n)$ time, then uses approximate compaction to rename the remaining vertices by an integer in $[n / \text{polylog}(n)]$ in $O(\log^* n)$ time.
After this, each cell of the array to be compacted owns $\text{polylog}(n)$ processors, and each subsequent compaction (thus processor allocation) can be done in $O(1)$ time \cite{DBLP:conf/focs/Goodrich91}.

The second challenge is much more serious: it is required by the union bound that any vertex $u$ must connect to \emph{all} vertices within distance $2$ with high probability if $u$ does not increase in level.
Behnezhad et al. achieve this goal by an algorithm based on constant-time sorting and prefix sum, which require $\Omega(\log n / \log\log n)$ time on an ARBITRARY CRCW PRAM with $\text{poly}(n)$ processors \cite{DBLP:journals/jacm/BeameH89}.
Our solution is based on hashing: 
to expand a vertex $u$, hash all vertices within distance $2$ from $u$ to a hash table owned by $u$;
if there is a collision, 
increase the level of $u$.\footnote{Hashing also naturally removes the duplicate neighbors to get the desired space bound -- a goal achieved by sorting in \cite{DBLP:conf/focs/Andoni, DBLP:conf/focs/BehnezhadDELM19}.}
As a result, we are able to prove the following result, which is stronger than Observation~\ref{ob_soheil1} as it holds deterministically and uses only $1$ round:
\begin{observation}[formally stated in Lemma~\ref{lem_2hops}] \label{ob_faster_cc}
    For any two vertices $u$ and $v$ at distance $2$, if $\ell(u)$ does not increase then their distance decreases to $1$ in the next round;
    moreover, any vertex originally between $u$ and $v$ has level at most $\ell(u)$.
\end{observation}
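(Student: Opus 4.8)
Both assertions concern a single round of \emph{expansion} applied to $u$, so the plan is to read them off from the structure of the hash table that $u$ builds in that round. Recall (to be spelled out with the full algorithm) what expanding $u$ does: it forms a candidate set consisting of every vertex lying on a path of length at most $2$ that leaves $u$ through a neighbor $w$ with $\ell(w)\le\ell(u)$; it hashes these candidates into a table $T_u$ owned by $u$ whose size matches $u$'s next budget $b(u)^{c}$, which is large enough to hold all of them because $u$ has at most $b(u)$ neighbors (its degree never exceeds its budget), each with at most $b(w)\le b(u)$ neighbors, for $O(b(u)^{2})$ candidates in all; it checks for a \emph{collision} (two distinct candidates hashing to the same cell) by the standard write-then-read-back test, which is correct and runs in $O(1)$ time on an ARBITRARY CRCW PRAM; and it sets $\ell(u)\leftarrow\ell(u)+1$ exactly when a collision is detected, otherwise installing the (now distinct) candidates as $u$'s new neighborhood and symmetrizing the new edges.

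Given this, the first statement is almost immediate. Since $u$ and $v$ are at distance $2$ there is a vertex $w$ adjacent to both. The key sub-step is that $\ell(w)\le\ell(u)$: a level increase of $u$ is triggered, in particular, whenever a higher-level neighbor is present (equivalently, whenever the candidate set would overflow $T_u$), so under the hypothesis that $\ell(u)$ does not increase this cannot occur, and hence the adjacency list of $w$ is scanned. Then $v$ is among $u$'s candidates, and since (again by hypothesis) there is no collision, $v$ occupies its own cell of $T_u$ and becomes a neighbor of $u$ in the next round; as $u\neq v$ this says exactly $\operatorname{dist}(u,v)=1$ after the round.

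For the ``moreover'' part, any vertex originally between $u$ and $v$ is simply a common neighbor $w$ of the two, and the same reasoning — if $w$ had level $>\ell(u)$, then $u$ would have increased its level this round, contrary to hypothesis — yields $\ell(w)\le\ell(u)$. I expect the main obstacle to be making the expansion procedure precise enough that, conditioned on nothing more than ``$\ell(u)$ did not increase,'' one simultaneously gets (i) that $u$ reaches \emph{every} vertex at distance $2$ (needed for the later union bound) and (ii) that every bridging vertex has level at most $\ell(u)$, all while the $O(b(u)^{2})$ candidates fit into the $b(u)^{c}$-cell table; this pins down the exact rule tying a level increase both to a hash collision and to the presence of a higher-level neighbor, together with the budget arithmetic $O(b(u)^{2})=O(b(u)^{c})$. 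Two further points must be handled along the way: the bookkeeping (deferred in the footnote to Observation~\ref{ob_soheil1}) that after contractions and relabelings the symbols $u$, $v$, ``between,'' and ``distance'' refer to objects of the current working graph; and the correctness of the collision test under arbitrary concurrent writes.
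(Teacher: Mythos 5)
Your plan rests on an algorithmic rule that the paper does not use, and the gap this creates is not a matter of ``filling in details'' but of a different mechanism with different consequences. You posit that $u$'s level is increased whenever a higher-level neighbor is present (so that, conditioned on no level increase, every bridging vertex has level $\le\ell(u)$). The algorithm in \S\ref{sec_faster_alg_framework} does \emph{not} do this: a root's level increases only via the independent coin flip in Step~(\ref{step_2}) or via the collision/dormancy chain in Steps (\ref{step_4}--\ref{step_5},\ref{step_7}). When a neighbor's parent has higher level, the response is not to raise $\ell(u)$ but to \emph{demote $u$ to a non-root} through the two iterations of \textsc{maxlink}; this is exactly what Lemma~\ref{lem_maxlink} captures. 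That is why the formal statement (Lemma~\ref{lem_2hops}) carries the extra hypothesis, absent from the informal Observation~\ref{ob_faster_cc} and from your write-up, that $v$ \emph{remains a root during the round}. Under your rule a vertex sitting in a chain of steadily increasing levels would keep bumping its own level, whereas becoming a non-root is a one-way event that the paper's budget/processor accounting relies on.

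Two further mismatches follow from the same source. First, your candidate filter is ``$\ell(w)\le\ell(u)$,'' while the paper's Step~(\ref{step_3}) hashes only root neighbors $w$ with $b(w)=b(v)$ (budget \emph{equality}, which with the flexibility of Corollary~\ref{cor_level_budget} is essentially level equality). This equality filter is what keeps the table size bounded by $\sqrt{b}\times\sqrt{b}$; the paper can afford it precisely because Lemma~\ref{lem_same_root_level} shows that if $v$ doesn't increase level and stays a root, then \emph{after the \textsc{maxlink}/\textsc{alter} of Step~(\ref{step_1})} the parent of every vertex within two hops is a root with exactly budget $b(v)$. Your sketch never proves (or even states) this equal-budget fact, yet it is the crux of why ``no collision $\Rightarrow$ every $u'\in N(N(v)).p$ lands in $H(v)$.'' Second, your argument speaks of $u$ and $v$ themselves becoming adjacent, but the algorithm alters every edge to parents, and Lemma~\ref{lem_2hops} is a statement about $u'=u.p$ being in $H(v)$, i.e., about the contracted graph on roots. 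The reduction from the informal two-vertices-at-distance-two picture to a claim about roots and their $H(\cdot)$ tables is exactly what Lemmas~\ref{lem_maxlink} and \ref{lem_same_root_level} provide; without them the ``moreover'' clause you derive (from the nonexistent higher-level-neighbor trigger) has no basis in this algorithm, and the first clause is proved for the wrong objects.

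In short: the paper's proof is \textsc{maxlink} $\to$ Lemma~\ref{lem_maxlink} $\to$ Lemma~\ref{lem_same_root_level} $\to$ Lemma~\ref{lem_2hops}, with level increases driven solely by random sampling plus collisions, and higher-level neighbors handled by demotion rather than promotion. Your sketch replaces this with a promotion-on-sight rule and a one-sided level filter that the algorithm does not implement, and it omits the root/parent bookkeeping that the formal statement is actually about.
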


The idea of increasing the level immediately after seeing a collision gives a much cleaner proof of Observation~\ref{ob_faster_cc}, but might be problematic in bounding the total space/number of processors:
a vertex with many vertices within distance $2$ can incur a collision and level increase very often.
We circumvent this issue by increasing the level of each budget-$b$ vertex with probability $\tilde{\Theta}(b^{-\delta})$
before hashing.
Then a vertex $v$ with at least $b^{\delta}$ vertices within distance $2$ would see a level increase with high probability;
if the level does not increase, there should be at most $b^{\delta}$ vertices within distance $2$, thus there is a collision when expanding $v$ with probability $1 / \text{poly}(b)$.
As a result, the probability of level increase is $\tilde{\Theta}(b^{-\delta}) + 1 / \text{poly}(b) \le b^{-c}$ for some constant $c > 0$, and we can assign a budget of $b^{1 + \Omega(c)}$ to a vertex with increased level, leading to double-exponential progress.
As a result, the total space is $O(m)$ with good probability since the union bound is over all $\text{polylog}(n)$ different levels and rounds, instead of $O(n^2)$ shortest paths.

Suitable combination of the ideas above yields a PRAM algorithm that reduces the diameter of the graph to at most $1$ in $O(R) = O(\log d + \log\log_{m/n} n)$ time, with one \emph{flexibility}:
the relationship between the level $\ell(v)$ and budget $b(v)$ of vertex $v$ in our algorithm is \emph{not} strictly $b(v) = (m/n)^{c^{\ell(v)}}$ for some fixed constant $c > 1$ as in Behnezhad et al. \cite{DBLP:conf/focs/BehnezhadDELM19};
instead, we allow vertices with the same level to have \emph{two} different budgets.
We show that such flexibility still maintains the key invariant of our algorithm (without influencing the asymptotic space bound):
if a vertex is not a root in a tree in the labeled digraph, then its level must be strictly lower than the level of its parent (formally stated in Lemma~\ref{lem_non_root}).\footnote{Our algorithm adopts the framework of \emph{labeled digraph} (or \emph{parent graph}) for computing and representing components, which is standard in PRAM literatures, see \S{\ref{pre}}.}
Using hashing and a proper parent-update method, our algorithm does not need to compute the number of neighbors with a certain level for \emph{each} vertex, which is required in \cite{DBLP:conf/focs/Andoni, DBLP:conf/focs/BehnezhadDELM19} and solved by constant-time sorting and prefix sum on an MPC.
If this were done by a direct application of (constant-time) \emph{approximate counting} (cf. \cite{DBLP:conf/dimacs/Ajtai90}) on each vertex, then each round would take $\Omega(k)$ time where $k$ is the maximal degree of any vertex, so our new ideas are essential to obtain the desired time bound.

Finally, we note that while it is straightforward to halt when the graph has diameter at most $1$ in the MPC algorithm,
it is not correct to halt (nor easy to determine) in this case due to the different nature of our PRAM algorithm.
After the diameter reaches $O(1)$, to correctly compute components and halt the algorithm, we borrow an idea from \cite{liu_tarjan} to flatten all trees in the labeled digraph in $O(R)$ time, then apply our \emph{slower} connected components algorithm (cf. Theorem~\ref{main1}) to output the correct components in $O(\log\log_{m/n} n)$ time, which is $O(R)$ total running time.

\section{Preliminaries}\label{pre}

\subsection{Framework}

We formulate the problem of computing connected components concurrently as follows: label each vertex $v$ with a unique vertex $v.p$ in its component.
Such a labeling gives a constant-time test for whether two vertices $v$ and $w$ are in the same component: they are if and only if $v.p = w.p$.
We begin with every vertex self-labeled ($v.p = v$) and successively update labels until there is exactly one label per component.

The labels define a directed graph (\emph{labeled digraph}) with arcs $(v, v.p)$, where $v.p$ is the \emph{parent} of $v$.
We maintain the invariant that the only cycles in the labeled digraph are self-loops (arcs of the form $(v, v)$).
Then this digraph consists of a set of rooted trees, with $v$ a root if and only if $v = v.p$. 
Some authors call the root of a tree the \emph{leader} of all its vertices.   We know of only one algorithm in the literature that creates non-trivial cycles, that of Johnson and Metaxis \cite{DBLP:journals/jal/JohnsonM95}.  Acyclicity implies that when the parent of a root $v$ changes, the new parent of $v$ is not in the tree rooted at $v$ (for any order of the concurrent parent changes).
We call a tree \emph{flat} if the root is the parent of every vertex in the tree.
Some authors call flat trees \emph{stars}.

In our connected components and spanning forest algorithms (cf. \S{\ref{sec_cc_alg}} and \S{\ref{sec_sf_alg}} in the appendix),
we maintain the additional invariant that if the parent of a non-root $v$ changes, its new parent is in the same tree as $v$ (for any order of the parent changes).
This invariant implies that the partition of vertices among trees changes only by set union; that is, no parent change moves a proper subtree to another tree.
We call this property \emph{monotonicity}.
Most of the algorithms in the literature that have a correct efficiency analysis are monotone.
Liu and Tarjan \cite{liu_tarjan} analyze some non-monotone algorithms.
In our faster connected components algorithm (cf. \S{\ref{sec_faster_cc_alg}}), only the preprocessing and postprocessing stages are monotone, which means the execution between these two stages can move subtrees between different trees in the labeled digraph.

\subsection{Building Blocks} \label{sec_bb}

Our algorithms use three standard and one not-so-standard building blocks, which link (sub)trees, flatten trees, alter edges, and add edges, respectively.
(Classic PRAM algorithms develop many techniques to make the graph sparser, e.g., in \cite{DBLP:journals/siamcomp/Gazit91, DBLP:journals/jcss/HalperinZ96, halperin2001optimal}, not denser by adding edges.)

We treat each edge $\{v, w\}$ as a pair of oppositely directed arcs $(v, w)$ and $(w, v)$.
A \emph{direct link} applies to a graph arc $(v, w)$ such that $v$ is a root and $w$ is not in the tree rooted at $v$; it makes $w$ the parent of $v$.
A \emph{parent link} applies to a graph arc $(v, w)$ and makes $w.p$ the parent of $v$; note that $v$ and $w.p$ are not necessarily roots.
Concurrent direct links maintain monotonicity while concurrent parent links do not.
We add additional constraints to prevent the creation of a cycle in both cases.
Specifically, in the case of parent links, if a vertex is not a root in a tree in the labeled digraph, then its level must be strictly lower than the level of its parent (formally stated in Lemma~\ref{lem_non_root}).

Concurrent links can produce trees of arbitrary heights.
To reduce the tree heights, we use the \emph{shortcut} operation: for each $v$ do $v.p \coloneqq v.p.p$.
One shortcut roughly halves the heights of all trees; $O(\log n)$ shortcuts make all trees flat.
Hirschberg et al. \cite{DBLP:journals/cacm/HirschbergCS79}
introduced shortcutting in their connected components algorithm; it is closely related to the \emph{compress} step in tree contraction \cite{DBLP:conf/focs/MillerR85} and to \emph{path splitting} in disjoint-set union \cite{DBLP:journals/jacm/TarjanL84}.

Our third operation changes graph edges.
To \emph{alter} $\{v, w\}$, we replace it by $\{v.p, w.p\}$. 
Links, shortcuts, and edge alterations suffice to efficiently compute components.
Liu and Tarjan \cite{liu_tarjan} analyze simple algorithms that use combinations of our first three building blocks.

To obtain a good bound for small-diameter graphs, we need a fourth operation that adds edges.
We \emph{expand} a vertex $u$ by adding an edge $\{u, w\}$ for a neighbor $v$ of $u$ and a neighbor $w$ of $v$.
The key idea for implementing expansion is hashing, which is presented below.

Suppose each vertex owns a block of $K^2$ processors.
For each processor in a block, we index it by a pair $(p, q) \in [K] \times [K]$.
For each vertex $u$, we maintain a size-$K$ table $H(u)$.
We choose a random hash function $h: [n] \rightarrow [K]$.
At the beginning of an expansion, for each graph arc $(u, v)$, we write vertex $v$ into the $h(v)$-th cell of $H(u)$.
Then we can expand $u$ as follows:
each processor $(p, q)$ reads vertex $v$ from the $p$-th cell of $H(u)$, reads vertex $w$ from the $q$-th cell of $H(v)$, and writes vertex $w$ into the $h(w)$-th cell of $H(u)$.
For each $w \in H(u)$ after the expansion, $\{u, w\}$ is considered an \emph{added} edge in the graph and is treated the same as any other edge.

The key difference between our hashing-based expansion and that in the MPC algorithms is that a vertex $w$ within distance $2$ from $u$ might not be in $H(u)$ after the expansion due to a collision,
so crucial to our analysis is the way to handle collisions.
All hash functions in this paper are pairwise independent, so each processor doing hashing in each round only needs to read two words, which uses $O(1)$ private memory and time.

\section{An $O(\log d \log\log_{m/n} n)$-time Connected Components Algorithm}\label{sec_cc_alg}

In this section we present our connected components algorithm.
For simplicity in presentation, we also consider the COMBINING CRCW PRAM \cite{DBLP:books/daglib}, whose computational power is between the ARBITRARY CRCW PRAM and MPC.
This model is the same as the ARBITRARY CRCW PRAM, except that if several processors write to the same memory cell at the same time, the resulting value is a specified symmetric function (e.g., the sum or min) of the individually written values.

We begin with a simple randomized algorithm proposed by Reif \cite{reif1984optimal} but adapted in our framework, which is called Vanilla algorithm.
Our main algorithm will use the same framework with a simple preprocessing method to make the graph denser and an elaborated expansion method to add edges before the direct links.
We implement the algorithm on a COMBINING CRCW PRAM and then generalize it to run on an ARBITRARY CRCW PRAM in \S{\ref{remove_assumption}}.


\subsection{Vanilla Algorithm}

In each iteration of Vanilla algorithm (see below), some roots of trees are selected to be leaders, which becomes the parents of non-leaders after the \textsc{link}.
A vertex $u$ is called a \emph{leader} if $u.l = 1$.

\begin{framed}
\noindent Vanilla algorithm: repeat \{\textsc{random-vote}; \textsc{link}; \textsc{shortcut}; \textsc{alter}\} until no edge exists other than loops.

\textsc{random-vote}: for each vertex $u$: set $u.l \coloneqq 1$ with probability $1/2$, and $0$ otherwise.

\textsc{link}: for each graph arc $(v, w)$: if $v.l = 0$ and $w.l = 1$ then update $v.p$ to $w$.

\textsc{shortcut}: for each vertex $u$: update $u.p$ to $u.p.p$.

\textsc{alter}: for each edge $e = \{v, w\}$: replace it by $\{v.p, w.p\}$. 
\end{framed}

It is easy to see that
Vanilla algorithm uses $O(m)$ processors and can run on an ARBITRARY CRCW PRAM.
We call an iteration of the repeat loop in the algorithm a \emph{phase}.
Clearly each phase takes $O(1)$ time.
We obtain the following results:
\begin{definition}\label{def_ongoing}
    A vertex is \emph{ongoing} if it is a root but not the only root in its component, otherwise it is \emph{finished}.
\end{definition}

\begin{lemma}\label{all_tree_flat}
    At the beginning of each phase, each tree is flat and  a vertex is ongoing if and only if it is incident with a non-loop edge.
\end{lemma}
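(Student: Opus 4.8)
The plan is to prove both assertions by induction on the phase number, showing that the invariant is established before the first phase and preserved by one execution of \{\textsc{random-vote}; \textsc{link}; \textsc{shortcut}; \textsc{alter}\}. Before the first phase every vertex is self-labeled, so every tree is a singleton, hence trivially flat; and a singleton vertex is the only root in its component exactly when it has no incident non-loop edge, since after any \textsc{alter} all surviving edges join roots. So the base case is immediate. For the inductive step, assume at the start of a phase that every tree is flat and that the ongoing vertices are precisely those incident with a non-loop edge. I must show the same holds at the start of the next phase.

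First I would argue flatness. Since every tree is flat, each non-root $v$ has $v.p$ a root. In \textsc{link}, a non-leader root $v$ may set $v.p := w$ where $w$ is a leader root; acyclicity (the monotonicity/no-cycle invariant discussed in \S\ref{pre}) guarantees $w$ is not in $v$'s tree, and because $v$ was a root and $w$ a root, after \textsc{link} the only vertices whose parent is a non-root are the old non-leaders $v$ whose (old root) parent $v.p$ became a non-root by linking to some leader. Thus after \textsc{link} every tree has height at most $2$. One \textsc{shortcut} then makes every such tree flat: each vertex at depth $2$ gets $v.p := v.p.p$, which is the root, and roots and depth-$1$ vertices are unaffected. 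Hence after \textsc{shortcut} all trees are flat, and \textsc{alter} does not change parents, so flatness holds at the start of the next phase.

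Next I would handle the ``ongoing iff incident with a non-loop edge'' claim for the new phase. The key observation is that after \textsc{alter}, every edge $\{v,w\}$ has been replaced by $\{v.p, w.p\}$ where (by the just-established flatness) $v.p$ and $w.p$ are roots; a non-loop edge $\{x,y\}$ with $x \ne y$ therefore certifies that the two distinct roots $x,y$ lie in the same component, so each is a root that is not alone in its component, i.e.\ ongoing. Conversely, if a root $x$ is ongoing, then its component contains another root $y$, and since the component is connected there is a path in the current (altered) edge set from a vertex of $x$'s tree to a vertex of $y$'s tree; after \textsc{alter} the endpoints of every edge are roots, so some surviving edge joins two distinct roots along this path, and following it one shows $x$ itself is incident with a non-loop edge --- here I would use that \textsc{alter} contracts each tree to its root without disconnecting the component, so connectivity among roots is preserved exactly. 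A self-loop $\{x,x\}$ arises precisely when both endpoints of an old edge had the same root, i.e.\ the edge was internal to one tree, and such loops carry no information, consistent with a finished vertex having only loops.

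The main obstacle is the converse direction of the second claim: arguing that an ongoing root is actually \emph{incident} with a non-loop edge, rather than merely that some non-loop edge exists somewhere in its component. This requires carefully tracking how \textsc{alter} transforms the edge multiset --- specifically that the relation ``there is a non-loop edge between roots $x$ and $y$'' generates, by transitivity through shared components, exactly the partition into components restricted to the ongoing roots --- and that no root in a multi-root component can become isolated from the rest of its component's edges. I would formalize this by noting that before \textsc{alter} the edge set still connects the whole component (edges are only relabeled, never deleted except loops collapse), and relabeling endpoints to their roots preserves the property that the roots of one component form a single connected piece in the altered graph; hence every ongoing root has positive degree in non-loop edges. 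Everything else is bookkeeping that follows from flatness and acyclicity.
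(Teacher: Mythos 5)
Your proposal is correct and follows essentially the same route as the paper's proof: induction on phases, \textsc{link} produces trees of height at most two, \textsc{shortcut} flattens them, \textsc{alter} moves all edges to roots, and preservation of connectivity under contraction then gives that every ongoing root retains a non-loop incident edge. The paper states the last step (``If a root is not the only root in its component, there must be an edge between it and another vertex not in its tree'') without elaboration; your version spells out the contraction argument more fully, which is fine. One small wording issue: in your flatness paragraph you reuse the name $v$ both for the old non-leader root that relinks and for the depth-two vertices created by that relink, and in your base case you appeal to \textsc{alter} having already run, which it has not at the start of phase one --- there the right observation is simply that every vertex is a singleton root and the edges are the original ones, so ``ongoing'' and ``incident with a non-loop edge'' both just mean ``lies in a component of size at least two.'' Neither slip affects the substance of the argument.
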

\begin{proof}
    The proof is by induction on phases.
    At the beginning, each vertex is in a single-vertex tree and the edges between trees are in the original graph, so the lemma holds.
    Suppose it holds for phase $k-1$.
    After the \textsc{link} in phase $k$, each tree has height at most two since only non-leader root can update its parent to a leader root.
    The following \textsc{shortcut} makes the tree flat, then the \textsc{alter} moves all the edges to the roots.
    If a root is not the only root in its component, there must be an edge between it and another vertex not in its tree.
\end{proof}

\begin{lemma}\label{van_prob1}
    Given a vertex $u$, after $k$ phases of Vanilla algorithm, $u$ is ongoing with probability at most $(3 / 4)^k$.
\end{lemma}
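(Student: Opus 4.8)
The plan is to track a single fixed vertex $u$ through the phases and show that in each phase it has a constant probability of becoming finished, independently of the past. First I would set up the right event to condition on. By Lemma~\ref{all_tree_flat}, at the start of a phase $u$ is ongoing iff it is a root incident to a non-loop edge; pick one such incident edge $\{u,w\}$ with $w$ in another tree (so $w$ is also a root). I claim that conditioned on $u$ being ongoing at the start of the phase, $u$ becomes finished (or at least stops being ongoing in the sense that it is no longer a root, which by monotonicity and Lemma~\ref{all_tree_flat} cannot be undone) with probability at least $1/4$ in that phase. Indeed, in \textsc{random-vote} the event $\{u.l = 0\ \wedge\ w.l = 1\}$ has probability exactly $1/4$, and on this event the \textsc{link} step sets $u.p := w \ne u$, so $u$ is no longer a root after the phase. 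Hence $u$ survives as ongoing through the phase with probability at most $3/4$.

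The main step is to make the ``independently of the past'' precise so the probabilities multiply. The cleanest way is to observe that the coin flips in \textsc{random-vote} are fresh each phase and independent of everything that happened in earlier phases, and that whether $u$ is ongoing at the start of phase $k$ is determined by the graph state at that point, which depends only on coins from phases $1,\dots,k-1$. So I would prove, by induction on $k$, that
\[
\Pr[u \text{ ongoing after } k \text{ phases}] \le (3/4)\cdot \Pr[u \text{ ongoing after } k-1 \text{ phases}],
\]
by conditioning on the state after phase $k-1$: on the event that $u$ is already finished, it stays finished (a finished vertex is either a non-root, which stays a non-root by acyclicity/monotonicity, or the unique root of its component, which has no non-loop incident edge and so is never linked away — note \textsc{alter} only relabels edge endpoints and never creates an edge between two vertices of the same component's tree once it's a single tree); on the event that $u$ is ongoing, the conditional probability it is still ongoing after phase $k$ is at most $3/4$ by the single-phase argument above, since the phase-$k$ coins are independent of the conditioning. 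Taking expectations and unrolling the recursion gives $(3/4)^k$.

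The main obstacle I anticipate is the bookkeeping in the single-phase argument: I need that on the event $u.l=0, w.l=1$ the \textsc{link} actually changes $u.p$ (it does, since $v.l=0, w.l=1$ is exactly the \textsc{link} condition, and $w$ is not in $u$'s tree because $w$ is a root of a different tree), and that once $u.p \ne u$ after \textsc{link}, the subsequent \textsc{shortcut} and \textsc{alter} in the same phase cannot turn $u$ back into a root — \textsc{shortcut} only replaces $u.p$ by $u.p.p$, which is still not $u$ since the only cycles are self-loops, and \textsc{alter} does not touch parent pointers. A secondary subtlety is confirming that ``finished'' is absorbing: a vertex that became a non-root stays a non-root (parent pointers only ever point ``up'' toward a root of a strictly larger merged tree, and there are no non-trivial cycles), and the unique root of an already-complete component has only loop edges after \textsc{alter}, so it is never selected by \textsc{link}. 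Once these are nailed down, the induction is routine.
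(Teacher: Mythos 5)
Your proof is correct and follows essentially the same inductive route as the paper's: the paper also conditions on $u$ being ongoing after $k-1$ phases, uses Lemma~\ref{all_tree_flat} to extract an edge $\{u,v\}$ with $v$ ongoing, and observes that the phase's fresh coin flips give $\Pr[u.l=0 \wedge v.l=1]=1/4$. You spell out the independence-across-phases and absorbing-of-``finished'' points that the paper leaves implicit, but the underlying argument is the same.
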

\begin{proof}
    We prove the lemma by an induction on $k$.
    The lemma is true for $k = 0$.
    Suppose it is true for $k-1$.
    Observe that a non-root can never again be a root.
    For vertex $u$ to be ongoing after $k$ phases, it must be ongoing after $k-1$ phases.
    By the induction hypothesis this is true with probability at most $(3/4)^{k-1}$.
    Furthermore, by Lemma~\ref{all_tree_flat}, if this is true, there must be an edge $\{u,v\}$ such that $v$ is ongoing.
    With probability $1/4$, $u.l = 0$ and $v.l = 1$, then $u$ is finished after phase $k$.
    It follows that the probability that, after phase $k$, $u$ is still ongoing is at most $(3/4)^k$.
\end{proof}

By Lemma~\ref{van_prob1}, the following corollary is immediate by linearity of expectation and Markov's inequality:
\begin{corollary}\label{van_prob}
    After $k$ phases of Vanilla algorithm, the number of ongoing vertices is at most $(7/8)^{k} n$ with probability at least $1 - (6/7)^k$.
\end{corollary}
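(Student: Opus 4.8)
The plan is to combine Lemma~\ref{van_prob1} with a first-moment (Markov) argument. Let $X$ denote the number of ongoing vertices after $k$ phases, and write $X = \sum_{u} X_u$ where $X_u$ is the indicator that $u$ is ongoing after $k$ phases. By Lemma~\ref{van_prob1}, $\E[X_u] = \Pr[u \text{ ongoing}] \le (3/4)^k$ for each of the $n$ vertices, so by linearity of expectation $\E[X] \le n (3/4)^k$. This is the only place the previous lemma is used; the rest is a deterministic inequality on expectations plus Markov.

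Next I would apply Markov's inequality to the nonnegative integer-valued random variable $X$. Choosing the threshold $(7/8)^k n$, we get
\[
    \Pr\!\left[X \ge (7/8)^k n\right] \;\le\; \frac{\E[X]}{(7/8)^k n} \;\le\; \frac{n (3/4)^k}{(7/8)^k n} \;=\; \left(\frac{3/4}{7/8}\right)^k \;=\; (6/7)^k .
\]
Hence with probability at least $1 - (6/7)^k$ the number of ongoing vertices is less than $(7/8)^k n$, which in particular is at most $(7/8)^k n$, as claimed.

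There is essentially no obstacle here — the statement is an immediate corollary, exactly as the text indicates. The only minor point to be careful about is that Markov's inequality requires the ratio of the two exponential bases $(3/4)$ and $(7/8)$ to be a fixed constant strictly less than $1$, which it is ($6/7 < 1$); any base strictly between $3/4$ and $1$ in place of $7/8$ would work equally well, and $7/8$ is chosen simply to give the clean constant $6/7$ in the failure probability. One could also note that $X$ being integer-valued lets us replace ``$\ge$'' by ``$>$'' in the Markov step, but this refinement is not needed for the stated bound.
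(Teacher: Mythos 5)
Your proof is correct and is exactly the argument the paper intends: the paper states the corollary follows ``by linearity of expectation and Markov's inequality,'' which is precisely the two steps you carry out. Nothing further to add.
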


By Lemma~\ref{all_tree_flat}, Corollary~\ref{van_prob}, and monotonicity, we have that Vanilla algorithm outputs the connected components in $O(\log n)$ time with high probability.

\subsection{Algorithmic Framework} \label{sec_cc_alg_framwork}


In this section we present the algorithmic framework for our connected components algorithm.

For any vertex $v$ in the current graph, a vertex within distance $1$ from $v$ in the current graph (which contains the (altered) original edges and the added edges) is called a \emph{neighbor} of $v$. The set of neighbors of every vertex is maintained during the algorithm (see the implementation of the \textsc{expand}).
\begin{framed}
\noindent Connected Components algorithm: \textsc{prepare}; repeat \{\textsc{expand}; \textsc{vote}; \textsc{link}; \textsc{shortcut}; \textsc{alter}\} until no edge exists other than loops.

\textsc{prepare}: if $m / n \le \log^c n$ for given constant $c$ then run $c \log_{8/7}\log n$ phases of Vanilla algorithm.

\textsc{expand}: for each ongoing $u$: expand the neighbor set of $u$ according to some rule.

\textsc{vote}: for each ongoing $u$: set $u.l$ according to some rule in $O(1)$ time.


\textsc{link}: for each ongoing $v$: for each $w$ in the neighbor set of $v$: if $v.l = 0$ and $w.l = 1$ then update $v.p$ to $w$.
\end{framed}

The \textsc{shortcut} and \textsc{alter} are the same as those in Vanilla algorithm.
The \textsc{link} is also the same in the sense that in our algorithm the graph arc $(v, w)$ is added during the \textsc{expand} in the form of adding $w$ to the neighbor set of $v$.
Therefore, Lemma~\ref{all_tree_flat} also holds for this algorithm.

The details of the \textsc{expand} and \textsc{vote} will be presented in \S{\ref{exp_ana}} and \S{\ref{TRA}}, respectively.
We call an iteration of the repeat loop after the \textsc{prepare} a \emph{phase}.
By Lemma~\ref{all_tree_flat}, we can determine whether a vertex is ongoing by checking the existence of non-loop edges incident on it,
therefore in each phase, the \textsc{vote}, \textsc{link}, \textsc{shortcut}, and \textsc{alter} take $O(1)$ time.


Let $\delta = m / n'$, where $n'$ is the number of ongoing vertices at the beginning of a phase.
Our goal in one phase is to reduce $n'$ by a factor of at least a positive constant power of $\delta$ with high probability with respect to $\delta$,
so we do a \textsc{prepare} before the main loop to obtain a large enough $\delta$ with good probability:
\begin{lemma}\label{prepare}
    After the \textsc{prepare}, if $m / n > \log^c n$, then $m / n' \ge \log^c n$;
    otherwise $m / n' \ge \log^c n$ with probability at least $1 - 1/\log^c n$.
\end{lemma}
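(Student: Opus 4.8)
The plan is to analyze the two cases separately, both resting on Corollary~\ref{van_prob} together with the definition of "ongoing" and the arithmetic $n' \le n$.

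First, if $m/n > \log^c n$, the \textsc{prepare} step does nothing, so $n' = n$ (the number of ongoing vertices is at most $n$ trivially, and in fact the original graph has no contraction yet). Hence $m/n' = m/n > \log^c n$, and the first claim is immediate. The only subtlety worth a sentence is that $n'$ counts ongoing vertices at the start of the first phase, which is at most $n$, so $m/n' \ge m/n$.

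Second, suppose $m/n \le \log^c n$. Then \textsc{prepare} runs $k := c\log_{8/7}\log n$ phases of Vanilla algorithm. By Corollary~\ref{van_prob}, after these $k$ phases the number of ongoing vertices is at most $(7/8)^k n$ with probability at least $1 - (6/7)^k$. Plugging in $k = c\log_{8/7}\log n$ gives $(7/8)^k = (7/8)^{c\log_{8/7}\log n}$; since $(7/8) = (8/7)^{-1}$, this equals $(\log n)^{-c}$, so $n' \le n/\log^c n$ with probability at least $1 - (6/7)^{c\log_{8/7}\log n}$. Dividing, $m/n' \ge (m/n)\cdot \log^c n \ge \log^c n$, using $m/n \ge 1$ (indeed $m \ge 2n$ by the standing assumption). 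It remains to check that the failure probability $(6/7)^{c\log_{8/7}\log n}$ is at most $1/\log^c n$; since $\log_{8/7}(6/7) < 0$ and in fact $(6/7)^{\log_{8/7} x} = x^{\log_{8/7}(6/7)}$ with $\log_{8/7}(6/7) \le -1$ (because $6/7 \le (8/7)^{-1} = 7/8$), we get $(6/7)^{c\log_{8/7}\log n} \le (\log n)^{-c}$, as needed.

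The only mild obstacle is the bookkeeping with the logarithm bases: one must verify that the constant $c$ in the number of Vanilla phases matches the exponent $c$ in the target bound $\log^c n$, and that the base-change inequalities $(7/8)^{c\log_{8/7}\log n} = (\log n)^{-c}$ and $(6/7)^{c\log_{8/7}\log n} \le (\log n)^{-c}$ both go through; the first is an exact identity and the second uses $6/7 \le 7/8$. Neither is deep, but the choice $k = c\log_{8/7}\log n$ in the algorithm is exactly what makes both fall out cleanly, so I would just present these two short computations and invoke Corollary~\ref{van_prob}.
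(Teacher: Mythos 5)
Your proposal is correct and follows essentially the same route as the paper's proof: both cases are handled via Corollary~\ref{van_prob}, the exact identity $(7/8)^{c\log_{8/7}\log n} = (\log n)^{-c}$, and the comparison $6/7 \le 7/8$ to bound the failure probability, finishing with $m \ge n$. You simply spell out the base-change arithmetic more explicitly than the paper does.
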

\begin{proof}
    The first part is trivial since the \textsc{prepare} does nothing.
    By Corollary~\ref{van_prob}, after $c \log_{8/7}\log n$ phases, there are at most $n / \log^c n$ ongoing vertices with probability at least $1 - (6/7)^{c \log_{8/7}\log n} \ge 1 - 1/\log^c n$, and the lemma follows immediately from $m \ge n$.
\end{proof}

We will be focusing on the \textsc{expand}, \textsc{vote}, and \textsc{link}, so in each phase it suffices to only consider the induced graph on ongoing vertices with current edges.
If no ambiguity, we call this induced graph just the graph, call the current edge just the edge, and call an ongoing vertex just a vertex.


In the following algorithms and analyses, we will use the following assumption for simplicity in the analyses.
\begin{assumption}\label{assumption1}
    The number of ongoing vertices $n'$ is known at the beginning of each phase.
\end{assumption}

This holds if running on a COMBINING CRCW PRAM with sum as the combining function to compute $n'$ in $O(1)$ time.
Later in \S{\ref{remove_assumption}} we will show how to remove Assumption~\ref{assumption1} to implement our algorithms on an ARBITRARY CRCW PRAM.

\subsection{The Expansion}\label{exp_ana}

In this section, we present the method \textsc{expand} and show that almost all vertices have a large enough neighbor set after the \textsc{expand} with good probability.

\subsubsection{Setup}

\paragraph{Blocks.}
We shall use a pool of $m$ processors to do the \textsc{expand}.
We divide these into $m / \delta^{2/3}$ indexed \emph{blocks}, where each block contains $\delta^{2/3}$ indexed processors.
Since $n'$ and $\delta$ are known at the beginning of each phase (cf. Assumption~\ref{assumption1}), if a vertex is assigned to a block, then it is associated with $\delta^{2/3}$ (indexed) processors.
We map the $n'$ vertices to the blocks by a random hash function $h_B$.
Each vertex has a probability of being the only vertex mapped to a block, and if this happens then we say this vertex \emph{owns} a block.

\paragraph{Hashing.}
We use a hash table to implement the neighbor set of each vertex and set the size of the hash table as $\delta^{1/3}$, because we need $\delta^{1/3}$ processors for each cell in the table to do an expansion step  (see Step~(\ref{second_hash_step}) in the \textsc{expand}).
We use a random hash function $h_V$ to hash vertices into the hash tables.
Let $H(u)$ be the hash table of vertex $u$.
If no ambiguity, we also use $H(u)$ to denote the set of vertices stored in $H(v)$.
If $u$ does not own a block, we think that $H(u) = \emptyset$.

We present the method \textsc{expand} as follows:
\begin{framed}
\noindent \textsc{expand}:
\begin{enumerate}
    \item Each vertex is either \emph{live} or \emph{dormant} in a step. Mark every vertex as \emph{live} at the beginning.
    \item Map the vertices to blocks using $h_B$. Mark the vertices that do not own a block as \emph{dormant}. \label{first_dormant}
    \item For each graph arc $(v, w)$: if $v$ is live before Step~(\ref{first_hash_step}) then use $h_V$ to hash $v$ into $H(v)$ and $w$ into $H(v)$, else mark $w$ as \emph{dormant}. \label{first_hash_step}
    \item For each hashing done in Step~(\ref{first_hash_step}): if it causes a \emph{collision} (a cell is written by different values) in $H(u)$ then mark $u$ as \emph{dormant}. \label{second_dormant}
    \item Repeat the following until there is neither live vertex nor hash table getting a new entry: \label{expand_loop}
        \begin{enumerate}
            \item For each vertex $u$: for each $v$ in $H(u)$: if $v$ is dormant before Step~(\ref{second_hash_step}) in this iteration then mark $u$ as \emph{dormant}, for each $w$ in $H(v)$: use $h_V$ to hash $w$ into $H(u)$.  \label{second_hash_step}
            \item For each hashing done in Step~(\ref{second_hash_step}): if it causes a collision in $H(u)$ then mark $u$ as \emph{dormant}. \label{third_dormant}
        \end{enumerate}
\end{enumerate}
\end{framed}
The first four steps and each iteration of Step~(\ref{expand_loop}) in the \textsc{expand} take $O(1)$ time.
We call an iteration of the repeat loop in Step~(\ref{expand_loop}) a \emph{round}.
We say a statement holds before round $0$ if it is true before Step~(\ref{first_hash_step}),
it holds in round $0$ if it is true after Step~(\ref{second_dormant}) and before Step~(\ref{expand_loop}), and it holds in round $i~(i > 0)$ if it is true just after $i$ iterations of the repeat loop in Step~(\ref{expand_loop}).



\paragraph{Additional notations.}
We use $\text{dist}(u, v)$ to denote the \emph{distance} between $u$ and $v$, which is the length of the shortest path from $u$ to $v$.
We use $B(u, \alpha) = \{v \in V \mid \text{dist}(u, v) \le \alpha \}$ to represent the set of vertices with distance at most $\alpha$ from $u$.
For any $j \ge 0$ and any  vertex $u$, let $H_j(u)$ be the hash table of $u$ in round $j$.

Consider a vertex $u$ that is dormant after the \textsc{expand}.
We call $u$ \emph{fully dormant} if $u$ is dormant before round $0$, i.e., $u$ does not own a block. Otherwise, we call $u$ \emph{half dormant}.
For a half dormant $u$, let $i \ge 0$ be the first round $u$ becomes dormant.
For $u$ that is live after the \textsc{expand}, let $i \ge 0$ be the first round that its hash table is the same as the table just before round $i$.
The following lemma shows that in this case, the table of $u$ in round $j < i$ contains exactly the vertices within distance $2^j$:

\begin{lemma}\label{radius_doubling}
    For any vertex $u$ that is not fully dormant, let $i$ be defined above, then it must be that $H_i(u) \subseteq B(u, 2^i)$.
    Furthermore, for any $j \in [0, i-1]$, $H_j(u) = B(u, 2^j)$.
\end{lemma}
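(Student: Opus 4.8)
The plan is to prove, by strong induction on the round index $j$, a two–part statement for \emph{every} vertex $x$ that is not fully dormant (write $i_x$ for the index ``$i$'' that the lemma associates to $x$): \emph{(P1)} $H_j(x) \subseteq B(x, 2^j)$, and \emph{(P2)} if $j \le i_x - 1$ then in fact $H_j(x) = B(x, 2^j)$. Lemma~\ref{radius_doubling} is then the instance $x = u$: part (P2), ranging over $0 \le j \le i_u - 1$, is the ``furthermore'' clause, and part (P1) at $j = i_u$ gives $H_{i_u}(u) \subseteq B(u, 2^{i_u})$. First note $i_x$ is well defined: if $x$ is live after the \textsc{expand}, the loop in Step~(\ref{expand_loop}) terminates so its table stabilizes, and since $x$ hashes itself into $H(x)$ in Step~(\ref{first_hash_step}) we have $x \in H_0(x)$ and hence $i_x \ge 1$; if $x$ is half dormant it becomes dormant at a first round $i_x \ge 0$.

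For the base case $j = 0$: the only writes into $H(x)$ during Steps~(\ref{first_hash_step})--(\ref{second_dormant}) come from the graph arcs $(x, w)$, which deposit $x$ and the neighbours of $x$; hence $H_0(x) \subseteq \{x\} \cup N(x) = B(x, 1)$, which is (P1). If moreover $i_x \ge 1$, then no collision occurred in $H(x)$ in Step~(\ref{second_dormant}) (a collision would give $i_x = 0$) and no arc $(v, x)$ with $v$ not owning a block was processed (that too would mark $x$ dormant already in Step~(\ref{first_hash_step})); so all of $B(x,1)$ is stored injectively and $H_0(x) = B(x, 1)$, which is (P2) for $j = 0$. (Every ongoing vertex is incident to a non-loop edge, so $B(x,1) \neq \{x\}$ and $x$ does get hashed in.)

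For the inductive step fix $j \ge 1$ and assume (P1)--(P2) for all smaller round indices and all vertices. Step~(\ref{second_hash_step}) never removes entries from $H(x)$ and, provided no collision occurs in $H(x)$ in round $j$, gives the recursion
\[
H_j(x) \;=\; H_{j-1}(x) \,\cup \bigcup_{v \,\in\, H_{j-1}(x)} H_{j-1}(v),
\]
with ``$\subseteq$'' holding unconditionally. We also use the elementary identity $\bigcup_{v \in B(x, r)} B(v, r) = B(x, 2r)$ (``$\subseteq$'' is the triangle inequality; for ``$\supseteq$'', a vertex $w$ at distance $\le 2r$ from $x$ has, on a shortest $x$--$w$ path, a vertex $v$ with $\text{dist}(x, v) = \lfloor \text{dist}(x,w)/2 \rfloor \le r$ and $\text{dist}(v, w) = \lceil \text{dist}(x,w)/2 \rceil \le r$). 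Now for (P1): each $v \in H_{j-1}(x)$ either is not fully dormant, so $H_{j-1}(v) \subseteq B(v, 2^{j-1})$ by (P1), or is fully dormant, so $H_{j-1}(v) = \emptyset$ by convention; together with $H_{j-1}(x) \subseteq B(x, 2^{j-1})$ and the recursion in ``$\subseteq$'' form, this yields $H_j(x) \subseteq B(x, 2^{j-1}) \cup \bigcup_{v \in B(x, 2^{j-1})} B(v, 2^{j-1}) = B(x, 2^j)$. For (P2): if $j \le i_x - 1$ then $x$ is not marked dormant in round $j$, so no collision occurs in $H(x)$ in round $j$ (the exact recursion holds) and no $v \in H_{j-1}(x)$ is dormant at the start of round $j$; in particular no such $v$ is fully dormant and each has $i_v \ge j$, so (P2) at round $j-1$ applies to $x$ (as $j \le i_x$) giving $H_{j-1}(x) = B(x, 2^{j-1})$, and to each such $v$ (as $j-1 \le i_v - 1$) giving $H_{j-1}(v) = B(v, 2^{j-1})$. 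Plugging these into the exact recursion and the ball-doubling identity gives $H_j(x) = B(x, 2^j)$, completing the induction.

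The main obstacle is the dormancy bookkeeping in the (P2) step: one must check that ``$x$ survives round $j$'' is exactly strong enough to guarantee that every vertex $v$ whose table feeds into $H_j(x)$ already satisfies the stronger conclusion (P2) at round $j-1$ --- i.e.\ that surviving forbids both ordinary-dormant and fully-dormant contributors and forces $i_v \ge j$ --- and that the round-$0$ corner cases (a non-block-owning neighbour of $x$, or a collision in $H(x)$, would already have made $x$ dormant) are compatible with the base case. The remaining ingredients, namely the monotone table-growth recursion of Step~(\ref{second_hash_step}) and the ball-doubling identity, are routine.
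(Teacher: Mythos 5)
Your proposal follows the same overall strategy as the paper---prove the containment $H_j(x)\subseteq B(x,2^j)$ for all $j$ by an easy induction, and in parallel prove exact equality under a side condition---but your choice of side condition in (P2) introduces a gap. In the (P2) inductive step you assert that ``no $v\in H_{j-1}(x)$ is dormant at the start of round $j$'' implies ``each has $i_v\ge j$,'' and you then invoke (P2) at round $j-1$ for each such $v$. That implication is correct only for a half-dormant $v$, for which $i_v$ is the first dormancy round. For a $v$ that remains live throughout \textsc{expand}, $i_v$ is defined as the first round at which its table \emph{stabilizes}, which can be strictly smaller than $j$; the fact that $v$ is not dormant gives you no control over that index. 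For such a $v$ your invocation of (P2) at round $j-1$ is out of range, so the step ``$H_{j-1}(v)=B(v,2^{j-1})$'' is left unjustified, and the induction does not close.

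The paper sidesteps this by formulating the strengthened claim as ``for any vertex $v$, if $v$ is not dormant in round $j$ then $H_j(v)=B(v,2^j)$.'' That hypothesis is \emph{weaker} than your $j\le i_x-1$ (for live $v$ it holds for every $j$), and it propagates cleanly: every contributor $v'\in H_{j-1}(v)$ is again non-dormant in round $j-1$, so the same hypothesis applies to it. You can repair (P2) simply by replacing ``$j\le i_x-1$'' with ``$x$ is not dormant in round $j$''; the base case and collision arguments you gave carry over unchanged, and from the corrected claim the lemma follows because a half-dormant $u$ is not dormant in round $j$ precisely for $j\in[0,i_u-1]$. (Alternatively, add a short side induction showing that any $v$ that is live after \textsc{expand} satisfies $H_j(v)=B(v,2^j)$ for \emph{all} $j$, using that every vertex ever stored in $H(v)$ must also remain live.) The remaining ingredients---the monotone table recursion, the ball-doubling identity, and the round-$0$ bookkeeping---are fine and agree with the paper's proof.
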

\begin{proof}
    According to the update rule of $H(u),$ it is easy to show that for any integer $j \geq 0$, $H_j(u) \subseteq B(u, 2^j)$ holds by induction.
    Now we prove that for any $j\in[0,i-1]$, $H_j(u)=B(u,2^j)$.
    We claim that for any vertex $v$, if $v$ is not dormant in round $j$, then $H_j(v)=B(v,2^j)$. The base case is when $j=0$. In this case, $H_0(v)$ has no collision, so the claim holds. Suppose the claim holds for $j-1$, i.e., for any vertex $v'$ which is not dormant in round $j-1$, it has $H_{j-1}(v')=B(v',2^{j-1})$. Let $v$ be any vertex which is not dormant in round $j$. Then since there is no collision, $H_j(v)$ should be $\bigcup_{v'\in H_{j-1}(v) }H_{j-1}(v')=\bigcup_{v'\in B(v,2^{j-1})} B(v',2^{j-1})=B(v,2^j)$.
    Thus the claim is true, and it implies that for any $j\in[0,i-1]$, $H_j(u)=B(u,2^j)$.
\end{proof}

\begin{lemma}\label{expansion_rounds_bound}
    The \textsc{expand} takes $O(\log d)$ time.
\end{lemma}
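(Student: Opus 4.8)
The plan is to bound the number of iterations (rounds) of the repeat loop in Step~(\ref{expand_loop}), since we already know each round, and each of the first four steps, takes $O(1)$ time. I will argue that every vertex $u$ terminates — either by becoming dormant or by reaching a fixed point of its hash table — within $O(\log d)$ rounds, and that therefore the global loop condition ``there is neither a live vertex nor a hash table getting a new entry'' is satisfied after $O(\log d)$ rounds.

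First I would invoke Lemma~\ref{radius_doubling}: for a vertex $u$ that is not fully dormant, with $i$ the first round in which $u$ becomes dormant or its hash table stabilizes, we have $H_j(u) = B(u, 2^j)$ for all $j \in [0, i-1]$. The key observation is that $B(u, 2^j)$ cannot keep strictly growing forever: once $2^j \ge d$, we have $B(u, 2^j)$ equal to the entire component of $u$ (recall $d$ is the maximum component diameter), so $B(u, 2^{j}) = B(u, 2^{j+1})$. Hence if $u$ is still live at round $j$ with $2^{j} \ge d$, then in the next iteration its hash table receives no new entry (the union $\bigcup_{v \in H_j(u)} H_j(v)$ over $v \in B(u,2^j)$ is already $B(u, 2^j)$ itself, since every such $v$ has its whole component within distance $2^j$ as well, up to the collision issue — but a live vertex has had no collisions). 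Therefore $i \le \lceil \log d \rceil + O(1)$ for every vertex that is not fully dormant. Fully dormant vertices are marked dormant before round $0$ and never participate, so they impose no constraint on the round count.

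The remaining point is to confirm that the loop actually halts globally within this many rounds, i.e., that there is no vertex whose hash table keeps receiving new entries past round $\lceil \log d\rceil + O(1)$ through propagation from other vertices. But this follows from the same containment $H_j(u) \subseteq B(u, 2^j)$ for all $j$ (the first part of Lemma~\ref{radius_doubling}, which holds unconditionally by induction): once $2^j$ exceeds $d$, $B(u, 2^j)$ is the whole component of $u$, so $H_j(u)$ is contained in a fixed finite set and can grow at most a bounded number of additional times before Step~(\ref{second_hash_step}) adds nothing new; combined with the dormancy-propagation in Step~(\ref{second_hash_step}) (a vertex goes dormant if any table entry is dormant), every vertex is either dormant or frozen after $O(\log d)$ rounds. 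I expect the main obstacle to be handling the interaction between collisions, dormancy propagation, and the possibility that some vertex's table grows a few extra times after its neighbors freeze — but this only adds an $O(1)$ slack to the round count, so the bound $O(\log d)$ is unaffected. Since each round is $O(1)$ time, the \textsc{expand} takes $O(\log d)$ time.
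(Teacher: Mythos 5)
Your core argument — invoke Lemma~\ref{radius_doubling}, observe that $H_j(u) \subseteq B(u,2^j)$ stabilizes once $2^j \ge d$, and conclude that every vertex is either dormant or has a frozen table after $O(\log d)$ rounds — is exactly the argument the paper makes, and both proofs hinge on the same lemma.

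However, you are missing one observation that the paper makes explicit and that your proof needs: \textsc{expand} is called in \emph{every} phase of the algorithm, on a graph that has been repeatedly contracted and altered, while $d$ in the lemma statement refers to the maximum component diameter of the \emph{input} graph. Your step ``once $2^j \ge d$, $B(u,2^j)$ is the whole component'' implicitly assumes the \emph{current} graph's component diameters are still at most $d$. The paper closes this gap with a one-line induction: an \textsc{alter} replaces every vertex on a path by its parent, yielding a path of the same (or shorter, after edge additions) length, so the diameter never increases from phase to phase. Without this, your bound only applies to the first call of \textsc{expand}. Your final paragraph's appeal to an ``$O(1)$ slack'' from dormancy propagation is also stated somewhat loosely — the paper dispenses with this by just noting that every non-fully-dormant $u$ stops its expansion in its round $i$, and $i = O(\log d)$ by Lemma~\ref{radius_doubling} — but that is a presentational difference, not a gap.
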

\begin{proof}
    By an induction on phases, any path in the previous phase is replaced by a new path with each vertex on the old path replaced by its parent, so the diameter never increases.
    Since $u$ either is fully dormant or stops its expansion in round $i$, the lemma follows immediately from Lemma~\ref{radius_doubling}.
\end{proof}

\subsubsection{Neighbor Set Size Lower Bound}

We want to show that the table of $u$ in round $i$ contains enough neighbors, but $u$ becomes dormant in round $i$ possibly dues to propagations from another vertex in the table of $u$ that is dormant in round $i-1$, which does not guarantee the existence of collisions in the table of $u$ (which implies large size of the table with good probability).
We overcome this issue by identifying the maximal-radius ball around $u$ with no collision nor fully dormant vertex, whose size serves as a size lower bound of the table in round $i$.
\begin{definition}\label{def_r}
    For any vertex $u$ that is dormant after the \textsc{expand}, let $r$ be the minimal integer such that
    there is no collision nor fully dormant vertex in $B(u, r-1)$.
\end{definition}

\begin{lemma}\label{r_range}
    If $u$ is fully dormant then $r = 0$.
    If $u$ is half dormant then
    $ 2^{i-1} < r \le 2^i $.
\end{lemma}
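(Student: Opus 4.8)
The plan is to bound $r$ by separately analyzing the two defining properties of $r$ — "no collision in $B(u,r-1)$" and "no fully dormant vertex in $B(u,r-1)$" — and relating them to the round index $i$ at which $u$ becomes dormant (or stabilizes). First I would dispense with the fully-dormant case: if $u$ is fully dormant, then $u$ itself is a fully dormant vertex, so $B(u,r-1)$ can contain no fully dormant vertex only if it does not contain $u$, forcing $r-1 < 0$, i.e., $r = 0$. (Here one uses $\mathrm{dist}(u,u)=0$, so $u \in B(u,\alpha)$ for every $\alpha \ge 0$.)

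For the half-dormant case I would first prove $r \le 2^i$. The idea is that if $B(u, 2^i - 1)$ had no collision and no fully dormant vertex, then by Lemma~\ref{radius_doubling}-style reasoning (the claim proved inside its proof: any vertex $v$ not dormant in round $j$ has $H_j(v) = B(v, 2^j)$, which only needs "no collision and no fully dormant vertex" along the relevant ball, not that $v$ stays live forever) every vertex in $B(u, 2^{i-1})$ would have a collision-free table equal to its $2^{i-1}$-ball in round $i-1$, and hence $u$'s table in round $i$ would equal $B(u, 2^i)$ with no collision — contradicting that $u$ became dormant in round $i$ for the half-dormant case, since by Definition of $i$ and Step~(\ref{third_dormant}) the only way $u$ turns dormant in round $i$ is a collision in $H_i(u)$ or a dormant vertex appearing in $H_{i-1}(u) \subseteq B(u, 2^{i-1})$; a dormant vertex in $B(u,2^{i-1})$ is, tracing back its own dormancy cause, ultimately witnessed by a collision or a fully dormant vertex within distance $2^{i-1}$ of that vertex, hence within distance $2^i$ — but to get it inside radius $2^i - 1$ one argues the witness is strictly closer, or absorbs the off-by-one into the definition $B(u,r-1)$. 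So the minimal $r$ with $B(u,r-1)$ clean satisfies $r - 1 \le 2^i - 1$, i.e. $r \le 2^i$. The matching lower bound $r > 2^{i-1}$ comes from the other direction: by Lemma~\ref{radius_doubling}, for every $j \in [0, i-1]$ we have $H_j(u) = B(u, 2^j)$ with no collision, and moreover this forces, for each $j \le i-1$, that no vertex within distance $2^j$ of $u$ was fully dormant or had a collision by round $j$ (otherwise $H_j(u)$ would not equal the full ball $B(u,2^j)$). Taking $j = i-1$ gives: $B(u, 2^{i-1})$ contains no collision and no fully dormant vertex, hence $r - 1 \ge 2^{i-1}$, i.e. $r > 2^{i-1}$.

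The main obstacle I expect is the off-by-one bookkeeping between "round $j$", the radius $2^j$, and the "$r-1$" in Definition~\ref{def_r}, together with carefully pinning down what "$u$ becomes dormant in round $i$" entails about \emph{where} the collision-or-fully-dormant witness lies. Concretely: a vertex $v \in H_{i-1}(u)$ that is dormant in round $i-1$ need not itself have a collision; its dormancy may have propagated from yet another vertex. Closing the $r \le 2^i$ bound cleanly requires either (a) an induction showing every dormant vertex has a collision-or-fully-dormant witness within distance $2^{i-1}$ of \emph{itself} in an earlier round — combining distances to get within $2^{i-1} + 2^{i-1} = 2^i$ of $u$ — and then checking the strict inequality needed for "$B(u, 2^i - 1)$", or (b) simply defining things so that the witness ball has radius $\le 2^i - 1$; in the paper's setup the cleanest route is to show $B(u, 2^{i-1})$ is clean (giving $r > 2^{i-1}$) and that $B(u, 2^i)$ is \emph{not} clean (giving $r \le 2^i$), and to verify that the failure of cleanness of $B(u,2^i)$ actually occurs at radius $\le 2^i - 1$ because the witness for $u$'s round-$i$ dormancy is a collision in $H_i(u) \subseteq B(u, 2^i)$ caused by two vertices at distance $\le 2^{i-1}$ each from some common neighbor, or a dormant vertex already inside $B(u, 2^{i-1}) \subsetneq B(u, 2^i - 1)$. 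I would handle this by a short auxiliary claim, proved by induction on rounds, that every dormant-by-round-$j$ vertex lies within distance $2^{j}$ of some collision or fully dormant vertex, and then plug in $j = i-1$.
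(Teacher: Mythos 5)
Your overall strategy parallels the paper's, and you handle the fully-dormant case correctly, but the stated justification for the lower bound is wrong, and the upper bound is left as a sketch that in effect re-derives the paper's induction.

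For the lower bound $r > 2^{i-1}$ you argue that $B(u,2^{i-1})$ contains no fully dormant vertex because ``otherwise $H_j(u)$ would not equal the full ball $B(u,2^j)$.'' That implication does not hold: a fully dormant vertex can perfectly well be a \emph{member} of $H_{i-1}(u)$ --- nothing in the expansion ever removes an entry from a hash table --- and Lemma~\ref{radius_doubling} asserts only set equality, not that the table avoids dormant vertices. The correct reason is different: if a fully dormant $v$ lies in $B(u,2^{i-1}) = H_{i-1}(u)$, then at the \emph{first} round $j \le i-1$ at which $v$ is written into $H(u)$, $u$ is necessarily marked dormant (for $j=0$ via the arc $(v,u)$ in Step~(\ref{first_hash_step}); for $j>0$ via the intermediary $v'\in H_{j-1}(u)$ with $v\in H_{j-1}(v')$, which forces $v'$ to already be dormant, so Step~(\ref{second_hash_step}) of round $j$ marks $u$), contradicting the minimality of $i$. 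This tracing-back step --- which itself carries a small induction --- is exactly what the paper's proof supplies and what your argument is missing.

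For the upper bound $r \le 2^i$ you correctly identify the real obstacle (a dormant $v\in H_{i-1}(u)$ need not itself host a collision or be fully dormant), and your closing auxiliary claim (``every vertex dormant by round $j$ lies within distance $2^j$ of a collision or a fully dormant vertex,'' by induction on $j$) does close the gap; it is essentially a repackaging of the paper's induction on $i$, whose inductive hypothesis is invoked precisely at the dormant neighbor $v$. But you neither prove the claim nor treat the base cases $i=0,1$ that the paper handles separately, and the off-by-one paragraph is unnecessary: by Definition~\ref{def_r}, $r \le 2^i$ is exactly the statement that $B(u,2^i)$ contains a collision or a fully dormant vertex, and the auxiliary claim at $j=i-1$ combined with $\mathrm{dist}(u,v)\le 2^{i-1}$ places such a witness inside $B(u,2^i)$ directly, with no need to shave the radius to $2^i-1$.
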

\begin{proof}
    If $u$ is fully dormant, then $r = 0$ since $B(u, 0) = \{u\}$.
    Suppose $u$ is half dormant.
    We prove the lemma by induction on $i$.
    The lemma holds for $i = 0$ because $r > 0$ and if $r \ge 2$ then $u$ cannot be dormant in Step~(\ref{first_hash_step}) nor Step~(\ref{second_dormant}).
    The lemma also holds for $i = 1$ because if $r = 1$ then $u$ becomes dormant in Step~(\ref{first_hash_step}) or Step~(\ref{second_dormant}) and if $r \ge 3$ then $u$ cannot be dormant in round $1$.

    Suppose $i \ge 2$.
    Assume $r \le 2^{i-1}$ and let $v \in B(u, r)$ be a fully dormant vertex or a vertex that causes a collision in $B(u, r)$.
    Assume $u$ becomes dormant after round $i-1$.
    By Lemma~\ref{radius_doubling}, we know that $H_{i-1}(u)=B(u,2^{i-1})$. Since $r\leq 2^{i-1}$, there is no collision in $B(u,r)$ using $h_V$. Thus, there is a fully dormant vertex $v$ in $B(u,r)\subseteq H_{i-1}(u)$. Consider the first round $j\leq i-1$ that $v$ is added into $H(u)$. If $j=0,$ then $u$ is marked as dormant in round $0$ by Step~(\ref{first_hash_step}).
    If $j>0$, then in round $j$, there is a vertex $v'$ in $H_{j-1}(u)$ such that $v\in H_{j-1}(v')$, and $v$ is added into $H_j(u)$ by Step~(\ref{second_hash_step}).
    In this case, $u$ is marked as dormant in round $j$ by Step~(\ref{second_hash_step}).
    In both cases $j=0$ and $j > 0$, $u$ is marked as dormant in round $j\leq i-1$ which contradicts with the definition of $i$.
    So the only way for $u$ to become dormant in round $i$ is for a vertex $v$ to exist in $H_{i-1}(u)$ which is dormant in round $i-1$.
    Assume for contradiction that $r > 2^i$, then by Definition~\ref{def_r} there is no collision nor fully dormant vertex in $B(u, 2^i)$.
    By the induction hypothesis, there exists either a collision or a fully dormant vertex in $B(v,2^{i-1})$.
    By Lemma~\ref{radius_doubling}, we know that $H_{i-1}(u)=B(u, 2^{i-1}).$
    It means that $B(v,2^{i-1})\subseteq B(u,2^i)$ contains a collision or a fully dormant vertex, contradiction.
\end{proof}

    To state bounds simply,
    let $b \coloneqq \delta^{1/18}$, then hash functions $h_B$ and $h_V$ are from $[n]$ to $[m / b^{12}]$ and from $[n]$ to $[b^{6}]$, respectively.
    Note that $h_B$ and $h_V$ need to be independent with each other, but each being pairwise independent suffices, so each processor doing hashing only reads two words.

\begin{lemma}\label{ball_size_bound}
    For any vertex $u$ that is dormant after the \textsc{expand}, $|B(u, r)| \le b^2$ with probability at most $b^{-2}$.
\end{lemma}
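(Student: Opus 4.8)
The plan is to contain the event $\{|B(u,r)| \le b^2\}$ inside a purely hash-dependent ``bad event'' about a \emph{deterministic} ball around $u$, and then bound that event by two union bounds using only pairwise independence. Fix $u$. The current graph --- and hence every ball $B(u,\alpha)$ --- is fixed before $h_V$ and $h_B$ are drawn; only $r$ depends on the hash functions. Let $\rho \coloneqq \max\{\alpha \ge 0 : |B(u,\alpha)| \le b^2\}$, a deterministic quantity with $\rho \ge 0$ since $|B(u,0)| = 1 \le b^2$.

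First I would argue that whenever $u$ is dormant after the \textsc{expand} and $|B(u,r)| \le b^2$, the ball $B(u,\rho)$ contains either two vertices with the same $h_V$-value or a fully dormant vertex. For dormant $u$ the integer $r$ is well-defined and finite: by Lemma~\ref{r_range}, $r = 0$ if $u$ is fully dormant and $2^{i-1} < r \le 2^i$ if $u$ is half dormant. By the choice of $r$ in Definition~\ref{def_r} (together with Lemma~\ref{r_range}), $B(u,r-1)$ contains neither a collision nor a fully dormant vertex, while $B(u,r)$ --- the first radius at which this fails --- contains one of the two. Since $|B(u,\alpha)|$ is nondecreasing in $\alpha$, the hypothesis $|B(u,r)| \le b^2$ forces $r \le \rho$, hence $B(u,r) \subseteq B(u,\rho)$, so the offending collision or fully dormant vertex already lies in $B(u,\rho)$.

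It then remains to bound the two sub-events and add. Since $h_V \colon [n] \to [b^6]$ is pairwise independent, a union bound over the at most $\binom{|B(u,\rho)|}{2} \le \binom{b^2}{2}$ pairs of vertices in $B(u,\rho)$ gives $\Pr[\,B(u,\rho) \text{ has an } h_V\text{-collision}\,] \le \binom{b^2}{2}/b^6 \le \tfrac12 b^{-2}$. A vertex $v$ is fully dormant precisely when some other ongoing vertex is mapped by $h_B$ into $v$'s block; since $h_B \colon [n] \to [m/b^{12}]$ is pairwise independent and there are $n' = m/b^{18}$ ongoing vertices, $\Pr[v \text{ fully dormant}] \le (n'-1)/(m/b^{12}) \le b^{-6}$, so a union bound over the at most $b^2$ vertices of $B(u,\rho)$ gives $\Pr[\,B(u,\rho) \text{ has a fully dormant vertex}\,] \le b^{-4}$. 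Summing, $\Pr[|B(u,r)| \le b^2] \le \tfrac12 b^{-2} + b^{-4} \le b^{-2}$, where the last step uses $b^2 \ge 2$, which holds after the \textsc{prepare} since then $\delta \ge \log^c n$ and $b = \delta^{1/18}$.

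The only delicate step is the first one: confining the ``bad'' collision or fully dormant vertex to the deterministic ball $B(u,\rho)$, rather than letting $u$'s dormancy be blamed on something far away that a union bound cannot afford. That is exactly what Definition~\ref{def_r} and the radius bound of Lemma~\ref{r_range} are set up to give; everything afterwards is a routine union-bound computation, and since the two sub-events are bounded separately through their marginals we never need $h_V$ and $h_B$ to be jointly independent --- pairwise independence of each suffices.
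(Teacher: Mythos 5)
Your proof is correct and follows essentially the same approach as the paper's: both reduce the event $\{|B(u,r)|\le b^2\}$ to the deterministic ball of maximal radius with at most $b^2$ vertices (your $\rho$, the paper's $j$), then bound the probability of a collision or a fully dormant vertex appearing there by the same two union bounds. Your use of a direct union bound over pairs and the paper's Markov-on-expected-collisions argument are interchangeable under pairwise independence and yield identical numbers.
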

\begin{proof}
    Let $j$ be the maximal integer such that $j \le d$ and $|B(u, j)| \le b^2$.
    We shall calculate the probability of $r \le j$, which is equivalent to the event $|B(u, r)| \le b^2$.

    The expectation of the number of collisions in $B(u, j)$ using $h_V$ is at most $\binom{b^2}{2} / b^{6} \le b^{-2} / 2$, then by Markov's inequality, the probability of at least one collision existing is at most $b^{-2} / 2$.

    For any $u$ to be fully dormant, at least one of the $n' - 1$ vertices other than $u$ must have hash value $h_B(u)$.
    By union bound, the probability of $u$ being fully dormant is at most
    \begin{equation}\label{ineq_prob_fully_dormant}
        \frac{n' - 1}{m / b^{12}} \le \frac{n'}{n' b^{6}} = \frac{1}{b^{6}},
    \end{equation}
    where the first inequality follows from $m / n' = b^{18}$.
    Taking union bound over all vertices in $B(u, j)$ we obtain the probability as at most $b^2 \cdot b^{-6} = b^{-4}$.

    Therefore, for any dormant vertex $u$, $\Pr[|B(u, r)| \le b^2] = \Pr[r \le j] \le b^{-2} / 2 + b^{-4} \le b^{-2}$ by a union bound.
\end{proof}

Based on Lemma~\ref{ball_size_bound}, we shall show that any dormant vertex has large enough neighbor set after the \textsc{expand} with good probability. To prove this, we need the following lemma:

\begin{lemma}\label{claim_ball_hash}
    $|B(u, r-1)| \le |H_i(u)|$.
\end{lemma}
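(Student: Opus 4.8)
The plan is to prove $|B(u,r-1)| \le |H_i(u)|$ by showing that every vertex $v \in B(u,r-1)$ actually makes it into the hash table $H_i(u)$ without being lost to a collision. Recall from Definition~\ref{def_r} that $r$ is chosen so that $B(u,r-1)$ contains no collision and no fully dormant vertex under $h_V$; in particular $h_V$ restricted to $B(u,r-1)$ is injective, so $|B(u,r-1)|$ equals the number of distinct cells $\{h_V(v) : v \in B(u,r-1)\}$. Hence it suffices to exhibit, for each $v \in B(u,r-1)$, that cell $h_V(v)$ of $H_i(u)$ is occupied (by $v$), which gives an injection from $B(u,r-1)$ into the occupied cells of $H_i(u)$ and therefore the claimed inequality.

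The key step is a distance-indexed induction showing that for every $v \in B(u,r-1)$, once $2^j \ge \text{dist}(u,v)$ the vertex $v$ has been written into $H_j(u)$ (and stays, since entries are never removed and there is no collision in this ball). For $v = u$ itself this is immediate from Step~(\ref{first_hash_step}) in round $0$. For the inductive step, suppose $\text{dist}(u,v) = k$ with $2^{j-1} < k \le 2^j$; pick a neighbor $v'$ of $v$ on a shortest $u$--$v$ path, so $\text{dist}(u,v') = k-1 \le 2^{j-1}$ and also $\text{dist}(v',v) = 1$. By the induction hypothesis $v' \in H_{j-1}(u)$, and since $v' \in B(u, r-1)$ it is not fully dormant, so $v'$ owns a block and $v \in H_0(v') \subseteq H_{j-1}(v')$ (the edge $\{v',v\}$ is processed in Step~(\ref{first_hash_step}) for $v'$, and $v'$ is live before round $0$ because $v'$ is not fully dormant). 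Then in round $j$, Step~(\ref{second_hash_step}) applied to $u$ reads $v'$ from $H_{j-1}(u)$, reads $v$ from $H_{j-1}(v')$, and writes $v$ into $H_j(u)$; since there is no collision in $B(u,r-1) \supseteq \{v\}$, cell $h_V(v)$ of $H_j(u)$ now holds $v$. Finally, because $r-1 \le 2^i$ by Lemma~\ref{r_range} (which guarantees $r \le 2^i$), taking $j = i$ shows every $v \in B(u,r-1)$ sits in $H_i(u)$.

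I should be careful that the rounds $j$ appearing in the argument really run up to $i$: the \textsc{expand} loop in Step~(\ref{expand_loop}) keeps iterating as long as some hash table gets a new entry, and $i$ is by definition the first round in which $u$'s table stops changing (for half dormant $u$) or $u$ becomes dormant. In either case, for $j < i$ the table $H_j(u)$ is exactly $B(u,2^j)$ by Lemma~\ref{radius_doubling}, and for the round where $v$ with $2^{j-1} < \text{dist}(u,v) \le 2^j$ first gets added we only need $j \le i$; the edge case $\text{dist}(u,v) = 2^i$ exactly (so the write happens in round $i$) is covered because Step~(\ref{second_hash_step}) is executed in round $i$ before $u$'s dormancy/termination is detected in Step~(\ref{third_dormant}).

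The main obstacle I anticipate is bookkeeping around which vertices are live versus dormant during the intermediate rounds: Step~(\ref{second_hash_step}) marks $u$ dormant if it sees a dormant vertex in $H(u)$, and one must check this does not stop the writes I rely on. The resolution is that no vertex in $B(u,r-1)$ is fully dormant (so none is dormant before round $0$), and crucially the writes into $H_j(u)$ happen within Step~(\ref{second_hash_step}) regardless of whether $u$ has just been marked dormant in the same iteration — dormancy only suppresses future rounds. So the entries accumulate through round $i$ as needed, and the argument goes through. One also needs that $v' \in B(u,r-1)$ whenever $v \in B(u,r-1)$ and $v'$ lies on a shortest $u$--$v$ path, which is immediate since $\text{dist}(u,v') < \text{dist}(u,v) \le r-1$.
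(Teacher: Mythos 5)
Your inductive step contains a real error. You write that for a neighbor $v'$ of $v$ on a shortest $u$--$v$ path, $\text{dist}(u,v') = k-1 \le 2^{j-1}$, given $2^{j-1} < k \le 2^j$. But $k > 2^{j-1}$ only gives $k-1 \ge 2^{j-1}$, not $\le$: for instance $k = 2^j$ with $j \ge 2$ gives $k-1 = 2^j - 1 > 2^{j-1}$. So the induction hypothesis does not supply $v' \in H_{j-1}(u)$, and the round-$j$ write of $v$ into $H_j(u)$ is unjustified. The underlying issue is that choosing $v'$ as a graph-neighbor of $v$ (one edge before the endpoint) is incompatible with the doubling behavior of \textsc{expand}: each iteration of Step~(\ref{second_hash_step}) composes two radius-$2^{j-1}$ tables into a radius-$2^j$ table, so the path must be split at the \emph{midpoint}, not one step from the end. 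The paper's proof picks $x$ on the shortest $u$--$w$ path with $\text{dist}(x,w) = 2^{i-1}$ (hence $\text{dist}(u,x) = \text{dist}(u,w) - 2^{i-1} \le 2^{i-1}$ by Lemma~\ref{r_range}), shows $B(x,2^{i-1}) \subseteq B(u,r-1)$, and concludes $x \in H_{i-1}(u)$ and $w \in H_{i-1}(x)$ so that round $i$ occupies cell $h_V(w)$ of $H_i(u)$. To salvage your argument you would need to restructure it the same way: at level $j$, take $v'$ at distance exactly $2^{j-1}$ from $v$, verify $B(v',2^{j-1}) \subseteq B(u,r-1)$, and replace the appeal to $v \in H_0(v') \subseteq H_{j-1}(v')$ with $v \in H_{j-1}(v') = B(v',2^{j-1})$.

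A secondary, smaller point: your claim that ``cell $h_V(v)$ of $H_j(u)$ now holds $v$'' is slightly too strong. Definition~\ref{def_r} rules out collisions only among vertices of $B(u,r-1)$; a vertex $y \notin B(u,r-1)$ could satisfy $h_V(y) = h_V(v)$ and overwrite that cell under ARBITRARY write resolution. What the paper uses, and all that is needed, is that position $h_V(v)$ of $H_i(u)$ is \emph{occupied}, which overwrites preserve. Since $h_V$ is injective on $B(u,r-1)$, these occupied cells are pairwise distinct, and $|H_i(u)| \ge |B(u,r-1)|$ follows.
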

\begin{proof}
    Let $w$ be a vertex in $B(u, r-1)$.
    If $\text{dist}(w, u) \le 2^{i-1}$ then $w \in H_{i-1}(u)$ by Lemma~\ref{radius_doubling} and Definition~\ref{def_r}, and position $h_V(w)$ in $H_i(u)$ remains occupied in round $i$.
    Suppose $\text{dist}(w, u) > 2^{i-1}$.
    Let $x$ be a vertex on the shortest path from $u$ to $w$ and $x$ satisfies $\text{dist}(x, w) = 2^{i-1}$.
    We obtain $B(x, 2^{i-1}) \subseteq B(u, r-1)$ since any $y \in B(x, 2^{i-1})$ has $\text{dist}(y, u) \le \text{dist}(y, x) + \text{dist}(x, u) \le 2^{i-1} + r-1 - 2^{i-1}$, and thus $w \in H_{i-1}(x)$ and $x \in H_{i-1}(u)$.
    So position $h_V(w)$ in $H_i(u)$ remains occupied in round $i$ as a consequence of Step~(\ref{second_hash_step}).
    Therefore Lemma~\ref{claim_ball_hash} holds.
\end{proof}

\begin{lemma}\label{hash_table_size_bound}
    After the \textsc{expand}, for any dormant vertex $u$, $|H(u)| < b$ with probability at most $b^{-1}$.
\end{lemma}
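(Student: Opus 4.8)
The plan is to reduce this statement directly to Lemma~\ref{ball_size_bound} via the chain of set inclusions established in the preceding lemmas. Specifically, recall that for any dormant vertex $u$, Definition~\ref{def_r} gives an integer $r$, and I would distinguish the two cases of Lemma~\ref{r_range}. If $u$ is fully dormant, then $r=0$, and I need a short separate argument: in this case $u$ itself was never hashed into its own (empty) table, so $|H(u)|=0<b$ deterministically — but this makes the probability bound trivially vacuous unless I instead observe that the relevant event is simply about the random choice of $h_B$; actually the cleaner route is to fold fully dormant $u$ into the general bound since $|B(u,r)|=|B(u,0)|=1\le b^2$ always holds, so the ``bad'' event $|B(u,r)|\le b^2$ occurs with probability $1$ there — meaning I should handle fully dormant vertices separately and only invoke Lemma~\ref{ball_size_bound} for half dormant $u$. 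Let me structure it that way.

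For a half dormant vertex $u$, I would argue: by Lemma~\ref{claim_ball_hash} we have $|H_i(u)| \ge |B(u,r-1)|$, where $i$ is the round in which $u$ becomes dormant. Now if $|H(u)| < b$ after the \textsc{expand}, then in particular $|H_i(u)| < b$, hence $|B(u,r-1)| < b$. The next step is to upgrade this to a bound on $|B(u,r)|$: by Definition~\ref{def_r}, the ball $B(u,r-1)$ has no collision under $h_V$ and contains no fully dormant vertex, so $r$ is exactly the first radius at which one of these defects appears; I want to conclude that $|B(u,r)|$ is not much larger than $|B(u,r-1)|$. Here the key observation is that $B(u,r)$ is obtained from $B(u,r-1)$ by adding the neighbors (in the current graph) of the sphere at distance $r-1$, and each such vertex has its degree controlled — but in fact the simplest path is to note that we only need $|B(u,r)| \le b^2$ to invoke Lemma~\ref{ball_size_bound}, and since each vertex has hash-table size at most $b^6$ (the table size is $b^6$, so the number of stored neighbors is at most $b^6$), and $B(u,r-1) \subseteq H_{i-1}(u)$ contributes at most... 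Actually the clean statement: each vertex in $B(u,r-1)$ has at most $b^6$ neighbors in the current graph (since neighbor sets are stored in size-$b^6$ hash tables), so $|B(u,r)| \le |B(u,r-1)| \cdot b^6 < b \cdot b^6 = b^7$. That is weaker than $b^2$, so I need a sharper accounting.

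The right fix, which I expect to be the main obstacle, is to bound $|B(u,r)|$ more carefully: the vertices added going from radius $r-1$ to radius $r$ lie in $H(v)$ for $v$ in the sphere of radius $r-1$, and we are told (in the setup of \S\ref{exp_ana}) that a live vertex's table has no collision, so its size equals the true ball size it represents, which for such $v$ is at most $|B(u,r-1)|$-ish — more precisely, one argues inductively (as in Lemma~\ref{radius_doubling}) that all tables involved are collision-free on $B(u,r-1)$, forcing $|B(u,r)| \le |B(u,r-1)|^2 < b^2$. Then Lemma~\ref{ball_size_bound} gives $\Pr[|B(u,r)| \le b^2] \le b^{-2}$; combined with $\{|H(u)|<b\} \subseteq \{|B(u,r-1)|<b\} \subseteq \{|B(u,r)| < b^2\}$ for half dormant $u$, we get $\Pr[|H(u)|<b] \le b^{-2} \le b^{-1}$. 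For fully dormant $u$, I would handle it directly: $u$ is fully dormant with probability at most $b^{-6}$ by inequality~\eqref{ineq_prob_fully_dormant}, and conditioned on $u$ not being fully dormant this case does not arise; so overall $\Pr[|H(u)|<b] \le b^{-2} + b^{-6} \le b^{-1}$. The subtlety I would flag and need to nail down is the exact quadratic (or near-quadratic) relationship between $|B(u,r-1)|$ and $|B(u,r)|$ — whether the ``no collision in $B(u,r-1)$'' hypothesis really propagates to let every relevant table faithfully represent a sub-ball, which is what makes the $b^2$ threshold (rather than $b^7$) the correct one to match Lemma~\ref{ball_size_bound}.
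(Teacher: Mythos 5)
Your reduction breaks at the inclusion $\{|B(u,r-1)| < b\} \subseteq \{|B(u,r)| \le b^2\}$: the quadratic bound $|B(u,r)| \le |B(u,r-1)|^2$ is not available, and the inductive fix you sketch does not supply it. The hypothesis that $B(u,r-1)$ has no collision controls only pairwise hash-collisions under $h_V$ among the vertices of $B(u,r-1)$ \emph{themselves}; it says nothing about how many neighbors a boundary vertex at distance $r-1$ has in the current graph, nor does it make that boundary vertex's table collision-free on the sphere at distance $r$, whose vertices lie outside $B(u,r-1)$. A single vertex $w$ at distance $r-1$ can have on the order of $b^6$ distinct neighbors (its hash table has $b^6$ cells, plus there may be many original graph arcs), so $|B(u,r)|$ can be as large as roughly $b^{7}$ even when $|B(u,r-1)| < b$; that is exactly the weak bound you noticed and correctly rejected, and nothing in Lemma~\ref{radius_doubling} upgrades it to $b^2$. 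So the probability chain $\Pr[|H(u)|<b] \le \Pr[|B(u,r)| \le b^2]$ does not follow.

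The paper argues in the opposite direction. It conditions on the high-probability event $|B(u,r)| > b^2$ (failure probability at most $b^{-2}$ by Lemma~\ref{ball_size_bound}, which already subsumes the fully dormant case, since then $r=0$ and $|B(u,0)|=1 \le b^2$). Under this conditioning, either $|B(u,r-1)| \ge b$, and Lemma~\ref{claim_ball_hash} finishes deterministically; or $|B(u,r-1)| < b$, and a pigeonhole over the fewer than $b$ vertices at distance $r-1$ produces some such $w$ with $|B(w,1)| > b$. The remaining work, which your set-inclusion chain bypasses but does not replace, is genuinely needed: a second-moment estimate showing $|H_0(w)| \ge b$ except with probability $b^{-4}$, followed by a propagation along waypoints $w, w_0, w_1, \ldots, w_{i-2}$ on the shortest path from $w$ to $u$ (using Claim~\ref{all_w_good} to certify that each waypoint's table equals the corresponding ball) to transport that occupancy into $H_i(u)$ and hence into $H(u)$.
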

\begin{proof}
    By Lemma~\ref{ball_size_bound}, $|B(u, r)| > b^2$ with probability at least $1 - b^{-2}$.
    If this event happens, $u$ must be half dormant.
    In the following we shall prove that $|H(u)| \ge b$ with probability at least $1 - b^{-4}$ conditioned on this, then a union bound gives the lemma.

    If $i = 0$ then $r=1$ by Lemma~\ref{r_range}, which gives $|B(u, 1)| > b^2 \ge b$.
    Consider hashing arbitrary $b$ vertices of $B(u, 1)$.
    The expectation of the number of collisions among them is at most $b^2 / b^{6} = b^{-4}$, then by Markov's inequality the probability of at least one collision existing is at most $b^{-4}$.
    Thus with probability at least $1 - b^{-4}$ these $b$ vertices all get distinct hash values, which implies $|H(u)| \ge b$. So the lemma holds.

    Suppose $i \ge 1$.
    If $|B(u, r-1)| \ge b$ then the lemma holds by Lemma~\ref{claim_ball_hash}.
    Otherwise, since $|B(u, r)| > b^2$, by Pigeonhole principle there must exist a vertex $w$ at distance $r-1$ from $u$ such that $|B(w, 1)| > b$.
    Then by an argument similar to that in the second paragraph of this proof we have that $H_0(w) \ge b$ with probability at least $1 - b^{-4}$.
    If $i = 1$ then $r = 2$.
    Since $w \in H_0(u)$ dues to Definition~\ref{def_r}, at least $b$ positions will be occupied in $H_1(u)$, and the lemma holds.

    Suppose $i \ge 2$.
    Let $w_0$ be a vertex on the shortest path from $u$ to $w$ such that $\text{dist}(w_0, w) = 1$.
    Recursively, for $j \in [1, i-2]$, let $w_j$ be a vertex on this path such that $\text{dist}(w_{j-1}, w_j) = 2^j$.
    We have that
    $$\text{dist}(w, w_{i-2}) = \text{dist}(w, w_0) + \sum_{j \in [i-2]} \text{dist}(w_{j-1}, w_j) = 2^{i-1} - 1 .$$
    Thus
    $$\text{dist}(w_{i-2}, u) = r - 1 - \text{dist}(w, w_{i-2}) = r - 2^{i-1} \le 2^{i-1} ,$$
    where the last inequality follows from Lemma~\ref{r_range}.
    It follows that $w_{i-2} \in H_{i-1}(u)$ by Lemma~\ref{radius_doubling} and Definition~\ref{def_r}.
    We also need the following claim, which follows immediately from $\text{dist}(w_j, u) \le r - 2^j$ and Definition~\ref{def_r}:
    \begin{claim}\label{all_w_good}
        $H_j(w_j) = B(w_j, 2^j)$ for all $j \in [0, i-2]$.
    \end{claim}

    Now we claim that $|H_{i-1}(w_{i-2})| \ge b$ and prove it by induction, then $|H_i(u)| \ge b$ holds since $w_{i-2} \in H_{i-1}(u)$.
    $|H_1(w_0)| \ge b$ since $w \in H_0(w_0)$ and $|H_0(w)| \ge b$.
    Assume $|H_j(w_{j-1})| \ge b$.
    Since $w_{j-1} \in H_{j}(w_j)$ by $\text{dist}(w_{j-1}, w_j) = 2^j$ and Claim~\ref{all_w_good}, we obtain $|H_{j+1}(w_j)| \ge b$.
    So the induction holds and $|H_{i-1}(w_{i-2})| \ge b$.

    By the above paragraph and the first paragraph of this proof we proved Lemma~\ref{hash_table_size_bound}.
\end{proof}

\subsection{The Voting}\label{TRA}

In this section, we present the method \textsc{vote} and show that the number of ongoing vertices decreases by a factor of a positive constant power of $b$ with good probability.
\begin{framed}
\noindent \textsc{vote}: for each vertex $u$: initialize $u.l \coloneqq 1$,
\begin{enumerate}
    \item If $u$ is live after the \textsc{expand} then for each vertex $v$ in $H(u)$: if $v < u$ then set $u.l \coloneqq 0$. \label{case_live}
    \item Else set $u.l \coloneqq 0$ with probability $1 - b^{-2/3}$. \label{independent_sample}
\end{enumerate}
\end{framed}

There are two cases depending on whether $u$ is live.
In Case~(\ref{case_live}), by Lemma~\ref{radius_doubling}, $H(u)$ must contain all the vertices in the component of $u$, and so does any vertex in $H(u)$, because otherwise $u$ is dormant.
We need to choose the same parent for all the  vertices in this component, which is the minimal one in this component as described: a vertex $u$ that is not minimal in its component would have $u.l = 0$ by some vertex $v$ in $H(u)$ smaller than $u$. Thus all live vertices become finished in the next phase.

In Case~(\ref{independent_sample}), $u$ is dormant.
Then by Lemma~\ref{hash_table_size_bound}, $|H(u)| \ge b$ with probability at least $1 - b^{-1}$.
If this event happens,
the probability of no leader in $H(u)$ is at most $(1 - b^{-2/3})^b \le \exp(- b^{-1/3}) \le b^{-1}$.


The number of vertices in the next phase is the sum of:
(\romannumeral1) the number of dormant leaders,
(\romannumeral2) the number of non-leaders $u$ with $|H(u)| < b$ and no leader in $H(u)$,
and
(\romannumeral3) the number of non-leaders $u$ with $|H(u)| \ge b$ and no leader in $H(u)$.
We have that the expected number of  vertices in the next phase is at most
$$n' \cdot (b^{-2/3} + b^{-1} + (1 - b^{-1}) \cdot b^{-1}) \le n' \cdot b^{-1/2}.$$
By Markov's inequality, the probability of having more than $n' \cdot b^{-1/4}$ vertices in the next phase is at most
\begin{equation}\label{number_roots}
    \frac{n' \cdot b^{-1/2}}{n' \cdot b^{-1/4}} \le b^{-1/4}.
\end{equation}

\subsection{Removing the Assumption}\label{remove_assumption}

In this section, we remove Assumption~\ref{assumption1} which holds on a COMBINING CRCW PRAM, thus generalize the algorithm to run on an ARBITRARY CRCW PRAM.

Recall that we set $b = \delta^{1/18}$ where $\delta = m/n'$ is known.
The key observation is that in each phase, all results still hold when we use any $\tilde{n}$ to replace $n'$ as long as $\tilde{n} \ge n'$ and $b$ is large enough.
This effectively means that we use $b = (m / \tilde{n})^{1/18}$ for the hash functions $h_B$ and $h_V$.
This is because the only places that use $n'$ as the number of  vertices in a phase are:
\begin{enumerate}
    \item The probability of a dormant vertex $u$ having $B(u, r) \le b^2$ (Lemma~\ref{ball_size_bound}). Using $\tilde{n}$ to rewrite Inequality~(\ref{ineq_prob_fully_dormant}), we have:
        \begin{equation*}
            \frac{n' - 1}{m / b^{12}} = \frac{n' - 1}{\tilde{n} b^{6}} \le \frac{n'}{n' b^{6}} = \frac{1}{b^{6}},
        \end{equation*}
        followed from $m / \tilde{n} = b^{18}$ and $\tilde{n} \ge n'$.
        Therefore Lemma~\ref{ball_size_bound} still holds.
    \item The probability of having more than $n' \cdot b^{-1/4}$ vertices in the next phase.
        Now we measure the progress by the decreasing in $\tilde{n}$.
        The expected number of vertices in the next phase is still at most $n' \cdot b^{-1/4}$, where $n'$ is the exact number of vertices.
        Therefore by Markov's inequality we can rewrite Inequality~(\ref{number_roots}) as:
        $$\frac{n' \cdot b^{-1/2}}{\tilde{n} \cdot b^{-1/4}} \le b^{-1/4},$$
        which is the probability of having more than $\tilde{n} \cdot b^{-1/4}$ vertices in the next phase. \label{estimate_roots_bound}
\end{enumerate}

As a conclusion, if $\tilde{n} \ge n'$ and $b$ is large enough in each phase, all analyses still apply.
Let $c$ be the value defined in \textsc{prepare}.
We give the \emph{update rule} of $\tilde{n}$:
\begin{framed}
\noindent Update rule of $\tilde{n}$:

If $m / n \le \log^{c} n$ then set $\tilde{n} \coloneqq n / \log^{c} n$ for the first phase (after the \textsc{prepare}),
else set $\tilde{n} \coloneqq n$.

At the beginning of each phase, update $\tilde{n} \coloneqq \tilde{n} / b^{1/4}$ then update $b \coloneqq (m / \tilde{n})^{1/18}$.
\end{framed}

So $b \ge \log^{c/18} n$ is large enough.
By the above argument, we immediately have the following:
\begin{lemma}\label{nn_progress}
    Let $\tilde{n}$ and $n'$ be as defined above in each phase.
    If $\tilde{n} \ge n'$ in a phase, then with probability at least $1 - b^{-1/4}$, $\tilde{n} \ge n'$ in the next phase.
\end{lemma}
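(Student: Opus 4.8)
The plan is to observe that Lemma~\ref{nn_progress} is essentially a repackaging of the voting analysis of \S{\ref{TRA}} together with the two substitution checks of \S{\ref{remove_assumption}}, specialized to the hypothesis $\tilde n \ge n'$. First I would verify that $b$ is large enough in the given phase, so that all the earlier lemmas apply: since $\tilde n$ is initialized to $n$ (when $m/n > \log^c n$) or to $n/\log^c n$ (when $m/n \le \log^c n$) and the update rule only ever shrinks $\tilde n$, we always have $m/\tilde n \ge \log^c n$, hence $b = (m/\tilde n)^{1/18} \ge \log^{c/18} n$. This is precisely the regime in which Lemmas~\ref{ball_size_bound}--\ref{hash_table_size_bound} and the bound~(\ref{number_roots}) were established.

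Next I would invoke the two observations recorded in \S{\ref{remove_assumption}}. Because $\tilde n \ge n'$, rewriting Inequality~(\ref{ineq_prob_fully_dormant}) with $\tilde n$ in place of $n'$ (using $m/\tilde n = b^{18}$) keeps Lemma~\ref{ball_size_bound} valid, and hence Lemma~\ref{hash_table_size_bound} and the entire expansion/voting argument go through unchanged. In particular, the expected number of ongoing vertices in the next phase is at most $n' \cdot b^{-1/2} \le \tilde n \cdot b^{-1/2}$, where $b$ is the value used in the current phase.

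Then the last step is Markov's inequality exactly as in item~(\ref{estimate_roots_bound}) of \S{\ref{remove_assumption}}: the probability that the number of ongoing vertices in the next phase exceeds $\tilde n \cdot b^{-1/4}$ is at most $(n' \cdot b^{-1/2})/(\tilde n \cdot b^{-1/4}) \le b^{-1/4}$. By the update rule, the value of $\tilde n$ entering the next phase is exactly $\tilde n / b^{1/4} = \tilde n \cdot b^{-1/4}$, so on the complementary event --- which has probability at least $1 - b^{-1/4}$ --- the new $n'$ is at most the new $\tilde n$, which is precisely the claim.

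I do not expect a genuine obstacle here; this is a bookkeeping lemma. The only points to be careful about are (i) noting that the failure event of the voting step \emph{is} the event ``new $\tilde n <$ new $n'$'', so no separate estimate is needed, and (ii) confirming that nothing in \S{\ref{exp_ana}}--\S{\ref{TRA}} secretly uses the exact value $n'$ rather than just an inequality of the form ``number of ongoing vertices $\le \tilde n$'' combined with $m/\tilde n = b^{18}$; the discussion in \S{\ref{remove_assumption}} already isolates the two places where this matters, so the check is short.
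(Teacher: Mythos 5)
Your proposal is correct and takes essentially the same route as the paper: the paper itself states Lemma~\ref{nn_progress} as an immediate consequence of the two substitution checks in \S{\ref{remove_assumption}} (re-deriving Inequality~(\ref{ineq_prob_fully_dormant}) with $\tilde n$ in place of $n'$, then re-running Markov as in item~(\ref{estimate_roots_bound})), and you reproduce exactly that argument, including the identification that the Markov failure event coincides with ``new $n' >$ new $\tilde n$'' so no extra union bound is needed.
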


By an induction on phases, Lemma~\ref{nn_progress}, and a union bound, the following lemma is immediate:
\begin{lemma}\label{nn_prob}
    Given any integer $t \ge 2$,
    if $\tilde{n} \ge n'$ in the first phase, then
    $\tilde{n} \ge n'$ in all phases before phase $t$ with probability at least $1 - \sum_{i \in [t-2]} {b_{i}}^{-1/4}$, where $b_i$ is the parameter $b$ in phase $i \ge 1$ .
\end{lemma}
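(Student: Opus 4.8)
The statement to prove is Lemma~\ref{nn_prob}: given any integer $t \ge 2$, if $\tilde{n} \ge n'$ in the first phase, then $\tilde{n} \ge n'$ in all phases before phase $t$ with probability at least $1 - \sum_{i \in [t-2]} {b_i}^{-1/4}$.

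The plan is a straightforward induction on the phase index, using Lemma~\ref{nn_progress} as the single-step guarantee and a union bound to aggregate the failure probabilities across phases. First I would set up the induction on $k$, proving that for each $k$ with $1 \le k \le t-1$, the event ``$\tilde{n} \ge n'$ holds in every phase $1, 2, \dots, k$'' occurs with probability at least $1 - \sum_{i \in [k-1]} {b_i}^{-1/4}$. The base case $k=1$ is the hypothesis of the lemma (the sum is empty, probability at least $1$). For the inductive step, assume the claim holds for $k$; condition on the event $E_k$ that $\tilde{n} \ge n'$ in all of phases $1,\dots,k$. On $E_k$, in particular $\tilde{n} \ge n'$ in phase $k$, so Lemma~\ref{nn_progress} applies and tells us that, conditioned on $E_k$, the probability that $\tilde{n} \ge n'$ fails in phase $k+1$ is at most ${b_k}^{-1/4}$.

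The second key step is to combine these conditional bounds into an unconditional one. Writing $F_j$ for the ``bad'' event that $\tilde{n} < n'$ in phase $j$ (for $2 \le j \le k+1$) while $\tilde{n} \ge n'$ held in all earlier phases, we have $\Pr[F_{k+1}] \le \Pr[F_{k+1} \mid E_k] \le {b_k}^{-1/4}$, and the event that $\tilde{n} \ge n'$ fails somewhere in phases $1,\dots,k+1$ is contained in $F_2 \cup \cdots \cup F_{k+1}$. A union bound then gives failure probability at most $\sum_{i \in [k]} {b_i}^{-1/4}$, which is exactly the inductive claim for $k+1$. Taking $k = t-1$ yields the statement of the lemma, since ``all phases before phase $t$'' means phases $1,\dots,t-1$ and the sum becomes $\sum_{i \in [t-2]} {b_i}^{-1/4}$.

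I do not expect any genuine obstacle here; the lemma is explicitly flagged in the excerpt as ``immediate'' from Lemma~\ref{nn_progress}, an induction, and a union bound. The only mild care needed is bookkeeping: making sure the index ranges line up (the bad event can first occur in phase $2$ at the earliest, since phase $1$ is assumed good, which is why the sum runs over $i \in [t-2]$ rather than $[t-1]$), and making sure that when we invoke Lemma~\ref{nn_progress} in phase $k$ we are genuinely on the event that $\tilde{n} \ge n'$ in that phase — this is why conditioning on the full history $E_k$ rather than just the single-phase event is the clean way to organize the argument. No properties of the hash functions or of the algorithm beyond Lemma~\ref{nn_progress} are needed.
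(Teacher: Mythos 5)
Your proof is correct and takes exactly the approach the paper intends: the paper itself gives no explicit argument, stating only that the lemma is ``immediate'' from Lemma~\ref{nn_progress}, induction on phases, and a union bound, and your write-up is precisely that argument carried out carefully, including the correct indexing (the transition from phase $i$ to $i+1$ fails with probability at most $b_i^{-1/4}$, and the first possible failure is in phase $2$, giving the sum over $i \in [t-2]$).
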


\subsection{Running Time}\label{rma}
In this section, we compute the running time of our algorithm and the probability of achieving it.

\begin{lemma}\label{running_time}
    After the \textsc{prepare}, if $\tilde{n} \ge n'$ in each phase, then the algorithm outputs the connected components in $O(\log\log_{m/n_1} n)$ phases,
    where $n_1$ is the $\tilde{n}$ in the first phase.
\end{lemma}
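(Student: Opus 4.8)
The plan is to track the quantity $\tilde{n}$ across phases and show it shrinks doubly-exponentially fast, so that it drops below $1$ (equivalently, below $2$, so that only finished vertices remain) in $O(\log\log_{m/n_1} n)$ phases. Write $\tilde{n}_k$ for the value of $\tilde{n}$ at the start of phase $k$ and $b_k = (m/\tilde{n}_k)^{1/18}$ for the corresponding parameter, so $\tilde{n}_1 = n_1$. The update rule gives $\tilde{n}_{k+1} = \tilde{n}_k / b_k^{1/4} = \tilde{n}_k^{1+1/72} m^{-1/72}$ (using $b_k^{1/4} = (m/\tilde{n}_k)^{1/72}$). First I would rephrase this in terms of the ``headroom'' ratio $\rho_k \coloneqq m/\tilde{n}_k = b_k^{18}$: the recurrence becomes $\rho_{k+1} = m/\tilde{n}_{k+1} = \rho_k \cdot b_k^{1/4} = \rho_k^{1 + 1/72} = \rho_k^{73/72}$. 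Hence $\rho_k = \rho_1^{(73/72)^{k-1}}$, i.e. $\log \rho_k = (73/72)^{k-1} \log\rho_1$, which is exactly double-exponential growth of $\rho_k$ (single-exponential growth of $\log\rho_k$).

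Next I would extract the stopping condition. Since $\rho_1 = m/n_1 \ge m/n \ge 2$ (using the preprocessing guarantee $m/n' \ge \log^c n$, so in particular $\rho_1 > 1$), we have $\log\rho_1 > 0$, and the algorithm's main loop can terminate once $\tilde{n}_k < 2$, i.e. once $\rho_k > m/2$, i.e. once $(73/72)^{k-1}\log\rho_1 > \log(m/2)$. Because $\tilde{n}_k \ge n'_k$ (the hypothesis of the lemma, maintained in every phase), $\tilde{n}_k < 2$ forces $n'_k \le 1$, meaning every ongoing vertex is finished and, by Lemma~\ref{all_tree_flat} together with monotonicity, the labeled digraph is a collection of flat trees, one per component — so the algorithm halts with the correct answer. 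Solving $(73/72)^{k-1} > \log(m/2)/\log\rho_1$ for $k$ gives
\[
  k - 1 > \log_{73/72}\!\left( \frac{\log(m/2)}{\log\rho_1} \right) = O\!\left( \log\log m - \log\log\rho_1 \right) = O\!\left( \log\log_{\rho_1} m \right),
\]
where the last step uses $\log\log_{\rho_1} m = \log(\log m/\log\rho_1) = \log\log m - \log\log\rho_1$. Since $m = O(\mathrm{poly}(n))$ and $\rho_1 = m/n_1$, we have $\log_{\rho_1} m = \Theta(\log_{m/n_1} n)$ up to constants inside the outer $\log$, so this bound is $O(\log\log_{m/n_1} n)$, as claimed. (I would also note that each phase contributes $O(\log d)$ time via Lemma~\ref{expansion_rounds_bound}, but the \emph{phase count} is what this lemma asserts, so the time accounting is deferred.)

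The main obstacle I anticipate is not the recurrence itself but the bookkeeping around the boundary cases of $\tilde{n}$: I must check that $b_k$ stays ``large enough'' (at least $\log^{c/18} n$, as the text observes) throughout, so that all the probabilistic guarantees of the \textsc{expand} and \textsc{vote} analyses remain valid on the nose — this is where the initialization $\tilde{n}_1 = n/\log^c n$ in the dense-deficient case is doing real work, and I would verify that $\rho_1 \ge \log^c n$ holds in both branches of the update rule. A secondary subtlety is making the termination criterion precise: the loop condition is ``no non-loop edge exists,'' not ``$\tilde{n}_k < 2$,'' so I should argue that once $n'_k \le 1$ the \textsc{alter} step has already removed all non-loop edges (again by Lemma~\ref{all_tree_flat}), so the loop does in fact exit at that phase rather than one phase later; this costs at most an additive $O(1)$ phases and does not affect the asymptotics. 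Everything else is the routine algebra of iterating $\rho_{k+1} = \rho_k^{73/72}$ and should be stated without grinding through it.
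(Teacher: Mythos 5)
Your proposal is correct and follows the same route as the paper's proof: both iterate the update rule $\tilde{n}_{k+1} = \tilde{n}_k/(m/\tilde{n}_k)^{1/72}$ (equivalently, $\rho_{k+1}=\rho_k^{73/72}$ for $\rho_k = m/\tilde{n}_k$), deduce $m/\tilde{n}_{t+1} \ge (m/n_1)^{(73/72)^t}$, and conclude that $t = O(\log\log_{m/n_1} n)$ phases drive $\tilde{n}$ below $1$ (forcing $n'=0$), at which point flatness (Lemma~\ref{all_tree_flat}) and monotonicity give correct termination. Your change of variables to $\rho_k$ and the extra care around the threshold ($\tilde{n}_k < 2$ forcing $n'_k = 0$ since there cannot be exactly one ongoing vertex) are cosmetic refinements of the same argument.
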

\begin{proof}
    Let $n_i$ be the $\tilde{n}$ in phase $i$.
    By the update rule of $\tilde{n}$,
    we have $n_{i+1} \le n_i / (m / n_i)^{1/72}$, which gives $m / n_{t+1} \ge (m / n_1)^{(73/72)^t}$.
    If $t = \lceil \log_{73/72} \log_{m/n_1} m \rceil + 1$ then $n_{t+1} < 1$, which leads to $n' = 0$ at the beginning of phase $t+1$.
    By Lemma~\ref{all_tree_flat} and monotonicity, the algorithm terminates and outputs the correct connected components in this phase since no parent changes.
\end{proof}

\begin{proof}[Proof of Theorem~\ref{main1}]
    We set $c = 100$ in the \textsc{prepare}.
    If $m / n > \log^{c} n$, then $\tilde{n} = n$ in the first phase by Lemma~\ref{prepare} and the update rule.
    Since $\delta = m / n_1 \ge \log^c n$, we have that $b \ge \delta^{1/18} \ge \log^5 n$ in all phases.
    By Lemma~\ref{nn_prob}, $\tilde{n} \ge n'$ in all phases before phase $\log n$ with probability at least $1 - \log n \cdot b^{-1/4} = 1 - 1 / \text{poly}(m \log n / n)$.
    If this event happens, by Lemma~\ref{running_time}, the number of phases is $O(\log\log_{m / n} n)$.
    By Lemma~\ref{expansion_rounds_bound}, the total running time is $O(\log d \log\log_{m / n} n)$ with good probability.

    If $m / n \le \log^{c} n$, then by Lemma~\ref{prepare} and the update rule of $\tilde{n}$, after the \textsc{prepare} which takes time $O(\log\log n)$, with probability at least $1 - 1/\log^c n$ we have $m / n_1 \ge \log^c n$.
    If this happens, by the argument in the previous paragraph, with probability at least $1 - 1 / \text{poly}(m / n_1 \cdot \log n)$, Connected Components algorithm takes time $O(\log d \log\log_{m / n} n)$.
    Taking a union bound, we obtain that with probability at least $1 - 1 / \text{poly}(m / n_1 \cdot \log n) - 1/\log^c n \ge 1 - 1 / \text{polylog}(n) = 1 - 1 / \text{poly}(m \log n / n)$, the total running time is $O(\log\log n) + O(\log d \cdot\log\log_{m / n_1} n) = O(\log d \log\log_{m / n} n)$.
\end{proof}

\section{Spanning Forest Algorithm}\label{sec_sf_alg}

Many existing connected components algorithm can be directly transformed into a spanning forest algorithm.
For example in Reif's algorithm \cite{reif1984optimal}, one can output the edges corresponding to leader contraction in each step to the spanning forest. However this is not the case here as we also add edges to the graph.
The solution in \cite{DBLP:conf/focs/Andoni} uses several complicated ideas including merging local shortest path trees, which heavily relies on computing the minimum function in constant time -- a goal easily achieved by sorting on an MPC but requiring $\Omega(\log\log n)$ time on a PRAM \cite{DBLP:conf/stoc/FichHRW85}.

We show how to modify our connected components algorithm with an extended expansion procedure to output a spanning forest.
Observe that in our previous expansion procedure, if a vertex $u$ does not stop expansion in step $i$, then the space corresponding to $u$ contains all the vertices within distance $2^i$ from $u$.
Based on this, we are able to maintain the distance from $u$ to the closest leader in $O(\log d)$ time by the distance doubling argument used before.
After determining the distance of each vertex to its closest leader,
for each edge $(u,v)$, if $u$ has distance $x$ to the closest leader, and $v$ has distance $x-1$ to the closest leader, then we can set $v$ as the parent of $u$ and add edge $(u,v)$ to the spanning forest.
(If there are multiple choices of $v$, we can choose arbitrary one.)
Since this parents assignment does not induce any cycle, we can find a subforest of the graph.
If we contract all vertices in each tree of such subforest to the unique leader in that tree (which also takes $O(\log d)$ time by shortcutting as any shortest path tree has height at most $d$), the problem reduces to finding a spanning forest of the contracted graph.
Similar to the analysis of our connected components algorithm, we need $O(\log d)$ time to find a subforest in the contracted graph and the number of contraction rounds is $O(\log\log_{m/n} n)$.
Thus, the total running time is asymptotically the same as our connected components algorithm.

Since the \textsc{expand} method will add new edges which are not in the input graph, our connected components algorithm cannot give a spanning forest algorithm directly. To output a spanning forest, we only allow direct links on graph arcs of the input graph. However, we want to link many graph arcs concurrently to make a sufficient progress. We extend the \textsc{expand} method to a new subroutine which can link many graph arcs concurrently. Furthermore, after applying the subroutine, there is no cycle induced by link operations, and the tree height is bounded by the diameter of the input graph.

For each arc $e$ in the current graph, we use $\hat{e}$ to denote the original arc in the input graph that is altered to $e$ during the execution.
Each edge processor is identified by an original arc $\hat{e}$ and stores the corresponding $e$ in its private memory during the execution.
To output the spanning forest, for a original graph arc $\hat{e} = (v, w)$, if at the end of the algorithm, $\hat{e}.f = 1$, then it denotes that the graph edge $\{v, w\}$ is in the spanning forest. Otherwise if both $\hat{e}.f=0$ and $\hat{e}'.f=0$ where $\hat{e}'$ denotes a graph arc $(w,v)$, then the edge $\{v,w\}$ is not in the spanning forest.


\subsection{Vanilla Algorithm for Spanning Forest}

Firstly, let us see how to extend Vanilla algorithm to output a spanning forest.
The extended algorithm is called Vanilla-SF algorithm (see below).

The \textsc{random-vote} and \textsc{shortcut} are the same as those in Vanilla algorithm.
In the \textsc{alter} we only alter the current edge as in Vanilla algorithm but keep the original edge untouched.
We add a method \textsc{mark-edge} and an attribute $e$ for each vertex $v$ to store the current arc that causes the link on $v$, then $v.\hat{e}$ is the original arc in the input graph corresponding to $v.e$.
The \textsc{link} is the same except that we additionally mark the original arc in the forest using attribute $f$ if the corresponding current arc causes the link.
\begin{framed}
\noindent Vanilla-SF algorithm: repeat \{\textsc{random-vote}; \textsc{mark-edge}; \textsc{link}; \textsc{shortcut}; \textsc{alter}\} until no edge exists other than loops.

\textsc{mark-edge}: for each current graph arc $e=(v,w)$: if $v.l=0$ and $w.l=1$ then update $v.e$ to $e$ and update $v.\hat{e}$ to $\hat{e}$.

\textsc{link}: for each ongoing $u$: if $u.e=(u,w)$ exists then update $u.p$ to $w$ and update $u.\hat{e}.f \coloneqq 1$.
\end{framed}

The digraph defined by the labels is exactly the same as in Vanilla algorithm, therefore Lemma~\ref{all_tree_flat} holds for Vanilla-SF algorithm.
It is easy to see that Vanilla-SF algorithm uses $O(m)$ processors and can run on an ARBITRARY CRCW PRAM. Each phase takes $O(1)$ time.

\begin{definition}\label{def_fj}
    For any positive integer $j$,
    at the beginning of phase $j$, let $F_j$ be the graph induced by all the edges corresponding to all the arcs $\hat{e}$ with $\hat{e}.f=1$.
\end{definition}
By the execution of the algorithm, for any positive integers $i \le j$, the set of the edges in $F_i$ is a subset of the set of the edges in $F_j$.
\begin{lemma}\label{lem:f_components}
    For any positive integer $j$,
    at the beginning of phase $j$, each vertex $u$ is in the component of $u.p$ in $F_j$.
\end{lemma}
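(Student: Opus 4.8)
The plan is to prove Lemma~\ref{lem:f_components} by induction on the phase number $j$, tracking how the labeled digraph (the parent pointers) and the forest $F_j$ evolve together across one phase of Vanilla-SF algorithm. The base case $j=1$ is immediate: before the first phase every vertex is self-labeled ($u.p = u$) and $F_1$ is the empty graph, so each $u$ is trivially in the component of $u.p = u$ in $F_1$. For the inductive step, assume the claim holds at the beginning of phase $j$, and analyze what \textsc{mark-edge}, \textsc{link}, \textsc{shortcut}, and \textsc{alter} do during phase $j$ to reach the beginning of phase $j+1$.

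The key step is to show that \emph{every parent pointer present at the start of phase $j+1$ is ``certified'' by a path in $F_{j+1}$}. First I would observe that in phase $j$, whenever \textsc{link} sets $u.p \coloneqq w$ via $u.e = (u,w)$, it simultaneously sets $u.\hat e.f \coloneqq 1$; since $\hat e$ is the original arc altered to the current arc $(u,w)$, and altering only replaces endpoints by their (then-current) parents, the original edge underlying $\hat e$ connects two vertices that, by the induction hypothesis applied at the start of phase $j$, lie in the $F_j$-components of $u$ and of $w$ respectively. Because $F_j \subseteq F_{j+1}$ (edge sets only grow, as noted right after Definition~\ref{def_fj}) and the newly marked edge is itself in $F_{j+1}$, this shows $u$ and $w$ are in the same $F_{j+1}$-component immediately after \textsc{link}. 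Then I would handle \textsc{shortcut}: if $u.p$ is updated to $u.p.p$, transitivity of ``same component in $F_{j+1}$'' (which holds because $F_{j+1}$ doesn't change during phase $j$, only at its end via new marks, and here we only use the already-established marks) gives that $u$ is still in the component of its new parent. Vertices whose parent does not change in phase $j$ are covered directly by the induction hypothesis together with $F_j \subseteq F_{j+1}$.

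Finally I would note that \textsc{alter} does not touch parent pointers or the $f$-attributes (in Vanilla-SF it alters only the current edge and leaves the original arc untouched), so it does not affect the invariant; and \textsc{random-vote}/\textsc{mark-edge} only set $u.l$ and $u.e, u.\hat e$, not $u.p$ or $\hat e.f$. Combining these observations, at the beginning of phase $j+1$ every vertex $u$ lies in the $F_{j+1}$-component of $u.p$, completing the induction.

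The main obstacle I anticipate is the bookkeeping around \emph{altered} edges: one must be careful that when the current arc $(u,w)$ causes a link, the original arc $\hat e = (v_0, w_0)$ it came from really does connect the $F$-components of $u$ and $w$. This requires an auxiliary invariant — maintained across phases — that for every current arc $e = (v, w)$, the endpoints of $\hat e$ are in the $F_j$-components of $v$ and $w$ respectively at the start of the current phase. This follows from Lemma~\ref{all_tree_flat} (trees are flat at the start of each phase, so a vertex and its parent are adjacent) combined with the induction hypothesis, but it should be stated and checked explicitly, since it is the link between the ``alteration'' mechanics and the component claim. Everything else is routine propagation of the invariant through the $O(1)$-time operations of a phase.
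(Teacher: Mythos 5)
Your proposal is correct and follows essentially the same route as the paper's proof: induction on phases, with the \textsc{link} case analyzed by pulling back to the original arc $\hat e=(x,y)$ whose endpoints satisfy $x.p=u$ and $y.p=w$ at the start of the phase (so the induction hypothesis plus the newly marked edge place $u$ and $w$ in the same $F_{j+1}$-component), and the \textsc{shortcut} case handled by transitivity. The subtlety you flag at the end — that the endpoints of the original arc really do have the current arc's endpoints as parents, via the chain of alterations plus flatness and monotonicity — is exactly the step the paper also leaves implicit in its one-line assertion ``there exists a graph arc $u.\hat e=(x,y)$ with $x.p=u$ and $y.p=w$,'' so your instinct that it deserves an explicit check is a reasonable refinement rather than a divergence.
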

\begin{proof}
    The proof is by induction. In the first phase, the lemma is vacuously true since $u.p = u$ for all vertices $u$. Now, suppose the lemma is true at the beginning of phase $i$. In phase $i$, if $u.p$ does not change, the claim is obviously true.
    Otherwise, there are two cases: (\romannumeral1) $u.p$ is changed in the \textsc{link}, or (\romannumeral2) $u.p$ is changed in the \textsc{shortcut}. If $u.p$ is changed in the \textsc{shortcut}, then by Lemma~\ref{all_tree_flat}, the original $u.p.p$ is changed in the \textsc{link}, and since $u$ and the original $u.p$ is in the same component of $F_{i}$, we only need to show that the \textsc{link} does not break the invariant.
    In the \textsc{link}, if $u.p$ is updated to $w$, there is a current graph arc $u.e = (u,w)$, and thus there exists a graph arc $u.\hat{e} = (x,y)$ in the input graph such that $x.p = u$ and $y.p = w$ at the beginning of phase $i$.
    By the induction hypothesis, $x$ and $u$ are in the same component of $F_{i}$ and thus of $F_{i+1}$. Similar argument holds for $y$ and $w$. Since $u.\hat{e}.f$ is set to $1$ in the \textsc{link}, $x$ and $y$ are in the same component in $F_{i+1}$. Thus, $u$ and $w$ are in the same component in $F_{i+1}$.
\end{proof}

\begin{lemma}\label{lem:induced_forest}
    For any positive integer $j$, $F_j$ is a forest.
    And in each tree of $F_j$, there is exactly one root.\footnote{The \emph{tree} we used here in the forest of the graph should not be confused with the tree in the digraph defined by labels.}
\end{lemma}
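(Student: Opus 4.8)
The plan is to prove both statements simultaneously by induction on the phase index $j$, leveraging Lemma~\ref{all_tree_flat} (every tree is flat at the start of a phase), Lemma~\ref{lem:f_components} (each vertex $u$ lies in the component of $u.p$ in $F_j$), and the fact that the edge sets of the $F_j$ are nested. First I would set up the base case: at the start of phase $1$, $F_1$ has no edges, so it is trivially a forest whose trees are isolated vertices, and each such singleton ``tree'' contains exactly one root (the vertex itself, since $u.p = u$ for all $u$). This handles both claims at $j=1$.

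For the inductive step, assume $F_i$ is a forest with exactly one root per tree; I want to show the same for $F_{i+1}$. The only new edges added to $F_i$ during phase $i$ come from the \textsc{link}: when some ongoing $u$ has $u.e = (u,w)$, we set $u.\hat e.f \coloneqq 1$, adding the input edge $\{x,y\}$ (where $u.\hat e = (x,y)$, $x.p = u$, $y.p = w$) to the forest. To show $F_{i+1}$ is acyclic, I would argue that adding these edges cannot close a cycle: by the induction hypothesis each component of $F_i$ has a unique root, and by Lemma~\ref{lem:f_components} every vertex is in the $F_i$-component of its parent, so the $F_i$-components correspond exactly to the trees in the labeled digraph at the start of phase $i$ (which are flat by Lemma~\ref{all_tree_flat}). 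A newly marked edge $\{x,y\}$ joins the $F_i$-component of $x$ (namely that of root $u$) to the $F_i$-component of $y$ (namely that of root $w$); since only a non-leader root $u$ links to a leader root $w$ in the \textsc{link}, we have $u \ne w$, so these are distinct $F_i$-components, and the new edge merges two distinct trees rather than creating a cycle. Concurrent links need care: several non-leader roots may link to the same leader $w$, but each brings a distinct $F_i$-component (rooted at a distinct non-leader root), and all these merge into the single tree containing $w$ — still a tree, since we are taking a star-like union of disjoint trees through $w$. After the merge, the unique root surviving in the combined $F_{i+1}$-tree is the leader $w$ (or rather the unique root of the labeled-digraph tree that results), giving the ``exactly one root per tree'' half of the claim for phase $i+1$.

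The main obstacle I anticipate is making the ``concurrent links merge disjoint trees through a common leader'' argument fully rigorous, in particular verifying that the union of all $F_i$-trees that link into a fixed leader $w$, together with the new marked edges, forms a single tree and not something with a cycle — this requires checking that no two of the merged $F_i$-components coincide and that each marked edge connects to $w$'s component through exactly one endpoint. I would handle this by observing that the map sending a non-leader root $u$ (that links in phase $i$) to its marked input edge $u.\hat e$ is injective on the relevant roots, and that the endpoints $x, y$ of $u.\hat e$ lie in the $F_i$-components of $u$ and $w$ respectively by the induction hypothesis and Lemma~\ref{lem:f_components}; hence the resulting graph on $w$'s new component is obtained from a disjoint union of trees by adding one bridging edge per component, which is exactly a tree. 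The second statement — uniqueness of the root — then follows because the labeled digraph at the start of phase $i+1$ is again a disjoint union of rooted trees (by the monotonicity/acyclicity invariants and Lemma~\ref{all_tree_flat}), and the $F_{i+1}$-components coincide with these trees by Lemma~\ref{lem:f_components}.
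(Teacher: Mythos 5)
Your proof is correct, but it takes a genuinely different route from the paper's. The paper's own argument is a short counting argument: the \textsc{link} operation marks exactly one arc $\hat e$ each time a root becomes a non-root, so the number of marked arcs is exactly $n - |\{u \mid u.p = u\}|$; this means $F_j$ on $n$ vertices has $n - k$ edges (where $k$ is the current number of roots) and hence at least $k$ components, while Lemma~\ref{lem:f_components} forces at most $k$ components (each vertex lies in the $F_j$-component of the root of its labeled-digraph tree). Equality of the two bounds then immediately gives that $F_j$ is a forest with exactly $k$ trees, one root each. Your proof instead does a structural induction: you show directly that each newly marked edge bridges two distinct $F_i$-trees (the tree of the linking non-leader root $u$ and the tree of the leader $w$), and that concurrent links into the same leader produce a star-like union of pairwise-disjoint trees through $w$'s tree, which is again a tree. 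Both are valid; yours is more explicit about \emph{why} no cycle forms and sidesteps the implicit bookkeeping in the counting argument (e.g.\ that no original arc is marked twice across phases, and that marking an arc and its reverse are never both counted), at the cost of being longer. The paper's proof is tighter but leans on those implicit facts being clear from the algorithm's invariants.
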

\begin{proof}
    By the \textsc{link}, every time the size of $\{u \mid u.p=u\}$ decreases by $1$, the size of $\{\hat{e} \mid \hat{e}.f = 1\}$ increases by $1$.
    Thus the size of $\{\hat{e} \mid \hat{e}.f = 1\}$ is exactly $n - |\{u \mid u.p=u\}|$, which induce at least $|\{u\mid u.p=u\}|$ components in $F_j$.
    By Lemma~\ref{lem:f_components}, there are at most $|\{u\mid u.p=u\}|$ components in $F_j$. 
    So there are exactly $|\{u\mid u.p=u\}|$ components in $F_j$ and each such component must be a tree, and each component contains exactly one vertex $u$ with $u.p=u$.
\end{proof}

\subsection{Algorithmic Framework}\label{subsec_tree_alg}

In this section, we show how to extend our Connected Components algorithm to a spanning forest algorithm.
\begin{framed}
\noindent Spanning Forest algorithm: \textsc{forest-prepare}; repeat \{\textsc{expand}; \textsc{vote}; \textsc{tree-link}; \textsc{tree-shortcut}; \textsc{alter}\} until no edge exists other than loops.

\textsc{forest-prepare}: if $m/n\leq \log^c n$ for given constant $c$ then run Vanilla-SF algorithm for $c\log_{8/7}\log n$ phases.

\textsc{tree-link}: for each ongoing $u$: update $u.p$, $u.e$, $u.\hat{e}$, and $u.\hat{e}.f$ according to some rule.

\textsc{tree-shortcut}: repeat \{\textsc{shortcut}\} until no parent changes.
\end{framed}

The \textsc{expand}, \textsc{vote}, \textsc{shortcut}, and \textsc{alter}  
are the same as in our connected component algorithm. 

Similarly to that in Connected Components algorithm, the \textsc{forest-prepare} makes the number of ongoing vertices small enough with good probability.
As analyzed in \S{\ref{exp_ana}} and \S{\ref{TRA}}, the \textsc{expand} takes $O(\log d)$ time, and \textsc{vote} takes $O(1)$ time. \textsc{tree-shortcut} takes $O(\log d')$ time where $d'$ is the height of the highest tree after the \textsc{tree-link}. Later, we will show that the height $d'$ is $O(d)$, giving $O(\log d)$ running time for each phase.

\subsection{The Tree Linking}
In this section, we present the method \textsc{tree-link}.
The purpose of the \textsc{tree-link} is two-fold:
firstly, we want to add some edges to expand the current forest (a subgraph of the input graph);
secondly, we need the information of these added edges to do link operation of ongoing vertices to reduce the total number of ongoing vertices.
For simplicity, all vertices discussed in this section are ongoing vertices in the current phase.

Similar to the \textsc{expand}, we shall use a pool of $m$ processors to do the \textsc{tree-link}. Let $n'$ be the number of vertices. We set all the parameters as the same as in the \textsc{expand}: $\delta = m/n'$, the processors are divided into $m/\delta^{2/3}$ indexed blocks where each block contains $\delta^{2/3}$ indexed processors, and both $h_B$, $h_V$ are the same hash functions used in the \textsc{expand}. 
Comparing to Connected Components algorithm, we store not only the final hash table $H(u)$ of $u$, but also the hash table $H_j(u)$ of $u$ in each round $j \ge 0$. (In Connected Components algorithm, $H_j$ is an analysis tool only.)
Let $T$ denote the total number of rounds in Step~(\ref{expand_loop}) of the \textsc{expand}.

We present the method \textsc{tree-link} as follows, which maintains:
(\romannumeral1) the largest integer $u.{\alpha}$ for each vertex $u$ such that there is neither collisions, leaders, nor fully dormant vertices in $B(u, u.{\alpha})$;
(\romannumeral2) $u.{\beta}$ as the distance to the nearest leader $v$ (if exists in $B(u, u.{\alpha} + 1)$) from $u$;
(\romannumeral3) a hash table $Q(u)$ to store all vertices in $B(u, u.{\alpha})$, which is done by reducing the radius by a factor of two in each iteration and attempting to expand the current $Q(u)$ to a temporary hash table $Q'(u)$.

\begin{framed}
\noindent \textsc{tree-link}:
\begin{enumerate}

\item For each vertex $u$: \label{it:init_tree}
\begin{enumerate}
\item If $u.l=1$, set $u.{\alpha} \coloneqq -1$ and set hash table $Q(u) \coloneqq \emptyset$.
\item If $u.l=0$ and $u$ does not own a block, set $u.{\alpha} \coloneqq -1$ and set $Q(u) \coloneqq \emptyset$.
\item If $u.l=0$ and $u$ owns a block, set $u.{\alpha} \coloneqq 0$ and use $h_V$ to hash $u$ into $Q(u)$.
\end{enumerate}

\item For $j = T\rightarrow 0$: for each vertex $u$ with $u.{\alpha}\geq 0$: if every $v$ in table $Q(u)$ is live in round $j$ of Step~(\ref{expand_loop}) of the \textsc{expand}: \label{it:expand_j}
\begin{enumerate}
\item Initialize $Q'(u) \coloneqq \emptyset$.
\item For each $v$ in $Q(u)$: for each $w$ in $H_j(v)$: use $h_V$ to hash $w$ into $Q'(u)$.
\item If there is neither collisions nor leaders in $Q'(u)$ then set $Q(u)$ to be $Q'(u)$ and increase $u.{\alpha}$ by $2^j$. \label{it:set_Qu}
\end{enumerate}

\item For each current graph arc $(v,w)$: if $v.l=1$ then mark $w$ as a \emph{leader-neighbor}.
\item For each vertex $u$:
\begin{enumerate}
\item If $u.l=1$ then set $u.{\beta} \coloneqq 0$.
\item If $u.l=0$ and $Q(u)$ contains a vertex $w$ marked as a leader-neighbor then set $u.{\beta} \coloneqq u.{\alpha} + 1$. \label{it:set_ub}
\end{enumerate}
\item For each current graph arc $e=(v,w)$: if $v.{\beta}=w.{\beta}+1$ then write $e$ into $v.e$ and write $\hat{e}$ to $v.\hat{e}$.
    \label{step_update_b}
\item For each vertex $u$: if $u.e=(u,w)$ exists then update $u.p$ to $w$ and update $u.\hat{e}.f \coloneqq 1$. \label{step_update_p}

\end{enumerate}
\end{framed}

For simplicity, if there is no ambiguity, we also use $Q(u)$ to denote the set of vertices which are stored in the table $Q(u)$.
We call each iteration $j$ from $T$ down to $0$ in Step~(\ref{it:expand_j}) a \emph{round}.

\begin{lemma}\label{lem:property_of_Qu}
    In any round in the \textsc{tree-link}, for any vertex $u$, $Q(u) = B(u,u.{\alpha})$. Furthermore, at the end of the \textsc{tree-link}, $u.{\alpha}$ is the largest integer such that there is neither collisions nor dormant vertices in $B(u, u.{\alpha})$.

\end{lemma}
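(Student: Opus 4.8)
The plan is to prove both statements of Lemma~\ref{lem:property_of_Qu} by tracking the invariant $Q(u) = B(u, u.{\alpha})$ across the rounds $j = T \to 0$ of Step~(\ref{it:expand_j}), in analogy with the proof of Lemma~\ref{radius_doubling}. First I would establish the base case: after Step~(\ref{it:init_tree}), any vertex $u$ with $u.{\alpha} \ge 0$ is a non-leader that owns a block, has $u.{\alpha} = 0$, and has $Q(u) = \{u\} = B(u, 0)$; vertices with $u.{\alpha} = -1$ have $Q(u) = \emptyset$, and I will treat the convention $B(u, -1) = \emptyset$ so the statement holds vacuously there. Then for the inductive step, suppose the invariant holds entering round $j$. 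If the condition in Step~(\ref{it:expand_j}) fails (some $v \in Q(u)$ is dormant in round $j$ of the \textsc{expand}, or $u.{\alpha} < 0$), neither $Q(u)$ nor $u.{\alpha}$ changes and the invariant is trivially preserved. If the condition holds, the key point is that every $v \in Q(u) = B(u, u.{\alpha})$ is live in round $j$, so by Lemma~\ref{radius_doubling} we have $H_j(v) = B(v, 2^j)$ for each such $v$; hence $Q'(u) = \bigcup_{v \in B(u, u.{\alpha})} B(v, 2^j) = B(u, u.{\alpha} + 2^j)$ \emph{as a set of vertices}, and if moreover there is no collision in $Q'(u)$ then the hash table $Q'(u)$ faithfully represents this set. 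Step~(\ref{it:set_Qu}) then sets $Q(u) := Q'(u)$ and $u.{\alpha} := u.{\alpha} + 2^j$, restoring the invariant.

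For the ``furthermore'' part, I would argue that at the end of the \textsc{tree-link}, $u.{\alpha}$ is maximal subject to $B(u, u.{\alpha})$ containing neither a collision (under $h_V$) nor a dormant vertex. The upper bound direction: whenever $u.{\alpha}$ was incremented in round $j$, it was because $Q'(u) = B(u, u.{\alpha}+2^j)$ had no collision and no leader, and because the round-$j$ liveness check passed, no vertex of $B(u, u.{\alpha})$ was dormant in round $j$ of the \textsc{expand} — and by Lemma~\ref{radius_doubling} and the radius-doubling structure one can chain these to conclude there is no dormant vertex within the final ball. The maximality (lower bound) direction is the more delicate one: I need to show that if $B(u, \gamma)$ for $\gamma = u.{\alpha}_{\text{final}} + 1$ (or more) were collision-free and dormant-free, then some round would have incremented $u.{\alpha}$ past its final value. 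This is where the $j = T \to 0$ descending order of rounds matters: it mimics binary representation, so that after processing all rounds the algorithm has ``greedily'' added every power-of-two jump $2^j$ that keeps the ball clean, exactly as in the expansion's radius-doubling. I would make this precise by an induction showing that after round $j$, $u.{\alpha}$ equals the largest value of the form $\sum_{j' > j, j' \in S} 2^{j'}$ (for some set $S$) such that $B(u, u.{\alpha})$ is clean and, crucially, that this is the largest radius $\le$ (clean radius) expressible with powers $2^{j'}$, $j' > j$ — and then note that for $j = 0$ every integer up to the clean radius is so expressible.

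I expect the main obstacle to be the maximality argument, specifically reconciling two subtleties. First, the liveness condition in Step~(\ref{it:expand_j}) is checked against round $j$ of the \textsc{expand}'s Step~(\ref{expand_loop}), and a vertex may be live in round $j$ but dormant in a later round; so ``no dormant vertex in $B(u, u.{\alpha})$'' at the end requires carefully choosing which round's liveness to invoke for each vertex — presumably one wants every vertex of the final ball to be live in \emph{all} rounds $0, \dots, T$, which is equivalent to being non-dormant after the \textsc{expand}, and I should confirm the algorithm's checks actually enforce this along the radius-doubling chain. Second, I must ensure the definition of $i$ (the stopping round) from the \textsc{expand} interacts correctly: if $B(u, u.{\alpha})$ contains only vertices that never go dormant, Lemma~\ref{radius_doubling} gives $H_j(v) = B(v, 2^j)$ for all $j$ up to each vertex's stopping round, and I need $2^j$ to not overshoot. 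A clean way to handle this is to first prove the helper fact that for any vertex $v$ that is live through round $j$, $H_j(v) = B(v, 2^j)$ exactly (already essentially Lemma~\ref{radius_doubling}), and then phrase the whole argument in terms of the set $D$ of dormant-or-leader vertices, showing $u.{\alpha}_{\text{final}}$ is exactly $\min_{v \in D} \text{dist}(u, v) - 1$ capped at the diameter and at the first collision radius. The routine part is the straightforward set-algebra $\bigcup_{v \in B(u,\alpha)} B(v, 2^j) = B(u, \alpha + 2^j)$ and the collision bookkeeping, which I would not belabor.
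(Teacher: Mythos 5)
Your proof of the first half (the invariant $Q(u) = B(u, u.{\alpha})$ by descending induction on $j$, using Lemma~\ref{radius_doubling} to get $Q'(u) = B(u, u.{\alpha} + 2^j)$ when the liveness and collision checks pass) matches the paper exactly.

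For the maximality part you take a different route, and it has a gap. The paper maintains two invariants by a descending induction on $j$: (i) $B(u, u.{\alpha} + 2^j)$ fails the property $\mathcal{P}$ of containing no collisions, no leaders, and no \emph{fully} dormant vertices, and (ii) $B(u, u.{\alpha})$ satisfies $\mathcal{P}$; at $j=0$ this gives maximality directly, with no bookkeeping about sums of powers of two. Your greedy / binary-representation route could also work in principle, but your characterization via ``the set $D$ of dormant-or-leader vertices'' and your guess that the algorithm enforces ``every vertex of the final ball is live in all rounds $0, \dots, T$, which is equivalent to being non-dormant after the \textsc{expand}'' are both wrong. Step~(\ref{it:init_tree}) puts every block-owning non-leader $u$, including ones that later become (half-)dormant during the \textsc{expand}, into $Q(u)$ with $u.{\alpha} = 0$; the final $Q(u) = B(u, u.{\alpha})$ can perfectly well contain vertices that are dormant at the end. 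The liveness check in Step~(\ref{it:expand_j}) is taken against round $j$ of the \textsc{expand} only, which is strictly weaker than liveness throughout, and the correct translation (via Lemma~\ref{r_range}) is: $v$ live in round $j$ implies $B(v, 2^j)$ contains no collision and no \emph{fully} dormant vertex. So the property you should be tracking is the paper's $\mathcal{P}$, with ``fully dormant'' rather than ``dormant.'' (Note that the lemma's printed phrase ``neither collisions nor dormant vertices'' is itself imprecise --- it omits leaders and should read ``fully dormant''; the paper's description of the \textsc{tree-link}'s purpose just before the pseudocode, and the paper's own proof, both use the precise $\mathcal{P}$.) With that correction your ``$\min_{v \in D} \text{dist}(u, v) - 1$ capped at the first collision radius'' characterization becomes correct, but the two-invariant descending induction is the cleaner way to prove it.
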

\begin{proof}
Firstly, we show that $Q(u)=B(u,u.{\alpha})$. Our proof is by an induction on $j$ of Step~(\ref{it:expand_j}). The base case holds since before Step~(\ref{it:expand_j}), the initialization of $Q(u)$ and $u.{\alpha}$ satisfy the claim.
Now suppose $Q(u)=B(u,u.{\alpha})$ at the beginning of round $j$ of Step~(\ref{it:expand_j}). There are two cases.
The first case is that $Q(u)$ does not change in this round. In this case the claim is true by the induction hypothesis.
The second case is that $Q(u)$ will be set to $Q'(u)$ in Step~(\ref{it:set_Qu}). Since there is no collision in $Q'(u)$, we have $Q'(u)=\bigcup_{v\in Q(u)} H_{j}(v)$, where $Q(u)$ is not set in Step~(\ref{it:set_Qu}) yet.
Since every $v$ in $Q(u)$ is live in round $j$ of Step~(\ref{expand_loop}) of the \textsc{expand}, we have $H_{j}(v)=B(v,2^{j})$ according to Lemma~\ref{radius_doubling}.
Together with the induction hypothesis, $Q'(u)=\bigcup_{v\in B(u,u.{\alpha})} B(v,2^{j})=B(u,u.{\alpha}+2^{j})$. Thus by Step~(\ref{it:set_Qu}), after updating $Q(u)$ and $u.{\alpha}$, the first part of the lemma holds.
Now we prove the second part of the lemma.
For convenience in the notation, we say that $B(u, \alpha)$ satisfies property $\mathcal{P}$ if and only if
there is neither collisions, leaders, nor fully dormant vertices in $B(u, \alpha)$.
We shall show that at the end of the \textsc{tree-link}, $u.{\alpha}$ is the largest integer such that $B(u, u.{\alpha})$ satisfies $\mathcal{P}$.
For any vertex $u$, if $u.l = 0$ and $u$ is fully dormant, or $u.l=1$, then the claim holds due to Step~(\ref{it:init_tree}).
Consider a vertex $u$ with $u.l=0$ and $u$ owning a block.
Our proof is by induction on $j$ of Step~(\ref{it:expand_j}).
We claim that at the end of round $j$ of Step~(\ref{it:expand_j}), the following two invariants hold:
(\romannumeral1) $B(u, u.{\alpha}+2^j)$ does not satisfy $\mathcal{P}$; (\romannumeral2) $B(u, u.{\alpha})$ satisfies $\mathcal{P}$.
The base case is before Step~(\ref{it:expand_j}).
There are two cases:
if $u$ is live after round $T$ of Step~(\ref{expand_loop}) of the \textsc{expand}, then there is a leader in $B(u,2^{T+1})$ by the \textsc{vote};
otherwise, there is either a fully dormant vertex or a collision in $B(u,2^{T+1})$.
Thus, the invariants hold for the base case.

Now suppose the invariants hold after round $j$ of Step~(\ref{it:expand_j}). In round $j-1$, there are two cases.
In the first case, $Q(u)$ is set to $Q'(u)$.
This means that before Step~(\ref{it:set_Qu}), $\forall v\in Q(u)=B(u,u.{\alpha})$, $B(v,2^{j-1})$ satisfies $\mathcal{P}$. Thus, $B(u,u.{\alpha}+2^{j-1})$ satisfies $\mathcal{P}$. Notice that by the induction, $B(u,u.{\alpha}+2^{j})$ does not satisfy $\mathcal{P}$. Therefore, the invariants hold after updating $Q(u)$ and $u.{\alpha}$ in Step~(\ref{it:set_Qu}).
In the second case, $Q(u)$ remains unchanged. There exists $v\in Q(u) = B(u,u.{\alpha})$ such that $B(v, 2^{j-1})$ does not satisfy $\mathcal{P}$ which means that $B(u,u.{\alpha}+2^{j-1})$ does not satisfy $\mathcal{P}$.
Since $Q(u)$ and $u.{\alpha}$ can only change together, $B(u,u.{\alpha})$ satisfies $\mathcal{P}$.
The invariants also hold.
Therefore, after round $j=0$, $B(u,u.{\alpha})$ satisfies $\mathcal{P}$ and $B(u,u.{\alpha}+1)$ does not satisfy $\mathcal{P}$, giving the second part of the lemma.
\end{proof}

\begin{lemma}\label{lem:u.b}
For any vertex $u$, if $u.{\beta}$ is updated, then $u.{\beta}=\underset{v: v.l=1}{\min}\text{dist}(u,v)$.
\end{lemma}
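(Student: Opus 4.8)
The plan is to show that whenever $u.\beta$ is set (which happens in Step~(\ref{it:set_ub}) and only when $u.l=0$), it equals the distance from $u$ to the nearest leader. First I would unwind the definition: by Step~(\ref{it:set_ub}), $u.\beta$ is updated only if $Q(u)$ contains a vertex $w$ that was marked as a leader-neighbor in Step~(3), and in that case $u.\beta$ is set to $u.\alpha+1$. A vertex $w$ is a leader-neighbor exactly when it has a neighbor $v$ with $v.l=1$, i.e., when $\mathrm{dist}(w,\{v:v.l=1\})\le 1$. So the update fires precisely when some vertex in $Q(u)$ is within distance $1$ of a leader.

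Next I would invoke Lemma~\ref{lem:property_of_Qu} to identify $Q(u)$ with $B(u,u.\alpha)$, and to recall that $u.\alpha$ is the largest integer for which $B(u,u.\alpha)$ contains no leader (and no collision, no fully dormant vertex). The key consequences are: (i) no vertex of $B(u,u.\alpha)$ is a leader, so in particular $u$ itself is not a leader and $\min_{v:v.l=1}\mathrm{dist}(u,v)\ge u.\alpha+1$; (ii) since $B(u,u.\alpha)$ contains a vertex $w$ that is within distance $1$ of some leader $v^\star$, we get $\mathrm{dist}(u,v^\star)\le \mathrm{dist}(u,w)+\mathrm{dist}(w,v^\star)\le u.\alpha+1$. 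Combining (i) and (ii) yields $\min_{v:v.l=1}\mathrm{dist}(u,v)=u.\alpha+1=u.\beta$, which is exactly the claim.

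For the lower bound direction I would be a little careful: Lemma~\ref{lem:property_of_Qu} guarantees $B(u,u.\alpha)$ has no leader, but I also want that $B(u,u.\alpha)$ contains \emph{all} vertices at distance $\le u.\alpha$ from $u$ — this is the statement $Q(u)=B(u,u.\alpha)$, again from Lemma~\ref{lem:property_of_Qu}, so any leader at distance $\le u.\alpha$ would have to appear in $Q(u)$, contradicting the no-leader property. Hence every leader is at distance $\ge u.\alpha+1$ from $u$. The main (minor) obstacle is keeping the bookkeeping straight between the three quantities $u.\alpha$, $u.\beta$, and the true distance, and making sure the triangle-inequality estimate uses the right $w$; once Lemma~\ref{lem:property_of_Qu} is in hand there is essentially no further difficulty, and the proof is a short two-sided distance argument.
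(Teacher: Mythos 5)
Your argument for the $u.l=0$ case is correct and follows the same route as the paper's proof: you invoke Lemma~\ref{lem:property_of_Qu} to get $Q(u)=B(u,u.\alpha)$ with no leader inside, which gives the lower bound $\min_{v:v.l=1}\mathrm{dist}(u,v)\ge u.\alpha+1$, and the leader-neighbor $w\in Q(u)$ witnesses a leader within distance $u.\alpha+1$, pinning the minimum to exactly $u.\alpha+1=u.\beta$. The one slip is your opening assertion that $u.\beta$ is set only in Step~(\ref{it:set_ub}) and hence only when $u.l=0$: the algorithm also sets $u.\beta\coloneqq 0$ in the preceding sub-step when $u.l=1$, and the lemma is stated for any update of $u.\beta$. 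That case is trivial, since then $\min_{v:v.l=1}\mathrm{dist}(u,v)=\mathrm{dist}(u,u)=0=u.\beta$, and the paper indeed dispatches it in one line, but your proof needs the sentence too.
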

\begin{proof}
For any vertex $u$ with $u.l=1$, $u.{\beta}$ is set to $0$.
Now, consider a vertex $u$ with $u.l=0$.
By Lemma~\ref{lem:property_of_Qu}, $Q(u) = B(u,u.{\alpha})$ and there is no leader in $B(u,u.{\alpha})$.
If $w \in B(u, u.{\alpha})$ is marked as a leader-neighbor, there is a vertex $v$ with $v.l = 1$ such that $v \in B(u, u.{\alpha}+1)$.
Thus, in this case $u.{\beta}=u.{\alpha}+1=\min_{v:v.l=1}\text{dist}(u,v)$.
\end{proof}

The following lemma shows that the construction of the tree in Steps~(\ref{step_update_b},\ref{step_update_p}) using the $\beta$ values is valid.

\begin{lemma}
    For any vertex $u$, if $u.{\beta} > 0$, then there exists an edge $\{u, w\}$ such that $w.{\beta} = u.{\beta} - 1$.
\end{lemma}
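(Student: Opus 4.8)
The plan is to prove this by unwinding the definition of $u.\beta$ and then using the structural properties of $Q(u)$ established in Lemma~\ref{lem:property_of_Qu} and Lemma~\ref{lem:u.b}. First I would observe that $u.\beta > 0$ implies $u.l = 0$ (since leaders get $u.\beta = 0$), so $u.\beta$ was set in Step~(\ref{it:set_ub}), which means $u.\beta = u.\alpha + 1$ and there is a vertex $w_0 \in Q(u) = B(u, u.\alpha)$ marked as a leader-neighbor. By Lemma~\ref{lem:u.b}, $u.\beta = \min_{v: v.l = 1} \text{dist}(u, v)$, so there is a leader $v^\star$ with $\text{dist}(u, v^\star) = u.\beta$.

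Next I would take a shortest path $u = x_0, x_1, \dots, x_k = v^\star$ of length $k = u.\beta$ from $u$ to its nearest leader $v^\star$, and consider the vertex $w \coloneqq x_1$, the first vertex on this path. The goal is to show $\{u, w\}$ is an edge (immediate, since consecutive vertices on a path are adjacent) and that $w.\beta = u.\beta - 1$. For the latter, the natural route is to argue that $\text{dist}(w, v^\star) = k - 1$ and that $v^\star$ is also the nearest leader to $w$, so that $w.\beta = \min_{v: v.l=1}\text{dist}(w,v) = k-1$; but this requires first knowing that $w.\beta$ was actually \emph{updated}, i.e. that $w$ owns a block and that $B(w, w.\alpha)$ reaches a leader-neighbor. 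Here I would use that $B(w, k-1) \subseteq B(u, k)$ and, more carefully, that the ball around $w$ of the appropriate radius inherits good behavior (no collisions, no fully dormant vertices) from the ball around $u$. Concretely: since $B(u, u.\alpha) = B(u, k-1)$ satisfies property $\mathcal P$, and $w \in B(u, k-1)$, a sub-ball $B(w, k - 2)$ sits inside $B(u, k-1)$ and hence also satisfies $\mathcal P$; combined with the leader $v^\star$ at distance exactly $k-1$ from $w$, the invariant maintained in Lemma~\ref{lem:property_of_Qu} forces $w.\alpha = k - 2$, hence $Q(w) = B(w, k-2)$ contains a leader-neighbor (a neighbor of $v^\star$ on the path), hence $w.\beta$ is set and equals $(k-2) + 1 = k - 1 = u.\beta - 1$.

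The main obstacle I anticipate is the bookkeeping around \emph{which} radius exactly to use and ensuring the sub-ball containments are tight — in particular verifying that $w.\alpha$ cannot be \emph{larger} than $k-2$ (which would happen only if $B(w, k-1)$ still satisfied $\mathcal P$, but it cannot, since it contains the leader $v^\star$), and cannot be \emph{smaller} (which is ruled out by $B(w, k-2) \subseteq B(u, k-1)$ satisfying $\mathcal P$, invoking the maximality clause of Lemma~\ref{lem:property_of_Qu}). A secondary subtlety is the edge case $u.\beta = 1$: then $w = v^\star$ is itself a leader, $w.\beta = 0 = u.\beta - 1$ directly, and one should dispatch this separately before running the general argument (where $k - 2 \ge 0$ is needed). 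Once these containments are pinned down, the conclusion is immediate and the rest is routine.
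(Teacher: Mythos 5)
Your proposal is correct and follows essentially the same path as the paper's proof: dispatch the $u.\beta = 1$ case separately, then for $u.\beta > 1$ take $w$ to be the next vertex on a shortest path from $u$ to its nearest leader (given by Lemma~\ref{lem:u.b}), pin down $w.\alpha = u.\beta - 2$ by squeezing $B(w, u.\beta - 2) \subseteq B(u, u.\alpha)$ (so it satisfies $\mathcal{P}$) against $B(w, u.\beta-1)$ containing a leader, and conclude $w.\beta = w.\alpha + 1$ because the path vertex adjacent to the leader is a leader-neighbor in $Q(w)$. The "main obstacle" you flag about the tightness of $w.\alpha$ from both sides is precisely what the paper's proof handles, so no gap.
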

\begin{proof}
Consider a vertex $u$ with $u.{\beta}=1$. Then $u.{\alpha} = 0$, and by Lemma~\ref{lem:property_of_Qu}, $Q(u)=B(u,0)=\{u\}$. Thus, $u$ is marked as a leader-neighbor which means that there is a graph edge $\{u,v\}$ where $v$ is a leader. Notice that $v.{\beta}=0$. So the lemma holds for $u$ with $u.{\beta}=1$.

Consider a vertex $u$ with $u.{\beta} > 1$. By Lemma~\ref{lem:u.b}, there is a leader $v$ such that $\text{dist}(u, v) = u.{\beta}$. Let $\{u,w\}$ be a graph edge with $\text{dist}(w, v) = u.{\beta} - 1$, which must exist by Step~(\ref{step_update_b}).
It suffices to show that $w.{\beta} = u.{\beta}-1$.
Since $B(w, u.{\beta}-1)$ contains a leader $v$, and $B(w, u.{\beta}-2)\subseteq B(u, u.{\beta}-1) = B(u,u.{\alpha})$ which does not contain a leader by Lemma~\ref{lem:property_of_Qu}, we have that $w.{\alpha} = u.{\beta}-2$.
Let $\{x, v\}$ be a graph edge such that $\text{dist}(w, x)=u.{\beta}-2$, which must exist by Step~(\ref{step_update_b}).
Then, $x$ is marked as a leader-neighbor and $x$ is in $B(w, w.{\alpha})$.
Hence $w.{\beta} = w.{\alpha}+1 = u.{\beta}-2 + 1 = u.{\beta} -1$.
\end{proof}

The above lemma implies that if $u.{\beta}>0$ then $u$ is a non-root in the next phase due to Steps~(\ref{step_update_b},\ref{step_update_p}):
\begin{corollary}\label{cor:u.bge0}
    For any vertex $u$, if $u.{\beta}>0$, then $u$ is finished in the next phase.
\end{corollary}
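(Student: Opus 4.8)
The plan is to trace the parent pointer of $u$ through the remainder of the current phase, using the preceding lemma as the only nontrivial input. That lemma gives a graph edge $\{u,w\}$ with $w.{\beta} = u.{\beta} - 1$, equivalently $u.{\beta} = w.{\beta} + 1$, whenever $u.{\beta} > 0$; the task is then purely bookkeeping within \textsc{tree-link}, \textsc{tree-shortcut}, and \textsc{alter}.

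First I would apply Step~(\ref{step_update_b}) to the arc $e = (u,w)$: since $u.{\beta} = w.{\beta}+1$, the value $(u,w)$ gets written into $u.e$. Under ARBITRARY write resolution several such arcs may compete, but every winner is an arc $(u,w')$ with $w'.{\beta} = u.{\beta}-1$, and in particular $w' \ne u$ since $u.{\beta} - 1 \ne u.{\beta}$. Hence after Step~(\ref{step_update_b}) the attribute $u.e$ is a well-formed arc $(u,w')$ with $w' \ne u$, so Step~(\ref{step_update_p}) updates $u.p$ to $w'$. Thus immediately after \textsc{tree-link}, $u$ is a non-root.

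Next I would show that $u$ stays a non-root through the rest of the phase. The \textsc{alter} step changes only graph edges, not parents, so it cannot affect whether $u$ is a root. A \textsc{shortcut} replaces $u.p$ by $u.p.p$; by the standing acyclicity invariant (the only cycles in the labeled digraph are self-loops), a vertex with $u.p \ne u$ cannot have $u.p.p = u$, so $u.p$ remains distinct from $u$ after any single shortcut, hence after the whole \textsc{tree-shortcut} loop. Therefore at the beginning of the next phase $u$ is still a non-root, and by Definition~\ref{def_ongoing} a non-root is finished, which is the claim.

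There is no real obstacle here: the statement is essentially a direct consequence of the preceding lemma together with the definitions of Steps~(\ref{step_update_b}) and~(\ref{step_update_p}). The only points deserving a sentence of care are that the ARBITRARY write resolution in Step~(\ref{step_update_b}) still leaves $u.e$ pointing to a vertex other than $u$, and that shortcutting cannot revert a non-root to a root — both handled by the acyclicity invariant maintained throughout the algorithm.
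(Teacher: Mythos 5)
Your proof is correct and follows the same route the paper intends: the preceding lemma supplies an arc $(u,w)$ with $w.\beta = u.\beta - 1$, so Step~(\ref{step_update_b}) sets $u.e$ to some arc with head distinct from $u$, Step~(\ref{step_update_p}) then makes $u$ a non-root, and shortcutting/altering cannot restore rootness. The paper states the corollary as an immediate consequence of the lemma and Steps~(\ref{step_update_b},\ref{step_update_p}) without spelling this out; your version just makes the ARBITRARY-write and acyclicity bookkeeping explicit.
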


\begin{lemma}
    The height of any tree is $O(d)$ after the \textsc{tree-link}.
\end{lemma}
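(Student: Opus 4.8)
The plan is to read off the parent structure left by the \textsc{tree-link} and bound the longest root-to-leaf path it creates. The \textsc{tree-link} changes only the parents of ongoing vertices (Steps~(\ref{step_update_b})--(\ref{step_update_p})), and by Lemma~\ref{all_tree_flat} every tree is flat at the start of the phase, so each non-ongoing vertex already points directly to a root, and trees that are entire finished components are left untouched and remain flat. The only ongoing $u$ whose parent moves are those with $u.{\beta}$ defined and positive; such a $u$ receives a parent $w$ with $w.{\beta}=u.{\beta}-1$ along a current graph arc $\{u,w\}$, so $w$ is again ongoing with a defined, strictly smaller $\beta$. Leaders get $u.{\beta}=0$ and keep $u.p=u$, and a short inspection of Steps~(\ref{it:init_tree})--(\ref{it:set_ub}) shows that a defined value $u.{\beta}=0$ can occur only for a leader (for $u.l=0$ a defined $u.{\beta}$ equals $u.{\alpha}+1$, and $u.{\alpha}=-1$ forces $Q(u)=\emptyset$). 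Therefore the chain $u,\,u.p,\,u.p.p,\dots$ starting from any ongoing $u$ with defined $\beta$ has strictly decreasing $\beta$-values $u.{\beta},u.{\beta}-1,\dots$ and reaches, after exactly $u.{\beta}$ steps, a leader, where it stops; in particular it is acyclic, so the labels still define a forest, and an ongoing vertex with undefined $\beta$ that is not a leader keeps $u.p=u$ and cannot lie on any such chain.

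The next step is to bound $u.{\beta}$ by $d$. By Lemma~\ref{lem:u.b}, a defined $u.{\beta}$ equals $\min_{v:\,v.l=1}\text{dist}(u,v)$, a distance in the current graph. By induction on phases --- the argument already used in the proof of Lemma~\ref{expansion_rounds_bound}, since altering each edge $\{v,w\}$ to $\{v.p,w.p\}$ maps any path to a path that is not longer --- the current graph at the start of every phase has diameter at most $d$. The nearest leader to $u$ lies in the component of $u$, so $u.{\beta}\le d$.

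Combining the two steps: after the \textsc{tree-link} every root is a leader, an ongoing vertex with undefined $\beta$, or the (finished) root of an untouched flat tree. The deepest tree is one rooted at a leader $\rho$; its height is the length of the longest chain feeding $\rho$ from an ongoing vertex --- at most $\max_u u.{\beta}\le d$ --- plus one further level for the non-ongoing vertices that were already hanging off the last ongoing vertex of that chain. Hence the height is at most $d+1=O(d)$, which is exactly the bound the subsequent \textsc{tree-shortcut} needs in order to run in $O(\log d)$ time.

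I expect the main difficulty to be organizational rather than technical: one must carefully separate the three kinds of roots, confirm that no chain runs into a vertex with undefined $\beta$, and correctly translate $u.{\beta}$ as a chain length into the actual tree height by accounting for the non-ongoing leaves of the pre-existing flat trees. The only genuinely global ingredient is the monotonicity of the diameter, which is what lets us replace the current-graph distance appearing in $u.{\beta}$ by the input diameter $d$; this must be invoked explicitly rather than taken for granted.
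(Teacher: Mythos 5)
Your proposal follows the paper's own argument: the parent rule $u.p \coloneqq w$ with $u.\beta = w.\beta + 1$ (Steps~(\ref{step_update_b})--(\ref{step_update_p})) makes parent chains strictly $\beta$-decreasing down to a leader, Lemma~\ref{lem:u.b} identifies $\beta$ with distance to the nearest leader, and the monotonicity of the diameter under \textsc{alter} and edge additions bounds $\beta$ by $d$, giving height $O(d)$. You simply make explicit the bookkeeping the paper leaves implicit — the case split on root types, the fact that a defined $\beta=0$ forces $u.l=1$, and the extra $+1$ level contributed by the pre-existing flat non-ongoing leaves.
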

\begin{proof}
    If $u.p$ is updated to $w$, then $u.{\beta}=w.{\beta} + 1$ by Step~(\ref{step_update_p}).
    Thus, the height of a tree is at most $\max_u u.{\beta} + 1$. By Lemma~\ref{lem:u.b} and the fact that an \textsc{alter} and adding edges never increase the diameter, the height of any tree is at most $d$.
\end{proof}
As mentioned at the end of \S{\ref{subsec_tree_alg}}, by the above lemma and Lemma~\ref{expansion_rounds_bound}, the following is immediate:
\begin{corollary}\label{phase_time}
    Each phase of the algorithm takes $O(\log d)$ time.
\end{corollary}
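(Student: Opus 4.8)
The plan is to prove Corollary~\ref{phase_time} by charging the time of one phase to its five constituent subroutines --- \textsc{expand}, \textsc{vote}, \textsc{tree-link}, \textsc{tree-shortcut}, \textsc{alter} (the body of the repeat loop after \textsc{forest-prepare}) --- and bounding each separately, then adding the bounds.

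First I would dispatch the easy pieces. \textsc{expand} takes $O(\log d)$ time by Lemma~\ref{expansion_rounds_bound}; crucially, its inner loop (Step~(\ref{expand_loop})) runs $T = O(\log d)$ rounds, since by Lemma~\ref{radius_doubling} every vertex is either fully dormant or halts its expansion in some round $i$ with $2^i = O(d)$ (this is exactly the content of the proof of Lemma~\ref{expansion_rounds_bound}). \textsc{vote} and \textsc{alter} are each a single constant-time sweep over vertices resp.\ current graph arcs, just as in the Connected Components algorithm (cf.\ \S{\ref{TRA}}), so each is $O(1)$ time. For \textsc{tree-link}, every step other than Step~(\ref{it:expand_j}) --- namely Step~(\ref{it:init_tree}), the two leader-neighbor/$\beta$ passes, Step~(\ref{step_update_b}), and Step~(\ref{step_update_p}) --- is an $O(1)$-time pass over vertices or current graph arcs using the $O(m)$ processors already allotted to the phase. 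Step~(\ref{it:expand_j}) is a loop over $j$ from $T$ down to $0$; each iteration performs one hashing sweep from $Q(u)$ through $H_j(v)$ into $Q'(u)$ followed by a collision/leader check, all in $O(1)$ time. Since \textsc{tree-link} reuses the per-round tables $H_j$ produced by \textsc{expand}, it inherits the same round count $T = O(\log d)$, so \textsc{tree-link} runs in $O(\log d)$ time.

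Finally, \textsc{tree-shortcut} is ``repeat \{\textsc{shortcut}\} until no parent changes''; each \textsc{shortcut} costs $O(1)$ time and at least halves the height of every tree in the labeled digraph, so it terminates after $O(\log d')$ iterations, where $d'$ is the maximum tree height immediately after \textsc{tree-link}. By the lemma immediately preceding this corollary, $d' = O(d)$, so \textsc{tree-shortcut} takes $O(\log d)$ time. Summing the five bounds --- $O(\log d) + O(1) + O(\log d) + O(\log d) + O(1)$ --- gives $O(\log d)$ per phase.

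I expect no serious obstacle: the corollary is essentially bookkeeping, and its two nontrivial ingredients, the $O(\log d)$ expansion-round bound (Lemma~\ref{expansion_rounds_bound}, equivalently $T = O(\log d)$) and the $O(d)$ bound on post-\textsc{tree-link} tree height, are imported as black boxes. The only points requiring a moment of care are (i) matching the loop bound in Step~(\ref{it:expand_j}) of \textsc{tree-link} to the round count $T$ of \textsc{expand} --- i.e.\ observing that \textsc{tree-link} walks the stored tables $H_j$ and so runs over the same index range --- and (ii) checking, orthogonally, that each sweep above fits within the $O(m)$ processors the phase has been given, which holds because \textsc{tree-link} uses the identical block-and-hash-table layout ($\delta$, $h_B$, $h_V$, $\delta^{2/3}$-processor blocks) as \textsc{expand}.
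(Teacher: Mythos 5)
Your proposal is correct and follows essentially the same decomposition the paper uses: bound \textsc{expand} by Lemma~\ref{expansion_rounds_bound}, \textsc{tree-shortcut} by the preceding lemma that the post-\textsc{tree-link} tree height is $O(d)$, the sweeps by $O(1)$, and note that \textsc{tree-link}'s Step~(\ref{it:expand_j}) runs over the same $T = O(\log d)$ index range as \textsc{expand}'s Step~(\ref{expand_loop}). The paper's stated justification is terser (citing only Lemma~\ref{expansion_rounds_bound} and the tree-height lemma and leaving \textsc{tree-link}'s $O(\log d)$ cost implicit), so your account is a slightly more explicit write-up of the same argument rather than a different route.
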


Similar to Definition~\ref{def_fj} and Lemma~\ref{lem:induced_forest}, we can show the following, which guarantees the correctness of Spanning Forest algorithm:
\begin{lemma}
    At the end of the \textsc{tree-link}, all the edges $\hat{e}$ with $\hat{e}.f=1$ constitute a forest. And in each tree of the forest, there is exactly one root.
\end{lemma}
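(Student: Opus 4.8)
The plan is to mimic the argument for Vanilla-SF (Lemma~\ref{lem:f_components} and Lemma~\ref{lem:induced_forest}), tracking how the set $\{\hat e : \hat e.f = 1\}$ grows and how the roots of the labeled digraph shrink in lockstep. First I would establish the analogue of Lemma~\ref{lem:f_components}: by induction on phases, at the end of the \textsc{tree-link} of each phase, every vertex $u$ lies in the same component of the current $F$ (the subgraph of input edges marked $f=1$) as its parent $u.p$. The base case is vacuous since initially $u.p=u$. For the inductive step, consider the two ways $u.p$ can change in a phase. If $u.p$ changes in the \textsc{tree-shortcut}, then by the flatness guaranteed after \textsc{tree-link} (height $O(d)$, and shortcutting to a fixpoint makes the tree flat) $u$ inherits the parent of a vertex already handled, so it suffices to treat \textsc{tree-link}. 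If $u.p$ is updated to $w$ in Step~(\ref{step_update_p}), then $u.e=(u,w)$ was written in Step~(\ref{step_update_b}) from some current arc $e=(v,w)$ with $v.\beta = w.\beta+1$, and the corresponding original arc $\hat e=(x,y)$ satisfies $x.p=u$, $y.p=w$ at the start of the phase; since $\hat e.f$ is set to $1$, $x$ and $y$ become connected in $F$, and by the induction hypothesis $x$ is connected to $u$ and $y$ to $w$ in the old $F$, hence $u$ and $w$ are connected in the new $F$.

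Next I would prove the counting identity exactly as in Lemma~\ref{lem:induced_forest}: each time some vertex ceases to be a root (its $u.p$ changes via Step~(\ref{step_update_p})), exactly one original arc $\hat e$ has its $f$-flag flipped from $0$ to $1$; moreover once set, flags are never reset and a vertex that is a non-root stays a non-root (monotonicity of the parent pointer for the spanning-forest algorithm follows from the fact that only non-leaders with $u.\beta>0$ get new parents, and Corollary~\ref{cor:u.bge0} says they are then finished). So at the end of the \textsc{tree-link} the number of marked original edges equals $n - |\{u : u.p = u\}|$. Any graph on $n$ vertices with $n - k$ edges has at least $k$ connected components; thus $F$ has at least $|\{u : u.p=u\}|$ components. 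On the other hand, by the first paragraph every vertex is $F$-connected to its root, and distinct roots label distinct components (the labeled digraph is a forest of rooted trees by the invariant), so $F$ has at most $|\{u : u.p=u\}|$ components. Equality forces every component of $F$ to be a tree (a connected graph on $c$ vertices with fewer than $c$ edges would push the component count above the bound) and to contain exactly one vertex $u$ with $u.p=u$.

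The only subtlety — and where I expect the bulk of the care to go — is verifying the claim "one new marked edge per lost root" \emph{under concurrency}: Step~(\ref{step_update_b}) may have several current arcs out of $u$ eligible to be written into $u.e$, and the \textsc{alter} history means one original arc $\hat e$ can be the preimage of the chosen current arc for at most one vertex, so I must argue no double-counting and no cycle creation. Acyclicity follows because $u.p=w$ only when $w.\beta = u.\beta-1$, so the parent map strictly decreases $\beta$, hence induces no cycle; combined with the level/parent invariant of the framework (Lemma~\ref{lem_non_root}) and the fact that newly linked $u$ have their subtrees moved wholesale, the marked original edges form a forest consistent with the tree structure. Once this bookkeeping is pinned down, the component-counting argument above closes the proof, and the same induction carried across all phases yields the statement at the end of the algorithm.
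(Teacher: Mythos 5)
Your proposal is correct and takes essentially the same approach as the paper, which itself gives no explicit proof beyond the remark that the argument is ``similar to Definition~\ref{def_fj} and Lemma~\ref{lem:induced_forest}''; you fill in exactly the adaptation the paper intends, tracking the $F$-connectivity invariant (analogue of Lemma~\ref{lem:f_components}), the one-marked-arc-per-lost-root counting identity, and the acyclicity of the parent map via the strictly decreasing $\beta$ values. One small stray reference: Lemma~\ref{lem_non_root} lives in the Faster Connected Components section and concerns levels, which the Spanning Forest algorithm does not use; it is not needed here, since the $\beta$-based acyclicity and the flatness invariant (Lemma~\ref{all_tree_flat} extended to this framework) already suffice.
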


\subsection{Running Time}

Now let us analyze the number of vertices in the next phase. If a vertex $u$ is live after the \textsc{expand}, then it is finished in the next phase. Thus all the vertices in the next phase are dormant after the \textsc{expand} in this phase.
Consider a dormant vertex $u$,
and let $r$ be that defined in Definition~\ref{def_r} with respect to $u$.

\begin{lemma}
    For a dormant non-leader $u$, if there is a leader in $B(u,r)$, then $u$ is finished in the next phase.
\end{lemma}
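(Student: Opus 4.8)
The plan is to show that the \textsc{tree-link} sets $u.{\beta}$ to a positive value, after which Corollary~\ref{cor:u.bge0} immediately finishes $u$ in the next phase. Everything reduces to computing $u.{\alpha}$ exactly: I will show $u.{\alpha} = \mathrm{dist}(u,z)-1$, where $z$ is a closest leader to $u$.

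First I would handle the degenerate cases. Because $u$ is a non-leader, every leader is at distance at least $1$ from $u$; in particular $u$ cannot be fully dormant, since otherwise Lemma~\ref{r_range} gives $r=0$, and the hypothesized leader in $B(u,r)=\{u\}$ would be $u$, contradicting $u.l=0$. Hence $u$ owns a block, so Step~(\ref{it:init_tree}) initializes $u.{\alpha} \coloneqq 0$, and Lemma~\ref{lem:property_of_Qu} applies: at the end of the \textsc{tree-link} we have $Q(u)=B(u,u.{\alpha})$, and (as its proof establishes) $u.{\alpha}$ is the largest integer for which $B(u,u.{\alpha})$ contains no collision, no leader, and no fully dormant vertex.

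Next comes the crux, the two-sided bound $u.{\alpha}=\ell-1$, where $z$ is a closest leader to $u$ and $\ell\coloneqq \mathrm{dist}(u,z)$; note $\ell\ge 1$ and, by hypothesis, $\ell\le r$. For the lower bound: $B(u,\ell-1)$ has no leader by minimality of $\ell$, and has no collision and no fully dormant vertex because $B(u,\ell-1)\subseteq B(u,r-1)$, which by Definition~\ref{def_r} has neither; so $B(u,\ell-1)$ satisfies the property governing $u.{\alpha}$, giving $u.{\alpha}\ge\ell-1$. For the upper bound: $B(u,u.{\alpha})$ contains no leader while $z$ is a leader with $\mathrm{dist}(u,z)=\ell$, so $\ell>u.{\alpha}$, i.e.\ $u.{\alpha}\le\ell-1$. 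Hence $u.{\alpha}=\ell-1$.

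Finally, take a shortest path $u=x_0,x_1,\dots,x_\ell=z$ in the current graph and set $w\coloneqq x_{\ell-1}$ (so $w=u$ when $\ell=1$, which still works). Then $\mathrm{dist}(u,w)=\ell-1=u.{\alpha}$, hence $w\in B(u,u.{\alpha})=Q(u)$, and $w$ is adjacent to the leader $z$, so the arc $(z,w)$ causes $w$ to be marked as a leader-neighbor in the leader-neighbor marking step. Since $u.l=0$ and $Q(u)$ contains the leader-neighbor $w$, Step~(\ref{it:set_ub}) sets $u.{\beta}\coloneqq u.{\alpha}+1=\ell\ge 1$, and Corollary~\ref{cor:u.bge0} gives the claim. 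I expect the main obstacle to be exactly the two-sided bound on $u.{\alpha}$: the definitions of $r$ and of $u.{\alpha}$ differ precisely in whether leaders are excluded from the ball, and the hypothesis ``there is a leader in $B(u,r)$'' is what forces a closest leader into the collision-free, fully-dormant-free region and thereby pins $u.{\alpha}$ down to $\ell-1$; the boundary case $\ell=1$ also needs a quick sanity check.
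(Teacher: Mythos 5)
Your proof is correct and follows essentially the same path as the paper's: pin down $u.\alpha = \mathrm{dist}(u,z)-1$ via Lemma~\ref{lem:property_of_Qu}, exhibit a leader-neighbor in $Q(u)$, conclude $u.\beta > 0$, and invoke Corollary~\ref{cor:u.bge0}. You spell out the two-sided bound on $u.\alpha$ and the fully-dormant sanity check explicitly, both of which the paper elides; these are faithful elaborations, not deviations.
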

\begin{proof}
    Let $v$ be the leader closest to $u$. Since $v$ is in $B(u,r)$, by the definition of $r$ and Lemma~\ref{lem:property_of_Qu}, we have that $u.{\alpha} = \text{dist}(u,v)-1$, and $Q(u)=B(u,\text{dist}(u,v)-1)$ which contains a leader-neighbor.
    By Step~(\ref{it:set_ub}) in the \textsc{tree-link}, $u.{\beta}=\text{dist}(u,v)>0$. By Corollary~\ref{cor:u.bge0}, $u$ is finished in the next phase.
\end{proof}

Using the above lemma and Corollary~\ref{phase_time},
the remaining analysis is almost identical to that in Connected Components algorithm.

\begin{proof}[Proof of Theorem~\ref{main2}]

By Lemma~\ref{ball_size_bound}, we have that $|B(u,r)|\leq b^2$ with probability at most $b^{-2}$.
Conditioned on $|B(u,r)|> b^2$, the probability that $B(u,r)$ contains no leader is at most $b^{-1}$.
The number of vertices in the next phase is at most the sum of
(\romannumeral1) the number of dormant leaders,
(\romannumeral2) the number of vertices $u$ with $|B(u,r)|\leq b^2$,
and
(\romannumeral3) the number of vertices $u$ with $|B(u,r)|>b^2$ and no leader in $B(u,r)$.
Thus, the probability that a dormant vertex $u$ is in the next phase is at most $b^{-2/3} + b^{-2} +(1-b^{-2})\cdot b^{-1}\leq b^{-1/2}$.
By Markov's inequality, the probability of having more than $n'\cdot b^{-1/4}$ vertices in the next phase is at most $b^{-1/4}$.

Finally, using exactly the same analyses in \S{\ref{remove_assumption}} and \S{\ref{rma}}, we obtain that
the algorithm runs on an ARBITRARY CRCW PRAM and outputs the spanning forest.
Moreover, with probability $1-1/\text{poly}(m \log n / n)$, the number of phases is $O(\log\log_{m / n} n)$ thus the total running time is $O(\log d \log\log_{m / n} n)$ (cf. Corollary~\ref{phase_time}).
\end{proof}

\section{Faster Connected Components Algorithm: Full Proof of An $O(\log d + \log\log_{m/n} n)$ Bound} \label{sec_faster_cc_alg}

In this section we give the full proof of Theorem~\ref{main3} by presenting a faster algorithm (see below) for connected components with a detailed analysis.

\begin{framed}
\noindent Faster Connected Components algorithm: \textsc{compact}; repeat \{\textsc{expand-maxlink}\} until the graph has diameter $O(1)$ and all trees are flat;
run Connected Components algorithm from \S{\ref{sec_cc_alg}}.

\end{framed}

Each iteration of the repeat loop is called a \emph{round}.
The \emph{break condition} that the graph has diameter $O(1)$ and all trees are flat shall be tested at the end of each round.
The main part of this section is devoted to the repeating \textsc{expand-maxlink}.
Before that, we explain the method \textsc{compact}.

\begin{framed}
\noindent \textsc{compact}: \textsc{prepare}; 
rename each ongoing vertex to a distinct id in $[2 m / \log^c n]$ and assign it a block of size $\max\{m / n, \log^c n\} / \log^2 n$.
\end{framed}

Recall that in the \textsc{prepare} in \S{\ref{sec_cc_alg_framwork}}, it is guaranteed that the number of ongoing vertices is at most $m / \log^c n$ with good probability, then we use hashing to assign them blocks. The problem of this method is that in this section, we need, with high probability, all roots to own blocks as we need to union bound (as the events are dependent) all roots on all shortest paths (see \S{\ref{sec_overview}}), but the hashing method only gives that each root owns a block with probability $1 - 1 / \text{polylog}(n)$.
To solve this problem, we need the following tool when implementing \textsc{compact}:

\begin{definition}\label{def_apx_compaction}
    Given a length-$n$ array $A$ which contains $k$ \emph{distinguished} elements, \emph{approximate compaction} is to map all the distinguished elements in $A$ one-to-one to an array of length $2k$.
\end{definition}

\begin{lemma}[\cite{DBLP:conf/focs/Goodrich91}] \label{lem_apx_compaction}
    There is an ARBITRARY CRCW PRAM algorithm for approximate compaction that runs in $O(\log^* n)$ time and uses $O(n)$ processors with high probability.
    Moreover, if using $n \log n$ processors, the algorithm runs in $O(1)$ time.\footnote{\cite{DBLP:conf/focs/Goodrich91} actually compacts the $k$ distinguished elements of $A$ to an array of length $(1 + \epsilon) k$ using $O(n / \log^* n)$ processors with probability $1 - 1 / c^{n^{1/25}}$ for any constants $\epsilon > 0$ and $c > 1$, but Lemma~\ref{lem_apx_compaction} suffices for our use. The second part of the lemma is a straightforward corollary (cf. \S{2.4.1} in \cite{DBLP:conf/focs/Goodrich91}).}
\end{lemma}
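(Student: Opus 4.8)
The plan is to recognize this as the ``approximate compaction'' primitive of Goodrich \cite{DBLP:conf/focs/Goodrich91} and to reconstruct it from two $O(1)$-time randomized building blocks, both robust on an ARBITRARY CRCW PRAM: (i) \emph{approximate counting} --- given the set of distinguished cells, produce $\tilde k$ with $k \le \tilde k \le O(k)$ with high probability --- and (ii) hashing with arbitrary write resolution. For (i) I would sample the distinguished cells simultaneously at the $\Theta(\log n)$ geometric rates $2^{-j}$, one processor per cell per rate, have every surviving sample write an ``alive'' flag into the cell indexed by its rate (all such writes carry the same value, so ARBITRARY resolution realizes an $\mathrm{OR}$), and read off the largest rate still alive; inflating the corresponding $2^{j}$ by a constant gives $\tilde k$. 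In the fast form this uses $O(n\log n)$ processors; in the $O(\log^*n)$ form it is folded into the recursion below.

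The heart of the argument is to set up approximate compaction jointly with its siblings --- approximate prefix sums and block allocation --- in one recursion of depth $O(\log^* n)$ whose every level costs $O(1)$ time, in the style of \cite{DBLP:conf/focs/GilMV91, DBLP:conf/focs/Goodrich91}. A single level reduces an instance on an array of length $N$ to $N/\text{polylog}(N)$ independent sub-instances on blocks of length $s = \text{polylog}(N)$ whose outputs are then concatenated: approximately count the distinguished cells of each block, getting $\tilde k_b$; run one level of approximate prefix sums on the $\tilde k_b$ to hand each block $b$ a starting offset; reserve for block $b$ an output window of length a sufficiently large constant times $\tilde k_b$, the constant chosen so that the multiplicative error of the approximate prefix sums can never make two windows overlap while their union still has length $O(k)$; and recurse inside each block, placing the distinguished cells by hashing once a block has shrunk to $O(1)$ size. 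Since the block length is replaced by a $\text{polylog}$ of itself at each level, constant-size instances are reached after $O(\log^* N) = O(\log^* n)$ levels; each level uses $O(n)$ processors and $O(1)$ time, and running the levels one after another gives the claimed $O(\log^* n)$-time, $O(n)$-processor bound, while a union bound over the $O(\log^* n)$ levels (each succeeding with probability $1 - 1/\text{poly}(n)$) yields the high-probability guarantee. For the $O(1)$-time, $(n\log n)$-processor statement, the extra $\log n$ factor of processors is spent to collapse this $\log^*$-depth recursion to constant depth; this is the ``straightforward corollary'' mentioned in the footnote, and I would follow \S{2.4.1} of \cite{DBLP:conf/focs/Goodrich91} for the precise processor accounting.

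The step I expect to be the real obstacle is the $O(1)$-time \emph{approximate} allocation/prefix-sum subroutine and dovetailing it with the recursion: exact prefix sums require $\Omega(\log n / \log\log n)$ time on a CRCW PRAM, so one must work with a constant-factor approximation and then pad the per-block output windows carefully enough that rounding never causes two blocks to claim the same output cell, all while keeping the concatenated output of length $O(k)$ and the per-level processor count $O(n)$ across all $O(\log^* n)$ levels. Once that allocation primitive is in hand, the remaining pieces --- the approximate counting, the bottom-level hashing, and the union bound over levels --- are routine.
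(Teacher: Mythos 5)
The paper does not prove this lemma at all --- it is a black-box citation of Goodrich \cite{DBLP:conf/focs/Goodrich91}, with the footnote only clarifying that the cited theorem is in fact slightly stronger (compaction to length $(1+\epsilon)k$, $O(n/\log^* n)$ processors, explicit exponential-tail failure probability) and that the $O(1)$-time, $n\log n$-processor variant is obtained by a standard processor-for-depth tradeoff described in \S{2.4.1} of the cited paper. So there is no ``paper's own proof'' to match against; the relevant comparison is whether your reconstruction is a faithful and complete account of the external result.

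Your sketch captures the right architecture --- the interleaving of approximate counting, approximate prefix sums / block allocation, and compaction in a recursion whose block size drops from $s$ to $\text{polylog}(s)$ per level, giving $O(\log^* n)$ depth, with a union bound over levels --- and this is indeed how \cite{DBLP:conf/focs/GilMV91, DBLP:conf/focs/Goodrich91} proceed. Your $O(1)$-time approximate-counting primitive (simultaneous geometric subsampling with OR-resolution per rate) is also a correct ingredient. However, you explicitly leave open the step that carries essentially all of the technical weight: the $O(1)$-time approximate prefix-sums / interval-allocation subroutine and the padding analysis guaranteeing disjoint windows whose union has total length $O(k)$. This is not ``routine'' in the sense that the rest is; it is precisely the contribution of \cite{DBLP:conf/focs/GilMV91, DBLP:conf/focs/Goodrich91} (and it is also where the failure probability $1 - c^{-n^{1/25}}$ in the footnote comes from, which is stronger than the generic $1 - 1/\text{poly}(n)$ you would get from a naive per-level union bound). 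There are also smaller bookkeeping issues you gesture past: you need the approximate counts to be one-sided (never undercounting) so that windows are large enough, and you need to account for how the per-level overestimation factors compound across the $O(\log^* n)$ levels without blowing up the output length beyond $O(k)$. As a reconstruction attempt your outline is on the right track and correctly identifies the hard core, but it is not a proof; for the paper's purposes, deferring to Goodrich is the appropriate and intended move, and you should do the same rather than attempt to reprove it.
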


\begin{lemma}\label{lem_stage_1}
    With good probability, \textsc{compact} renames each ongoing vertex a distinct id in $[2 m / \log^c n]$, assigns each of them a block of size $\max\{m / n, \log^c n\} / \log^2 n$, runs in $O(\log\log_{m / n} n)$ time, and uses $O(m)$ processors in total.
\end{lemma}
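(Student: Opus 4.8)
The plan is to treat the call to \textsc{prepare} and the subsequent renaming-and-block-assignment step separately, and within the renaming step to distinguish the two regimes $m/n > \log^c n$ and $m/n \le \log^c n$.

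\textbf{The dense regime $m/n > \log^c n$.} Here \textsc{prepare} is a no-op, so there are at most $n$ ongoing vertices, and since $m/n > \log^c n \ge \log^2 n$ (for $c \ge 2$) we have $2m/\log^c n > 2n$; thus the ongoing vertices already carry distinct ids in $[n] \subseteq [2m/\log^c n]$ and the identity map is the desired renaming, with no compaction needed. Set $s := \lfloor \max\{m/n,\log^c n\}/\log^2 n\rfloor = \lfloor (m/n)/\log^2 n\rfloor \ge 1$. We allocate to the vertex with id $i$ the block of processors $\{(i-1)s+1,\dots,is\}$, and processor $j$ finds its owner $\lceil j/s\rceil$ in $O(1)$ time. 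This uses $ns \le m/\log^2 n \le m$ processors and $O(1)$ time, deterministically.

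\textbf{The sparse regime $m/n \le \log^c n$.} First I would bound \textsc{prepare}: it runs $c\log_{8/7}\log n = O(\log\log n)$ phases of Vanilla algorithm, using $O(m)$ processors and $O(\log\log n)$ time, and since $m/n \le \log^c n$ gives $\log_{m/n}n \ge (\log n)/(c\log\log n)$ and hence $\log\log_{m/n}n = \Omega(\log\log n)$, this is $O(\log\log_{m/n}n)$ time. By Lemma~\ref{prepare}, with probability at least $1 - 1/\log^c n$ the number $n'$ of ongoing vertices afterwards is at most $n/\log^c n \le m/\log^c n$. Conditioned on that, I would identify the ongoing vertices --- a vertex is ongoing iff it is incident to a non-loop edge, by Lemma~\ref{all_tree_flat}, so this takes $O(1)$ time --- as the $k := n' \le m/\log^c n$ distinguished entries of a length-$n$ array, and apply the approximate-compaction algorithm of Lemma~\ref{lem_apx_compaction}. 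This maps them one-to-one into $[2k] \subseteq [2m/\log^c n]$ using $O(n) \le O(m)$ processors and $O(\log^* n)$ time with high probability (or $O(1)$ time if $m/n \ge \log n$, using the $n\log n \le m$-processor variant), and $O(\log^* n) = O(\log\log n) = O(\log\log_{m/n}n)$ in this regime. Finally I assign to each of the $2m/\log^c n$ target slots a block of size $\max\{m/n,\log^c n\}/\log^2 n = \log^{c-2}n$, for a total of $(2m/\log^c n)\log^{c-2}n = 2m/\log^2 n \le m$ processors, each processor locating its slot (hence its vertex, if any) in $O(1)$ time. A union bound over the failure probabilities $1/\log^c n$ (Lemma~\ref{prepare}) and $1/\mathrm{poly}(n)$ (Lemma~\ref{lem_apx_compaction}) yields the claimed bound, which is good probability because $(m\log n)/n \le \log^{c+1}n$ is polynomially related to $\log n$ in this regime and $c$ is a large enough constant.

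The bookkeeping is routine; I expect the only points needing care to be (i) that $[2m/\log^c n]$ is large enough to hold the renamed vertices in both regimes, which reduces to $n' \le m/\log^c n$ (immediate when $m/n > \log^c n$, and the content of Lemma~\ref{prepare} otherwise); (ii) that in every case the number of allocated blocks times the block size stays within the $m$-processor budget, which is exactly where the $\log^2 n$ factor shaved off the block size is spent; and (iii) matching the $O(\log^* n)$ running time of approximate compaction to the target $O(\log\log_{m/n}n)$ --- this works precisely because compaction is only invoked when $m/n \le \log^c n$, the regime in which $\log\log_{m/n} n = \Omega(\log\log n) \ge \Omega(\log^* n)$.
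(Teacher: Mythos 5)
Your proof is correct and follows essentially the same case split and route as the paper's: in the dense regime the identity map suffices and blocks are assigned directly; in the sparse regime one runs the Vanilla preprocessing (via Corollary~\ref{van_prob}/Lemma~\ref{prepare}) to bring the number of ongoing vertices down to $m/\log^c n$ with good probability and then invokes approximate compaction (Lemma~\ref{lem_apx_compaction}), absorbing both $O(\log\log n)$ and $O(\log^* n)$ into $O(\log\log_{m/n} n)$ because $m/n \le \log^c n$. The extra detail you supply on the processor budget and on why $\log\log_{m/n} n = \Omega(\log\log n)$ in the sparse regime is consistent with and slightly more explicit than the paper's argument.
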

\begin{proof}
    If $m / n > \log^c n$, then the \textsc{compact} does not run Vanilla algorithm and each (ongoing) vertex already has a distinct id in $[n] \subseteq [2m / \log^c n]$. Since there are $\Theta(m)$ processors in total, it is easy to assign each vertex a block of size $m / n / \log^2 n$, which is $\max\{m / n, \log^c n\} / \log^2 n$.

    Else if $m / n \le \log^c n$, the \textsc{compact} runs Vanilla algorithm for $c \log_{8/7}\log n = O(\log\log_{m / n} n)$ phases such that the number of ongoing vertices is at most $m / \log^c n$ with good probability (cf. Corollary~\ref{van_prob}).
    Conditioned on this happening, the \textsc{compact} uses approximate compaction to compact all the at most $m / \log^c n$ ongoing vertices (distinguished elements) to an array of length at most $2m / \log^c n$, whose index in the array serves as a distinct id.
    By Lemma~\ref{lem_apx_compaction}, this runs in $O(\log^* n)$ time with probability $1-1/\text{poly}(n)$.
    Once the ongoing vertices are indexed, it is easy to assign each of them a block of size $m / (m / \log^c n) / \log^2 n$, which is $\max\{m / n, \log^c n\} / \log^2 n$.
    By a union bound, the \textsc{compact} takes $O(\log\log_{m / n} n)$ time with good probability.
\end{proof}

The renamed vertex id is used only for approximate compaction; for all other cases, we still use the original vertex id.
Renaming the vertex id to range $[2 m / \log^c n]$ not only facilitates the processor allocation at the beginning, but more importantly, guarantees that each subsequent processor allocation takes $O(1)$ time, because the array $A$ to be compacted has length $|A| \le 2 m / \log^c n$ while the number of processors is at least $|A| \log |A|$ when $c \ge 10$ (cf. Lemma~\ref{lem_apx_compaction}, see \S{\ref{sec_overview}} and details in \S{\ref{sec_faster_alg_framework}}).

In the remainder of this section,
we present the detailed method \textsc{expand-maxlink} in \S{\ref{sec_faster_alg_framework}},
then we prove that Faster Connected Components algorithm correctly computes the connected components of the input graph (cf. \S{\ref{sec_correctness}}), each round (\textsc{expand-maxlink}) can be implemented on an ARBITRARY CRCW PRAM in constant time (cf. \S{\ref{sec_imp}}),
uses $O(m)$ processors over all rounds (cf. \S{\ref{sec_space}}),
reduces the diameter to $O(1)$ and flatten all trees in $O(\log d + \log\log_{m/n} n)$ rounds (cf. \S{\ref{sec_diameter_reduction}}),
leading to the proof of Theorem~\ref{main3} in \S{\ref{sec_pf_main3}}.

\subsection{Algorithmic Framework} \label{sec_faster_alg_framework}

In this section, we present the key concepts and algorithmic framework of the \textsc{expand-maxlink}.

\paragraph{Level and budget.}
The \emph{level} $\ell(v)$ of a vertex $v$ is a non-negative integer that can either remain the same or increase by one during a round.
By Lemma~\ref{lem_stage_1}, at the beginning of round $1$, each ongoing vertex $v$ owns a block of size $b_1 \coloneqq \max\{m / n, \log^c n\} / \log^2 n$, whose level is defined as $1$;
the level of a non-root vertex whose parent is ongoing is defined as $0$ and $b_0 \coloneqq 0$ for soundness;
all other vertices are ignored as their components have been computed.
During a round, some roots become non-roots by updating their parents; for those vertex remains to be a root, its level might be increased by one;
a root with level $\ell$ is assigned to a block of size $b_{\ell} \coloneqq b_1^{1.01^{\ell - 1}}$ at the end of the round.
Given $b \ge 0$, a vertex $v$ has \emph{budget} $b(v) \coloneqq b$ if the maximal-size block owned by $v$ has size $b$. 
Each block of size $b$ is partitioned into $\sqrt{b}$ (indexed) tables, each with size $\sqrt{b}$.

\paragraph{Neighbor set.}
Recall from \S{\ref{sec_cc_alg_framwork}} that the edges that define the current graph include:
(\romannumeral1) the (altered) original edges corresponding to edge processors, and (\romannumeral2) the (altered) added edges in the tables over all rounds of all vertices;
any vertex within distance $1$ from $v$ (including $v$) in the current graph is called a neighbor of $v$.
For any vertex $v$, let $N(v)$ be the set of its neighbors.
In Step~(\ref{step_3}) we use the old $N(v)$ when initializing the loop that enumerates $N(v)$.
For any vertex set $S$, define $N(S) \coloneqq \bigcup_{w \in S} N(w)$,
and define $S.p \coloneqq \{w.p \mid w \in S \}$.

\paragraph{Hashing.}
At the beginning of a round,
one random hash function $h$ is chosen, then all neighbor roots of all roots use $h$ to do individual hashing in Step~(\ref{step_3}).
As same as in \S{\ref{sec_cc_alg}}, a pairwise independent $h$ suffices, thus each processor only reads two words.
The hashing in Step~(\ref{step_5}) follows the same manner using the same $h$.
For each vertex $v$, let $H(v)$ be the first table in its block, which will store the added edges incident on $v$.
Step~(\ref{step_5}) is implemented by storing the old tables for all vertices while hashing new items (copied from the old $H(v)$ and old $H(w)$ in the block of $w$) into the new table.

\begin{framed}
\noindent \textsc{expand-maxlink}:
\begin{enumerate}
    \item \textsc{maxlink}; \textsc{alter}. \label{step_1}
    \item For each root $v$: increase $\ell(v)$ with probability $b(v)^{-0.06}$. \label{step_2}
    \item For each root $v$: for each root $w \in N(v)$: if $b(w) = b(v)$ then hash $w$ into $H(v)$. \label{step_3}
    \item For each root $v$: if there is a collision in $H(v)$ then mark $v$ as dormant.
        For each vertex $v$: if there is a dormant vertex in $H(v)$ then mark $v$ as dormant. \label{step_4}
    \item For each root $v$: for each $w \in H(v)$: for each $u \in H(w)$: hash $u$ into $H(v)$. For each root $v$: if there is a collision in $H(v)$ then mark $v$ as dormant. \label{step_5}
    \item \textsc{maxlink}; \textsc{shortcut}; \textsc{alter}. \label{step_6}
    \item For each root $v$: if $v$ is dormant and did not increase level in Step~(\ref{step_2}) then increase $\ell(v)$. \label{step_7}
    \item For each root $v$: assign a block of size $b_{\ell(v)}$ to $v$. \label{step_8}
\end{enumerate}

\noindent \textsc{maxlink}: repeat \{for each vertex $v$: let $u \coloneqq \argmax_{w \in N(v).p} \ell(w)$, if $\ell(u) > \ell(v)$ then update $v.p$ to $u$\} for $2$ iterations.

\end{framed}

The \textsc{alter} and \textsc{shortcut} (cf. Steps (\ref{step_1},\ref{step_6})) are the same as in \S{\ref{sec_cc_alg}}.
The \textsc{maxlink} uses parent links, instead of direct links in \S{\ref{sec_cc_alg}} and \S{\ref{sec_sf_alg}}.


\subsection{Correctness} \label{sec_correctness}

In this section, we prove that Faster Connected Components algorithm correctly computes the connected components of the input graph when it ends.

First of all, we prove a useful property on levels and roots.

\begin{lemma} \label{lem_non_root}
    If a vertex $v$ is a non-root at any step, then during the execution after that step,
    $v$ is a non-root, $\ell(v)$ cannot change, and $0 \le \ell(v) < \ell(v.p)$.
\end{lemma}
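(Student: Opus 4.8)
The plan is to prove all three assertions simultaneously by induction on the steps of the algorithm, tracking the claim as an invariant. The base case is easy: at the start of round~$1$, a non-root $v$ has a parent that is ongoing, so by definition $\ell(v) = 0$ and $\ell(v.p) \ge 1$ (the parent, being ongoing, is a root at level at least $1$), hence $0 \le \ell(v) < \ell(v.p)$; and a vertex that is already a non-root at the start of a round was a non-root at the end of the previous round, so the inductive hypothesis applies. For the inductive step I would examine each operation that can either turn a root into a non-root or change a parent pointer or a level, and verify the invariant is preserved. The operations to check are: \textsc{maxlink} (parent links, in Steps~(\ref{step_1}) and~(\ref{step_6})), \textsc{shortcut} (Step~(\ref{step_6})), \textsc{alter} (Steps~(\ref{step_1}) and~(\ref{step_6})), and the level-increase operations (Steps~(\ref{step_2}) and~(\ref{step_7})).

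First, once $v$ is a non-root, its level never changes: Steps~(\ref{step_2}) and~(\ref{step_7}) increase levels only of roots, so a non-root is untouched; this gives the ``$\ell(v)$ cannot change'' clause. Next, once $v$ is a non-root it stays a non-root: the only operation that makes a vertex point to itself would be a link, but \textsc{maxlink} updates $v.p$ to some $u$ with $\ell(u) > \ell(v)$, which cannot be $v$, and \textsc{shortcut} sets $v.p := v.p.p$, which keeps $v$ a non-root as long as its tree has the property that the only cycles are self-loops on the root (a standing invariant of the framework, cf.\ \S\ref{pre}); \textsc{alter} does not change parent pointers at all. For the inequality $0 \le \ell(v) < \ell(v.p)$, I would handle each way the pair $(\ell(v), \ell(v.p))$ can change. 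When \textsc{maxlink} first turns a root $v$ into a non-root by setting $v.p := u$ with $\ell(u) > \ell(v)$, the inequality holds at that moment. When \textsc{maxlink} later changes the parent of an existing non-root $v$ from $v.p$ to some $u \in N(v).p$ with $\ell(u) > \ell(v)$, again $0 \le \ell(v) < \ell(u)$ directly. When \textsc{shortcut} sets $v.p := v.p.p$: by the inductive hypothesis $\ell(v) < \ell(v.p)$, and $v.p$ is itself a non-root (else $v.p.p = v.p$ and nothing changes) so $\ell(v.p) < \ell(v.p.p)$, giving $\ell(v) < \ell(v.p) < \ell(v.p.p) = \ell(\text{new } v.p)$. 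Finally, levels of roots only increase, so if $v.p$ is (or later becomes, impossible) — actually the subtle point is that $v.p$ might still be a root when $v$ first becomes a non-root, and then $v.p$'s level can still go up in a later step; but that only makes the strict inequality ``more true,'' and once $v.p$ itself becomes a non-root its level freezes, consistent with the hypothesis applied to $v.p$.

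The main obstacle I anticipate is the \emph{concurrency} of the parent updates within one \textsc{maxlink} iteration and across the two iterations: many vertices update their parents simultaneously, and one must argue the invariant holds "for any order of the concurrent parent changes," as the framework in \S\ref{pre} stresses. The key fact to pin down is that a parent link $v.p := u$ with $u = \argmax_{w \in N(v).p} \ell(w)$ reads a snapshot of levels and parents that is consistent enough: since no level of any vertex decreases and since the target $u$ has strictly larger level than $v$ at the time of reading, the inequality $\ell(v) < \ell(u)$ is robust to interleaving (levels only go up, and $v$'s level is not changed by the link). I would therefore argue that at the end of each \textsc{maxlink} iteration, for every non-root $v$ we have $\ell(v) < \ell(v.p)$, using that the value written to $v.p$ was chosen with this strict gap and that $\ell(v.p)$ can only have increased since. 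Acyclicity (no non-trivial cycles ever form) is the companion invariant that makes "$v$ remains a non-root" meaningful, and it follows from the same strict-level-inequality: a directed cycle of parent pointers would force a strictly increasing cycle of levels, which is impossible — so I would note this explicitly as it justifies that \textsc{shortcut} and the framework's tree structure are well-defined throughout.
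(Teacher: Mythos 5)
Your proposal is correct and follows essentially the same route as the paper: an induction over the operations of a round, checking that \textsc{maxlink} only rewrites $v.p$ to a strictly higher-level vertex (hence not $v$), that \textsc{shortcut} preserves the inequality by applying the induction hypothesis to both $v$ and $v.p$, that level increases touch only roots, and that \textsc{alter} leaves the labeled digraph untouched. Your extra paragraph on concurrency is a reasonable worry but is discharged automatically by the synchronous PRAM semantics (all reads in a \textsc{maxlink} iteration see the pre-iteration state, and levels do not change within that step), which is why the paper does not dwell on it.
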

\begin{proof}
    The proof is by an induction on rounds.
    At the beginning of the first round, by the definition of levels (cf. \S{\ref{sec_faster_alg_framework}}), each ongoing vertex $v$ has level $1$ and $v = v.p$, and each non-root with an ongoing parent has level $0$ (vertices in trees rooted at finished vertices are ignored), so the lemma holds.

    In Step~(\ref{step_1}), each iteration of a \textsc{maxlink} can only update the parent to a vertex with higher level, which cannot be itself.
    In Step~(\ref{step_2}), level increase only applies to roots.
    The invariant holds after the \textsc{maxlink} in Step~(\ref{step_6}) for the same reasons as above.
    In \textsc{shortcut} (cf. Step~(\ref{step_6})), each vertex $v$ updates its parent to $v.p.p$, which, by the induction hypothesis, must be a vertex with level higher than $v$ if $v$ is a non-root, thus cannot be $v$.
    In Step~(\ref{step_7}), level increase only applies to roots.
    All other steps and \textsc{alter}s do not change the labeled digraph nor levels, giving the lemma.
\end{proof}

Based on this, we can prove the following key result, which implies the correctness of our algorithm.
\begin{lemma} \label{lem_tree_invariant}
    The following conditions hold at the end of each round:
    \begin{enumerate}
        \item For any component in the input graph, its vertices are partitioned into trees in the labeled digraph such that each tree belongs to exactly one component. \label{condition_1}
        \item A tree does not contain all the vertices in its component if and only if there is an edge between a vertex in this tree and a vertex in another tree. \label{condition_2}
    \end{enumerate}
\end{lemma}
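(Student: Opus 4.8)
The plan is to prove both conditions simultaneously by induction on rounds. At the start of round~1, Lemma~\ref{lem_stage_1} (via \textsc{prepare}) guarantees that each tree in the labeled digraph is flat and each vertex is self-labeled or has a finished parent, so Condition~\ref{condition_1} holds with each tree being a singleton (ignoring vertices in finished components), and Condition~\ref{condition_2} is exactly the statement of Lemma~\ref{all_tree_flat} applied to the graph after the \textsc{prepare}. For the inductive step, I would assume both conditions hold at the end of round~$i-1$ and track what the operations in \textsc{expand-maxlink} do to the labeled digraph and to the edge set. The operations that matter are the two \textsc{maxlink}s (Steps~\ref{step_1},\ref{step_6}), the \textsc{shortcut} (Step~\ref{step_6}), the \textsc{alter}s (Steps~\ref{step_1},\ref{step_6}), and the edge additions in Steps~\ref{step_3} and~\ref{step_5}; the level-increase steps (\ref{step_2},\ref{step_7}) and block reassignment (\ref{step_8}) do not touch the labeled digraph or the edge set and so are irrelevant to this lemma.

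For Condition~\ref{condition_1}, the key is that no operation ever merges two trees that lie in different components or moves a vertex to a tree in a different component. A \textsc{maxlink} updates $v.p$ to some $u \in N(v).p$, i.e., $u$ is the parent of a neighbor of $v$; since neighbors (even added ones, which connect vertices in the same component by Lemma~\ref{radius_doubling}-type reasoning already established for the \textsc{expand} framework, or more simply because added edges are derived by path-doubling and hence stay within a component) are in the same component as $v$, and by the induction hypothesis $u$'s tree is in the same component as $u$, the new parent is in $v$'s component. Crucially I must check \emph{acyclicity}: by Lemma~\ref{lem_non_root}, a \textsc{maxlink} only redirects $v.p$ to a strictly-higher-level vertex, so it cannot create a cycle, and \textsc{shortcut} composed with acyclic pointers stays acyclic. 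Hence the labeled digraph remains a forest, each tree stays inside one component, and \textsc{alter} (which replaces $\{v,w\}$ by $\{v.p,w.p\}$) neither creates edges across components nor changes the vertex partition — it only relocates edges to roots/ancestors within the same components. So Condition~\ref{condition_1} is preserved. Note that unlike \S{\ref{sec_cc_alg}} this algorithm is \emph{not} monotone (parent links can move subtrees between trees), so I cannot invoke monotonicity; I must argue acyclicity and component-preservation directly from Lemma~\ref{lem_non_root}.

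For Condition~\ref{condition_2}, the forward direction is immediate: if a tree $T$ does not contain its whole component, then since the input graph's component is connected, there is an input edge crossing out of $T$; after all the \textsc{alter}s this edge has been replaced by an edge between ancestors/roots, but those endpoints still lie in $T$ and in another tree respectively (by Condition~\ref{condition_1}, which I am proving in the same induction), so a crossing edge in the current graph exists. For the reverse direction, I observe that the break condition is never what makes edges vanish — an edge $\{v,w\}$ is only ever \emph{altered}, never deleted, until its two endpoints' roots coincide; more precisely, after the final \textsc{alter} in Step~\ref{step_6} followed by \textsc{shortcut}, any edge $\{v,w\}$ with $v.p = w.p$ becomes a loop and any edge with $v.p \ne w.p$ connects two distinct trees (same-component by Condition~\ref{condition_1}). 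Hence if two trees in a component are distinct then the component being connected forces a non-loop edge between some pair of trees, and by chasing \textsc{alter}s this non-loop edge sits between a vertex of one tree and a vertex of another. The slightly delicate point is that \textsc{alter} uses $v.p$ after the \textsc{shortcut}, so I need trees to be ``flat enough'' at the moment of \textsc{alter} that $v.p$ is actually the root (or at least a consistent representative) — this is where I would lean on the structure that \textsc{shortcut} in Step~\ref{step_6} is applied and on the inductive flatness-type property, being careful that, unlike the Vanilla/Connected Components setting, full flatness is only re-established by the \textsc{tree-shortcut}-style postprocessing, so I should phrase Condition~\ref{condition_2} in terms of ``some vertex of the tree'' rather than ``the root.''

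\textbf{Main obstacle.} The hard part will be the reverse direction of Condition~\ref{condition_2} combined with the non-monotonicity: because \textsc{maxlink} does parent links (not direct links), subtrees move between trees mid-round, so I cannot use the clean ``partition changes only by union'' argument from \S{\ref{pre}}. I expect to handle this by arguing purely about the \emph{final} labeled digraph of the round (after Step~\ref{step_6}), showing it is an acyclic forest whose trees respect components (Condition~\ref{condition_1}), and then deriving Condition~\ref{condition_2} from connectivity of each input component plus the invariant that edges are only altered or turned into loops, never destroyed — so no ``witness'' crossing edge is ever lost. A secondary subtlety is verifying that the added edges from Steps~\ref{step_3},\ref{step_5} genuinely connect only same-component vertices; this should follow because $H(v)$ is populated with neighbors-of-neighbors (distance-$\le 2$ vertices) and, inductively, the current graph's edges already respect components.
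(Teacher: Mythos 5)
Your overall plan — induction on rounds, base case via \textsc{prepare} together with Lemma~\ref{all_tree_flat}, and the explicit warning that monotonicity cannot be invoked because \textsc{maxlink} does parent links — is aligned with the paper's structure, and your treatment of Condition~\ref{condition_1} (acyclicity from Lemma~\ref{lem_non_root} plus the observation that added/altered edges stay within components) is essentially what the paper does. However, the hard direction of Condition~\ref{condition_2} (``$T$ does not contain its whole component $\Rightarrow$ a crossing edge exists'') has a genuine gap, and the two arguments you offer for it are both incorrect.

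First, you claim that the altered version of an input edge crossing out of $T$ ``has endpoints that still lie in $T$ and in another tree respectively.'' This is false in the non-monotone setting. The current edge is $\{p', q'\}$ where $p'$ is obtained from $p$ by applying the parent function \emph{at the times the alters ran}; between those alters, \textsc{maxlink} iterations relink vertices across trees. Hence $p'$ need not be an ancestor of $p$ in the final labeled digraph, and need not even be in the same tree as $p$ at the end of the round. Second, the fallback claim ``no witness crossing edge is ever lost'' is also false: if $\{v,w\}$ is the witness with $v\in T$, $w\notin T$, a single iteration of \textsc{maxlink} can relink both $v$ and $w$ (via different neighbors) into the tree rooted at the same high-level vertex, turning the witness into an internal edge. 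Arguing ``purely about the final labeled digraph'' cannot recover from this, because the witness must be \emph{re-found} whenever a relink destroys the old one.

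What the paper does, and what your proposal is missing, is (i) a separate invariant, Lemma~\ref{lem_neighbor_parent_path}, whose second part asserts that for any $v$ with an incident edge there is a path \emph{in the current graph} between $v$ and $v.p$, proved by its own induction over every \textsc{alter}, \textsc{shortcut}, and \textsc{maxlink} iteration; and (ii) an inner induction inside the proof of Lemma~\ref{lem_tree_invariant} over each \textsc{alter} and each iteration of \textsc{maxlink}, maintaining the invariant ``there is an edge from a vertex of $T$ to another tree.'' The \textsc{alter} step preserves this exactly as you say. The \textsc{maxlink} step is the one you need: when $u$ relinks from old parent $u'$ in $T$ to $u.p$ in $T'$, the path-to-parent property yields a path in the current graph from $u'$ to $u$ and from $u$ to $u.p$, hence from $u'\in T$ to $u.p\in T'$; this path must cross a tree boundary, producing fresh witnesses for both $T$ and $T'$. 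Without proving and invoking this path-to-parent lemma, your proof does not close.
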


The proof of Lemma~\ref{lem_tree_invariant} relies on the following invariant:
\begin{lemma}\label{lem_neighbor_parent_path}
    For any vertices $v$ and $w$, if $w$ is a neighbor of $v$ in the current graph or $w = v.p$, then $v$ and $w$ are in the same component.
    If $v$ is incident with an edge, then there exists a path in the current graph between $v$ and $v.p$.
\end{lemma}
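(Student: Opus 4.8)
The plan is to prove both statements simultaneously by induction on the rounds (and on the steps within a round), tracking how each operation — \textsc{maxlink}, \textsc{alter}, \textsc{shortcut}, the hashing steps, and block reassignment — can create new neighbor relations, new parent pointers, or new edges. The key invariant is essentially a conjunction: (a) every neighbor relation $w \in N(v)$ and every parent relation $w = v.p$ implies $v,w$ are in the same component of the input graph; and (b) whenever $v$ is incident with a (non-loop) edge, there is a path in the current graph from $v$ to $v.p$. Statement (a) is a safety property about never "merging" distinct components; statement (b) is the structural claim linking the labeled digraph to the current graph. I would prove (a) first (it is the easier half), then (b).

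For part (a): at the start of round $1$, neighbors come only from original edges and $v.p = v$, so the claim holds. The operations that can introduce a new pair $(v,w)$ into $N$ or into the parent relation are: \textsc{alter} (replaces $\{v,w\}$ by $\{v.p, w.p\}$ — fine, since by induction $v \sim v.p$ and $w \sim w.p$ and $v \sim w$), \textsc{maxlink} (sets $v.p$ to some $u \in N(v).p$, so $u = w.p$ for a neighbor $w$ of $v$; by induction $v\sim w$ and $w \sim w.p = u$), \textsc{shortcut} (sets $v.p := v.p.p$, transitive by induction), and the hashing in Steps~(\ref{step_3}) and~(\ref{step_5}). In Step~(\ref{step_3}) we only hash roots $w \in N(v)$ into $H(v)$, so the added edge $\{v,w\}$ already corresponds to a neighbor relation. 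In Step~(\ref{step_5}) we hash $u \in H(w)$ for $w \in H(v)$; by induction the edges realizing $v$--$w$ and $w$--$u$ are in the current graph, so $v \sim u$. Every added edge thus connects same-component vertices, and since these added edges are the only new "edges" in the current graph, $N$ stays within components. This establishes (a).

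For part (b), the heart of the argument: I want to show that whenever $v$ is incident with an edge, the current graph contains a $v$-to-$v.p$ path. The initial round is trivial ($v.p = v$, zero-length path). Inductively, I track the steps. After a \textsc{maxlink} iteration, $v.p$ is set to $u = w.p$ for some neighbor $w$ of $v$; a $v$-to-$u$ path is: the edge $v$--$w$, then (by induction applied to $w$, which is incident with the edge $v$--$w$) a $w$-to-$w.p = u$ path. After \textsc{shortcut}, $v.p$ becomes $v.p.p$; concatenate the old $v$-to-$v.p$ path with a $v.p$-to-$v.p.p$ path (the latter exists by induction, provided $v.p$ is still incident with an edge — one must argue this, likely using Lemma~\ref{lem_non_root} to know $v.p$ has higher level and is a root, hence was assigned a block and participates as a vertex). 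After \textsc{alter}, the edge $\{v,w\}$ becomes $\{v.p, w.p\}$: any path using the old edge $v$--$w$ can be rerouted via the $v$-to-$v.p$ and $w$-to-$w.p$ paths (induction) plus the new edge; one should check the altered graph still connects $v$ to $v.p$ — but note the relevant claim is about the current graph \emph{after} all updates in the step, and $v.p$ may itself have changed, so the bookkeeping must be done carefully in the order the operations occur within Step~(\ref{step_1}) and Step~(\ref{step_6}).

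The main obstacle I anticipate is the interaction in Steps~(\ref{step_1}) and~(\ref{step_6}), where \textsc{maxlink} (two iterations), \textsc{shortcut}, and \textsc{alter} happen in sequence and all of $v.p$, $N(v)$, and the edge set mutate — the induction hypothesis must be stated for an intermediate configuration and carried through each sub-operation, and one must be careful that "$v$ is incident with an edge" is preserved (or that the path claim degrades gracefully when it is not). A secondary subtlety: after Step~(\ref{step_8}) reassigns blocks, the table $H(v)$ of a newly higher-level root is reinitialized, so one must confirm no \emph{needed} added edge is silently dropped in a way that breaks an existing $v$-to-$v.p$ path — here the point is that the path only needs to use edges present \emph{at the end of the round}, and monotonicity of the parent relation for non-roots (Lemma~\ref{lem_non_root}) plus the fact that dropped edges were between same-level roots that have now been linked should suffice. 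Once Lemma~\ref{lem_neighbor_parent_path} is in hand, Lemma~\ref{lem_tree_invariant} follows: condition~\ref{condition_1} is immediate from part (a) plus acyclicity (Lemma~\ref{lem_non_root}), and condition~\ref{condition_2} follows because a cross-tree edge witnesses that the tree is not a full component, while conversely if the tree misses a component vertex, connectivity of the input component forces some current-graph edge — traced back via the path statement — to leave the tree.
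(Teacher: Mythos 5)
Your overall strategy — induction over rounds and over the sub-operations within Steps~(\ref{step_1}) and~(\ref{step_6}), handling each of \textsc{maxlink}, \textsc{shortcut}, \textsc{alter}, and the hashing steps — is exactly the paper's, and your part~(a) argument and the \textsc{maxlink}/\textsc{shortcut} cases of part~(b) match the paper's proof. The question you raise for \textsc{shortcut} (whether $v.p$ is incident with an edge so the induction hypothesis applies to it) is easily dispatched without Lemma~\ref{lem_non_root}: either $v.p = v$ and the claim is trivial, or the nontrivial $v$-to-$v.p$ path from the induction hypothesis already witnesses that $v.p$ has an incident edge.

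The real gap is your \textsc{alter} case in part~(b), which you yourself flag as murky. Your sketch tries to ``reroute'' a post-\textsc{alter} path around the new edge $\{v.p, w.p\}$ using the pre-\textsc{alter} $v$-to-$v.p$ and $w$-to-$w.p$ paths, but those paths live in the pre-\textsc{alter} graph and are themselves altered, so the construction does not close. The paper's move is more direct and you should adopt it: if $v$ is incident with an edge after \textsc{alter}, then some pre-\textsc{alter} edge $\{x, y\}$ had $x.p = v$, so a vertex $v'$ with $v'.p = v$ was incident with an edge before \textsc{alter}. (If $v$ is a root this $v'$ may be $v$ itself, but then $v.p = v$ and the claim is trivial; otherwise $v'$ is a proper child.) Apply the induction hypothesis to $v'$ to get a pre-\textsc{alter} path $v' = u_0, u_1, \ldots, u_k = v$; the \textsc{alter} replaces each edge $\{u_i, u_{i+1}\}$ by $\{u_i.p, u_{i+1}.p\}$, so $u_0.p, \ldots, u_k.p$ is a post-\textsc{alter} path from $v'.p = v$ to $v.p$, exactly what is needed.

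Two smaller corrections. First, your base case asserts $v.p = v$ at the start of round~$1$, but \textsc{compact} may have already run several phases of Vanilla algorithm; the base case instead follows from Lemma~\ref{all_tree_flat} and the monotonicity of Vanilla, as the paper notes. Second, your worry that Step~(\ref{step_8}) might ``silently drop'' a needed added edge when blocks are reassigned is a misreading of the framework in \S{\ref{sec_faster_alg_framework}}: the current graph is defined to include the (altered) added edges in the tables \emph{over all rounds}, so edges are only altered or added, never removed, and there is nothing to check there.
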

\begin{proof}
    By inspecting Vanilla algorithm (called within \textsc{compact}) and monotonicity, Lemma~\ref{lem_neighbor_parent_path} holds at the beginning of the first round.
    By an induction on rounds, it suffices to prove the following:
    (\romannumeral1) the \textsc{alter} (cf. Steps (\ref{step_1},\ref{step_6})) and Step~(\ref{step_5}) preserve the invariant ($v$ and $w$ are in the same component), as only these two change the neighbor sets;
    (\romannumeral2) the \textsc{maxlink} and \textsc{shortcut} (cf. Steps (\ref{step_1},\ref{step_6})) preserve the invariant, as only these two change the parents.

    Each iteration of the \textsc{maxlink} preserves the invariant since both $v$ and its neighbor $w$, and $w, w.p$ are in the same component by the induction hypothesis. So \textsc{maxlink} preserves the invariant.
    To prove that an \textsc{alter} preserves the invariant, let $(v', w')$ be an edge before \textsc{alter} such that $v'.p = v$ and $w'.p = w$ (to make $w$ a neighbor of $v$ after an \textsc{alter}).
    By the induction hypothesis, $w', w$, $v', v$, and $v', w'$ are in the same component, thus so do $v, w$.
    A \textsc{shortcut} preserves the invariant since $v, v.p$ and $v.p, v.p.p$ are both in the same component.
    Step~(\ref{step_5}) preserves the invariant, because $H(v)$, $H(w)$ are subsets of the neighbor sets of $v$ and $w$ respectively before Step~(\ref{step_5}) (cf. Step~(\ref{step_3})), thus $v, w$ and $w, u$ are both in the same component by the induction hypothesis, and so do $v$ and its new neighbor $u$.
    This finishes the induction and proves the first part of the lemma.

    Now we prove the second part of the lemma.
    By an induction, we assume the invariant holds before some step and prove that each step of \textsc{expand-maxlink} preserves the invariant.
    Before a \textsc{shortcut}, there are paths between $v, v.p$ and $v.p, v.p.p$, thus there is a path between $v$ and $v.p$ after updating $v.p$ to the old $v.p.p$.
    Before an \textsc{alter}, there must exist a child $v'$ of $v$ such that $v'$ is incident with an edge, otherwise $v$ has no incident edge after the \textsc{alter}.
    By the induction hypothesis, there is a path between $v'$ and $v$, which is altered to be a new path between $v$ and $v.p$ after the \textsc{alter}.
    In each iteration of a \textsc{maxlink}, if $v.p$ does not change then the invariant trivially holds;
    otherwise, let $(v, w)$ be the edge that updates $v.p$ to $w.p$.
    By the induction hypothesis, there is a path between $w$ and $w.p$, so there is a path between $v$ and $v.p = w.p$ after the iteration.
    Other steps in the \textsc{expand-maxlink} can only add edges, giving the lemma.
\end{proof}

\begin{proof}[Proof of Lemma~\ref{lem_tree_invariant}]
    Conditions~(\ref{condition_1}) and (\ref{condition_2}) hold at the beginning of the first round by Lemma~\ref{all_tree_flat} and monotonicity of Vanilla algorithm.
    By Lemma~\ref{lem_non_root}, the levels along any tree path from leaf to root are monotonically increasing, so there cannot be a cycle and the labeled digraph must be a collection of trees.
    By Lemma~\ref{lem_neighbor_parent_path} and an induction on tree, all vertices of a tree belong to the same component.
    This proves Condition~(\ref{condition_1}).

    For any tree $T$, if there is an edge between a vertex $v$ in $T$ and another vertex $w$ in another tree $T'$, then
    by Lemma~\ref{lem_neighbor_parent_path}, $v, w$ belong to the same component and so do vertices in $T$ and $T'$, which means $T$ does not contain all the vertices of this component.
    On the other hand, assume $T$ does not contain all the vertices in its component, then there must be another tree $T'$ corresponding to the same component.
    By an induction, we assume the invariant that there is an edge between a vertex $v$ in $T$ and a vertex $w$ in another tree holds, then proves that an \textsc{alter} and (each iteration of) a \textsc{maxlink} preserve the invariant.
    An \textsc{alter} moves the edge between $v$ and $w$ to their parents in their trees, whose two endpoints must belong to different trees as $w$ not in $T$, so the invariant holds.
    Assume a parent update in an iteration of \textsc{maxlink} changes the parent of $u$ to a vertex $u.p$ in $T'$. Let $u'$ be the old parent of $u$ in $T$.
    By Lemma~\ref{lem_neighbor_parent_path}, there are paths between $u, u'$ and $u, u.p$, so there must be a path between $u'$ and $u.p$.
    Since $u'$ and $u.p$ belong to different trees: the new $T$ and $T'$ respectively, there must be en edge from a vertex in $T$ to another tree, and an edge from a vertex in $T'$ to another tree, giving the lemma.
\end{proof}

\begin{lemma} \label{lem_alg_correctness}
    If Faster Connected Components algorithm ends, then it correctly computes the connected components of the input graph.
\end{lemma}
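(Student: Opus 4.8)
The plan is to treat the three stages of \textsc{Faster Connected Components} (\textsc{compact}; the \textsc{expand-maxlink} loop; the \S{\ref{sec_cc_alg}} algorithm) in order and to reduce correctness of the whole thing to the correctness of the \S{\ref{sec_cc_alg}} algorithm. First I would use the break condition: the loop exits only when every tree in the labeled digraph is flat and the graph has diameter $O(1)$, and at that moment Lemma~\ref{lem_tree_invariant} says the labeled digraph is a forest whose trees are partitioned among the components of the input graph. Since the final \textsc{alter} inside the last \textsc{expand-maxlink} moves every edge to a pair of roots, after that point each non-root vertex is a leaf of a flat tree incident with no non-loop edge; such a vertex is never ``ongoing'' and is left untouched by the subsequent \textsc{shortcut}/\textsc{link}/\textsc{alter}/\textsc{expand}/\textsc{vote} of the \S{\ref{sec_cc_alg}} algorithm. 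Hence running that algorithm from this state is the same as running it from scratch on the \emph{quotient graph} $G'$ whose vertices are the current roots, with an edge between two roots exactly when an edge of the current graph joins their trees.

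Next I would invoke the correctness already established in \S{\ref{sec_cc_alg}} (Lemma~\ref{all_tree_flat}, monotonicity, Theorem~\ref{main1}): run on $G'$, the Connected Components algorithm terminates with each connected component of $G'$ collapsed to a single flat tree. Distinctness of labels across distinct input-graph components is then immediate: by Lemma~\ref{lem_tree_invariant} every tree present at the start of this stage lies inside one input component, and by Lemma~\ref{lem_neighbor_parent_path} every edge ever present joins vertices of a common component; since the \S{\ref{sec_cc_alg}} algorithm is monotone and merges trees only along current edges, two vertices in different input components stay in different trees, so their flat-tree parents differ at the end.

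The heart of the proof is the reverse inclusion: all trees of one input component must end up merged into a single tree, equivalently, the subgraph of $G'$ induced by the roots of a fixed component $C$ is connected. For this I would carry along, as an invariant of the entire execution (the Vanilla phases of \textsc{compact} and every step of every \textsc{expand-maxlink}), the statement that for each component $C$ the graph on $C$ formed by the current graph edges together with the tree edges $(v,v.p)$ is connected. Granting this at the start of the \S{\ref{sec_cc_alg}} stage, any partition of the roots of $C$ into two nonempty classes with no $G'$-edge between them would yield a cut of that connected graph — tree edges never cross trees and current edges never cross the cut — a contradiction, so the roots of $C$ induce a connected subgraph of $G'$. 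Combining the two inclusions, at termination every input-graph component is exactly one flat tree, so $v.p$ is a well-defined unique label for the component of $v$, which is precisely what ``correctly computes the connected components'' means.

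Verifying that connectivity invariant is the main obstacle, and the delicate case is \textsc{maxlink}, which uses \emph{parent links} rather than direct links and is therefore not monotone: it can detach a subtree and re-attach it under another tree. Here I would use the second part of Lemma~\ref{lem_neighbor_parent_path}, which guarantees a current-graph path between any edge-incident vertex and its parent: when \textsc{maxlink} replaces the tree edge from $v$ to its old parent by the tree edge $(v,w.p)$ along a current edge $(v,w)$, the removed connection is restored through the current edge $(v,w)$ and the still-present tree edge $(w,w.p)$, so the current-plus-tree-edge graph on each component remains connected; \textsc{alter} and \textsc{shortcut} are handled by the analogous (and easier) rerouting, and hashing only adds edges. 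The remaining bookkeeping — that \textsc{compact}/Vanilla establishes the invariant and the base cases of Lemmas~\ref{lem_non_root} and~\ref{lem_tree_invariant} — is routine.
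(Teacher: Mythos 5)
Your overall reduction to the \S{\ref{sec_cc_alg}} algorithm matches the paper's intent, but your route through the ``current-plus-tree-edge graph is connected on each component'' invariant is genuinely different, and it is arguably more self-contained. The paper's own proof is very terse: it invokes Lemma~\ref{lem_tree_invariant} to say that trees partition the input components and that every tree not spanning its component has an edge to another tree (Condition~(\ref{condition_2})), then declares that the preconditions of Lemma~\ref{all_tree_flat} are met and that \S{\ref{sec_cc_alg}}'s analysis applies. What the paper leaves tacit is that running \S{\ref{sec_cc_alg}} from a contracted state is correct only if the quotient graph of roots has the same components as the input graph; Condition~(\ref{condition_2}) alone (``some edge exists'') is weaker than this connectivity, because nothing in Condition~(\ref{condition_2}) by itself forbids the roots of one input component from splitting into two nonempty classes with edges only inside each class. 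The paper is implicitly rescued here by the break condition's diameter-$O(1)$ guarantee (the proof of Lemma~\ref{lem_round_constant_time} shows every root is within distance $1$ of every other root in its component, making the quotient a clique per component), but the proof of Lemma~\ref{lem_alg_correctness} never says so. Your invariant supplies the missing connectivity directly, with no appeal to the diameter bound, so it is a cleaner decomposition. One small inaccuracy in your write-up: in the \textsc{maxlink} case you say the removed tree connection ``is restored through the current edge $(v,w)$ and the still-present tree edge $(w,w.p)$,'' but that path lands at the \emph{new} parent $w.p$, not the \emph{old} parent $v.p$. The right justification is that the second part of Lemma~\ref{lem_neighbor_parent_path} already gives a current-graph path from $v$ to its old parent whenever $v$ is edge-incident, so the removed tree edge was redundant for connectivity, and the added tree edge can only help. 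With that fix your argument is sound, and you should also note explicitly (as you gesture at but do not spell out) that the monotone operations of the subsequent \S{\ref{sec_cc_alg}} run preserve your invariant for the same reasons, so termination with no non-loop edges really does mean each tree is exactly one component.
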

\begin{proof}
    When the repeat loop ends, all trees are flat, which means in the last round, all trees are flat before the \textsc{alter} in Step~(\ref{step_6}) since an \textsc{alter} does not change the labeled digraph. After the \textsc{alter}, all edges are only incident on roots.
    By Lemma~\ref{lem_tree_invariant}, these trees partition the connected components; moreover, if a root is not the only root in its component then there must be an edge between it and another root in its component, i.e., the preconditions of Lemma~\ref{all_tree_flat} is satisfied.
    This implies that the following Connected Components algorithm is applicable and must correctly compute the connected components of the input graph by \S{\ref{sec_cc_alg}}.
\end{proof}

\subsection{Implementation} \label{sec_imp}

In this section, we show how to implement \textsc{expand-maxlink} such that any of the first $O(\log n)$ rounds runs in constant time with good probability.

\begin{lemma} \label{lem_round_constant_time}
    With good probability, any of the first $O(\log n)$ rounds can be implemented to run in $O(1)$ time.
\end{lemma}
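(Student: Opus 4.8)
The plan is to go through the eight steps of \textsc{expand-maxlink} one at a time and argue each runs in $O(1)$ time on an ARBITRARY CRCW PRAM, given the processor allocation guaranteed by \textsc{compact} and the invariant that a root of level $\ell$ owns a block of size $b_\ell = b_1^{1.01^{\ell-1}}$. The global setup to establish first is that after \textsc{compact}, every ongoing vertex has a distinct id in $[2m/\log^c n]$, so the array to be (approximately) compacted for any subsequent block reassignment has length $\le 2m/\log^c n$ while we have $\Theta(m)$ processors available; since $m \ge (2m/\log^c n)\cdot \log(2m/\log^c n)$ whenever $c$ is a sufficiently large constant (say $c \ge 10$), Lemma~\ref{lem_apx_compaction} gives $O(1)$-time approximate compaction, hence $O(1)$-time block assignment, in Step~(\ref{step_8}) and inside the \textsc{prepare}-free part. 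I would state this as the first paragraph.

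Next I would dispatch the easy steps. The \textsc{maxlink} subroutine runs a fixed number ($2$) of iterations, and each iteration, for each vertex $v$, computes $\argmax_{w \in N(v).p} \ell(w)$: this is a single maximum over the neighbor list of $v$, which can be done in $O(1)$ time on an ARBITRARY CRCW PRAM \emph{provided} $v$ has at least $|N(v)|$ processors — which holds because the neighbors of $v$ are exactly the edge-processors incident on $v$ together with the entries of the tables in $v$'s block, and each such entry/edge carries its own processor. (The constant-time maximum on ARBITRARY CRCW uses the standard $|N(v)|^2$-processor tournament, or more carefully the $O(1)$-time, $O(k\log\log k)$-work algorithm; I'd remark that since we only need $O(m)$ processors total across all vertices and the neighbor sets have total size $O(m)$ with good probability by \S\ref{sec_space}, a slightly more careful constant-time-max is invoked, but the point is it's $O(1)$ time.) \textsc{shortcut} is one parallel pointer-jump, $O(1)$ time. \textsc{alter} replaces each edge $\{v,w\}$ by $\{v.p,w.p\}$ in parallel, $O(1)$ time. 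Steps (\ref{step_2}) and (\ref{step_7}) are a per-root coin flip / local test, $O(1)$ time. The hashing in Steps (\ref{step_3}) and (\ref{step_5}) uses a pairwise-independent $h$, so evaluating $h$ costs $O(1)$ time and $O(1)$ words per processor; in Step~(\ref{step_3}) each neighbor-root $w$ of a root $v$ with $b(w)=b(v)$ writes itself into $H(v)$, one write per processor, $O(1)$ time; in Step~(\ref{step_5}) we allocate, for each root $v$ of budget $b$, the $\sqrt{b}$ tables times $\sqrt{b}$ cells $=b$ processors to enumerate all pairs $(w \in H(v),\, u \in H(w))$ — note $|H(v)|,|H(w)| \le \sqrt{b}$ — and each such processor does one hash and one write, $O(1)$ time. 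Collision detection in Steps (\ref{step_4}) and (\ref{step_5}) is: for each occupied cell, have the writers check agreement via a re-read, which on ARBITRARY CRCW is $O(1)$; then "there is a dormant vertex in $H(v)$" is an OR over $\le \sqrt{b(v)}$ entries, again $O(1)$ with the available processors.

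The remaining paragraph handles the two genuine wrinkles. First, the block sizes grow, so a root whose level increased needs a \emph{bigger} block in Step~(\ref{step_8}); this is exactly where approximate compaction is invoked (to pack the roots of each level contiguously so predetermined block offsets can be handed out), and the first-paragraph bound makes it $O(1)$ time — but I must also note that the total number of processors requested, $\sum_v b(v)$, stays $O(m)$ with good probability, which is precisely what \S\ref{sec_space} proves; so the "with good probability" in the lemma statement inherits both the $1/\text{poly}(n)$ failure of Lemma~\ref{lem_apx_compaction} (union-bounded over the first $O(\log n)$ rounds, still $1/\text{poly}(n)$) and the space bound's failure probability. Second — and this is the step I expect to be the main obstacle to write cleanly — implementing Step~(\ref{step_5}) "store the old tables while hashing new items into the new table" requires that when $v$ reads $H(w)$ for a neighbor root $w$, the table $H(w)$ being read is the one from \emph{before} this round's Step~(\ref{step_5}) update; this is handled by double-buffering (keep the round-start $H(w)$ in the block and write the new contents to a fresh table, swapping pointers at the end), which costs only $O(1)$ extra time and a constant factor in space, but the bookkeeping — that $w$ ranges over roots with $b(w)=b(v)$, that $w$'s block is findable from $v$ via the stored neighbor pointer, and that concurrent reads of $H(w)$ by many $v$'s are fine on a CRCW model — is the part that needs to be spelled out carefully. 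I'd finish by union-bounding the failure events over the $O(\log n)$ rounds to conclude the "with good probability" statement.
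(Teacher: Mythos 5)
Your step-by-step plan matches the paper's overall structure — approximate compaction for Step~(\ref{step_8}), pairwise-independent hashing, and the double-buffering of tables in Step~(\ref{step_5}) are all there — but it has two genuine gaps. First, your handling of \textsc{maxlink} does not actually work as stated: computing $\argmax_{w\in N(v).p}\ell(w)$ with only $|N(v)|$ processors cannot be done in $O(1)$ time on an ARBITRARY CRCW PRAM (finding a max with a linear number of processors takes $\Omega(\log\log k)$ time, and the ``$O(1)$-time, $O(k\log\log k)$-work'' algorithm you cite does not exist; the $|N(v)|^2$-processor tournament would work but you don't have that many processors). The paper's fix is a level-bucketing trick you omit: because a vertex's level increases by at most one per round, there are only $O(\log n)$ distinct levels, so each neighbor writes its parent into the cell indexed by that parent's level in an $O(\log n)$-length array, after which the max can be found by a pairwise tournament using only $O(\log^2 n)\le b_1$ processors. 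Without this reduction to a small domain, the $O(1)$ bound for \textsc{maxlink} fails.

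Second, and more fundamentally, your proof is silent about the break condition: the lemma is about implementing a full \emph{round}, and the paper explicitly treats testing ``the graph has diameter $O(1)$ and all trees are flat'' as part of the work to be done in $O(1)$ time. This is non-trivial — the algorithm cannot directly measure the diameter, so the paper substitutes two locally-checkable conditions (no vertex changed its parent or label this round, and every $u\in H(w)$ with $w\in H(v)$ is already at position $h(u)$ of $H(v)$) and proves that their conjunction implies the desired global property. That argument occupies a full paragraph of the paper's proof and is not folded into any of your eight per-step analyses, so your plan would leave the lemma unproved even if the per-step reasoning were repaired.
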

\begin{proof}

The \textsc{alter} (cf. Steps (\ref{step_1},\ref{step_6})) applies to all edges in the current graph.
Since each edge corresponds to a distinct processor, Step~(\ref{step_3}) and the \textsc{alter} take $O(1)$ time.

Steps (\ref{step_2},\ref{step_4},\ref{step_7}) and \textsc{shortcut} take $O(1)$ time as each vertex has a corresponding processor and a collision can be detected using the same hash function to write to the same location again.

In each of the $2$ iterations in \textsc{maxlink}, each vertex $v$ updates its parent to a neighbor parent with the highest level if this level is higher than $\ell(v)$.
Since a vertex can increase its level by at most $1$ in any round (cf. Steps (\ref{step_2},\ref{step_7})), there are $O(\log n)$ different levels.
Let each neighbor of $v$ write its parent with level $\ell$ to the $\ell$-th cell of an array of length $O(\log n)$ in the block of $v$ (arbitrary win). Furthermore, by the definitions of level and budget, the block of any vertex with positive level in any round has size at least $b_1 = \Omega(\log^3 n)$ when $c \ge 10$. (Vertex with level $0$ does not have neighbors.)
Therefore, we can assign a processor to each pair of the cells in this array such that each non-empty cell can determine whether there is a non-empty cell with larger index then finds the non-empty cell with largest index in $O(1)$ time,
which contains a vertex with the maximum level.
As a result, Steps (\ref{step_1},\ref{step_6}) take $O(1)$ time.

By Step~(\ref{step_3}), any $w \in H(v)$ has $b(w) = b(v)$, so each $u \in H(w)$ such that $w \in H(v)$ owns a processor in the block of $v$ since $\sqrt{b(v)} \cdot \sqrt{b(w)} = b(v)$ and any vertex in a table is indexed (by its hash value).
Therefore, together with collision detection, Step~(\ref{step_5}) takes $O(1)$ time.

In Step~(\ref{step_8}), each vertex is assigned to a block.
The pool of $\Theta(m)$ processors is partitioned into $\Theta(\log^2 n)$ zones such that the processor allocation in round $r$ for vertices with level $\ell$ uses the zone indexed by $(r, \ell)$, where $r, \ell \in O(\log n)$ in the first $O(\log n)$ rounds. Since there are $\Theta(m)$ processors in total and all the vertex ids are in $[2 m / \log^c n]$ with good probability (cf. Lemma~\ref{lem_stage_1}), we can use $\Theta(m / \log n)$ processors for each different level and apply Lemma~\ref{lem_apx_compaction} to index each root in $O(1)$ time with probability $1 - 1/\text{poly}(n)$ such that the indices of vertices with the same level are distinct, then assign each of them a distinct block in the corresponding zone.
Therefore, Step~(\ref{step_8}) takes $O(1)$ time with good probability by a union bound over all $O(\log n)$ levels and rounds.

Finally, we need to implement the break condition in $O(1)$ time, i.e., to determine whether the graph has diameter $O(1)$ and all trees are flat at the end of each round.
The algorithm checks the following two conditions in each round:
(\romannumeral1) all vertices do not change their parents nor labels in this round, and
(\romannumeral2) for any vertices $v, w, u$ such that $w \in H(v)$, $u \in H(w)$ before Step~(\ref{step_5}), the $h(u)$-th cell in $H(v)$ already contains $u$.
Conditions (\romannumeral1) and (\romannumeral2) can be checked in $O(1)$ time by writing a flag to vertex processor $v$ if they do not hold for $v$, then let each vertex with a flag write the flag to a fixed processor.
If there is no such flag then both Conditions (\romannumeral1) and (\romannumeral2) hold and the loop breaks.
If there is a non-flat tree, some parent must change in the \textsc{shortcut} in Step~(\ref{step_6}).
If all trees are flat, they must be flat before the \textsc{alter} in Step~(\ref{step_6}), then an \textsc{alter} moves all edges to the roots.
Therefore, if Condition (\romannumeral1) holds, all trees are flat and edges are only incident on roots.
Moreover, no level changing means no vertex increase its level in Step~(\ref{step_2}) and there is no dormant vertex in Step~(\ref{step_7}).
So for each root $v$, $N(v) = H(v)$ after Step~(\ref{step_3}) and $N(N(v)) = H(v)$ after Step~(\ref{step_5}) as there is no collision.
By Condition (\romannumeral2), the table $H(v)$ does not change during Step~(\ref{step_5}), so $N(v) = N(N(v))$.
If there exists root $v$ such that there is another root with distance at least $2$ from $v$, then there must exist a vertex $w \neq v$ at distance exactly $2$ from $v$, so $w \notin N(v)$ and $w \in N(N(v))$, contradicting with $N(v) = N(N(v))$.
Therefore, any root is within distance at most $1$ from all other roots in its component and the graph has diameter $O(1)$.

Since each of them runs in $O(1)$ time with good probability, the lemma follows.
\end{proof}

\subsection{Number of Processors} \label{sec_space}

In this section, we show that with good probability, the first $O(\log n)$ rounds use $O(m)$ processors in total.

First of all,
observe that in the case that a root $v$ with level $\ell$ increases its level in Step~(\ref{step_2}) but becomes a non-root at the end of the round, $v$ is not assigned a block of size $b_{\ell(v)}$ in Step~(\ref{step_8}).
Instead, $v$ owns a block of size $b_{\ell} = b_{\ell(v) - 1}$ from the previous round.
Since in later rounds a non-root never participates in obtaining more neighbors by maintaining its table in Steps (\ref{step_3}-\ref{step_5}) (which is the only place that requires a larger block),
such \emph{flexibility} in the relationship between level and budget is acceptable.

By the fact that any root $v$ at the end of any round owns a block of size $b_{\ell(v)} = b_1^{1.01^{\ell(v) - 1}}$,
a non-root can no longer change its level nor budget (cf. Lemma~\ref{lem_non_root}),
and the discussion above,
we obtain:
\begin{corollary} \label{cor_level_budget}
    Any vertex $v$ owns a block of size $b$ at the end of any round where $b_{\ell(v)-1} = b_1^{1.01^{\ell(v) - 2}} \le b \le b_1^{1.01^{\ell(v) - 1}} = b_{\ell(v)}$;
    if $v$ is a root, then the upper bound on $b$ is tight.
\end{corollary}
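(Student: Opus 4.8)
The plan is to prove Corollary~\ref{cor_level_budget} by induction on the rounds, carrying the \emph{entire} statement as the inductive hypothesis — in particular the clause ``if $v$ is a root, then the upper bound on $b$ is tight'' — since that tightness is exactly what the delicate case of the induction consumes. For the base case, namely the end of ``round $0$'' (the beginning of round $1$), I would appeal to Lemma~\ref{lem_stage_1} and the definitions of level and budget in \S\ref{sec_faster_alg_framework}: every ongoing vertex is a root of level $1$ owning a block of size $b_1 = b_{\ell(v)}$, so $b = b_1 = b_{\ell(v)}$ is tight and $b_{\ell(v)-1} = b_0 = 0 \le b$; every remaining relevant vertex is a non-root of level $0$ owning no block, so $b = 0 = b_{\ell(v)}$ and the claim is vacuous.

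For the inductive step, fix a vertex $v$ and a round $r$, and assume the corollary holds at the end of round $r-1$. If $v$ is already a non-root at the start of round $r$, then by Lemma~\ref{lem_non_root} neither $\ell(v)$ nor the set of blocks owned by $v$ changes during round $r$, so the claim follows from the hypothesis. Otherwise $v$ is a root of some level $\ell$ at the start of round $r$, and by the hypothesis (tightness clause) the maximal block owned by $v$ has size exactly $b_\ell$. Two observations drive the rest: the only step that assigns a new block is Step~(\ref{step_8}), and it assigns one of size $b_{\ell(v)}$ only to roots; and $\ell(v)$ can only increase, by at most one, and only through Step~(\ref{step_2}) or Step~(\ref{step_7}). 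I would then split on whether $v$ is a root at the end of round $r$. If it is, Step~(\ref{step_8}) gives it a block of size $b_{\ell(v)}$ with $\ell(v) \in \{\ell, \ell+1\}$; since the sequence $b_\ell = b_1^{1.01^{\ell-1}}$ is increasing (because $b_1 > 1$ as $c \ge 10$), we have $b_\ell \le b_{\ell(v)}$, so the maximal block of $v$ has size exactly $b_{\ell(v)}$ — the tight upper bound. If $v$ is a non-root at the end of round $r$, then Step~(\ref{step_8}) does not touch it, so its maximal block still has size $b_\ell$; if its level did not increase, $b = b_\ell = b_{\ell(v)}$ and $b_{\ell(v)-1} \le b_{\ell(v)}$ gives the bounds, while if its level increased, the increase can only have happened in Step~(\ref{step_2}) (Step~(\ref{step_7}) applies only to roots and runs after the \textsc{maxlink}/\textsc{shortcut} of Step~(\ref{step_6}) that demoted $v$, and an increase in Step~(\ref{step_2}) needs $v$ to still be a root after Step~(\ref{step_1})), so $\ell(v) = \ell+1$ and $b = b_\ell = b_{\ell(v)-1}$, whence $b_{\ell(v)-1} = b \le b_{\ell(v)}$.

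The main obstacle — and the reason the corollary bundles in the tightness clause — is precisely this last subcase: a vertex that bumps its level in Step~(\ref{step_2}) but is demoted to a non-root in Step~(\ref{step_6}) finishes the round one level above the block it actually holds, so its budget lands at the \emph{lower} endpoint $b_{\ell(v)-1}$ of the allowed interval. To close the induction one must know that entering the round its block had size \emph{exactly} $b_\ell$, not merely at most $b_\ell$, which is supplied by applying the tightness clause of the hypothesis to $v$ qua root. The only remaining care is bookkeeping the order of Steps~(\ref{step_1})--(\ref{step_8}) — to confirm that once $v$ is a non-root ``its level increased'' can only mean ``in Step~(\ref{step_2})'' — and checking that no step other than Step~(\ref{step_8}) ever enlarges the set of blocks a vertex owns, so that the maximal block size behaves as claimed.
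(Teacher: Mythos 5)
Your proof is correct and takes essentially the same approach as the paper, which presents the corollary as an immediate consequence of three observations: Step~(\ref{step_8}) gives every round-end root a block of size exactly $b_{\ell(v)}$, Lemma~\ref{lem_non_root} freezes a non-root's level and budget forever, and the one delicate case is a root that bumps its level in Step~(\ref{step_2}) but is demoted before Step~(\ref{step_8}), landing at $b_{\ell(v)-1}$. Your explicit induction, carrying the tightness clause as part of the hypothesis, is just a careful packaging of that same discussion.
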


Secondly, we prove two simple facts related to \textsc{maxlink}.

\begin{lemma} \label{lem_maxlink}
    For any vertex $v$ with parent $v'$ and any $w \in N(v)$ before an iteration of \textsc{maxlink}, it must be $\ell(w.p) \ge \ell(v')$ after the iteration;
    furthermore, if $\ell(w.p) > \ell(v)$ before an iteration, then $v$ must be a non-root after the iteration.
\end{lemma}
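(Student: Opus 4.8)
The plan is to unwind the definition of a single \textsc{maxlink} iteration and trace what happens to $v.p$ and to $w.p$. Recall that one iteration sets, for every vertex $x$ simultaneously, $x.p \coloneqq \argmax_{y \in N(x).p}\ell(y)$ provided that this maximizing level strictly exceeds $\ell(x)$ (otherwise $x.p$ is unchanged). All these updates use the parent values as they stood \emph{before} the iteration, so I can reason about the ``old'' and ``new'' snapshots cleanly.

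First I would prove the first claim. Fix $v$ with old parent $v' = v.p$, and fix $w \in N(v)$. I want to show the new parent of $w$, call it $w.p_{\mathrm{new}}$, satisfies $\ell(w.p_{\mathrm{new}}) \ge \ell(v')$. There are two cases. If $w$ updates its parent in this iteration, then $w.p_{\mathrm{new}} = \argmax_{y \in N(w).p}\ell(y)$; since $v \in N(w)$ (neighborhoods are symmetric, as edges are treated as pairs of arcs) we have $v' = v.p \in N(w).p$, so the max is at least $\ell(v')$, giving $\ell(w.p_{\mathrm{new}}) \ge \ell(v')$. If instead $w$ does \emph{not} update, that means $\ell(v') \le \max_{y\in N(w).p}\ell(y) \le \ell(w)$ (the update condition failed), and $w.p_{\mathrm{new}} = w.p_{\mathrm{old}}$; but by Lemma~\ref{lem_non_root}, if $w$ is a non-root then $\ell(w) < \ell(w.p_{\mathrm{old}})$, and if $w$ is a root then $w.p_{\mathrm{old}} = w$ so $\ell(w.p_{\mathrm{old}}) = \ell(w) \ge \ell(v')$; either way $\ell(w.p_{\mathrm{new}}) = \ell(w.p_{\mathrm{old}}) \ge \ell(w) \ge \ell(v')$. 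So in both cases the bound holds. (A subtlety I should be careful about: whether the iteration uses the \emph{old} $N(v)$ — it does, since neighbor sets change only in \textsc{alter} and Step~\ref{step_5}, not inside \textsc{maxlink} — so ``$w \in N(v)$ before the iteration'' and ``$v \in N(w)$ before the iteration'' are the same statement.)

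For the second claim, suppose $\ell(w.p_{\mathrm{old}}) > \ell(v)$ for some $w \in N(v)$ before the iteration. Then when $v$ evaluates its update, $w.p_{\mathrm{old}} \in N(v).p_{\mathrm{old}}$ and $\ell(w.p_{\mathrm{old}}) > \ell(v)$, so $\max_{y \in N(v).p}\ell(y) > \ell(v)$ and the update condition for $v$ is satisfied; hence $v$ does update, setting $v.p_{\mathrm{new}}$ to a vertex of level strictly greater than $\ell(v)$. In particular $v.p_{\mathrm{new}} \neq v$, so $v$ is a non-root after the iteration.

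I do not expect any real obstacle here — the lemma is essentially a direct consequence of the update rule plus the level-monotonicity of Lemma~\ref{lem_non_root}. The one point requiring a little care is the bookkeeping of ``before'' versus ``after'' snapshots: all the $\argmax$'s and the update conditions must be read against the pre-iteration parent assignment, and one must invoke symmetry of neighborhoods to pass from $w \in N(v)$ to $v \in N(w)$. Once that is nailed down, both statements fall out by the case analysis above.
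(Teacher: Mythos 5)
Your proof is correct and follows essentially the same route as the paper's: both parts reduce to unwinding a single \textsc{maxlink} update against the pre-iteration snapshot, using symmetry of neighborhoods to pass from $w\in N(v)$ to $v\in N(w)$, and invoking Lemma~\ref{lem_non_root} where needed. The only difference is that your write-up is a bit more careful than the paper's one-line argument for the first claim, explicitly splitting on whether $w$ actually updates its parent (the paper writes $\ell(w.p)=\max_{u\in N(w)}\ell(u.p)$ after the iteration, which strictly speaking only holds when $w$ updates; the non-update case still works via $\ell(w.p_{\mathrm{old}})\ge\ell(w)\ge\max_{u\in N(w)}\ell(u.p)$, exactly as you argue). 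So you have filled in a small gap in the paper's exposition rather than taken a different path.
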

\begin{proof}
    For any $w \in N(v)$, its parent has level $\max_{u \in N(w)} \ell(u.p) \ge \ell(v')$ after an iteration of \textsc{maxlink}.
    This implies that if $\ell(w.p) > \ell(v)$ before an iteration, then after that $\ell(v.p)$ is at least the level of the old parent of $w$ which is strictly higher than $\ell(v)$, so $v$ must be a non-root.
\end{proof}


\begin{lemma} \label{lem_many_neighbors}
    For any root $v$ with budget $b$ at the beginning of any round,
    if there is a root $w \in N(v)$ with at least $b^{0.1}$ neighbor roots with budget $b$ after Step~(\ref{step_1}), then $v$ either increases level in Step~(\ref{step_2}) or is a non-root at the end of the round with probability $1 - b^{-0.06}$.
\end{lemma}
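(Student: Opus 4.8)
The plan is to bound the complementary event: that $v$ neither increases its level in Step~(\ref{step_2}) nor is a non-root at the end of the round, and to show this has probability at most $b^{-0.06}$. First I would dispose of the trivial case. If $v$ is already a non-root at the end of Step~(\ref{step_1}), then by Lemma~\ref{lem_non_root} it remains a non-root, so the conclusion holds deterministically. Hence I may assume $v$ is a root after Step~(\ref{step_1}); since levels change only in Steps~(\ref{step_2}) and (\ref{step_7}) and budgets only in Step~(\ref{step_8}), $v$ still has budget $b$ after Step~(\ref{step_1}), and by Corollary~\ref{cor_level_budget} its level equals the unique $\ell_0$ with $b_{\ell_0}=b$. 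Note also that every root of budget $b$ has level $\ell_0$, because $\ell\mapsto b_\ell=b_1^{1.01^{\ell-1}}$ is strictly increasing.

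Next, condition on the event $\neg A$ that $v$ does not increase its level in Step~(\ref{step_2}), which given the above has probability $1-b(v)^{-0.06}=1-b^{-0.06}$. Let $R$ be the set of neighbor roots of $w$ with budget $b$ after Step~(\ref{step_1}); by hypothesis $|R|\ge b^{0.1}$, every member of $R$ has level $\ell_0$, and every member of $R$ is still a root at the start of Step~(\ref{step_6}) since Steps~(\ref{step_2})--(\ref{step_5}) change no parents. The members of $R\setminus\{w\}$ flip independent $\mathrm{Bernoulli}(b^{-0.06})$ coins in Step~(\ref{step_2}), independently of $v$'s coin and of the state through Step~(\ref{step_1}); hence the probability that none of them increases level is at most $(1-b^{-0.06})^{|R|-1}\le b^{-0.06}$, where the last inequality uses $|R|-1\ge b^{0.1}-1$ and $b\ge b_1=\Omega(\log^3 n)$ being large enough. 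So with probability at least $1-b^{-0.06}$ there is some $x\in R\setminus\{w\}$ with $\ell(x)=\ell_0+1$ after Step~(\ref{step_2}).

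Given such an $x$, I claim the two iterations of \textsc{maxlink} in Step~(\ref{step_6}) make $v$ a non-root. At the start of that \textsc{maxlink}, $v,w,x$ are all roots with $w\in N(v)$ and $x\in N(w)$ (neighbor sets only grow in Steps~(\ref{step_3})--(\ref{step_5})), and $\ell(v)=\ell_0$, $\ell(x)\ge \ell_0+1$, $x\ne w$. The key point is that the case split on $\ell(w)$ is exhaustive. If $\ell(w)>\ell_0$, then in the first iteration $v$ sees $w\in N(v)$ with $\ell(w.p)=\ell(w)>\ell(v)$ and updates $v.p$ to a vertex of level $>\ell_0$, so $v$ becomes a non-root. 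Otherwise $\ell(w)\le\ell_0<\ell(x)$, so in the first iteration $w$ updates $w.p$ to a vertex of level at least $\ell_0+1$, and then in the second iteration $v$ sees $w\in N(v)$ with $\ell(w.p)\ge\ell_0+1>\ell(v)$ and updates, again becoming a non-root. In either case, by Lemma~\ref{lem_non_root} (and since \textsc{shortcut}, \textsc{alter}, and Steps~(\ref{step_7})--(\ref{step_8}) preserve non-rootness) $v$ is a non-root at the end of the round. Taking expectations over the state through Step~(\ref{step_1}), on the event that $v$ is a root there and on $\neg A$, the failure probability is at most the probability that no $x\in R\setminus\{w\}$ levels up, i.e.\ at most $b^{-0.06}$; this yields $\Pr[\neg A\cap\neg B]\le b^{-0.06}$ and hence the lemma.

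The main obstacle I anticipate is the bookkeeping of which parent and neighbor facts survive which steps: confirming that $w$ and the chosen $x$ are still roots and that $w\in N(v)$, $x\in N(w)$ at the start of the \textsc{maxlink} in Step~(\ref{step_6}), and that the dichotomy on $\ell(w)$ really covers all cases without needing to pin down $\ell(w)$ relative to $\ell_0$ in advance. A second delicate point is getting the conditioning right so that the coins of $R\setminus\{w\}$ are independent of both $v$'s coin in Step~(\ref{step_2}) and the randomness used through Step~(\ref{step_1}); excluding $w$ from the set $R$ is what guarantees the leveled-up vertex $x$ is distinct from $w$, so that $\ell(x)>\ell(w)$ can actually be used in the first \textsc{maxlink} iteration.
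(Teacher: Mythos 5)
Your proof is correct and takes essentially the same approach as the paper: condition on $v$ not leveling up in Step~(\ref{step_2}), show that some budget-$b$ root in $N(w) \subseteq N(N(v))$ levels up with probability at least $1 - b^{-0.06}$, and conclude that two \textsc{maxlink} iterations in Step~(\ref{step_6}) force $v$ to become a non-root. The only difference is that the paper appeals to Lemma~\ref{lem_maxlink} twice where you unfold the same two iterations by hand with a case split on $\ell(w)$, and that you exclude $w$ from the coin-flipping set — a precaution that is harmless but actually unnecessary, since if $w$ itself levels up then $v$ sees $\ell(w.p) = \ell(w) > \ell(v)$ already in the first iteration.
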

\begin{proof}
    Assume $v$ does not increase level in Step~(\ref{step_2}).
    Let $w$ be any root in $N(v)$ after Step~(\ref{step_1}).
    Since each root $u \in N(w)$ with budget $b$ (thus level at least $\ell(v)$ by Corollary~\ref{cor_level_budget}) increases its level with probability $b^{-0.06}$ independently, with probability at least 
    $$ 1 - (1 - b^{-0.06})^{b^{0.1}} \ge 1 - \exp(-b^{0.04}) \ge 1 - n^{-b^{0.03}} \ge 1 - b^{-0.06} , $$
    at least one $u$ increases level to at least $\ell(v) + 1$ in Step~(\ref{step_2}), where in the second inequality we have used the fact that $b^{0.01} \ge b_1^{0.01} \ge \log n$ because $b_1 = \max\{m / n, \log^c n\} / \log^2 n$ and $c \ge 200$.
    Since $u \in N(N(v))$ after Step~(\ref{step_1}), by Lemma~\ref{lem_maxlink}, there is a $w' \in N(v)$ such that $\ell(w'.p) \ge \ell(v) + 1$ after the first iteration of \textsc{maxlink} in Step~(\ref{step_6}).
    Again by Lemma~\ref{lem_maxlink}, this implies that $v$ cannot be a root after the second iteration and the following \textsc{shortcut}.
\end{proof}

Using the above result, we can prove the following key lemma, leading to the total number of processors.

\begin{lemma} \label{lem_increase_level}
    For any root $v$ with budget $b$ at the beginning of any round,
    $b(v)$ is increased to $b^{1.01}$ in this round with probability at most $b^{-0.05}$.
\end{lemma}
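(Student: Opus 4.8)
The plan is to show that a root $v$ with budget $b$ has its budget increased to $b^{1.01}$ only if either (a) it increases level in Step~(\ref{step_2}), which happens with probability $b^{-0.06}$ by definition; or (b) it is marked dormant in Step~(\ref{step_4}) or Step~(\ref{step_5}) and remains a root through Step~(\ref{step_6}), so that the conditional level increase in Step~(\ref{step_7}) fires. Budget is increased (for a vertex that is still a root at the end of the round) exactly when its level increases, by Corollary~\ref{cor_level_budget}, so it suffices to bound $\Pr[\text{level of } v \text{ increases and } v \text{ is still a root}]$ by $b^{-0.05}$. I will bound the two contributions separately and take a union bound: $b^{-0.06}$ from Step~(\ref{step_2}), and something like $b^{-\Omega(1)}$ from the dormancy route, with $b^{-0.06} + b^{-\Omega(1)} \le b^{-0.05}$ for $b$ large enough (recall $b \ge b_1 \ge \log^3 n$ when $c$ is large).

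For the dormancy route I would condition on $v$ \emph{not} increasing its level in Step~(\ref{step_2}) (otherwise route (a) already covers it) and analyze why $v$ might become dormant in Step~(\ref{step_4}) or Step~(\ref{step_5}). First, if some root $w \in N(v)$ after Step~(\ref{step_1}) has at least $b^{0.1}$ neighbor roots of budget $b$, then by Lemma~\ref{lem_many_neighbors} $v$ is a non-root at the end of the round with probability $1 - b^{-0.06}$, in which case its budget is not increased (Step~(\ref{step_8}) does not apply to it); so this case contributes at most $b^{-0.06}$. Otherwise, every root $w \in N(v)$ after Step~(\ref{step_1}) has fewer than $b^{0.1}$ neighbor roots of budget $b$. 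I claim that in this case, with good probability there is no collision in $H(v)$ in either Step~(\ref{step_4}) or Step~(\ref{step_5}): the set of vertices that get hashed into $H(v)$ is $N(v) \cup N(N(v))$ restricted to roots of budget $b$, which has size at most $b^{0.1} \cdot b^{0.1} = b^{0.2}$; since $h$ is pairwise independent into a table of size $\sqrt{b}$, the expected number of colliding pairs is at most $\binom{b^{0.2}}{2} / \sqrt{b} \le b^{-0.1}$, so by Markov a collision occurs with probability at most $b^{-0.1}$. (One must be slightly careful that the vertex $v$ can only be marked dormant in Step~(\ref{step_4}) or Step~(\ref{step_5}) directly by a collision in its own table $H(v)$; the ``dormant vertex in $H(v)$'' clause of Step~(\ref{step_4}) propagates dormancy but does not cause budget increase of $v$ unless $v$ is still a root at the end, and dormancy that propagates into $v$ originates, tracing back, from some collision, so a union bound over the relevant roots handles it — this is where I expect the bookkeeping to be delicate.)

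Putting the pieces together: the event that $b(v)$ is increased to $b^{1.01}$ is contained in the union of (i) $v$ increases level in Step~(\ref{step_2}), probability $b^{-0.06}$; (ii) some $w \in N(v)$ has $\ge b^{0.1}$ neighbor roots of budget $b$ and yet $v$ does not become a non-root, probability at most $b^{-0.06}$ by Lemma~\ref{lem_many_neighbors}; (iii) all such $w$ have $< b^{0.1}$ neighbor roots of budget $b$ but a collision nonetheless occurs in $H(v)$ during Step~(\ref{step_4}) or Step~(\ref{step_5}), probability at most $b^{-0.1}$ (possibly with an extra union-bound factor over a bounded number of roots whose dormancy can reach $v$, still $b^{-\Omega(1)}$). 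Summing, the total is at most $2b^{-0.06} + b^{-0.1} \le b^{-0.05}$ for $b$ large enough, which is guaranteed since every root with positive level has budget at least $b_1 = \Omega(\log^3 n)$. The main obstacle I anticipate is case (iii): precisely characterizing which vertices' collisions can make $v$ dormant-and-still-a-root (the dormancy-propagation clause in Step~(\ref{step_4}) couples $v$ to its table neighbors, whose tables can in turn be polluted), and arguing that the neighbor-count bound ``$< b^{0.1}$ neighbor roots of budget $b$'' really does cap the number of budget-$b$ roots ever hashed into $H(v)$ across Steps~(\ref{step_3})–(\ref{step_5}) by $b^{0.2}$, using that $H(w) \subseteq N(w)$ after Step~(\ref{step_3}) and $\sqrt{b(v)}\cdot\sqrt{b(w)} = b(v)$ so the second-level hashing stays within $v$'s block.
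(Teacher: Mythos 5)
Your proposal follows the same decomposition as the paper's proof: the event ``$b(v)$ increases to $b^{1.01}$'' is broken into (i) the Step~(\ref{step_2}) vote, (ii) the case of a high-degree neighbor handled by Lemma~\ref{lem_many_neighbors}, and (iii) collisions in the remaining low-degree case, summed by a union bound. Your reduction to ``level increases \emph{and} $v$ is still a root at the end'' via Corollary~\ref{cor_level_budget} is also the right framing. The collision count $b^{0.1}\cdot b^{0.1}=b^{0.2}$ and the Markov/pairwise-independence bound $\binom{b^{0.2}}{2}/\sqrt{b}\le b^{-0.1}$ match the paper's Step~(\ref{step_5}) bound.

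The one real gap is exactly the one you flag in parentheses: a collision inside $H(v)$ is not the only way $v$ becomes dormant. The second clause of Step~(\ref{step_4}) marks $v$ dormant whenever some $w\in H(v)$ is dormant, so you additionally need to bound the event that some neighbor $w$'s own table $H(w)$ suffered a collision in Step~(\ref{step_3}). Your observation that the set of vertices hashed into $H(v)$ over Steps~(\ref{step_3})--(\ref{step_5}) has size $\le b^{0.2}$ bounds collisions \emph{in} $H(v)$, but it says nothing about collisions in $H(w)$. The paper resolves this cleanly: under the assumption that every root $w\in N(v)$ has $<b^{0.1}$ budget-$b$ neighbor roots, each such $w$ sees a collision in $H(w)$ during Step~(\ref{step_3}) with probability at most $(b^{0.1})^2/\sqrt b = b^{-0.3}$, and a union bound over the $|H(v)|\le b^{0.1}$ candidates $w$ gives an extra term $b^{-0.2}$. (Note also that dormancy cannot propagate deeper than one hop into $v$ here: the propagation clause in Step~(\ref{step_4}) runs only once per round, so only collisions in $H(v)$ itself or in $H(w)$ for $w\in H(v)$ matter — there is no long chain to trace.) Adding this missing term to your tally gives the paper's $b^{-0.06}+b^{-0.06}+b^{-0.3}+b^{-0.2}+b^{-0.1}\le b^{-0.05}$, which closes the argument.
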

\begin{proof}
	In Step~(\ref{step_2}), $\ell(v)$ increases with probability $b^{-0.06}$. 
    If $\ell(v)$ does increase here then it cannot increase again in Step~(\ref{step_7}), so we assume this is not the case (and apply a union bound at the end).

    If there is a root $w \in N(v)$ with at least $b^{0.1}$ neighbor roots with budget $b$ after Step~(\ref{step_1}), then $v$ is a root at the end of the round with probability at most $b^{-0.06}$ by the assumption and Lemma~\ref{lem_many_neighbors}. So we assume this is not the case.

    By the previous assumption we know that at most $b^{0.1}$ vertices are hashed into $H(v)$ in Step~(\ref{step_3}).
    By pairwise independency, with probability at most $(b^{0.1})^2 / \sqrt{b} = b^{-0.3}$ there is a collision as the table has size $\sqrt{b}$, which will increase $\ell(v)$ (cf. Steps (\ref{step_4},\ref{step_7})). 

    Now we assume that there is no collision in $H(v)$ in Step~(\ref{step_3}), which means $H(v)$ contains all the at most $b^{0.1}$ neighbor roots with budget $b$.
    By the same assumption, each such neighbor root $w$ has at most $b^{0.1}$ neighbor roots with budget $b$, so there is a collision in $H(w)$ in Step~(\ref{step_3}) with probability at most $(b^{0.1})^2 / \sqrt{b} = b^{-0.3}$. By a union bound over all the $|H(v)| \le b^{0.1}$ such vertices, $v$ is marked as dormant in the second statement of Step~(\ref{step_4}) (and will increase level in Step~(\ref{step_7})) with probability $b^{-0.2}$.

    It remains to assume that there is no collision in $H(v)$ nor in any $H(w)$ such that $w \in H(v)$ after Step~(\ref{step_4}).
    As each such table contains at most $b^{0.1}$ vertices, in Step~(\ref{step_5}) there are at most $b^{0.2}$ vertices to be hashed,
    resulting in a collision in $H(v)$ with probability at most $(b^{0.2})^2 / \sqrt{b} = b^{-0.1}$, which increases $\ell(v)$ in Step~(\ref{step_7}).

	Observe that only a root $v$ at the end of the round can increase its budget, and the increased budget must be $b^{1.01}$ since the level can increase by at most $1$ during the round and $b = b_{\ell(v)}$ at the beginning of the round by Corollary~\ref{cor_level_budget}.
    By a union bound over the events in each paragraph, $b(v)$ is increased to $b^{1.01}$ with probability at most $b^{-0.06} + b^{-0.06} + b^{-0.3} + b^{-0.2} + b^{-0.1} \le b^{-0.05}$.
\end{proof}

Finally, we are ready to prove an upper bound on the number of processors.

\begin{lemma} \label{lem_processor_number}
    With good probability, the first $O(\log n)$ rounds use $O(m)$ processors in total.
\end{lemma}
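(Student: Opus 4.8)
The plan is to bound the number of processors by $\sum_{\ell} N_\ell\, b_\ell$, where $N_\ell$ is the number of vertices that are \emph{ever a root at level $\ell$} during the first $O(\log n)$ rounds, and then to show this sum is $O(m)$ with good probability by a level-by-level geometric decay. The set-up reductions I would use: (i) by Lemma~\ref{lem_non_root} every vertex is a root for a prefix of the rounds and then a non-root forever, with non-decreasing (then frozen) level, so the roots at level $\ell$ at the end of \emph{any} round form a subset of the $N_\ell$ vertices counted above; (ii) by Corollary~\ref{cor_level_budget} a root at level $\ell$ has budget exactly $b_\ell$ and a non-root has budget at most $b_{\ell(v)}$, and only roots are (re)assigned blocks in Step~(\ref{step_8}), so the zone $(r,\ell)$ in the implementation (cf.\ the proof of Lemma~\ref{lem_round_constant_time}) only ever needs to hold at most $N_\ell$ blocks of size $b_\ell$. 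Hence it suffices to prove $N_\ell\, b_\ell = O(m/\log^2 n)$ for every $\ell$, since there are only $O(\log^2 n)$ zones inside the $\Theta(m)$-processor pool, and the remaining processors (one per edge, plus the $O(\log n)$-length arrays used by \textsc{maxlink}) are clearly $O(m)$ in total.

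For the base level, $N_1$ equals the number of ongoing vertices entering round $1$ (all of them are roots at level $1$ there, and level $1$ cannot be reached later): if $m/n>\log^c n$ then $N_1\le n$ and $b_1=(m/n)/\log^2 n$, while if $m/n\le\log^c n$ then $N_1\le m/\log^c n$ with good probability by Lemma~\ref{lem_stage_1} and $b_1=\log^{c-2}n$; in both cases $N_1 b_1\le m/\log^2 n$. The heart of the argument is the recursion $\E[N_{\ell+1}]\le b_\ell^{-0.04}\,\E[N_\ell]$. To prove it, fix $v$ and condition on $v$ ever being a root at level $\ell$: by (i) this occupies a contiguous block of at most $O(\log n)$ rounds, at the start of each of which $v$ is a root with budget exactly $b_\ell$ (Corollary~\ref{cor_level_budget}); for $v$ to ever become a root at level $\ell+1$ its budget must be raised from $b_\ell$ to $b_\ell^{1.01}$ in one of those rounds, which by Lemma~\ref{lem_increase_level} happens with probability at most $b_\ell^{-0.05}$ in each round, conditioned on the history. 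A union bound over the $O(\log n)$ rounds, together with $b_\ell\ge b_1\ge\log^{c-2}n$ (so $O(\log n)\le b_\ell^{0.01}$ for $c$ large), gives $\Pr[v \text{ ever a root at level }\ell+1 \mid v \text{ ever a root at level }\ell]\le b_\ell^{-0.04}$; summing over $v$ gives the recursion.

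Unrolling it with $b_{j+1}=b_j^{1.01}$, so that $\prod_{j=1}^{\ell-1}b_j=b_1^{100(1.01^{\ell-1}-1)}$, yields $\E[N_\ell b_\ell]\le N_1\, b_1^{\,4-3\cdot 1.01^{\ell-1}}$, which for every $\ell\ge 2$ is at most $N_1 b_1^{0.97}=(N_1 b_1)\,b_1^{-0.03}\le(m/\log^2 n)\,b_1^{-0.03}$. A per-level Markov bound then gives $\Pr[N_\ell b_\ell>m/\log^2 n]\le b_1^{-0.03}$, and a union bound over the $O(\log n)$ possible levels shows that with probability $1-O(\log n)\,b_1^{-0.03}$ we have $N_\ell b_\ell\le m/\log^2 n$ for all $\ell$ (for any $\ell$ with $b_\ell>m/\log^2 n$ this forces $N_\ell=0$, so no oversized block is ever requested). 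On this event every zone allocation fits and the first $O(\log n)$ rounds use $O(m)$ processors in total.

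I expect the main obstacle to be the bookkeeping around the failure probability $O(\log n)\,b_1^{-0.03}$: one must verify it really is $1/\text{poly}((m\log n)/n)$ in both regimes — $m/n\le\log^c n$, where it is $1/\text{poly}(\log n)$ while $(m\log n)/n$ is itself only $\text{poly}(\log n)$, and $m/n>\log^c n$, where the factor $b_1^{-0.03}\le\big((m/n)/\log^2 n\big)^{-0.03}$ must be shown to absorb the $O(\log n)$ and still beat $((m\log n)/n)^{-\Omega(1)}$ — which forces $c$ to be chosen sufficiently large, consistent with the $c\ge 200$ already used in Lemma~\ref{lem_many_neighbors}. A secondary subtlety is that Lemma~\ref{lem_increase_level} is phrased in terms of the budget, not the level, which is exactly why I route the recursion through ``ever a root at level $\ell$'' (so that ``$v$ becomes a root at level $\ell+1$'' genuinely implies ``$v$'s budget reaches $b_\ell^{1.01}$'') rather than through ``$v$ ever reaches level $\ell$''.
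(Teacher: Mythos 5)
Your proof is correct and follows essentially the same route as the paper's: the paper defines $n_\ell$ as the number of vertices that ever reach budget $b_\ell$ (equivalent to your ``ever a root at level $\ell$'' by Corollary~\ref{cor_level_budget}), derives the same per-level recursion $\E[n_{\ell+1}\mid n_\ell]\le n_\ell b_\ell^{-0.04}$ from Lemma~\ref{lem_increase_level} with a union bound over $O(\log n)$ rounds, and closes with a Markov bound and union bound over $O(\log n)$ levels to get $n_\ell b_\ell\le n_1 b_1=O(m/\log^2 n)$ for all $\ell$ with good probability. The only cosmetic differences are that the paper phrases the Markov step conditionally as $\Pr[n_{\ell+1}b_{\ell+1}>n_\ell b_\ell\mid n_\ell]\le b_\ell^{-0.03}$ and then telescopes, whereas you unroll the expectation directly; both give the same $1-O(\log n)\,b_1^{-0.03}$ failure probability.
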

\begin{proof}
    The number of processors for (altered) original edges and vertices are clearly $O(m)$ over all rounds (where each vertex processor needs $O(1)$ private memory to store the corresponding parent, (renamed) vertex id, the two words for pairwise independent hash function, level, and budget).
    By the proof of Lemma~\ref{lem_round_constant_time}, with good probability the processor allocation in Step~(\ref{step_8}) always succeeds and has a multiplicative factor of $2$ in the number of processors allocated before mapping the indexed vertices to blocks (cf. Definition~\ref{def_apx_compaction}).
    Therefore, we only need to bound the number of processors in blocks that are assigned to a vertex in Step~(\ref{step_8}) in all $O(\log n)$ rounds.

    For any positive integer $\ell$, let $n_{\ell}$ be the number of vertices that ever reach budget $b_{\ell}$ during the first $O(\log n)$ rounds.
    For any vertex $v$ that ever reaches budget $b_{\ell}$, it has exactly one chance to reach budget $b_{\ell + 1}$ in a round if $v$ is a root in that round, which happens with probability at most ${b_{\ell}}^{-0.05}$ by Lemma~\ref{lem_increase_level}.
    By a union bound over all $O(\log n)$ rounds, $v$ reaches budget $b_{\ell + 1}$ with probability at most $O(\log n) \cdot {b_{\ell}}^{-0.05} \le {b_{\ell}}^{-0.04}$ when $c \ge 200$, since $b_{\ell} \ge b_1 \ge \log^{c-2} n$.
    We obtain $\E[n_{\ell + 1} \mid n_{\ell}] \le n_{\ell} \cdot {b_{\ell}}^{-0.04}$, thus by $b_{\ell+1} = {b_{\ell}}^{1.01}$, it must be:
    \begin{equation*}
        \E[n_{\ell + 1} b_{\ell + 1} \mid n_{\ell}] \le n_{\ell} \cdot {b_{\ell}}^{-0.04} \cdot {b_{\ell}}^{1.01} = n_{\ell} b_{\ell} \cdot {b_{\ell}}^{-0.03}.
    \end{equation*}

    By Markov's inequality, $n_{\ell + 1} b_{\ell + 1} \le n_{\ell} b_{\ell}$ with probability at least $1 - {b_{\ell}}^{-0.03} \ge 1 - {b_1}^{-0.03}$. 
    By a union bound on all $\ell \in O(\log n)$, $n_{\ell} b_{\ell} \le n_1 b_1$ for all $\ell \in O(\log n)$ with probability at least $1 - O(\log n) \cdot {b_1}^{-0.03} \ge 1 - {b_1}^{-0.01}$, which is $1 - 1 / \text{poly}((m\log n) / n)$ by $b_1 = \max\{m / n, \log^c n\} / \log^2 n$ and $c \ge 100$.
    So the number of new allocated processors for vertices with any budget in any of the first $O(\log n)$ rounds is at most $n_1 b_1$ with good probability.

    Recall from Lemma~\ref{lem_stage_1} that with good probability,
    \begin{align*}
        n_1 \cdot b_1 &= \min\{n, 2m / \log^c n\} \cdot \max\{m/n, \log^c n\} / \log^2 n \\
        &= O(m / \log^2 n) .
    \end{align*}
    Finally, by a union bound over all the $O(\log n)$ levels and $O(\log n)$ rounds, with good probability the total number of processors is $O(m)$.
\end{proof}

\subsection{Diameter Reduction} \label{sec_diameter_reduction}

Let $R \coloneqq O(\log d + \log\log_{m/n} n)$ where the constant hidden in $O(\cdot)$ will be determined later in this section.
The goal is to prove that $O(R)$ rounds of \textsc{expand-maxlink} suffice to reduce the diameter of the graph to $O(1)$ and flatten all trees with good probability.

In a high level, our algorithm/proof is divided into the following $3$ stages/lemmas:
\begin{lemma} \label{lem_dr1}
    With good probability, after round $R$, the diameter of the graph is $O(R)$.
\end{lemma}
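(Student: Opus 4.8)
The plan is to track a single arbitrary shortest path $P$ in the original graph and show that after $R$ rounds either its endpoints are within distance $O(R)$ in the current graph, or enough level increases have happened along $P$ that the path has effectively collapsed. The analysis mirrors the potential/coin argument of Behnezhad et al. that was sketched in \S\ref{sec_overview}, but adapted to our deterministic one-round Observation~\ref{ob_faster_cc} (Lemma~\ref{lem_2hops}) and to the \textsc{maxlink}-based contraction. First I would fix a shortest path $P = (v_0, v_1, \dots, v_d)$ in the input graph. By Lemma~\ref{lem_tree_invariant} and Lemma~\ref{lem_neighbor_parent_path}, at the end of every round there is a corresponding path in the current graph connecting (the current parents/representatives of) $v_0$ and $v_d$; I would track this evolving path $P_r$ round by round, where each step either shortcuts pairs of consecutive vertices whose middle vertex does not increase level, or increases the level of a kept vertex.

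The key quantitative step is the coin argument. In round $1$ place one coin on each of the $\le d+1$ vertices of $P_1$. When constructing $P_{r+1}$ from $P_r$: for any vertex $u$ on $P_r$ that does not increase level in the round, Observation~\ref{ob_faster_cc}/Lemma~\ref{lem_2hops} guarantees that its distance-$2$ neighbor along the path becomes a distance-$1$ neighbor (and the skipped middle vertex has level $\le \ell(u)$), so at least one of $u$'s path-neighbors is skipped; the skipped vertex passes its coins to its two path-neighbors (evenly). I would then prove by induction that every vertex $v$ still on $P_r$ carries at least $(1.01)^{r - \ell(v)}$ coins (or whatever base the budget growth $b_{\ell+1} = b_\ell^{1.01}$ dictates). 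To close the argument I would invoke the fix described in \S\ref{sec_overview} (formally Lemma~\ref{lem_dr2}): only the last vertex of $P_r$ can violate the coin bound, and we \emph{drop} it onto its predecessor when they form a frozen pair, which costs only $O(R)$ extra concatenated edges overall. Since the maximal level is $O(\log\log_{m/n} n) = O(R)$, a vertex on a path of length $> 1$ in round $R = \Theta(\log d + \log\log_{m/n} n)$ would need more than $d+1$ coins, a contradiction; hence the path has collapsed to length $O(R)$.

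The bookkeeping subtlety — and the step I expect to be the main obstacle — is handling the fact that the graph, its vertices, and their levels all change between rounds: a vertex $v_i$ on $P_r$ may already be a non-root, in which case it is "represented" by $v_i.p$, and \textsc{alter} plus \textsc{maxlink} both rewrite edges and parents concurrently. I need to argue carefully that the path-tracking is consistent (each $P_{r+1}$ is genuinely obtained from $P_r$ by local skips, not by some global rerouting), that Lemma~\ref{lem_2hops} applies to the current roots on the path with their current budgets, and that the "skipped vertex has level $\le \ell(u)$" clause is exactly what lets the coin-passing preserve the invariant $\text{coins}(v) \ge (1.01)^{r-\ell(v)}$. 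I would lean on Lemma~\ref{lem_non_root} (levels strictly increase toward roots, so no cycles and representatives are well-defined) and Lemma~\ref{lem_maxlink} (a neighbor with a higher-level parent forces a vertex to become a non-root) to make these steps rigorous.

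Finally, I would take a union bound. Observation~\ref{ob_faster_cc} is deterministic, so the only randomness is in the level-increase coin flips of Step~(\ref{step_2}); the coin-counting inequality must hold with good probability, which follows from the fact that a vertex with budget $b$ that "should" see a level increase fails to do so only with probability $b^{-\Omega(1)}$, and $b \ge b_1 = \max\{m/n, \log^c n\}/\log^2 n$ is polylogarithmically large, so a union bound over all $O(R) = O(\log n)$ rounds and all $O(\log n)$ levels (\emph{not} over the $\Omega(n^2)$ shortest paths, since the per-vertex argument is path-independent) gives failure probability $1/\text{poly}((m\log n)/n)$. Combining, after round $R$ every original shortest path has shrunk to length $O(R)$ in the current graph with good probability, which is exactly the claim that the diameter is $O(R)$.
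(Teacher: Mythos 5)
Your proposal follows essentially the same route as the paper: track a single shortest path round-by-round, assign a potential/coin to each vertex, show that a vertex which stays a root without a level increase absorbs its skipped neighbor's coins, and derive a contradiction against the total coin budget $d+1$ once $R$ exceeds $\log d + L$ (the paper formalizes this as Definition~\ref{def_path_construction}, Definition~\ref{def_path_potential}, and Lemma~\ref{lem_potential_lb}). You also correctly identify that the only randomness is in Lemma~\ref{lem_max_level} and the level-increase coin flips, so the union bound is over rounds and levels, not shortest paths.

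Two imprecisions worth noting. First, the base of the exponential should be $2$, not $1.01$: the paper's Lemma~\ref{lem_2hops} is a one-round statement (unlike Behnezhad et al.'s $4$-round Observation~\ref{ob_soheil1}), so a vertex that stays put doubles its potential each round, and the budget growth rate $b_{\ell+1}=b_{\ell}^{1.01}$ plays no role in the potential recurrence. Second, the ``drop the last vertex'' fix is not deferred to Lemma~\ref{lem_dr2}; it is built directly into this lemma's proof via the \emph{passive}-vertex marking and the concatenation in Phases~(\ref{phase_4})--(\ref{phase_6}) of Definition~\ref{def_path_construction}, which is what makes $|P_{r+1}| \le |P_{r,2}| + 2r$ and hence $O(R)$ final length. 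Lemma~\ref{lem_dr2} then takes over from an $O(R)$-length path with a simpler first-vertex argument. Neither imprecision changes the asymptotics, but they are the details you flagged as ``the main obstacle,'' and they need to be pinned down exactly as the paper does to get a clean induction.
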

\begin{lemma} \label{lem_dr2}
    With good probability, after round $O(R)$, the diameter of the graph is at most $1$.
\end{lemma}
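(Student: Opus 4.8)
The plan is to reduce to a single arbitrary shortest path, run a potential (coin) argument along it with a \emph{drop-the-tail} correction, and then take the worst case over all pairs of vertices in a common component (no probabilistic union is needed for the path argument itself, since its engine is deterministic). By Lemma~\ref{lem_dr1}, with good probability the graph has diameter $O(R)$ after round $R$, so it suffices to show that $O(R)$ further rounds drive the diameter down to $1$. Fix two vertices $u,w$ of one component and, at the start of round $R$, a shortest path $P=(v_0{=}u,v_1,\dots,v_t{=}w)$ of length $t=O(R)$. As in the proof of Lemma~\ref{lem_dr1}, I would track a \emph{descendant path} $P_i$ in the graph of round $R+i$: $P_0=P$, the endpoints of $P_i$ are the current representatives of $u$ and $w$, consecutive vertices of $P_i$ are at distance $\le 1$, and these properties are preserved under \textsc{alter}, \textsc{maxlink}, and \textsc{shortcut} via Lemma~\ref{lem_neighbor_parent_path}; distances along $P_i$ never increase. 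The driver is the one-round, \emph{deterministic} Observation~\ref{ob_faster_cc} (formally Lemma~\ref{lem_2hops}): whenever the level of a vertex $x$ on $P_i$ does not increase in a round, its distance-$2$ vertex along $P_i$ comes to distance $1$, so the vertex between them can be \emph{skipped}, and that skipped vertex has level $\le\ell(x)$.

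Put one coin on each of the $t+1$ vertices of $P$; redistribution and dropping conserve coins, so the total stays $O(R)$. I would maintain the invariant that every surviving vertex $v$ on $P_i$ carries at least $1.1^{\,i-c_0\ell(v)}$ coins for a suitable constant $c_0\ge 1$: when a kept $x$ has a skipped path-neighbor $y$, $y$ gives half its coins to $x$, and since $\ell(y)\le\ell(x)$ this lifts $x$ from $\ge 1.1^{\,i-c_0\ell(x)}$ to $\ge \tfrac32\cdot 1.1^{\,i-c_0\ell(x)}\ge 1.1^{\,i+1-c_0\ell(x)}$; if instead $\ell(x)$ increases by $1$, the target exponent drops by $c_0\ge1$ and the bare induction hypothesis suffices. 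The only failure mode is the phenomenon in the counter-example of \S\ref{sec_overview}: the frozen pairs propagate to the end of $P_i$, leaving the \emph{last} vertex $z$ with no skipped neighbor; this I would handle by \emph{dropping} $z$ — removing it from the tracked path and handing \emph{all} of its coins to its (kept) predecessor $z'$, which works because $\ell(z)\le\ell(z')$ by the level part of Observation~\ref{ob_faster_cc}. Since every level is $\le L=O(\log\log_{m/n}n)$ with good probability (this is implicit in the budget analysis of \S\ref{sec_space}) and the total coin count is $O(R)$, the invariant forces $|P_i|\le 1$ after $O(\log R + L)\le O(R)$ rounds, as two surviving vertices would each hold $>t+1$ coins. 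At that point we have dropped only $O(R)$ vertices, forming a chain each within distance $1$ of the next at the time it was dropped; so $\mathrm{dist}(u,w)=O(R)$ still holds, re-deriving Lemma~\ref{lem_dr1}. I would then run $O(R)$ more rounds and apply Observation~\ref{ob_faster_cc} directly to the never-dropped endpoint $u$: in each round either $\ell(\mathrm{rep}(u))$ increases (at most $L$ times total) or the successor of $\mathrm{rep}(u)$ on the tracked path to $w$ is skipped (at most $O(R)$ times, the path being of length $O(R)$), so for a large enough constant in $R$ the tracked path from $u$ to $w$ collapses to length $\le 1$; and $\mathrm{dist}(\mathrm{rep}(u),\mathrm{rep}(w))\le 1$ is preserved by later \textsc{alter}s and links, so $u$ and $w$ stay within distance $1$.

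Taking the worst case over all $O(n^2)$ pairs $u,w$ in a common component, the graph then has diameter $\le 1$ after $O(R)$ rounds. The probabilistic ingredients are only: that every root owns its block and every round runs in $O(1)$ time (Lemma~\ref{lem_round_constant_time}), that no level ever exceeds $L$ (the processor/level analysis of \S\ref{sec_space}, in particular Lemma~\ref{lem_processor_number}), and the premise of Lemma~\ref{lem_dr1} — all with good probability, and the theorem's good-probability conclusion follows by a union bound over these $O(\log n)$-many events. The main obstacle is the middle paragraph: getting the drop-the-tail bookkeeping exactly right — verifying that the frozen-pair propagation really does confine the invariant's only failure to the last vertex of $P_i$, that each drop both conserves coins and respects the level ordering, and that the two successive applications of Observation~\ref{ob_faster_cc} compose with the "distances along descendant paths never increase" property so that the $O(R)$-long dropped suffix is genuinely reabsorbed in the second phase rather than merely reappearing as residual diameter.
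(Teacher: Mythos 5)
Your final argument is essentially the paper's proof, but the middle of your proposal does a lot of unnecessary work. Since you already invoke Lemma~\ref{lem_dr1} to get a length-$O(R)$ path after round $R$, the whole coin/potential argument with drop-the-tail (your middle paragraph) is redundant --- by your own account it only ``re-derives Lemma~\ref{lem_dr1}.'' Once you have a path of length $O(R)$, you need no potentials at all: the paper defines a simplified path construction (Definition~\ref{def_path_construction_simper}) in which only the second vertex of the path is ever candidate for removal, and then argues exactly as in your final paragraph --- in each round, either the representative of the first vertex increases its level (which, by Lemma~\ref{lem_non_root} and the first three paragraphs of the proof of Lemma~\ref{lem_potential_lb}, can happen at most $L$ times because levels of corresponding vertices are non-decreasing and bounded by $L$ via Lemma~\ref{lem_max_level}), or by Lemma~\ref{lem_2hops} the second vertex is removed (at most $O(R)$ times). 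After $O(R)+L=O(R)$ such rounds the path has $\le 2$ vertices, giving diameter $\le 1$, and the only probabilistic ingredients are Lemmas~\ref{lem_dr1} and~\ref{lem_max_level}, union-bounded. So the drop-the-tail bookkeeping you flag as the main obstacle is entirely avoided here; it lives in the proof of Lemma~\ref{lem_dr1} (via the active/passive machinery of Definitions~\ref{def_path_construction} and~\ref{def_path_potential}), not in the proof of this lemma. Also note one small slip: the shortest path $P$ of length $O(R)$ exists \emph{after} (not ``at the start of'') round $R$.
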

\begin{lemma} \label{lem_flatten}
    With good probability, after round $O(R)$, the diameter of the graph is $O(1)$ and all trees are flat.
\end{lemma}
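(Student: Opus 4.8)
The plan is to bootstrap from Lemma~\ref{lem_dr2}. Work throughout under the good events of Lemmas~\ref{lem_dr2}, \ref{lem_round_constant_time}, and \ref{lem_processor_number}, whose conjunction still fails with probability only $1/\text{poly}((m\log n)/n)$ by a union bound over the $\text{poly}(\log n)$ levels and rounds involved. Under these events: (a) there is a round $R_1 = O(R)$ after which the current graph on ongoing vertices has diameter at most $1$; (b) since the total number of processors is $O(m)$, no vertex ever owns a block of size $\omega(m)$, so by the definition of budgets every level stays bounded by $\ell_{\max} = O(\log\log_{m/n} n) = O(R)$; and (c) by Lemma~\ref{lem_non_root} the levels strictly increase along every leaf-to-root path in the labeled digraph, so at the end of every round each tree has height at most $\ell_{\max}$.

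First I would check that once the diameter has dropped to $1$ it never again exceeds $O(1)$. The only operations that touch the current graph are the two \textsc{alter}s (Steps~(\ref{step_1}) and~(\ref{step_6})) and the edge insertions of Step~(\ref{step_5}); an \textsc{alter} replaces the graph by its image under the parent map $v\mapsto v.p$, a graph homomorphism, and Step~(\ref{step_5}) only adds edges, and neither can increase the diameter (the same diameter monotonicity for \textsc{alter} and edge addition already used elsewhere in the paper), while \textsc{maxlink} and \textsc{shortcut} change only the labeled digraph. Hence from round $R_1$ on, the ongoing roots of each component form a clique in the current graph.

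The core of the argument is to show that $O(\log \ell_{\max}) = O(\log\log\log_{m/n}n) \subseteq O(R)$ further rounds flatten all trees. I would first observe that, once the diameter is at most $1$, the first \textsc{maxlink} iteration of each round reparents every ongoing root of a component $C$ whose level is below the component maximum $\ell^{*}_{C}$ onto an $\ell^{*}_{C}$-level neighbor (which, since all vertices of level $\ell^{*}_{C}$ are roots and the ongoing roots form a clique, is another ongoing root of $C$); consequently, from round $R_1+1$ on, the ongoing roots of each component carry a common level at the start of each round, that common level is non-decreasing, and their number is non-increasing, so the ``top'' toward which the trees are compressed is stable up to merges. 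I would then establish the key per-round estimate: in any round after $R_1$, the four \textsc{maxlink} iterations do not increase the height of any tree, and the subsequent \textsc{shortcut} at least halves it. The halving is the standard shortcut fact recalled in \S{\ref{sec_bb}}. For the non-increase, note that a \textsc{maxlink} iteration reparents each vertex $v$ to the old parent of one of $v$'s current neighbors, which sits strictly closer to its root than that neighbor; together with Lemma~\ref{lem_non_root} this should give that the depth of $v$ afterward is at most $\min\{\ell^{*}_{C}-\ell(v),\ \text{the maximum depth in }C\text{ beforehand}\}$. Iterating the per-round estimate from the at-most-$\ell_{\max}$ height guaranteed after round $R_1$, every tree is flat after $\lceil\log_{2}(\ell_{\max}+1)\rceil+O(1)=O(R)$ more rounds, and a flat state is a fixed point (\textsc{shortcut} is then idempotent, and once all edges sit on roots \textsc{maxlink} acts only between equal-level roots and does nothing). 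Combined with the diameter bound this is exactly the lemma; one more union bound over the conditioned events keeps the probability good.

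The step I expect to be the main obstacle is making ``\textsc{maxlink} does not increase a tree's height'' rigorous: because a \textsc{maxlink} iteration reads all old parent pointers and writes all new ones simultaneously, the depth of a vertex afterward is measured in a labeled digraph that is itself changing, so one cannot directly argue ``$v$ jumps onto something shallow.'' I intend to handle this by combining two facts robust to simultaneous updates --- the invariant $0\le\ell(v)<\ell(v.p)$ of Lemma~\ref{lem_non_root}, which caps depth at $\ell^{*}_{C}-\ell(v)$ regardless of update order, and the structural fact that the new parent of $v$ is $p^{\mathrm{old}}(w)$ for a current neighbor $w$ of $v$, so its old depth is at most $(\text{old depth of }w)-1$ --- and then inducting along the new-tree paths from the leaves, using strict level increase to terminate the recursion. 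A secondary, lower-risk point is re-checking that the diameter monotonicity of \textsc{alter}, stated in \S{\ref{sec_cc_alg}}--\S{\ref{sec_sf_alg}} for algorithms using only direct links, is unaffected by the presence of \textsc{maxlink}'s parent links and by the shrinking set of ongoing roots.
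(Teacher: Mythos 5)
The proposal's first two steps—conditioning on the good events of Lemmas~\ref{lem_dr2}, \ref{lem_round_constant_time}, \ref{lem_processor_number} to get diameter $\le 1$ after round $R_1=O(R)$, and observing that an \textsc{alter} (being the image of the current graph under $v\mapsto v.p$) and the edge insertions of Step~(\ref{step_5}) cannot increase the diameter—are sound and match the paper. The height cap $\le \ell_{\max}$ via Lemma~\ref{lem_non_root} is also correct. The gap is in the ``key per-round estimate'': the claim that \textsc{maxlink} iterations do not increase the height of any tree is \emph{false}, and the paper's own proof explicitly acknowledges this. Concretely, once the diameter is $1$, all roots have the maximal level $\ell^*$ of their component after a \textsc{maxlink}; but if $\ell^*$ then increases (Step~(\ref{step_2}) or Step~(\ref{step_7})), the next \textsc{maxlink} reparents every root of level $\ell^*$ that did not raise its level onto a root that did, and each such reparented root drags its whole subtree down by one level, so the maximum height can go from $\xi$ to $\xi+1$ before the \textsc{shortcut} halves it. Your proposed rescue---using the depth cap $\ell^*_C-\ell(v)$ together with ``the new parent of $v$ is $p^{\mathrm{old}}(w)$ which has old depth $(\text{old depth of }w)-1$''---does not fix this, because when a \emph{root} $r_2$ reparents to another root $r_1$ both old depths are $0$, yet $r_2$'s new depth is $1$ and every descendant of $r_2$ is pushed one deeper; the level cap only bounds height by $\ell_{\max}$ uniformly and gives no per-round decrease.

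The consequence is that your flattening phase cannot run in $O(\log\ell_{\max})$ rounds: a tree can sit at height $2$ or $3$, get bumped to $3$ or $4$ by a \textsc{maxlink} every time $\ell^*$ increases, be halved back, and oscillate. The paper closes this by a two-regime argument: while $\xi\ge 4$, the net per-round map is $\xi\mapsto\lceil \xi/2\rceil+1\le(4/5)\xi$, so $O(\log L)$ rounds bring the height to $\le 3$; once the height is small, it only fails to decrease in a round where $\ell^*$ strictly increases, and since $\ell^*\le L$ (Lemma~\ref{lem_max_level}) this can happen at most $L$ more times, after which the heights drop to $2$ and then to $1$. The total is $2L+\log_{5/4}L=O(L)=O(R)$, not $O(\log\ell_{\max})$. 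Both are $O(R)$ so the lemma's \emph{statement} survives your weaker quantitative goal, but the argument as written does not: you need the ``$+1$ only when $\ell^*$ increases'' bookkeeping and the bound on the number of such increases. Your termination argument (``a flat state is a fixed point'') should also be replaced by the paper's observation that a tree of height $1$ after Step~(\ref{step_1}) must already have had height $1$ at the end of the previous round, which is what actually makes the break condition fire.
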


Lemma~\ref{lem_flatten} implies that the repeat loop must end in $O(R)$ rounds with good probability by the break condition. 

\subsubsection{Path Construction} \label{sec_path_construction}

To formalize and quantify the effect of reducing the diameter,
consider any shortest path $P$ in the original input graph $G$.
When (possibly) running Vanilla algorithm on $G$ in \textsc{compact}, an \textsc{alter} replaces each vertex on $P$ by its parent, resulting in a path $P'$ of the same length as $P$.
(Note that $P'$ might not be a shortest path in the current graph and can contain loops.)
Each subsequent \textsc{alter} in \textsc{compact}
continues to replace the vertices on the old path by their parents to get a new path.
Define $P_1$ to be the path obtained from $P$ by the above process at the beginning of round $1$, by $|P| \le d$ we immediately get:
\begin{corollary}\label{cor_p1}
    For any $P_1$ corresponding to a shortest path in $G$, $|P_1| \le d$.
\end{corollary}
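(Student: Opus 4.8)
The plan is to show that the path-rewriting process described just above---running Vanilla algorithm inside \textsc{compact} and applying the \textsc{alter}s---never increases the length of the path, so that $|P_1| \le |P|$, and then to invoke the hypothesis that every component of $G$ has diameter at most $d$.

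First I would record that since $P$ is a shortest path in $G$ and each component has diameter at most $d$, we have $|P| \le d$. Next I would track the effect of a single \textsc{alter} on the current path. If the current path is $(v_0, v_1, \dots, v_k)$ with $\{v_j, v_{j+1}\}$ an edge for each $j$, then after the \textsc{alter} each $v_j$ is replaced by $v_j.p$, and the altered edge $\{v_j.p, v_{j+1}.p\}$ witnesses that consecutive images remain adjacent (or coincide, which is exactly the ``loop'' phenomenon already flagged in the text). Collapsing each maximal run of equal consecutive images---equivalently, deleting the self-loops an \textsc{alter} may introduce---yields a genuine walk in the current graph whose number of edges is at most $k$. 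Hence the length is non-increasing under each \textsc{alter}. Since the only operations applied to $P$ between the input graph and the start of round $1$ are \textsc{alter}s (those inside Vanilla algorithm, run within \textsc{prepare}/\textsc{compact})---the \textsc{random-vote}, \textsc{link}, and \textsc{shortcut} steps merely change parent pointers and do not themselves rewrite the path---iterating the bound over all these \textsc{alter}s gives $|P_1| \le |P| \le d$.

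The only points requiring care are definitional: fixing the convention that $|P_1|$ counts edges after collapsing self-loops, and checking that an \textsc{alter} genuinely sends every edge $\{v,w\}$ of the old path to an edge $\{v.p, w.p\}$ of the new graph (which is precisely how \textsc{alter} is defined in \S\ref{sec_bb}). I do not expect a real obstacle here; the corollary is an immediate consequence of the monotone-contraction nature of \textsc{alter} combined with the diameter bound on $G$, and the argument is short enough to state in a couple of sentences.
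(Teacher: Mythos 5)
Your proof is correct and matches the paper's: the paper defines $P_1$ as the image of $P$ under the successive \textsc{alter}s inside \textsc{compact}, explicitly allowing self-loops so that $|P_1| = |P|$, and the corollary then follows immediately from $|P| \le d$. The only difference is a cosmetic one of convention—you collapse self-loops, giving $|P_1| \le |P|$, whereas the paper retains them and asserts equality of length—but either way the bound $|P_1| \le |P| \le d$ holds, and your accounting of why only \textsc{alter} (not \textsc{random-vote}, \textsc{link}, or \textsc{shortcut}) touches the path is exactly the observation the paper relies on.
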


In the following, we shall fix a shortest path $P$ in the original graph and consider its corresponding paths over rounds.
Each \textsc{alter} (cf. Steps~(\ref{step_1},\ref{step_6})) in each round does such replacements to the old paths, but we also add edges to the graph for reducing the diameter of the current graph:
for any vertices $v$ and $w$ on path $P'$, if the current graph contains edge $(v, w)$, then all vertices exclusively between $v$ and $w$ can be removed from $P'$, which still results in a valid path in the current graph from the first to the last vertex of $P'$, reducing the length.
If all such paths reduce their lengths to at most $d'$, the diameter of the current graph is at most $d'$.
Formally, we have the following inductive construction of paths for diameter reduction:
\begin{definition}[path construction] \label{def_path_construction}
    Let all vertices on $P_1$ be \emph{active}.
    For any positive integer $r$, given path $P_{r}$ with at least $4$ active vertices at the beginning of round $r$, \textsc{expand-maxlink} constructs $P_{r+1}$ by the following $7$ phases:
    \begin{enumerate}
        \item The \textsc{alter} in Step~(\ref{step_1}) replaces each vertex $v$ on $P_{r}$ by $v' \coloneqq v.p$ to get path $P_{r,1}$. For any $v'$ on $P_{r,1}$, let $\underline{v'}$ be on $P_{r}$ such that $\underline{v'}.p = v'$.\footnote{Semantically, there might be more than one $\underline{v'}$ with the same parent $v'$, but our reference always comes with an index on the replaced path, whose corresponding position in the old path is unique. All subsequent names follow this manner.} \label{phase_1}
        \item Let the subpath containing all active vertices on $P_{r,1}$ be $P_{r,2}$. \label{phase_2}
        \item After Step~(\ref{step_5}), set $i$ as $1$, and repeat the following until $i \ge |P_{r,2}| - 1$: let $v' \coloneqq P_{r,2}(i)$,
            if $\underline{v'}$ is a root and does not increase level during round $r$ then: if the current graph contains edge $(v', P_{r,2}(i+2))$ then mark $P_{r,2}(i+1)$ as \emph{skipped} and set $i$ as $i+2$;
            else set $i$ as $i+1$. \label{phase_3}
        \item For each $j \in [i + 1, |P_{r,2}| + 1]$, mark $P_{r,2}(j)$ as \emph{passive}. \label{phase_4}
        \item Remove all skipped and passive vertices from $P_{r,2}$ to get path $P_{r,5}$. \label{phase_5}
        \item Concatenate $P_{r,5}$ with all passive vertices on $P_{r,1}$ and $P_{r,2}$ to get path $P_{r,6}$. \label{phase_6}
        \item The \textsc{alter} in Step~(\ref{step_6}) replaces each vertex $v$ on $P_{r,6}$ by $v.p$ to get path $P_{r+1}$. \label{phase_7}
    \end{enumerate}
    For any vertex $v$ on $P_r$ that is replaced by $v'$ in Phase~(\ref{phase_1}), if $v'$ is not skipped in Phase~(\ref{phase_3}), then let $\overline{v}$ be the vertex replacing $v'$ in Phase~(\ref{phase_7}), and call $\overline{v}$ the \emph{corresponding vertex} of $v$ in round $r+1$.
\end{definition}

\begin{lemma} \label{lem_valid_path}
    For any non-negative integer $r$, the $P_{r+1}$ constructed in Definition~\ref{def_path_construction} is a valid path in the graph and all passive vertices are consecutive from the successor of the last active vertex to the end of $P_{r+1}$.
\end{lemma}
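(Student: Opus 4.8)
The plan is to trace through the seven phases of Definition~\ref{def_path_construction} and check that each one maps a valid path in the current graph to a valid path in the (possibly updated) current graph, while maintaining the positional invariant that the passive vertices form a consecutive suffix starting right after the last active vertex. I would set this up as an induction on $r$: at $r=0$ (equivalently, the start of round $1$) every vertex on $P_1$ is active by definition, so there are no passive or skipped vertices and $P_1$ is a genuine path in the graph by Corollary~\ref{cor_p1} and the construction preceding it; this is the base case. For the inductive step I would go phase by phase.

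First, Phases~(\ref{phase_1}) and (\ref{phase_7}) are applications of \textsc{alter}: replacing every vertex $v$ on a path by $v.p$ turns each edge $(v,w)$ of the old path into the edge $\{v.p, w.p\}$, which is exactly the altered edge that \textsc{alter} creates, so the image is a valid walk in the new graph (loops and repeats are explicitly allowed, as the excerpt notes). Phases~(\ref{phase_2}) and (\ref{phase_5}) take subpaths or delete vertices; deleting an interior vertex $P_{r,2}(i+1)$ in Phase~(\ref{phase_5}) is legitimate precisely because it was marked skipped in Phase~(\ref{phase_3}), which happened only when the current graph was verified to contain the shortcut edge $(P_{r,2}(i), P_{r,2}(i+2))$ — so after deletion consecutive vertices are still adjacent. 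Deleting the passive suffix in Phase~(\ref{phase_5}) just truncates the path, which is fine. The one genuinely delicate phase is Phase~(\ref{phase_6}): we must argue that re-concatenating $P_{r,5}$ with the previously-removed passive vertices yields a connected path. Here I would use that the passive vertices, by Phase~(\ref{phase_4}), are exactly $P_{r,2}(i+1), P_{r,2}(i+2), \dots$ through the end — a consecutive block at the tail of $P_{r,2}$ — and the last vertex of $P_{r,5}$ is $P_{r,2}(i)$ (or $P_{r,2}(i+1)$ if it was not skipped; one has to be slightly careful about whether the loop exited with $i$ pointing at the second-to-last or last active vertex), so $P_{r,2}(i)$ and $P_{r,2}(i+1)$ are adjacent on $P_{r,2}$ and hence the concatenation is a valid path. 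The positional claim then follows because we reattach the passive block immediately after $P_{r,2}(i)$, the successor of the last active vertex surviving in $P_{r,5}$.

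For the second assertion — that all passive vertices are consecutive from the successor of the last active vertex to the end of $P_{r+1}$ — I would track the active/passive status through the phases. Active vertices are defined on $P_1$ and, reading Phase~(\ref{phase_3}) and the definition of \emph{corresponding vertex}, a vertex's corresponding vertex in round $r+1$ inherits active status unless it was skipped; passive status is introduced only in Phase~(\ref{phase_4}) and applied to a consecutive tail segment $[i+1, |P_{r,2}|+1]$ of $P_{r,2}$; and Phase~(\ref{phase_6}) reattaches that tail as a suffix. Since Phases~(\ref{phase_1}) and (\ref{phase_7}) preserve the linear order of vertices along the path and Phase~(\ref{phase_5}) only deletes, the passive vertices remain a consecutive suffix in $P_{r+1}$, located exactly after the corresponding vertex of the last active vertex. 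I expect the main obstacle to be the bookkeeping in Phases~(\ref{phase_5})–(\ref{phase_6}) — in particular, verifying that the loop-exit value of $i$ leaves a well-defined "last active vertex" whose successor is the first passive vertex, and handling the boundary case where the loop in Phase~(\ref{phase_3}) terminates with fewer than the expected number of active vertices remaining (the hypothesis of Definition~\ref{def_path_construction} requires at least $4$ active vertices on $P_r$, which should be what prevents the degenerate cases). Everything else is a routine check that \textsc{alter} and vertex-deletion preserve path validity.
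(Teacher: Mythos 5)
Your proof takes essentially the same route as the paper's: induction on $r$ with a phase-by-phase check that each of the seven phases preserves path-validity and the consecutive-suffix invariant for passive vertices. The one spot where you are less explicit than the paper is the second junction in Phase~(\ref{phase_6}) — you verify that the last vertex of $P_{r,5}$ (namely $P_{r,2}(i)$) is adjacent to the first passive vertex $P_{r,2}(i+1)$, but the paper also explicitly notes that the last passive vertex on $P_{r,2}$ (which is the last vertex of $P_{r,1}$ in active positions, i.e.\ $P_{r,2}(|P_{r,2}|+1)$) is the predecessor of the first passive vertex on $P_{r,1}$, which is what the induction hypothesis gives; your treatment of this junction is only implicit in the remark that passive vertices on $P_{r,1}$ form a consecutive suffix, but the argument is there.
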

\begin{proof}
    The proof is by an induction on $r$.
    Initially, $P_1$ is a valid path by our discussion on \textsc{alter} at the beginning of this section: it only replaces edges by new edges in the altered graph; moreover, the second part of the lemma is trivially true as all vertices are active.
    Assuming $P_{r}$ is a valid path and all passive vertices are consecutive from the successor of the last active vertex to the end of the path.
    We show the inductive step by proving the invariant after each of the $7$ phases in Definition~\ref{def_path_construction}.
    Phase~(\ref{phase_1}) maintains the invariant.
    In Phase~(\ref{phase_2}), $P_{r,2}$ is a valid path as all active vertices are consecutive at the beginning of $P_{r,1}$ (induction hypothesis).
    In Phase~(\ref{phase_3}), if a vertex $v$ is skipped, then there is an edge between its predecessor and successor on the path; otherwise there is an edge between $v$ and its successor by the induction hypothesis;
    all passive vertices are consecutive from the successor of the last non-skipped vertex to the end of $P_{r,2}$ (cf. Phase~(\ref{phase_4})), so the invariant holds.
    In Phase~(\ref{phase_6}), since the first passive vertex on $P_{r,2}$ is a successor of the last vertex on $P_{r,5}$ and the last passive vertex on $P_{r,2}$ is a predecessor of the first passive vertex on $P_{r,1}$ (induction hypothesis), the invariant holds.
    Phase~(\ref{phase_7}) maintains the invariant.
    Therefore, $P_{r+1}$ is a valid path and all passive vertices are consecutive from the successor of the last active vertex to the end of $P_{r+1}$.
\end{proof}

Now we relate the path construction to the diameter of the graph:
\begin{lemma} \label{lem_path_diameter}
    For any positive integer $r$, the diameter of the graph at the end of round $r$ is $O(\max_{P_r} |P_{r,2}| + r)$.
\end{lemma}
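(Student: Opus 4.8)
The plan is to reduce the diameter bound to a bound on the length of the constructed path $P_{r+1}$, and then to show $|P_{r+1}| \le |P_{r,2}| + O(r)$.

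First I would show that the diameter of the graph at the end of round $r$ is at most $\max_P |P_{r+1}|$, where $P$ ranges over the shortest paths of the input graph $G$. Let $u$ and $v$ be two distinct ongoing vertices lying in the same component at the end of round $r$. Since the algorithm never creates a vertex, $u$ and $v$ are vertices of $G$, and by Lemma~\ref{lem_tree_invariant} they lie in the same component of $G$; fix a shortest $u$--$v$ path $P$ in $G$. By Lemma~\ref{lem_non_root}, a vertex that is a root at the end of round $r$ was a root in every earlier round, so $u.p = u$ and $v.p = v$ hold throughout rounds $1,\dots,r$. Therefore every \textsc{alter} in the construction of Definition~\ref{def_path_construction} fixes the first vertex of the current path at $u$; that first vertex is never skipped, because Phase~(\ref{phase_3}) only marks $P_{r,2}(i+1)$ with $i\ge 1$; and by the active-then-passive structure of Lemma~\ref{lem_valid_path} the last vertex of each constructed path equals $v$, whether it ends up active or is marked passive in Phase~(\ref{phase_4}). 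Hence the endpoints of $P_{r+1}$ are $u$ and $v$, and since $P_{r+1}$ is a valid path in the end-of-round-$r$ graph by Lemma~\ref{lem_valid_path}, $\text{dist}(u,v)\le|P_{r+1}|$. (A component with a single ongoing vertex has diameter $0$; if some $P_r$ has fewer than $4$ active vertices the construction is not invoked, but then $|P_r| = O(r)$ by the passive-count bound below, so the claim is immediate.)

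Next I would bound the number of passive vertices on $P_{r+1}$ by an absolute constant times $r$, by induction on $r$. New passive vertices arise only in Phase~(\ref{phase_4}): the loop in Phase~(\ref{phase_3}) starts at $i=1$ and increases $i$ by $1$ or $2$ per iteration, and its body is executed only when $i \le |P_{r,2}| - 2$, so the loop terminates with $|P_{r,2}|-1 \le i \le |P_{r,2}|$; consequently Phase~(\ref{phase_4}) marks at most two new vertices as passive. The passive vertices already present on $P_r$ pass through Phase~(\ref{phase_1}) unchanged in number, are set aside and re-appended in Phase~(\ref{phase_6}), and are relabeled but not duplicated in Phase~(\ref{phase_7}); since $P_1$ has no passive vertices, $P_{r+1}$ has at most $2r$ of them.

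Finally I would combine the estimates. The path $P_{r,5}$ is obtained from $P_{r,2}$ by deleting skipped and passive vertices, so $|P_{r,5}| \le |P_{r,2}|$; $P_{r,6}$ appends the at most $2r$ passive vertices, so $|P_{r,6}| \le |P_{r,2}| + O(r)$; and the \textsc{alter} of Phase~(\ref{phase_7}) replaces each vertex by its parent and cannot increase the edge count, so $|P_{r+1}| \le |P_{r,6}| \le |P_{r,2}| + O(r)$. Together with the first step this shows that the diameter at the end of round $r$ is at most $\max_P |P_{r+1}| \le \max_{P_r}|P_{r,2}| + O(r) = O(\max_{P_r}|P_{r,2}| + r)$. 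I expect the main obstacle to be the first step: rigorously tracking that the endpoints $u$ and $v$ of the chosen shortest path remain the endpoints of $P_{r+1}$ through the interleaved \textsc{maxlink}s, \textsc{alter}s, and the skipping/passive-marking at the two ends of the path — this is precisely where Lemma~\ref{lem_non_root} (root stability) and the structural invariant of Lemma~\ref{lem_valid_path} must be used in tandem, with some care about the boundary indices in Phases~(\ref{phase_3}) and~(\ref{phase_4}).
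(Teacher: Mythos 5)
Your proof follows the paper's two-step structure — (a) bound the diameter by $\max_P |P_{r+1}|$, and (b) bound $|P_{r+1}|$ by $|P_{r,2}| + 2r$ via the passive-vertex count. Step (b) is exactly the paper's argument and is correct. But step (a) has a genuine gap: you restrict attention to pairs of \emph{ongoing} (root) vertices $u, v$ and then invoke Lemma~\ref{lem_non_root} to argue $u.p = u$ and $v.p = v$ persist through every \textsc{alter}, so $u, v$ remain the endpoints of $P_{r+1}$. That argument is internally correct for roots, but it does not account for the whole diameter. During the repeat loop of \textsc{expand-maxlink}, trees in the labeled digraph are generally \emph{not} flat: a single \textsc{shortcut} per round only roughly halves heights, and \textsc{maxlink}'s parent links can create chains (the proof of Lemma~\ref{lem_flatten} explicitly allows tree heights up to $L = \Theta(\log\log_{m/n} n)$). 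Since each \textsc{alter} sends edge $\{v,w\}$ to $\{v.p, w.p\}$ and $v.p, w.p$ need not be roots, the current graph can have edges incident only on non-roots. So the parenthetical ``a component with a single ongoing vertex has diameter $0$'' is not justified — a component with a unique root can still contain many non-root vertices joined by graph edges at positive distances — and more generally the diameter-achieving pair may include non-roots, which your argument never touches.

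The paper sidesteps this by tracking \emph{corresponding vertices} rather than fixed roots: the first and last vertices of $P_{r+1}$ are whatever $s$ and $t$ get replaced by under the composition of $.p$-replacements at each \textsc{alter}, and these images exist and are well-defined whether or not $s, t$ are roots. Every edge endpoint in the current graph (original-edge endpoints and added-edge endpoints alike) is such an image, and by Lemma~\ref{lem_neighbor_parent_path} any connected pair in the current graph comes from a pair $s, t$ in a common component of $G$; the path $P_{r+1}$ built from a shortest $s$--$t$ path then witnesses the distance bound. The fix to your proof is therefore small: your observation that $P_{r,2}(1)$ is never skipped (since $i \ge 1$) and that the last vertex of $P_{r,1}$ survives to the end of $P_{r,6}$ (by the active-then-passive structure of Lemma~\ref{lem_valid_path}) is still the right bookkeeping — you simply must allow those endpoints to move under the \textsc{alter}s rather than pin them at $u$ and $v$ via root-stability, so that the argument covers non-root endpoints as well.
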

\begin{proof}
    Any shortest path $P$ in the original input graph from $s$ to $t$ is transformed to the corresponding $P_1$ at the beginning of round $1$, which maintains connectivity between the corresponding vertices of $s$ and $t$ on $P_1$ respectively. 
    By an induction on the number of \textsc{alter}s and Lemma~\ref{lem_valid_path}, the corresponding vertices of $s$ and $t$ are still connected by path $P_{r+1}$ at the end of round $r$.
    Note that by Lemma~\ref{lem_valid_path}, $P_{r+1}$ can be partitioned into two parts after Phase~(\ref{phase_2}): subpath $P_{r,2}$ and the subpath containing only passive vertices.
    Since in each round we mark at most $2$ new passive vertices (cf. Phases (\ref{phase_3},\ref{phase_4})),
    we get $|P_{r+1}| \le |P_{r,5}| + 2r \le |P_{r,2}| + 2r$.
    If any path $P_{r+1}$ that corresponds to a shortest path in the original graph have length at most $d'$, the graph at the end of round $r$ must have diameter at most $d'$, so the lemma follows.
\end{proof}

It remains to bound the length of any $P_{r,2}$ in any round $r$, which relies on the following potential function:
\begin{definition}\label{def_path_potential}
    For any vertex $v$ on $P_{1}$, define its \emph{potential} $\phi_1(v) \coloneqq 1$.
    For any positive integer $r$, given path $P_r$ with at least $4$ active vertices at the beginning of round $r$ and the potentials of vertices on $P_r$, define the \emph{potential} of each vertex on $P_{r+1}$ based on Definition~\ref{def_path_construction} as follows:\footnote{The second subscript of a potential indicates the phase to obtain that potential; the subscript for paths in Definition~\ref{def_path_construction} follows the same manner.}
    \begin{itemize}
        \item For each $v$ replaced by $v.p$ in Phase~(\ref{phase_1}), $\phi_{r, 1}(v.p) \coloneqq \phi_r(v)$.
        \item After Phase~(\ref{phase_4}), for each active vertex $v$ on $P_{r,2}$, if the successor $w$ of $v$ is skipped or passive, then $\phi_{r,4}(v) \coloneqq \phi_{r,1}(v) + \phi_{r,1}(w)$.
        \item After Phase~(\ref{phase_6}), for each vertex $v$ on $P_{r,6}$, if $v$ is active on $P_{r,2}$, then $\phi_{r,6}(v) \coloneqq \phi_{r,4}(v)$, otherwise $\phi_{r,6}(v) \coloneqq \phi_{r,1}(v)$.
        \item For each $v$ replaced by $v.p$ in Phase~(\ref{phase_7}), $\phi_{r+1}(v.p) \coloneqq \phi_{r,6}(v)$.
    \end{itemize}
\end{definition}

We conclude this section by some useful properties of potentials.
\begin{lemma} \label{lem_potential_property}
    For any path $P_r$ at the beginning of round $r \ge 1$, the following holds:
    (\romannumeral1) $\sum_{v \in P_r} \phi_r(v) \le d+1$;
    (\romannumeral2) for any $v$ on $P_r$, $\phi_r(v) \ge 1$;
    (\romannumeral3) for any non-skipped $v$ on $P_{r,2}$ and its corresponding vertex $\overline{v}$ on $P_{r+1}$, $\phi_{r+1}(\overline{v}) \ge \phi_r(v)$.
\end{lemma}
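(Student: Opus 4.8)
I would prove the three items together by induction on the round index $r$, carrying every vertex's potential through the seven phases of Definition~\ref{def_path_construction} according to the bookkeeping of Definition~\ref{def_path_potential}. The base case $r=1$ is immediate: $\phi_1\equiv 1$ gives (ii), and since $|P_1|\le d$ (Corollary~\ref{cor_p1}) the path $P_1$ has at most $d+1$ vertices, which gives (i); (iii) will be obtained inside the step. For the step, assume (i) and (ii) hold for $P_r$, and examine the transformation $P_r\rightsquigarrow P_{r+1}$, deriving (iii) for this round and (i),(ii) for $P_{r+1}$.

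Items (ii) and (iii) are essentially a trace. Phases (\ref{phase_1}) and (\ref{phase_7}) merely copy the potential from $v$ to $v.p$ and are bijective on path positions, so they preserve the value sitting at each position; the absorption in Phase~(\ref{phase_4}) replaces $\phi_{r,1}(v)$ by $\phi_{r,1}(v)+\phi_{r,1}(w)\ge\phi_{r,1}(v)$, using $\phi_{r,1}(w)\ge 1$ from the induction; and Phase~(\ref{phase_6}) sets every surviving vertex's potential to its Phase-(\ref{phase_4}) value (if active on $P_{r,2}$) or to its Phase-(\ref{phase_1}) value (otherwise), which is in either case at least the value inherited in Phase~(\ref{phase_1}). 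Hence every vertex of $P_{r+1}$ carries a potential at least that of its preimage on $P_r$, which is $\ge 1$ by induction; this is (ii), and, noting that a non-skipped vertex of $P_{r,2}$ is either active or became passive in Phase~(\ref{phase_4}), it also gives (iii) since then $\phi_{r+1}(\overline v)=\phi_{r,6}(v)\ge\phi_{r,1}(v)=\phi_r(\underline v)$.

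The work is in (i): I want $\sum_{v\in P_{r+1}}\phi_{r+1}(v)\le\sum_{v\in P_r}\phi_r(v)$, which together with the base case yields the bound. Phases (\ref{phase_1}) and (\ref{phase_7}) preserve the total exactly. The only vertices that leave the path do so in Phase~(\ref{phase_5}) as \emph{skipped} vertices, and by Definition~\ref{def_path_construction} a vertex is skipped only as $P_{r,2}(i+1)$ at a step where $P_{r,2}(i)$ is active and distinct skips use distinct indices $i$; so each skipped vertex lies immediately after a distinct active vertex which absorbed its entire Phase-(\ref{phase_1}) potential in Phase~(\ref{phase_4}). Passive vertices are not discarded: Phase~(\ref{phase_6}) re-appends them to the tail with their Phase-(\ref{phase_1}) potential, consistently with the structural invariant of Lemma~\ref{lem_valid_path}. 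Summing, the total after Phase~(\ref{phase_6}) equals the total before Phase~(\ref{phase_1}), minus the skipped potentials, plus the absorbed (i.e.\ skipped) potentials, hence is unchanged, so the total is non-increasing across the round.

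The step I expect to be the main obstacle is exactly this last conservation accounting, specifically the interaction between the ``skipped \emph{or passive}'' condition in the Phase-(\ref{phase_4}) rule and the fact that a just-made-passive successor is retained rather than deleted: naively its potential would be counted both inside the absorbing vertex and again on itself. Pinning down (i) requires showing, from precisely how the Phase-(\ref{phase_3}) loop terminates and which indices Phase~(\ref{phase_4}) marks passive, that this can only concern the $O(1)$ vertices at the active/passive interface in each round, and then either verifying that these boundary vertices carry $O(1)$ potential or folding the resulting slack into the diameter estimate of Lemma~\ref{lem_dr1}. This is precisely the kind of endpoint subtlety the paper flags as having been overlooked in the analysis of Behnezhad et al., so I would concentrate the effort there and expect the clean ``$\le d+1$'' to emerge only after a careful case analysis of the two path endpoints.
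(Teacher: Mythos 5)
Your overall structure matches the paper's proof: induction on $r$, a ``trace'' argument for (ii) and (iii), and a conservation accounting for (i). Your treatment of (ii) and (iii) is correct and essentially the paper's. For (i), you have correctly flagged the real difficulty. Following Definition~\ref{def_path_potential} literally, the last non-skipped non-passive vertex $P_{r,2}(i)$ (where $i$ is the terminating value of the Phase-(\ref{phase_3}) loop) absorbs $\phi_{r,1}(P_{r,2}(i+1))$ in Phase~(\ref{phase_4}) because its successor has just been made passive, yet $P_{r,2}(i+1)$ is re-appended in Phase~(\ref{phase_6}) with $\phi_{r,6}(P_{r,2}(i+1))=\phi_{r,1}(P_{r,2}(i+1))$, so that amount appears twice in $\sum_{v\in P_{r+1}}\phi_{r+1}(v)$. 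The paper's own proof does not resolve this: the claims that ``$\phi_r(u)+\phi_r(w)$ is presented for exactly $1$ time in each summation'' when $w$ is passive, and that a passive vertex with a passive predecessor ``is presented only in $\sum_{v\in P_r}\phi_r(v)$,'' silently omit the retained passive vertex's own contribution on the $P_{r+1}$ side. So you are ahead of the paper in noticing the gap, but neither of your proposed repairs would close it as written: the doubly-counted quantity is $\phi_{r,1}(P_{r,2}(i+1))$, which compounds over rounds and is not $O(1)$, and for the same reason it cannot simply be folded into the constant in Lemma~\ref{lem_dr1}.

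The correct resolution, and what is actually invoked downstream (Lemma~\ref{lem_potential_lb} is applied to at least four \emph{active} vertices in the proof of Lemma~\ref{lem_dr1}), is to restrict (i) to the active prefix of $P_r$: $\sum_{\text{active }v\in P_r}\phi_r(v)\le d+1$. On the active side the accounting really is conservative: each skipped vertex's potential moves into its checkpoint predecessor; the first newly-passive vertex's potential is \emph{transferred} into $P_{r,2}(i)$ at the same step it is removed from the active set, so it is counted once, not twice; and any further new passives only decrease the active total. With that restatement your induction closes with no endpoint case analysis needed, and the contradiction in Lemma~\ref{lem_dr1} is unchanged. In short, your suspicion was right; the fix is a mild restriction of (i) to active vertices, which the paper leaves implicit.
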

\begin{proof}
    The proof is by an induction on $r$.
    The base case follows from $\phi(v) = 1$ for each $v$ on $P_1$ (cf. Definition~\ref{def_path_potential}) and Corollary~\ref{cor_p1}.
    For the inductive step, note that by Definition~\ref{def_path_potential}, the potential of a corresponding vertex is at least the potential of the corresponding vertex in the previous round (and can be larger in the case that its successor is skipped or passive). This gives (\romannumeral2) and (\romannumeral3) of the lemma.
    For any vertex $u$ on $P_{r,2}$, if both $u$ and its successor are active, then $\phi_r(u)$ is presented for exactly $1$ time in $\sum_{v \in P_r} \phi_r(v)$ and $\sum_{v \in P_{r+1}} \phi_{r+1}(v)$ respectively;
    if $u$ is active but its successor $w$ is skipped or passive, then $\phi_r(u) + \phi_r(w)$ is presented for exactly $1$ time in each summations as well;
    if $u$ and its predecessor are both passive, then $\phi_r(u)$ is presented only in $\sum_{v \in P_r} \phi_r(v)$;
    the potential of the last vertex on $P_{r,2}$ might not be presented in $\sum_{v \in P_{r+1}} \phi_{r+1}(v)$ depending on $i$ after Phase~(\ref{phase_3}).
    Therefore, $\sum_{v \in P_{r+1}} \phi_{r+1}(v) \le \sum_{v \in P_r} \phi_r(v)$ and the lemma holds.
\end{proof}

\subsubsection{Proof of Lemma~\ref{lem_dr1}} \label{lem_pf_dr1}

Now we prove Lemma~\ref{lem_dr1}, which relies on several results.
First of all, we need an upper bound on the maximal possible level:
\begin{lemma} \label{lem_max_level}
    With good probability, the level of any vertex in any of the first $O(\log n)$ rounds is at most $L \coloneqq 1000 \max\{2, \log\log_{m/n} n\}$. 
\end{lemma}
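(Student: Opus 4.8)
The plan is to show that the total budget summed over all vertices that ever reach a given budget level grows geometrically small, and in particular that no vertex survives long enough to reach budget $b_L$ with $L = 1000\max\{2,\log\log_{m/n}n\}$, except with probability $1 - 1/\text{poly}((m\log n)/n)$. The key observation is that by Corollary~\ref{cor_level_budget}, a vertex at level $\ell$ owns a block of size roughly $b_\ell = b_1^{1.01^{\ell-1}}$, and by Lemma~\ref{lem_increase_level}, for any root $v$ with budget $b$ at the start of a round, the budget is increased to $b^{1.01}$ in that round with probability at most $b^{-0.05}$. So the first step is to define, as in the proof of Lemma~\ref{lem_processor_number}, the quantity $n_\ell$ = the number of vertices that ever reach budget $b_\ell$ during the first $O(\log n)$ rounds, and recall from that proof that with good probability $n_\ell b_\ell \le n_1 b_1 = O(m/\log^2 n)$ for all $\ell \in O(\log n)$ simultaneously.

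Next I would observe that once $b_\ell$ exceeds $n_1 b_1$, there can be no vertex at budget $b_\ell$: indeed $n_\ell b_\ell \le n_1 b_1 < b_\ell$ forces $n_\ell < 1$, hence $n_\ell = 0$. So it suffices to find the smallest $\ell$ with $b_\ell > n_1 b_1$. Since $n_1 b_1 = O(m/\log^2 n) \le m \le n^2$ (or more precisely $\le \text{poly}(n)$), and $b_\ell = b_1^{1.01^{\ell-1}}$ with $b_1 = \max\{m/n,\log^c n\}/\log^2 n \ge \log^{c-2}n$, we need $b_1^{1.01^{\ell-1}} > n^2$, i.e. $1.01^{\ell-1} \log b_1 > 2\log n$, i.e. $1.01^{\ell-1} > 2\log n / \log b_1$. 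Because $b_1 \ge (m/n)/\log^2 n$, we have $\log b_1 = \Omega(\log(m/n))$ when $m/n$ is polynomially large, and in the worst case $\log b_1 = \Omega(\log\log n)$; in either case $2\log n/\log b_1 = O(\log_{m/n} n \cdot \text{polylog factors})$, actually the clean bound is $2\log n/\log b_1 \le O(\log n / \log\log n) \le \log_{m/n}^{O(1)} n$ — I should be careful here and instead just bound $2\log n/\log b_1 \le 2\log_{m/n}n \cdot (\log(m/n)/\log b_1)$. The point is that taking logarithms once more, $\ell - 1 > \log_{1.01}(2\log n/\log b_1)$, and since $\log_{1.01} x = \Theta(\log x)$, this is $\ell = O(\log\log_{m/n}n + \log\log\log n) = O(\log\log_{m/n}n)$ (absorbing the $\log\log\log n$ into the $\max\{2,\cdot\}$), so choosing the constant $1000$ large enough makes $L = 1000\max\{2,\log\log_{m/n}n\}$ an upper bound on the attainable level.

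Finally, since level can only increase by at most one per round (cf. Steps~(\ref{step_2},\ref{step_7})) and a non-root never changes its level (Lemma~\ref{lem_non_root}), "the level of any vertex in any of the first $O(\log n)$ rounds is at most $L$" follows directly: a vertex at level $\ell$ has budget at least $b_{\ell-1}$ by Corollary~\ref{cor_level_budget}, so $\ell > L$ would force budget $\ge b_{L-1} > n_1 b_1$, contradicting $n_{L-1}b_{L-1} \le n_1 b_1$ on the good event. The whole statement holds with good probability because the only randomness invoked is the event from Lemma~\ref{lem_processor_number} that $n_\ell b_\ell \le n_1 b_1$ for all $\ell$, which holds with probability $1 - 1/\text{poly}((m\log n)/n)$.

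I expect the main obstacle to be purely bookkeeping: pinning down the exact chain of logarithms that turns $b_1^{1.01^{\ell-1}} > \text{poly}(n)$ into $\ell = O(\log\log_{m/n}n)$, in particular handling the two regimes $m/n > \log^c n$ and $m/n \le \log^c n$ (where $b_1 = \Theta(\log^{c-2}n)$ and the relevant quantity is $\log\log n$ rather than $\log_{m/n}n$) uniformly, and checking that the constant $1000$ (and the hidden constant in the $O(\log n)$ round count) is genuinely large enough to absorb all the slack, including the extra $\log\log\log n$ term that appears when $m/n$ is barely superpolylogarithmic. None of this is deep, but it is exactly the kind of place where an off-by-a-constant-factor error would hide, so I would write the inequalities out explicitly rather than hand-wave the $\log_{1.01}$ manipulations.
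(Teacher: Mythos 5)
Your proposal is correct and follows essentially the same approach as the paper: both condition on the processor bound from Lemma~\ref{lem_processor_number}, then use the doubly-exponential growth $b_\ell = b_1^{1.01^{\ell-1}}$ (via Corollary~\ref{cor_level_budget}) to show that a level beyond $L = 1000\max\{2,\log\log_{m/n}n\}$ would imply a block whose size alone exceeds the total available processors, a contradiction. The only cosmetic difference is that the paper invokes the final $O(m)$ statement of Lemma~\ref{lem_processor_number} directly and shows a single block would exceed $m^2$, whereas you re-use the intermediate bound $n_\ell b_\ell \le n_1 b_1$ from that lemma's proof to force $n_\ell < 1$; both variants close the same gap.
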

\begin{proof}
    By Lemma~\ref{lem_processor_number}, with good probability the total number of processors used in the first $O(\log n)$ rounds is $O(m)$.
    We shall condition on this happening then assume for contradiction that there is a vertex $v$ with level at least $L$ in some round.

    If $\log\log_{m/n} n \le 2$, then $m / n \ge n^{1/4}$.
    By Corollary~\ref{cor_level_budget}, a block owned by $v$ has size at least
    \begin{equation*}
        {b_1}^{1.01^{2000 - 2}} \ge {b_1}^{20} \ge (m / n / \log^2 n)^{20} \ge (n^{1/5})^{20} = n^4 \ge m^2 ,
    \end{equation*}
    which is a contradiction as the size of this block owned by $v$ exceeds the total number of processors $O(m)$.

    Else if $\log\log_{m/n} n > 2$, then by Corollary~\ref{cor_level_budget}, a block owned by $v$ has size at least
    \begin{equation} \label{eq_b_size}
        {b_1}^{1.01^{L - 2}} \ge {b_1}^{1.01^{999 \log\log_{m/n} n}} \ge {b_1}^{(\log_{m/n} n)^{10}} \ge {b_1}^{8 \log_{m/n} n} .
    \end{equation}

    Whether $m /n \le \log^c n$ or not, if $c \ge 10$, it must be $b_1 = \max\{m/n, \log^c n\} / \log^2 n \ge \sqrt{m/n}$.
    So the value of (\ref{eq_b_size}) is at least $n^4 \ge m^2$, contradiction.
    Therefore, the level of any vertex is at most $L$.
\end{proof}

We also require the following key lemma:

\begin{lemma} \label{lem_2hops}
    For any root $v$ and any $u \in N(N(v))$ at the beginning of any round,
    let $u'$ be the parent of $u$ after Step~(\ref{step_1}).
    If $v$ does not increase level and is a root during this round, then $u' \in H(v)$ after Step~(\ref{step_5}).
\end{lemma}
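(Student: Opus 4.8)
The plan is to track how the neighbor information of $v$ propagates through Steps~(\ref{step_3})--(\ref{step_5}), using the hypotheses that $v$ stays a root and does not increase level to rule out any premature ``dormant'' marking that would cost us $u'$. First I would fix $u \in N(N(v))$ at the start of the round, so there is some $w \in N(v)$ with $u \in N(w)$, and let $v' := v.p$, $w' := w.p$, $u' := u.p$ be the parents after the \textsc{maxlink}/\textsc{alter} in Step~(\ref{step_1}). The first key point is to show that $w'$ and $u'$ both survive as roots with budget $b(v)$ at the moment of Step~(\ref{step_3}): since $v$ does not increase level in Step~(\ref{step_2}) nor become a non-root (by hypothesis it is a root at the end of the round, and by Lemma~\ref{lem_non_root} once a non-root always a non-root), Lemma~\ref{lem_maxlink} forces every neighbor parent of $v$ to have level $\le \ell(v)$ after the \textsc{maxlink} in Step~(\ref{step_1}) — otherwise $v$ would be pushed to a non-root. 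Combined with Corollary~\ref{cor_level_budget} this pins the relevant budgets to $b(v)$, so the ``if $b(w)=b(v)$'' guard in Step~(\ref{step_3}) fires for $w'$ relative to $v$ and for $u'$ relative to $w'$ (after the \textsc{alter}, $w'$ is a neighbor of $v$ and $u'$ is a neighbor of $w'$, since \textsc{alter} replaces an edge $(w,u)$ by $(w',u')$).

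Next I would trace the hashing. After Step~(\ref{step_3}), $w' \in H(v)$ and $u' \in H(w')$, \emph{provided} no collision evicted them. Here I use the hypothesis crucially: if there were a collision in $H(v)$, then $v$ would be marked dormant in Step~(\ref{step_4}) and would increase level in Step~(\ref{step_7}) (it did not increase in Step~(\ref{step_2}) by assumption), contradicting that $v$ does not increase level this round. So $H(v)$ has no collision, hence $w' \in H(v)$ after Step~(\ref{step_3}), and moreover $H(v)$ contains exactly the neighbor roots of $v$ with budget $b(v)$. For $H(w')$: if $H(w')$ had a collision, then $w'$ is dormant after the first statement of Step~(\ref{step_4}); since $w' \in H(v)$, the second statement of Step~(\ref{step_4}) would then mark $v$ dormant, again forcing a level increase at Step~(\ref{step_7}) — contradiction. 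Hence $H(w')$ has no collision and $u' \in H(w')$ after Step~(\ref{step_3}). Then Step~(\ref{step_5}) takes each $w \in H(v)$ and each $u \in H(w)$ and hashes $u$ into $H(v)$; applying this with $w = w'$ and $u = u'$ puts $u'$ into $H(v)$. Finally, if a collision in $H(v)$ occurred during Step~(\ref{step_5}), $v$ would be marked dormant and increase level in Step~(\ref{step_7}), contradiction; so $H(v)$ has no collision after Step~(\ref{step_5}) and the cell $h(u')$ of $H(v)$ is permanently occupied by $u'$. Therefore $u' \in H(v)$ after Step~(\ref{step_5}).

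The main obstacle I anticipate is bookkeeping the identity of the parents consistently through the two \textsc{alter}s and the \textsc{maxlink} in Step~(\ref{step_1}): I must be careful that ``$u \in N(N(v))$ at the beginning of the round'' really does yield, after Step~(\ref{step_1}), a neighbor $w'$ of $v$ and a neighbor $u'$ of $w'$ with the right budgets, rather than some vertex that got moved to a different tree or whose budget changed. This is exactly where Lemma~\ref{lem_non_root} (non-roots are frozen), Lemma~\ref{lem_maxlink} (neighbor parents cannot exceed $\ell(v)$ without demoting $v$), and Corollary~\ref{cor_level_budget} (level determines budget up to the one-step slack, with equality for roots) do the work; the edge-relabeling behavior of \textsc{alter} (replacing $(a,b)$ by $(a.p,b.p)$) handles the neighbor-set side. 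Once those structural facts are in place, the rest is a direct ``no collision, because otherwise $v$ would be demoted or level-bumped'' argument applied three times, at Steps~(\ref{step_3}) for $H(v)$, (\ref{step_3}) for $H(w')$, and (\ref{step_5}) for $H(v)$.
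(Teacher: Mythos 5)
Your proposal takes essentially the same route as the paper: first establish that the parents $w' = w.p$ and $u' = u.p$ are, after Step~(\ref{step_1}), roots with budget exactly $b(v)$, and then chain three ``no collision'' arguments through Steps~(\ref{step_3})--(\ref{step_5}), each time using that a collision would make $v$ dormant and therefore force a level increase in Step~(\ref{step_7}), contradicting the hypothesis. The paper isolates the first half into a separate lemma (Lemma~\ref{lem_same_root_level}); you inline it.

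The one place where your inlined argument is compressed enough to be worth flagging is exactly that first half. You assert that Lemma~\ref{lem_maxlink} ``forces every neighbor parent of $v$ to have level $\le \ell(v)$ after the \textsc{maxlink} in Step~(\ref{step_1}) --- otherwise $v$ would be pushed to a non-root.'' Two things are glossed over. First, to get $b(w') = b(u') = b(v)$ you also need the matching lower bound $\ell(w.p), \ell(u.p) \ge \ell(v)$, which comes from the first part of Lemma~\ref{lem_maxlink} applied once (for $w$) and twice (for $u$) across the two iterations of Step~(\ref{step_1})'s \textsc{maxlink}; without it the budget could be strictly smaller than $b(v)$ and the guard in Step~(\ref{step_3}) would not fire. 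Second, Step~(\ref{step_3}) only hashes \emph{roots}, so you must also rule out $w.p$ or $u.p$ being a non-root; the paper does this via Lemma~\ref{lem_non_root} (a non-root $u.p$ has $\ell(u.p.p) > \ell(u.p) \ge \ell(v)$), and then derives the contradiction by propagating that high level to $v$ over the two iterations of the \textsc{maxlink} in \emph{Step~(\ref{step_6})}, not Step~(\ref{step_1}) --- $u.p$ is a two-hop neighbor of $v$ after the \textsc{alter}, so it takes both iterations for the demotion to reach $v$. Both omissions are fillable with exactly the lemmas you already cite, so the approach is sound; you should just be explicit that the contradiction is realized at Step~(\ref{step_6}) and that both the lower and upper level bounds on the two-hop parents are required.
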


To prove Lemma~\ref{lem_2hops}, we use another crucial property of the algorithm, which is exactly the reason behind the design of \textsc{maxlink}.
\begin{lemma} \label{lem_same_root_level}
    For any root $v$ and any $u \in N(N(v))$ at the beginning of any round, if $v$ does not increase level in Step~(\ref{step_2}) and is a root at the end of the round, then $u.p$ is a root with budget $b(v)$ after Step~(\ref{step_1}).
\end{lemma}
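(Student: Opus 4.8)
The plan is a one-round analysis built on \textsc{maxlink}'s defining property (Lemma~\ref{lem_maxlink}) and the hypothesis that $v$ survives the round as a root. Set $\ell^* := \ell(v)$. Since $v$ does not increase level in Step~(\ref{step_2}) and is a root at the end of the round, and since \textsc{alter} never changes the labeled digraph, \textsc{shortcut} keeps a root a root, and \textsc{maxlink} only turns roots into non-roots, $v$ is a root of level $\ell^*$ throughout Steps~(\ref{step_1})--(\ref{step_6}); in particular $b(v)=b_{\ell^*}$ by Corollary~\ref{cor_level_budget}. Applying Lemma~\ref{lem_maxlink} to $v$ at each of the four \textsc{maxlink} iterations yields: (a) before every iteration, $\ell(w.p)\le\ell^*$ for all current $w\in N(v)$ (second part, contrapositive, using that $v$ stays a root); and (b) after the first iteration of Step~(\ref{step_1}), $\ell(w.p)\ge\ell^*$ for every $w$ in the beginning-of-round set $N(v)$ (first part, since $v$'s old parent is $v$). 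Combining (a) and (b): after the first iteration of Step~(\ref{step_1}), every such $w$ has a parent of level exactly $\ell^*$.

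The heart of the proof is the claim that, at the beginning of the round, every vertex $z$ with $\text{dist}(z,v)\le 4$ has $\ell(z.p)\le\ell^*$, hence $\ell(z)\le\ell^*$. I would prove this by contradiction: if some such $z$ has $\ell(z.p)>\ell^*$, walk a shortest $v$-$z$ path and track how this excess level is carried inward. The two \textsc{maxlink} iterations of Step~(\ref{step_1}) move it from $z$ onto the parent of a vertex at distance $2$ from $v$; the \textsc{alter} of Step~(\ref{step_1}) then relocates that high-level vertex to within distance $2$ of $v$ -- here one uses that an altered edge $\{v,w\}$ becomes $\{v,w.p\}$ because $v$ is a root, so the excess level is still reachable from $v$ in two hops -- and the two \textsc{maxlink} iterations of Step~(\ref{step_6}) carry it the remaining two hops onto a neighbor-parent of $v$ of level $>\ell^*=\ell(v)$, forcing $v$ to update its parent and become a non-root, contradicting that $v$ ends the round a root. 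Step~(\ref{step_2})'s randomized level increases leave $\ell(v)$ untouched (by hypothesis) and can only raise nearby levels, which only helps the contradiction; and (a) keeps $\ell\le\ell^*$ on the immediate neighbors of $v$ at every stage, so the offending value can indeed overwrite the parents along the path.

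Granting this distance-$4$ bound, fix $u\in N(N(v))$ at the beginning of the round, say $u\in N(w)$ with $w\in N(v)$ (possibly $w=v$), and track $u.p$ through the two \textsc{maxlink} iterations of Step~(\ref{step_1}). Upper bound: after iteration~$1$, the parent of any vertex at distance $\le 3$ from $v$ has level $\le\ell^*$ (it is an $\argmax$ over parents of vertices at distance $\le 4$, all $\le\ell^*$ by the claim), so after iteration~$2$ the parent of $u$, an $\argmax$ over those, also has level $\le\ell^*$. Lower bound: after iteration~$1$ the parent of $w$ already has level exactly $\ell^*$ and $w\in N(u)$, so in iteration~$2$ vertex $u$ -- whose own level is $\le\ell^*$ -- either updates to, or already holds, a parent of level $\ge\ell^*$; the case $\ell(u)=\ell^*$ is handled by noting that then $u$ is a root (by Lemma~\ref{lem_non_root} and the claim) and never updates, so $u.p=u$. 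Hence $\ell(u.p)=\ell^*$ after Step~(\ref{step_1}). Finally $u.p$ must be a root: if it were a non-root, its own parent would have level $>\ell^*$, and since $u.p$ lands within distance $2$ of $v$ after the Step~(\ref{step_1}) \textsc{alter}, the Step~(\ref{step_6}) \textsc{maxlink} again carries that excess level to $v$ and un-roots it -- contradiction. With $u.p$ a root of level $\ell^*$, Corollary~\ref{cor_level_budget} gives $b(u.p)=b_{\ell^*}=b(v)$, as claimed.

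The step I expect to be the main obstacle is the distance-$4$ level claim: it is the only place that chases a hypothetical high-level vertex through two separate pairs of \textsc{maxlink} iterations (Steps~(\ref{step_1}) and~(\ref{step_6})) with an intervening \textsc{alter} and an intervening randomized level increase, and the bookkeeping -- tracking which vertex sits at which distance after the \textsc{alter}, and that exactly two hops remain on each side so $2+2$ hops reach $v$ -- must be done with care; everything else is routine once that bound is in hand.
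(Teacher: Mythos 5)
Your proposal is correct in outline and rests on the same core idea as the paper's own proof — a parent with level exceeding $\ell(v)$ sitting within distance $2$ of $v$ after the Step~(\ref{step_1}) \textsc{alter} would be carried to $v$ by the Step~(\ref{step_6}) \textsc{maxlink}, un-rooting $v$ — but you reach it by a considerably longer route. The paper establishes the lower bound $\ell(u.p)\ge\ell(v)$ exactly as you do (two applications of Lemma~\ref{lem_maxlink}), and then handles the upper bound and rootness in one stroke: assume for contradiction that, after the Step~(\ref{step_1}) \textsc{maxlink}, $u.p$ is a non-root or $\ell(u.p)>\ell(v)$; then $\ell(u.p.p)>\ell(v)$ by Lemma~\ref{lem_non_root}; note $u.p\in N(N(v))$ after the \textsc{alter} (the same distance-$2$ observation you use for the rootness step); carry $\ell(u.p.p)>\ell(v)$ inward over the two Step~(\ref{step_6}) \textsc{maxlink} iterations; conclude $v$ is un-rooted, contradiction. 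Your distance-$4$ level claim is a strictly stronger statement that is not actually needed: the lemma only ever interrogates the single vertex $u.p$, which already sits at distance $\le 2$ after the \textsc{alter}, so the chase needs to cover only the two hops of Step~(\ref{step_6}), not four hops spread across Steps~(\ref{step_1}) and~(\ref{step_6}) starting from the very beginning of the round. You correctly flag the distance-$4$ bookkeeping as the main obstacle — tracking which vertex is at which distance through two \textsc{maxlink} pairs, an \textsc{alter}, and the Step~(\ref{step_2}) level increases — and your sketch of that chase is on the right track, but this is precisely the work the paper's more targeted contradiction avoids. Everything else (the $\ell(w.p)=\ell^*$ observation for $w\in N(v)$, the handling of the $\ell(u)=\ell^*$ case via rootness, and the final appeal to Corollary~\ref{cor_level_budget}) is sound and matches the paper's reasoning.
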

\begin{proof}
    By Lemma~\ref{lem_non_root}, $v$ is a root during this round.
    For any $w \in N(v)$ and any $u \in N(w)$, applying Lemma~\ref{lem_maxlink} for $2$ times, we get that $\ell(v) \le \ell(w.p)$ and $\ell(v) \le \ell(u.p)$ after the \textsc{maxlink} in Step~(\ref{step_1}).
    If there is a $u \in N(N(v))$ such that $u.p$ is a non-root or $\ell(u.p) > \ell(v)$ before the \textsc{alter} in Step~(\ref{step_1}), it must be $\ell(u.p.p) > \ell(v)$ by Lemma~\ref{lem_non_root}.
    Note that $u.p$ is in $N(N(v))$ after the \textsc{alter}, which still holds before Step~(\ref{step_6}) as we only add edges.
    By Lemma~\ref{lem_maxlink}, there is a $w' \in N(v)$ such that $\ell(w'.p) > \ell(v)$ after the first iteration of \textsc{maxlink} in Step~(\ref{step_6}). Again by Lemma~\ref{lem_maxlink}, this implies that $v$ cannot be a root after the second iteration, a contradiction.
    Therefore, for any $u \in N(N(v))$, $u.p$ is a root with level $\ell(v)$ (thus budget $b(v)$) after Step~(\ref{step_1}).
\end{proof}
With the help of Lemma~\ref{lem_same_root_level} we can prove Lemma~\ref{lem_2hops}:
\begin{proof}[Proof of Lemma~\ref{lem_2hops}]
    For any vertex $u$, let $N'(u)$ be the set of neighbors after Step~(\ref{step_1}).
    First of all, we show that after Step~(\ref{step_5}), $H(v)$ contains all roots in $N'(N'(v))$ with budget $b$, where $b$ is the budget of $v$ at the beginning of the round.
    For any root $w \in N'(v)$, in Step~(\ref{step_3}), all roots with budget $b(w)$ in $N'(w)$ are hashed into $H(w)$.
    If there is a collision in any $H(w)$, then $v$ must be dormant (cf. Step~(\ref{step_4})) thus increases level in either Step (\ref{step_2}) or (\ref{step_7}), contradiction.
    So there is no collision in $H(w)$ for any $w \in N'(v)$, which means $H(w) \supseteq N'(w)$.
    Recall that $v \in N'(v)$ and we get that all roots with budget $b(w) = b$ from $N'(N'(v))$ are hashed into $H(v)$ in Step~(\ref{step_5}).
    Again, if there is a collision, then $v$ must be dormant and increase level in this round.
    Therefore, $N'(N'(v)) \subseteq H(v)$ at the end of Step~(\ref{step_5}).

    By Lemma~\ref{lem_same_root_level}, for any $u \in N(N(v))$ at the beginning of any round, $u' = u.p$ is a root with budget $b$ in $N'(N'(v))$ after Step~(\ref{step_1}).
    Therefore, $u' \in H(v)$ at the end of Step~(\ref{step_5}), giving Lemma~\ref{lem_2hops}.
\end{proof}

The proof of Lemma~\ref{lem_dr1} relies on the following lemma based on potentials:
\begin{lemma} \label{lem_potential_lb}
    At the beginning of any round $r \ge 1$, for any active vertex $v$ on any path $P_r$, $\phi_r(v) \ge 2^{r - \ell(v)}$.
\end{lemma}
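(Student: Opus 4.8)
The plan is to prove Lemma~\ref{lem_potential_lb} by induction on $r$, writing $\ell_r(\cdot)$ for a vertex's level at the beginning of round $r$. The base case $r=1$ is immediate: every active vertex on $P_1$ is a root of level $1$ (it is the image, under the \textsc{alter}s inside \textsc{compact}, of a vertex of a flat tree produced by Vanilla), and $\phi_1(v)=1=2^{1-\ell_1(v)}$. For the inductive step, fix an active vertex $\overline v$ on $P_{r+1}$; by Definition~\ref{def_path_construction} it is the corresponding vertex of some active $v$ on $P_r$ via the chain $v\mapsto v':=v.p$ (Phase~(\ref{phase_1})) $\mapsto\overline v=v'.p$ (Phase~(\ref{phase_7})). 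Since a non-root has strictly smaller level than its parent (Lemma~\ref{lem_non_root}), \textsc{maxlink}/\textsc{shortcut} only move parents up in level, and levels never decrease, one always has $\ell_{r+1}(\overline v)\ge\ell_r(v)$. I split into two cases.

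\textbf{Option A:} $\ell_{r+1}(\overline v)>\ell_r(v)$. Then, using potential monotonicity (Lemma~\ref{lem_potential_property}(iii)) and the induction hypothesis, $\phi_{r+1}(\overline v)\ge\phi_r(v)\ge 2^{r-\ell_r(v)}\ge 2^{r-(\ell_{r+1}(\overline v)-1)}=2^{(r+1)-\ell_{r+1}(\overline v)}$, as desired. \textbf{Option B:} $\ell_{r+1}(\overline v)=\ell_r(v)=:\ell$. Chasing the equality back through Phases~(\ref{phase_7}),(\ref{phase_1}) and through Steps~(\ref{step_2}),(\ref{step_7}) (any level increase, or any \textsc{maxlink}/\textsc{shortcut} move of $v$, would land us in Option~A) forces $v=v'=\overline v$, a single vertex that is a root throughout round $r$ and never increases its level. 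I then claim the successor $w'$ of $v'$ on $P_{r,2}$ is marked \emph{skipped} or \emph{passive}, so that Phase~(\ref{phase_4}) gives $\phi_{r,4}(v')=\phi_{r,1}(v')+\phi_{r,1}(\underline{w'})$, hence (Phases~(\ref{phase_6}),(\ref{phase_7})) $\phi_{r+1}(\overline v)=\phi_r(v)+\phi_r(\underline{w'})$.

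The heart of Option~B is producing this successor-absorption. Since $\overline v$ is active, $v'$ is neither skipped nor passive, so its position $p$ on $P_{r,2}$ is actually visited by the loop in Phase~(\ref{phase_3}); there, because $\underline{v'}=v$ is a root not increasing level, Lemma~\ref{lem_2hops} applied to $v'$ and the vertex two positions later (which lies in $N(N(v'))$ since consecutive path vertices are adjacent) shows the tested edge $(v',P_{r,2}(p+2))$ is present after Step~(\ref{step_5}), so $w'=P_{r,2}(p+1)$ is skipped (and if $p$ is near the end of $P_{r,2}$, $w'$ is instead in the passive tail $[i+1,|P_{r,2}|+1]$, which Phase~(\ref{phase_4}) also absorbs). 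It remains to lower-bound $\phi_r(\underline{w'})$: by Lemma~\ref{lem_same_root_level} the parent of $\underline{w'}$ right after Step~(\ref{step_1}) is a root of budget $b(v')=b_\ell$, so by Corollary~\ref{cor_level_budget} and Lemma~\ref{lem_non_root} we get $\ell_r(\underline{w'})\le\ell$, and the induction hypothesis gives $\phi_r(\underline{w'})\ge 2^{r-\ell_r(\underline{w'})}\ge 2^{r-\ell}$. Together with $\phi_r(v)\ge 2^{r-\ell}$ this yields $\phi_{r+1}(\overline v)\ge 2\cdot 2^{r-\ell}=2^{(r+1)-\ell}=2^{(r+1)-\ell_{r+1}(\overline v)}$, closing the induction. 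Downstream, combining this lemma with $\ell\le L$ (Lemma~\ref{lem_max_level}) and Lemma~\ref{lem_potential_property}(i) bounds $|P_{r,2}|$ and hence, via Lemma~\ref{lem_path_diameter}, proves Lemma~\ref{lem_dr1}.

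The step I expect to be the main obstacle is the boundary behaviour at the end of the path. The loop in Phase~(\ref{phase_3}) need not reach the last one or two positions of $P_{r,2}$, so an active vertex there has no skipped successor \emph{from the loop}; this is exactly the subtle point glossed over in \cite{DBLP:conf/focs/BehnezhadDELM19}. The construction routes around it by having Phase~(\ref{phase_4}) mark the entire tail $[i+1,|P_{r,2}|+1]$ as passive (so the last active vertex is typically passive and the penultimate one still absorbs it, and Lemma~\ref{lem_same_root_level} again bounds the absorbed vertex's level). The one residual configuration I will need to argue carefully is when the final loop step is itself a skip that lands on the very last position, leaving it active with no successor at all; there I will either show that this vertex is in fact in Option~A, or absorb it into the $O(1)$-slack that Lemma~\ref{lem_dr1} can tolerate (equivalently, the ``drop'' device used for Lemma~\ref{lem_dr2}). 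All remaining ingredients are routine level/budget bookkeeping already established in Lemmas~\ref{lem_non_root}, \ref{lem_maxlink}, \ref{lem_same_root_level} and Corollary~\ref{cor_level_budget}.
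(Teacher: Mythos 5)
Your proof follows essentially the same route as the paper's: induction on $r$, with the split into "level increases/vertex becomes non-root" versus "$v$ stays a root at the same level", and in the latter case invoking Lemma~\ref{lem_2hops} to force a skip (or the passive marking at the tail). The two cosmetic differences are (a) you merge the paper's first two subcases ($v$ becomes a non-root; $v$'s level increases) into a single "Option A" via the comparison $\ell_{r+1}(\overline v)>\ell_r(v)$, which is clean and correct; and (b) you bound $\ell(\underline{w'})$ via Lemma~\ref{lem_same_root_level} plus Corollary~\ref{cor_level_budget} plus Lemma~\ref{lem_non_root}, whereas the paper applies Lemma~\ref{lem_maxlink} twice and derives a contradiction; both work.

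The one place you hedged — the "residual configuration" where a skip lands on the very last position, leaving the final active vertex with no successor — cannot actually occur, so no fallback to the "$O(1)$ slack" or "drop" device is needed. The loop in Phase~(\ref{phase_3}) only executes its body when $i\le|P_{r,2}|-2$; the body then increments $i$ by at most $2$, so the final value of $i$ satisfies $|P_{r,2}|-1\le i\le|P_{r,2}|$. Hence the active vertex where the loop stops is at position at most $|P_{r,2}|$, and its successor at position $i+1\le|P_{r,2}|+1$ always exists (and is marked passive by Phase~(\ref{phase_4})). This is exactly how the paper's proof closes the case $i\ge|P_{r,2}|-1$, noting "$i<|P_{r,2}|+1$ by the break condition of the loop". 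So your proof is correct and essentially the paper's; you just did not need to worry about the boundary as much as you thought.
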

\begin{proof}
    The proof is by an induction on $r$.
    The base case holds because for any (active) vertex $v$ on $P_1$, $\phi_1(r) = 1$ and $r = \ell(v) = 1$ (other vertices do not have incident edges).
    Now we prove the inductive step from $r$ to $r+1$ given that the corresponding vertex $\overline{v}$ of $v \in P_r$ is on $P_{r+1}$ and active.

    Suppose $v$ is a non-root at the end of round $r$.
    If $v$ is a non-root at the end of Step~(\ref{step_1}), then $\ell(v.p) > \ell(v)$ after Step~(\ref{step_1}) by Lemma~\ref{lem_non_root}, and $\ell(\overline{v}) \ge \ell(v.p) > \ell(v)$;
    else if $v$ first becomes a non-root in Step~(\ref{step_6}), then $\overline{v} = v.p$ and $\ell(v.p) > \ell(v)$ after Step~(\ref{step_6}) by Lemma~\ref{lem_non_root}.
    So by the induction hypothesis, $\phi_{r+1}(\overline{v}) \ge \phi_{r}(v) \ge 2^{r - \ell(v)} \ge 2^{r+1 - \ell(\overline{v})}$.

    Suppose $v$ increases its level in round $r$.
    Let $\ell$ be the level of $v$ at the beginning of round $r$.
    If the increase happens in Step~(\ref{step_2}), then $v$ is a root after Step~(\ref{step_1}).
    Whether $v$ changes its parent in Step~(\ref{step_6}) or not, the level of $\overline{v} = v.p$ is at least $\ell+1$.
    Else if the increase happens in Step~(\ref{step_7}), then $v$ is a root after Step~(\ref{step_6}).
    So $\overline{v} = v$ and its level is at least $\ell + 1$ at the end of the round.
    By the induction hypothesis, $\phi_{r+1}(\overline{v}) \ge \phi_{r}(v) \ge 2^{r - \ell} \ge 2^{r+1 - \ell(\overline{v})}$.

    It remains to assume that $v$ is a root and does not increase level during round $r$.
    By Lemma~\ref{lem_2hops}, for any $u \in N(N(v))$ at the beginning of round $r$, the parent $u'$ of $u$ after Step~(\ref{step_1}) is in $H(v)$ after Step~(\ref{step_5}).
    Since $v$ is a root during the round, it remains on $P_{r,2}$ after Phase~(\ref{phase_2}).
    We discuss two cases depending on whether $v$ is at position before $|P_{r,2}| - 1$ or not.

    In Phase~(\ref{phase_3}), note that if $v = P_{r,2}(i)$ where $i < |P_{r,2}| - 1$, then $P_{r,2}(i+2)$ is the parent of a vertex in $N(N(v))$ after Step~(\ref{step_1}), which must be in $H(v)$ after Step~(\ref{step_5}). Therefore, the graph contains edge $(v, P_{r,2}(i+2))$ and $v' \coloneqq P_{r,2}(i+1)$ is skipped, thus $\phi_{r,4}(v) = \phi_{r,1}(v) + \phi_{r,1}(v')$ by Definition~\ref{def_path_potential}.
    Since $i + 1 \le |P_{r,2}| + 1$, $v'$ is an active vertex on $P_{r,1}$.
    By the induction hypothesis, $\phi_{r,1}(v') = \phi_r(\underline{v'}) \ge 2^{r - \ell(\underline{v'})}$ (recall that $\underline{v'}$ is replaced by its parent $v'$ in Phase~(\ref{phase_1})/Step~(\ref{step_1})).
    If $\ell(\underline{v'}) > \ell(v)$, then applying Lemma~\ref{lem_maxlink} for two times we get that $v$ is a non-root after Step~(\ref{step_1}), a contraction.
    Therefore, $\phi_{r,1}(v') \ge 2^{r - \ell(\underline{v'})} \ge 2^{r - \ell(v)}$ and $\phi_{r,4}(v) \ge \phi_{r,1}(v) + \phi_{r,1}(v') \ge \phi_r(v) + 2^{r - \ell(v)} \ge 2^{r + 1 - \ell(v)}$.

    On the other hand, if $i \ge |P_{r,2}| - 1$ is reached after Phase~(\ref{phase_3}), it must be $i < |P_{r,2}| + 1$ by the break condition of the loop in Phase~(\ref{phase_3}).
    Note that $v' \coloneqq P_{r,2}(i+1)$ is marked as passive in Phase~(\ref{phase_4}), and by Definition~\ref{def_path_potential}, $\phi_{r,4} = \phi_{r,1}(v) + \phi_{r,1}(v')$.
    Moreover, since $i + 1 \le |P_{r,2}| + 1$, $v'$ is an active vertex on $P_{r,1}$.
    Using the same argument in the previous paragraph, we obtain $\phi_{r,4}(v) \ge 2^{r + 1 - \ell(v)}$.

    By Definition~\ref{def_path_potential}, after Phase~(\ref{phase_7}), $\phi_{r+1}(\overline{v}) = \phi_{r,6}(v) = \phi_{r,4}(v) \ge 2^{r + 1 - \ell(v)} = 2^{r + 1 - \ell(\overline{v})}$.
    As a result, the lemma holds for any active vertex $\overline{v}$ on $P_{r+1}$, finishing the induction and giving the lemma.
\end{proof}

\begin{proof}[Proof of Lemma~\ref{lem_dr1}]
    Let $R \coloneqq \log d + L$, where $L$ is defined in Lemma~\ref{lem_max_level}.
    By Lemma~\ref{lem_max_level}, with good probability, $\ell(v) \le L$ for any vertex $v$ in any of the first $O(\log n)$ rounds, and we shall condition on this happening.
    By Lemma~\ref{lem_potential_lb}, at the beginning of round $R$, if there is a path $P_R$ of at least $4$ active vertices, then for any of these vertices $v$, it must be $\phi_R(v) \ge 2^{R - \ell(v)} \ge 2^{R - L} \ge d$.
    So $\sum_{v \in P_R} \phi_r(v) \ge 4d > d+1$, contradicting with Lemma~\ref{lem_potential_property}.
    Thus, any path $P_R$ has at most $3$ active vertices, which means $|P_{R,2}| \le 3$ by Definition~\ref{def_path_construction}.
    Therefore, by Lemma~\ref{lem_path_diameter}, the diameter of the graph at the end of round $R$ is $O(R)$ with good probability.
\end{proof}

\subsubsection{Proof of Lemma~\ref{lem_flatten}} \label{sec_pf_flatten}

Now we prove Lemma~\ref{lem_flatten} as outlined at the beginning of \S{\ref{sec_diameter_reduction}}.
Based on the graph and any $P_R$ at the beginning of round $R+1$, we need a (much simpler) path construction:
\begin{definition} \label{def_path_construction_simper}
    For any integer $r > R$, given path $P_{r}$ with $|P_r| \ge 3$ at the beginning of round $r$, \textsc{expand-maxlink} constructs $P_{r+1}$ by the following:
    \begin{enumerate}
        \item The \textsc{alter} in Step~(\ref{step_1}) replaces each vertex $v$ on $P_{r}$ by $v' \coloneqq v.p$ to get path $P_{r,1}$. For any $v'$ on $P_{r,1}$, let $\underline{v'}$ be on $P_{r}$ such that $\underline{v'}.p = v'$.
        \item After Step~(\ref{step_5}), let $v' \coloneqq P_{r,1}(1)$,
            if $\underline{v'}$ is a root at the end of round $r$ and does not increase level during round $r$ then: if the current graph contains edge $(v', P_{r,1}(3))$ then remove $P_{r,1}(2)$ to get path $P_{r,2}$.
        \item The \textsc{alter} in Step~(\ref{step_6}) replaces each vertex $v$ on $P_{r,2}$ by $v.p$ to get path $P_{r+1}$.
    \end{enumerate}
    For any vertex $v$ on $P_r$ that is replaced by $v'$ in the first step, if $v'$ is not removed in the second step, then let $\overline{v}$ be the vertex replacing $v'$ in the third step, and call $\overline{v}$ the \emph{corresponding vertex} of $v$ in round $r+1$.
\end{definition}

An analog of Lemma~\ref{lem_valid_path} immediately shows that $P_r$ is a valid path for any $r \ge R+1$.
The proof of Lemma~\ref{lem_dr2} is simple enough without potential:
\begin{proof}[Proof of Lemma~\ref{lem_dr2}]
    By Lemma~\ref{lem_dr1}, at the beginning of round $R+1$, with good probability, any $P_{R+1}$ has length $O(R)$. We shall condition on this happening and apply a union bound at the end of the proof.
    In any round $r > R$, for any path $P_r$ with $|P_r| \ge 3$, consider the first vertex $v'$ on $P_{r,1}$ (cf. Definition~\ref{def_path_construction_simper}).
    If $\underline{v'}$ is a non-root or increases its level during round $r$, then by the first $3$ paragraphs in the proof of Lemma~\ref{lem_potential_lb}, it must be $\ell(\overline{v'}) \ge \ell(\underline{v'}) + 1$.
    Otherwise, by Lemma~\ref{lem_2hops}, there is an edge between $v'$ and the successor of its successor in the graph after Step~(\ref{step_5}), which means the successor of $v'$ on $P_{r,1}$ is removed in the second step of Definition~\ref{def_path_construction_simper}.
    Therefore, the number of vertices on $P_{r+1}$ is one less than $P_r$ if $\ell(\overline{v'}) = \ell(\underline{v'})$ as the level of a corresponding vertex cannot be lower.
    By Lemma~\ref{lem_max_level}, with good probability, the level of any vertex in any of the $O(\log n)$ rounds cannot be higher than $L$.
    As $P_{R+1}$ has $O(R)$ vertices, in round $r = O(R) + L + R = O(R) \le O(\log n)$, the number of vertices on any $P_r$ is at most $2$.
    Therefore, the diameter of the graph after $O(R)$ rounds is at most $1$ with good probability.
\end{proof}
\begin{proof}[Proof of Lemma~\ref{lem_flatten}]
    Now we show that after the diameter reaches $1$, if the loop has not ended, then the loop must break in $2L + \log_{5/4} L$ rounds with good probability, i.e., the graph has diameter $O(1)$ and all trees are flat.

    For any component, let $u$ be a vertex in it with the maximal level and consider any (labeled) tree of this component.
    For any vertex $v$ in this tree that is incident with an edge, since the diameter is at most $1$, $v$ must have an edge with $u$, which must be a root.
    So $v$ updates its parent to a root with the maximal level after a \textsc{maxlink}, then any root must have the maximal level in its component since a root with a non-maximal level before the \textsc{maxlink} must have an edge to another tree (cf. Lemma~\ref{lem_tree_invariant}).
    Moreover, if $v$ is a root, this can increase the maximal height among all trees in its component by $1$.

    Consider the tree with maximal height $\xi$ in the labeled digraph after Step~(\ref{step_1}).
    By Lemmas \ref{lem_non_root} and \ref{lem_max_level}, with good probability $\xi \le L$.
    The maximal level can increase by at most $1$ in this round.
    If it is increased in Step~(\ref{step_7}), the maximal height is at most
    $\lceil \xi / 2 \rceil + 1$ after the \textsc{maxlink} in the next round;
    otherwise, the maximal height is at most $\lceil (\xi + 1) / 2 \rceil \le \lceil \xi / 2 \rceil + 1$.
    If $\xi \ge 4$, then the maximal height of any tree after Step~(\ref{step_1}) in the next round is at most $(4/5) \xi$ (the worst case is that a tree with height $5$ gets shortcutted to height $3$ in Step~(\ref{step_6}) and increases its height by $1$ in the \textsc{maxlink} of Step~(\ref{step_1}) in the next round).
    Therefore, after $\log_{5/4} L$ rounds, the maximal height of any tree is at most $3$.

    Beyond this point, if any tree has height $1$ after Step~(\ref{step_1}),
    then it must have height $1$ at the end of the previous round since there is no incident edge on leaves after the \textsc{alter} in the previous round,
    thus the loop must have been ended by the break condition.
    Therefore, the maximal-height tree (with height $3$ or $2$) cannot increase its height beyond this point.
    Suppose there is a tree with height $3$, then if the maximal level of vertices in this component does not change during the round, this tree cannot increase its height in the \textsc{maxlink} of Step~(\ref{step_6}) nor that of Step~(\ref{step_1}) in the next round,
    which means it has height at most $2$ as we do a \textsc{shortcut} in Step~(\ref{step_6}).
    So after $L$ rounds, all trees have heights at most $2$ after Step~(\ref{step_1}).
    After that, similarly, if the maximal level does not increase, all trees must be flat.
    Therefore, after additional $L$ rounds, all trees are flat after Step~(\ref{step_1}).
    By the same argument, the loop must have ended in the previous round.
    The lemmas follows immediately from $L = O(R)$ and Lemma~\ref{lem_dr2}.
\end{proof}

\subsection{Proof of Theorem~\ref{main3}} \label{sec_pf_main3}

With all pieces from \S{\ref{sec_correctness}}-\S{\ref{sec_diameter_reduction}} and Theorem~\ref{main1}, we are ready to prove Theorem~\ref{main3}.

\begin{proof}[Proof of Theorem~\ref{main3}]
    By Lemma~\ref{lem_flatten} and the break condition, the repeat loop in Faster Connected Components algorithm runs for $O(\log d + \log\log_{m/n} n) \le O(\log n)$ rounds.
    So by Lemma~\ref{lem_round_constant_time} and a union bound, this loop runs in $O(\log d + \log\log_{m/n} n)$ time with good probability.
    Moreover, since the diameter is $O(1)$ after the loop (cf. Lemma~\ref{lem_flatten}), the following Connected Component algorithm runs in $O(\log\log_{m/n} n)$ time with good probability (cf. Theorem~\ref{main1}).
    By Lemma~\ref{lem_stage_1}, the method \textsc{compact} also runs in $O(\log\log_{m / n} n)$ time with good probability.
    Therefore, by a union bound, the total running time is $O(\log d + \log\log_{m/n} n)$ with good probability.
    When it ends, by Lemma~\ref{lem_alg_correctness}, the algorithm correctly computes the connected components of the input graph.
    Since there are at most $O(\log n)$ rounds in the loop,
    by Lemma~\ref{lem_stage_1}, Lemma~\ref{lem_processor_number}, and Theorem~\ref{main1}, the total number of processors is $O(m)$ with good probability by a union bound.
    Theorem~\ref{main3} follows from a union bound at last.
\end{proof}

\addcontentsline{toc}{section}{Bibliography}
\bibliographystyle{alpha}
\bibliography{LTZ20}

\end{document}